\definecolor{ao(english)}{rgb}{0.0, 0.5, 0.0}
\newtheorem{theorem}{Theorem}[section]  
\newtheorem{proposition}[theorem]{Proposition}  
\newtheorem{definition}[theorem]{Definition}  
\newtheorem{remark}[theorem]{Remark}  
\newtheorem{example}[theorem]{Example}  
\newcommand{\Z}{\mathbb{Z}}  
\newcommand{\R}{\mathbb{R}}  
\newcommand{\C}{\mathbb{C}}  
\newcommand{\smallerrelaux}[2]{\vcenter{\hbox{{\scalebox{.75}{$#1#2$}}}}}  
\newcommand{\smallerrel}[1]{\mathrel{\mathpalette\smallerrelaux{#1}}}
\newcommand{\smallin}{\smallerrel{\in}}
\titleformat{\section}[block]{\bf\filcenter}{\thesection.}{5pt}{}  
\titleformat{\subsection}[block]{\bf}{\thesubsection.}{5pt}{}  
\titleformat{\subsubsection}[block]{\bf}{\thesubsubsection.}{5pt}{}  
\g@addto@macro\bfseries{\boldmath}
\newcommand{\sgn}{sgn}
\pgfplotsset{compat=1.18}
\title{EDGE STATES IN SQUARE LATTICE MEDIA \\ AND THEIR DEFORMATIONS}
\author{J. CHABAN\thanks{Department of Applied Physics and Applied Mathematics, Columbia University} \hspace{20pt} J. L. MARZUOLA\thanks{Department of Mathematics, University of North Carolina at Chapel Hill} \hspace{20pt} M. I. WEINSTEIN\thanks{Department of Applied Physics and Applied Mathematics and Department of Mathematics, Columbia University}}
\date{\today}
\def\@maketitle{
    \begin{center}
    {\large \bf \@title} \\
    \vspace{\baselineskip}
    \let\footnote\thanks
    {\@author} \\
    \vspace{1.5\baselineskip}
    {\@date}
    \end{center}}
\begin{document}


\newpage
\maketitle

\begin{abstract}
\noindent Edge states are time-harmonic solutions of conservative wave systems which are plane wave-like parallel to and localized transverse to an interface between two bulk media. We study a class of 2D edge Hamiltonians modeling a medium which slowly interpolates between periodic bulk media via a domain wall across a ``rational'' line defect. We consider the cases of (1) periodic bulk media having the symmetries of a square lattice, and (2) linear deformations of such media. Our bulk Hamiltonians break time-reversal symmetry due to perturbation by a magnetic term, which opens a band gap about the band structure degeneracies of the unperturbed bulk Hamiltonian. In case (1), these are quadratic band degeneracies; in case (2), they are pairs of conical degeneracies. We demonstrate that this band gap is traversed by two distinct edge state curves, consistent with the bulk-edge correspondence principle of topological physics. Blow-ups of these curves near the bulk band degeneracies are described by effective (homogenized) edge Hamiltonians derived via multiple-scale analysis which control the bifurcation of edge states. In case \nolinebreak (1), the bifurcation is governed by a matrix Schr\"{o}dinger operator; in case (2), it is governed by a pair of Dirac operators. We present analytical results and numerical simulations for both the full 2D edge Hamiltonian spectral problem and the spectra of effective edge Hamiltonians.
\end{abstract}

\section{Introduction}
\label{sec:intro}

An {\it edge state} is a time-harmonic solution of a conservative wave system which propagates plane wave-like parallel to and is localized transverse to a line defect, or edge, separating distinct bulk media. Such states are central to topologically protected energy transport phenomena in condensed matter physics and other fields. Examples include the integer quantum Hall effect \cite{TKNN:82}, topologically protected states in graphene and related materials in the presence of magnetic fields \cite{geim2007rise}, topological insulators \cite{Kane-Mele:05, kane2005quantum}, as well as engineered topological metamaterials in photonics \cite{top-photonics19}, mechanics \cite{PCV:15}, and other areas. This article draws motivation from the seminal and influential work of Haldane and Raghu in topological photonics \cite{HR08, RH:08} and its first experimental observation in Wang et al. \cite{Soljacic-etal:08}.

We study edge states arising from two classes of 2D bulk media with line defects: (1) periodic media with the symmetries of a square lattice, or {\it square lattice media}, and (2) linearly deformed square lattice media. Each is modeled by a {\it bulk Hamiltonian}: a self-adjoint second-order elliptic operator $H_{\rm bulk}$ on $L^2(\R^2)$ with coefficients periodic with respect to a lattice $\Lambda$; specifically, we take $H_{\rm bulk}$ to be a Schr\"{o}dinger operator with a real-valued, $\Lambda$-periodic potential. The wave transport properties of a periodic operator are encoded in its {\it band structure}: the collection of dispersion relations (the graphs of which are dispersion surfaces, or bands) \linebreak and corresponding spatially extended, plane wave-like Floquet-Bloch eigenstates; see Section \ref{sec:fb-thy-bulk}. The spectrum of $H_{\rm bulk}$ is continuous; under its time-evolution, a localized initial condition spreads out and decays in amplitude. Further, our $H_{\rm bulk}$ is invariant under spatial inversion, or parity $\mathcal{P}$ and complex conjugation, or time-reversal $\mathcal{C}$, which may lead to band structure degeneracies: the energy-quasimomentum pairs ${(E_\star, {\bm k}_\star)}$ at which two consecutive dispersion surfaces touch. For the (undeformed) square lattice, the band structure of $H_{\rm bulk}$ contains {\it quadratic band degeneracies} \cite{keller2018spectral, keller2020erratum, chong2008effective} which, under linear deformation, split into pairs of conical degeneracies, or {\it Dirac points} \cite{chaban2024instability}; see also \cite{fefferman2012honeycomb, drouot2021ubiq, DL:24}. These degeneracies are ``symmetry-protected'' in the sense that they persist under small perturbations of $H_{\rm bulk}$ which commute with $\mathcal{PC}$.

We model a line defect, or edge, via a spatially non-compact perturbation of $H_{\rm bulk}$ which retains translation invariance in the direction of a prescribed vector ${\boldsymbol{\mathfrak{v}}_{\rm edge} \smallin \Lambda}$; such edge directions are called ``rational''. The construction proceeds as follows: First, we consider two ``insulating'' periodic media which are perturbations of the bulk medium by a magnetic term. These are modeled by the Hamiltonians
\begin{equation}
\label{eq:asym-op_r}
H^{\pm, \, \delta}_{\rm bulk} = H_{\rm bulk} \pm \delta^r W_\infty .
\end{equation}
The operator $W_\infty$ is self-adjoint, $\Lambda$-periodic, and commutes with $\mathcal{P}$, but anticommutes with $\mathcal{C}$. The scaling parameter ${r = 1}$, $2$ depends on the local character of the dispersion surfaces of $H_{\rm bulk}$ about the relevant degeneracy: ${r = 2}$ for a quadratic band degeneracy, ${r = 1}$ for conical degeneracies.

\begin{figure}[!t]
\centering
\includegraphics[scale = 1.0]{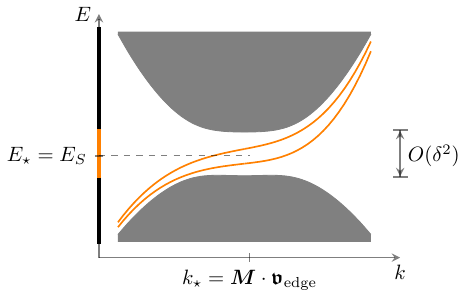}
\caption{(Edge state diagram: square lattice case.) Schematic of the spectrum of the edge Hamiltonian $H^\delta_{\rm edge}$ \eqref{eq:edge-op_r}, as the union of its $L^2_k(\R^2/\Z\boldsymbol{\mathfrak{v}}_{\rm edge})$ \eqref{eq:def-L2kpar} spectra, near ${(E_\star, k_\star) = (E_S, \, {\bm M} \cdot \boldsymbol{\mathfrak{v}}_{\rm edge})}$, where ${(E_\star, {\bm k}_\star) = (E_S, {\bm M})}$ is a quadratic band degeneracy point of $H_{\rm bulk}$ modeling a square lattice medium. For each fixed $k$, $H^\delta_{\rm edge}$ has essential spectrum (gray) determined by the spectra of perturbed bulk Hamiltonians $\smash{H^{\pm, \, \delta}_{\rm bulk}}$ \eqref{eq:asym-op_r}, which both have a band gap of width $O(\delta^2)$ about ${(E_\star, {\bm k}_\star)}$. The band gap of $H^\delta_{\rm edge}$ about ${(E_\star, k_\star)}$ is traversed by two edge state eigenvalue curves (orange). The eigenvalue curves are approximately those of an effective matrix Schr\"{o}dinger operator.}
\label{fig:schem-sql-intro}
\end{figure}

\begin{remark}
\label{rmk:scaling}
{\rm (Scaling.)}
Suppose the dispersion relations have the local behavior ${|E({\bm k}) - E_\star| \sim |{\bm k} - {\bm k}_\star|^r}$ in a quasimomentum neighborhood of the $H_{\rm bulk}$ band structure degeneracy at ${(E_\star, {\bm k}_\star)}$. Then, the dispersive (diffractive) effect on a wavepacket spectrally concentrated in a quasimomentum disk of radius $\delta$ about ${\bm k}_\star$ will be order one on the timescale ${t_{\rm disp} = O(\delta^{-r})}$. In \eqref{eq:asym-op_r}, and later in \eqref{eq:edge-op_r}, the amplitude of the perturbation to $H_{\rm bulk}$ is chosen to balance dispersion on the timescale $t_{\rm disp}$.
\end{remark}

\noindent Since $W_\infty$ breaks the $\mathcal{PC}$ symmetry of $H_{\rm bulk}$, the degeneracy at ${(E_\star, {\bm k}_\star)}$ is lifted; the perturbed Hamiltonians $\smash{H^{\pm, \, \delta}_{\rm bulk}}$ have a common $L^2(\R^2)$ energy gap of width $O(\delta^r)$ about $E_\star$.\footnote{In general, the gap that opens is only local in both energy and quasimomentum; see Section \ref{sec:fb-thy-gaps}.}

We next introduce a class of edge Hamiltonians, periodic with respect to ${\Z \boldsymbol{\mathfrak{v}}_{\rm edge}}$, which slowly interpolate between $\smash{H^{-, \, \delta}_{\rm bulk}}$ and $\smash{H^{+, \, \delta}_{\rm bulk}}$ across ${\R \boldsymbol{\mathfrak{v}}_{\rm edge}}$. Specifically,
\begin{equation}
\label{eq:edge-op_r}
H^\delta_{\rm edge} = H_{\rm bulk} + \delta^r W^\delta_{\rm edge} ,
\end{equation}
where $W^\delta_{\rm edge}$ is a spatially non-compact and slowly varying (with $O(\delta^{-1})$ spatial scale) interpolation between ${\pm W_\infty}$ transverse to $\R \boldsymbol{\mathfrak{v}}_{\rm edge}$. The edge Hamiltonian $\smash{H^\delta_{\rm edge}}$ has a two-scale structure, arising from the $O(1)$ scale of the lattice $\Lambda$ and the $O(\delta^{-1})$ scale of the domain wall perturbation. Analogous edge Hamiltonians have been introduced in the context of 2D honeycomb media \cite{HR08, RH:08, FLW-2d_materials:15, FLW-2d_edge:16, LWZ19, drouotweinstein2020} and in 1D dislocated media \cite{FLW-PNAS:14, FLW-MAMS:17, drouot2020defect, drouot21-discloc}. For studies involving square lattice media with sharp interfaces, see, e.g.,  \cite{ablowitz2020discrete, qiu2024mathematical}.

Since $H^\delta_{\rm edge}$ is invariant under translations in ${\Z \boldsymbol{\mathfrak{v}}_{\rm edge}}$, edge states arise as solutions of a family of Floquet-Bloch eigenvalue problems parameterized by a scalar parallel quasimomentum; see Section \ref{sec:fb-thy-edge}. These states are represented in an {\it edge state diagram}; see Figures \ref{fig:schem-sql-intro} - \ref{fig:schem-dfm-intro}. This edge state diagram displays a gap between two bands of continuous spectrum, arising from the non-degenerate dispersion surfaces of the asymptotic Hamiltonians $H^{\pm, \, \delta}_{\rm bulk}$, as well as branches of eigenvalues corresponding to edge state eigenfunctions.

\subsection{Summary of results}
\label{sec:summary}

\begin{itemize}
\item {\it Edge state eigenvalue curves which traverse the bulk band gap:} We explain, using asymptotic analysis and numerical computations, how the band gap of $H^{\pm, \, \delta}_{\rm bulk}$ containing  the energy-parallel quasimomentum pair ${(E, k) = (E_\star, k_\star = {\bm k}_\star \cdot \boldsymbol{\mathfrak{v}}_{\rm edge})}$ is traversed by two edge state eigenvalue curves $k\smallin [-\pi, \, \pi] \mapsto E(k)$ of $H^\delta_{\rm edge}$ acting in the space $L^2_k(\R^2/\Z\boldsymbol{\mathfrak{v}}_{\rm edge})$; see \eqref{eq:def-L2kpar}.

\item {\it Multiple-scale structure of edge states:} Each of these eigenvalue curves is seeded by the discrete spectrum of an effective (homogenized) edge Hamiltonian derived via a two-scale asymptotic expansion of solutions to the edge state eigenvalue problem. The corresponding edge states have a wavepacket-like structure: a slowly varying envelope modulation of the Floquet-Bloch eigenstates associated with the band degeneracy, which is localized transverse to the edge.

\item {\it Effective (homogenized) edge Hamiltonians:} The effective edge Hamiltonians are of two types: (1) in the square lattice case, a matrix Schr\"{o}dinger operator $\mathfrak{S}(\kappa)$, with ${\kappa \smallin \R}$, arising from a quadratic band degeneracy, and (2) in the deformed square lattice case, a pair of Dirac operators $\cancel{\mathfrak{D}}^\pm(\kappa)$ arising from a pair of tilted conical degeneracies. The matrix Schr\"{o}dinger operator is of the type that arises in the theory of {\it $d$-wave superconductivity}\footnote{We thank G. Bal for observing this.} \cite{volovik-2009}, while the Dirac operators are variants of the well-known Dirac operator arising in the analysis of edge states in 2D honeycomb structures and 1D dislocations \cite{drouot2019a, drouot2020defect, drouotweinstein2020, FLW-2d_materials:15}), quantum field theory \cite{JR76}, as well as the theory of {\it $p$-wave superconductivity} \cite{volovik-2009, bernevig-hughes-2013}.

\item {\it Gap-traversing edge state eigenvalue curves and band structure topology:} The two gap-traversing eigenvalue curves we construct have a topological explanation.

The phenomenon of gap-traversing eigenvalue curves (see Figures \ref{fig:schem-sql-intro} - \ref{fig:schem-dfm-intro}) is a manifestation of the bulk-edge correspondence principle of topological physics; see, e.g., \cite{Hatsugai:93, EG:02, EGS:05, HK:10, Graf-Porta:13, prodan-sb-2016, drouotweinstein2020, bal-jmp-2023}. In the present context, an  integer-valued topological {\it edge index}, which is a physical measure of  quantum  {\it edge conductivity}, is equal to the difference of {\it first Chern numbers}. The Chern number is an integer-valued topological index, which is proportional to the integral of the {\it Berry curvature} over the Brillouin zone of $H^{\pm, \, \delta}_{\rm bulk}$. This edge index is also equal to an algebraic count of the number of edge state eigenvalue curves which traverse the band gap about ${(E, k) = (E_\star, k_\star)}$, or ``spectral flow''; see, e.g., \cite{doll2023spectral}.

Multiple-scale analysis reduces the computation of bulk and edge topological indices to corresponding calculations for effective operators $\mathfrak{S}(\kappa)$ and $\cancel{\mathfrak{D}}^\pm(\kappa)$; see Section \ref{sec:multi-bec}. A general discussion of bulk and edge indices for the types of effective Hamiltonians arising in the present work appears in \cite{bal-cpde-2022, bal-jmp-2023, quinn-ball-sima-2024}; see also \cite{volovik-2009, bernevig-hughes-2013}. In the square lattice case, the effective edge Hamiltonian is known to have two gap traversing eigenvalue curves (Figure \ref{fig:schem-sql-intro}) and, in the deformed square lattice case, each of the two Dirac operators contributes one eigenvalue curve (Figure \ref{fig:schem-dfm-intro}), for a total of two.

\item {\it PDE/Variational analysis of effective edge Hamiltonians:} We present analytical results, supported by numerical simulations, on the effective edge Hamiltonians $\mathfrak{S}(\kappa)$ and $\cancel{\mathfrak{D}}^\pm(\kappa)$ by direct PDE/variational analysis, which yield information about the eigenvalue curves and eigenstates not accessible to approaches rooted in the topology of the space of Fredholm operators.

\item {\it Numerical simulations of the full spectral problem for $H^\delta_{\rm edge}$:} We present detailed numerical simulations of the full spectral problem for the multiple-scale Hamiltonian $\smash{H^\delta_{\rm edge}}$ \eqref{eq:edge-op_r} illustrating these phenomena, and interpret the results using our effective Hamiltonians.
\end{itemize}

\begin{figure}[!t]
\centering
\includegraphics[scale = 1.0]{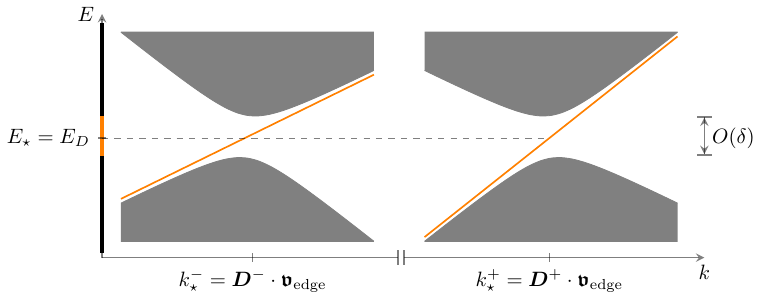}
\caption{(Edge state diagram: deformed square lattice case.) Schematic of the spectrum of the edge Hamiltonian $H^\delta_{\rm edge}$ \eqref{eq:edge-op_r}, as the union of its $L^2_k(\R^2/\Z\boldsymbol{\mathfrak{v}}_{\rm edge})$ \eqref{eq:def-L2kpar} spectra, near ${(E_\star, k^\pm_\star) = (E_D, {\bm D}^\pm \cdot \boldsymbol{\mathfrak{v}}_{\rm edge})}$, where ${(E_\star, {\bm k}^\pm_\star) =}$ ${(E_D, {\bm D}^\pm)}$ are Dirac points of $H_{\rm bulk}$ modeling a deformed square lattice medium; compare with Figure \ref{fig:schem-sql-intro} for the square lattice case. Here, $\smash{H^{\pm, \, \delta}_{\rm bulk}}$ \eqref{eq:asym-op_r} both have band gaps of width $O(\delta)$ about ${(E_\star, {\bm k}^-_\star)}$ and ${(E_\star, {\bm k}^+_\star)}$. The eigenvalue curves of $H^\delta_{\rm edge}$ are approximately those of a pair of effective Dirac operators.}
\label{fig:schem-dfm-intro}
\end{figure}

\subsection{Outline of article}
\label{sec:outline}

The remainder of the article is organized as follows:
\begin{itemize}
\item Section \ref{sec:the-model} provides a detailed, minimally technical introduction to the edge Hamiltonian $H^\delta_{\rm edge}$.

\item Section \ref{sec:fb-thy} provides a brief overview of Floquet-Bloch theory.
 
\item In Section \ref{sec:bulk}, we begin with a brief review of square lattice media modeled by a class of Schr\"odinger operators with {\it square lattice potentials}, which have the symmetries of a tiling of $\R^2$ by squares. We then turn to deformed square lattice media, modeled by Schr\"{o}dinger operators with potentials ${V \circ \, T^{-1}}$, where $V$ is a square lattice potential and $T$ is a real, invertible ${2 \times 2}$ matrix.

In Section \ref{sec:sql-bulk}, we review prior results on the quadratic band degeneracies of Schr\"{o}dinger operators with square lattice potentials \cite{keller2018spectral, keller2020erratum} and, in Section \ref{sec:dfm-bulk}, recent results on the emergence of conical degeneracies when the square lattice potential is linearly deformed \cite{chaban2024instability}.

In Section \ref{sec:bulk-breakC}, we review the lifting of both families of degeneracies under $\mathcal{C}$-breaking perturbations, leading to locally non-degenerate, or isolated bands. 

Finally, $\mathcal{C}$-breaking gives rise to isolated bands with nontrivial topology, reflected in a nonzero Chern number, as reviewed in Section \ref{sec:bulk-edge}. 

\item In Section \ref{sec:edge}, we set up our class of edge Hamiltonians $H^\delta_{\rm edge}$ for an arbitrary prescribed rational edge. 

\item In Section \ref{sec:multi}, we present a construction of edge states with energies near $E_\star$ via a two-scale expansion in powers of the small parameter $\delta$. At leading order, these edge states have the form of slowly varying modulations of Floquet-Bloch eigenstates spanning the $L^2_k(\R^2/\Z\mathfrak{v}_{\rm edge})$ kernel of ${H_{\rm bulk} - E_\star}$. The slowly varying amplitudes provide the localization of edge states in the direction transverse to the edge, and are given by the bound states of an {\it effective edge Hamiltonian}.

\item In Section \ref{sec:spec}, we discuss the spectral properties of these effective edge Hamiltonians. 

\item In Section \ref{sec:numerics}, we present numerical simulations of the full spectral problem for $H^\delta_{\rm edge}$ (which include a computation of edge state curves).  These simulations show that there are always two edge state curves which traverse the band gap; see Figures \ref{fig:schem-sql-intro} - \ref{fig:schem-dfm-intro}, as well as the schematic Figures \ref{fig:schem-sql} - \ref{fig:schem-dfm}. 

\item An overall perspective on the current work and an open problem are discussed in Section \ref{sec:an_open_prob}.
\end{itemize}

\noindent An overview of the bulk band structure degeneracies, as described in \cite{keller2018spectral,keller2020erratum} for quadratic band degeneracies and in \cite{chaban2024instability} for Dirac points, is given in Appendix \ref{apx:par-bulk}. Details of certain calculations in Sections \ref{sec:multi} - \ref{sec:spec} are placed in Appendices \ref{apx:pf_multi} - \ref{apx:dfm-edge-eff-gap}. Further supporting analytic and numerical calculations are located in Supplementary Materials \ref{supp:sql-eff-exact} - \ref{supp:dfm-edge-eff-disc}.

\subsection{Notation and conventions}
\label{sec:notation}

\begin{itemize}
\item Pauli matrices, clockwise $\pi/2$ rotation matrix:
\begin{equation}
\label{eq:def-pauli}
\sigma_0 \equiv I =
\! \begin{bmatrix}
1 & 0 \\
0 & 1
\end{bmatrix} \!
, \quad \sigma_1 \equiv 
\! \begin{bmatrix}
0 & 1 \\
1 & 0
\end{bmatrix} \!
, \quad \sigma_2 \equiv
\! \begin{bmatrix}
0 & -i \\
i & 0
\end{bmatrix} \!
, \quad \sigma_3 \equiv
\! \begin{bmatrix}
1 & 0 \\
0 & -1
\end{bmatrix} \!
, \quad R \equiv i \sigma_2 =
\! \begin{bmatrix}
0 & 1 \\
-1 & 0
\end{bmatrix} \!
.
\end{equation}

\item Symmetry operations:
\begin{equation}
\label{eq:sym-op}
\begin{aligned}
\text{($\mathcal{P}$)} & \quad \text{Parity (spatial inversion)} & \mathcal{P}[f]({\bm x}) & \equiv f(-{\bm x}) , \\
\text{($\mathcal{C}$)} & \quad \text{Time-reversal (complex conjugation)} & \mathcal{C}[f]({\bm x}) & \equiv \smash{\overline{f({\bm x})}} , \\
\text{($\mathcal{R}$)} & \quad \text{Clockwise $\pi/2$ rotation} & \mathcal{R}[f]({\bm x}) & \equiv f(R^\mathsf{T} {\bm x}) , \\
\text{($\Sigma_1$)} & \quad \text{Reflection about ${x_1 = x_2}$} & \Sigma_1[f]({\bm x}) & \equiv f(\sigma_1 {\bm x}) .
\end{aligned}
\end{equation}

\item Primitive vectors of the lattice ${\Z^2 = \Z {\bm v}_1 \oplus \Z {\bm v}_2}$ and its dual lattice ${(\Z^2)^* = 2 \pi \Z {\bm k}_1 \oplus 2 \pi \Z {\bm k}_2}$:
\begin{equation}
{\bm v}_1 \equiv [1, 0]^\mathsf{T} , \quad {\bm v}_2 \equiv [0, 1]^\mathsf{T} \quad \text{and} \quad {\bm k}_1 \equiv [1, 0]^\mathsf{T}, \quad {\bm k}_2 \equiv [0, 1]^\mathsf{T} .
\end{equation}
\end{itemize}

\smallskip

\section*{Acknowledgements}

The authors thank Mikael Rechtsman for helpful conversations relating to numerical simulations. We also thank Guillaume Bal and for stimulating discussions pointing the authors to relevant references. This work was supported in part by NSF grants: DMS-1909035 (JLM), DMS-2307384 (JLM), DMS-1908657 (MIW), DMS-1937254 (JC), and Simons Foundation Math+X Investigator Award \#376319 (MIW). Part of this research was carried out during the 2023-24 academic year, when MIW was a Visiting Member in the School of Mathematics, Institute of Advanced Study, Princeton, supported by the Charles Simonyi Endowment, and a Visiting Fellow in the Department of Mathematics at Princeton University.

\smallskip

\section{Overview of the model and results}
\label{sec:the-model}

\setcounter{equation}{0}
\setcounter{figure}{0}

Our models are built up in three steps: (1) Periodic bulk media with $\mathcal{PC}$ symmetry and band structure degeneracies; (2) Periodic bulk media with $\mathcal{PC}$ symmetry-breaking perturbations, which ``gap'' the degeneracies; (3) Edge media which slowly interpolate, across a line defect, between two distinct periodic bulk media with $\mathcal{PC}$ symmetry-breaking perturbations. A schematic of this construction is presented in Figure \ref{fig:schem-sql} for square lattice media, and in Figure \ref{fig:schem-dfm} for their deformed versions.

\subsection{Periodic bulk media with band structure degeneracies}
\label{sec:the-model-1}

\noindent Our fundamental bulk Hamiltonian is the Schr\"{o}dinger operator
\begin{equation}
\label{eq:sql-bulk-op}
H_{\rm bulk} = H_V \equiv -\Delta_{\bm x} + V({\bm x}) .
\end{equation}
 The potential ${{\bm x} \mapsto V({\bm x})}$ is real-valued, periodic with respect to ${\Lambda = \Z^2}$, and invariant under the symmetry group of the square. We call such potentials {\it square lattice potentials}; see Definition \ref{def:sql-pot}. The spectral theory of ${H_{\rm bulk} = H_V}$ acting in $L^2(\R^2)$ is given by its Floquet-Bloch {\it band structure}; see Section \ref{sec:fb-thy-bulk}. In \cite{keller2018spectral, keller2020erratum}, it was shown that the band structure of a Schr\"{o}dinger operator with a square lattice potential exhibits {\it quadratic band degeneracies} ${(E_\star, {\bm k}_\star) = (E_S, {\bm M})}$; see Figure \ref{fig:schem-sql-1}. The high-symmetry quasimomentum ${{\bm M} = [\pi, \pi]^\mathsf{T}}$ and its rotational equivalents make up the four vertices of the Brillouin zone ${\mathcal{B} = [-\pi, \, \pi] \times [-\pi, \, \pi]}$.

In \cite{chaban2024instability}, we studied small deformations of $H_V$ given by
\begin{equation}
\label{eq:dfm-bulk-op}
H_{\rm bulk} = H_{V \circ \, T^{-1}} \equiv -\Delta_{\bm x} + V(T^{-1}{\bm x}) ,
\end{equation}
where $T$ is a real, invertible ${2 \times 2}$ matrix close to the identity. The bulk Hamiltonian ${H_{\rm bulk} = H_{V \circ \, T^{-1}}}$ is periodic with respect to ${\Lambda = T \Z^2}$, and may not have any of the point group symmetries of $H_V$. We proved that, for generic $T$, the quadratic band degeneracy point at ${(E_S, {\bm M})}$ splits into two conical degeneracies, or {\it Dirac points} ${(E_\star, {\bm k}^\pm_\star) = (E_D, {\bm D}^\pm)}$ at quasimomenta ${\bm D}^\pm$ inversion symmetric about ${\bm M}$; see Figure \ref{fig:schem-dfm-1}. That twofold degeneracies persist under deformation is a consequence of the invariance of $H_{V \circ \, T^{-1}}$ under $\mathcal{PC}$ as defined in \eqref{eq:sym-op}.

\begin{figure}[!t]
\begin{subfigure}{\columnwidth}
    \centering
    \includegraphics[scale = 1.0]{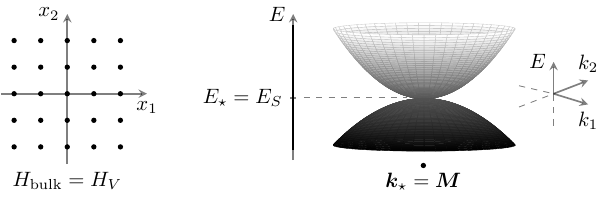}
    \subcaption{{\it (Left)} Square lattice bulk medium, modeled by ${H_{\rm bulk} = H_V}$ \eqref{eq:sql-bulk-op}. {\it (Right)} Blow-up of its band structure near a quadratic band degeneracy point ${(E_\star, {\bm k}_\star) = (E_S, {\bm M})}$.}
    \label{fig:schem-sql-1}
\end{subfigure}
\begin{subfigure}{\columnwidth}
    \centering
    \vspace{0.3cm}
    \includegraphics[scale = 1.0]{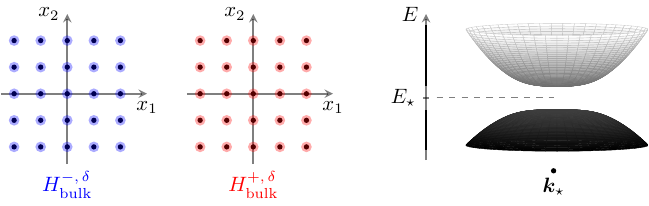}
    \subcaption{{\it (Left)} Asymptotic square lattice media, modeled by $H^{\pm, \, \delta}_{\rm bulk}$ \eqref{eq:bulk-op-breakC}. {\it (Right)} $\mathcal{C}$-breaking perturbation induces a band gap near ${(E_\star, {\bm k}_\star)}$. The two asymptotic media acquire distinct nonzero topological indices (Chern numbers).}
    \label{fig:schem-sql-2}
\end{subfigure}
\begin{subfigure}{\columnwidth}
    \centering
    \vspace{0.3cm}
    \includegraphics[scale = 1.0]{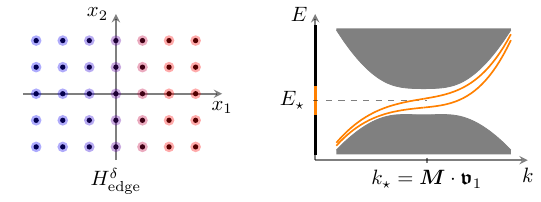}
    \subcaption{{\it (Left)} Square lattice edge medium, modeled by $\smash{H^\delta_{\rm edge}}$ \eqref{eq:edge-op}, which interpolates between $\smash{H^{\pm, \, \delta}_{\rm bulk}}$ transverse to $\smash{\R \boldsymbol{\mathfrak{v}}_{\rm edge}}$. Here, ${\boldsymbol{\mathfrak{v}}_{\rm edge} = [0, 1]^\mathsf{T}}$. {\it (Right)} The band gap is traversed by two edge state eigenvalue curves, arising from the eigenvalue curves of a matrix Schr\"{o}dinger operator.}
    \label{fig:schem-sql-3}
\end{subfigure}
\caption{Square lattice media and spectra near energy ${E_\star = E_S}$.}
\label{fig:schem-sql}
\end{figure}

\subsection{Band structure degeneracies gapped via symmetry breaking}
\label{sec:the-model-2}

\noindent We next construct perturbed bulk Hamiltonians, to be later joined across an interface:
\begin{equation}
\label{eq:bulk-op-breakC}
H^{\pm, \, \delta}_{\rm bulk} = H_{\rm bulk} \pm \delta^r W_\infty({\bm x}, -i\nabla_{\bm x}) .
\end{equation}
Here, ${\delta > 0}$ is small. We set ${r = 2}$ in the undeformed square lattice case, where quadratic degeneracies arise, and ${r = 1}$ in the deformed square lattice case, where conical degeneracies arise; see Remark \ref{rmk:scaling}. The perturbing operator $W_\infty$ is a self-adjoint, $\Lambda$-periodic magnetic term which breaks $\mathcal{PC}$ symmetry; specifically, it commutes with $\mathcal{P}$ and anticommutes with $\mathcal{C}$. Such perturbations cause the band structure degeneracies of Section \ref{sec:the-model-1} to gap, or separate, yielding smooth, locally non-degenerate dispersion surfaces and a local band gap near ${(E_\star, {\bm k}_\star)}$; see Figure \ref{fig:schem-sql-2} and Figure \ref{fig:schem-dfm-2}.
   
If the dispersion surfaces of $\smash{H^{\pm, \, \delta}_{\rm bulk}}$ are non-degenerate across the full Brillouin zone, one can associate integer-valued topological indices, the first Chern numbers, to the collection of bands below the band gap. For $\mathcal{C}$-invariant perturbations of $H_{\rm bulk}$, the Chern number is zero, and for $\mathcal{C}$-breaking perturbations of $H_{\rm bulk}$, it is nonzero. Our models use $\mathcal{C}$-breaking perturbations resulting in Chern numbers of $+1$ and $-1$; see the discussion in Section \ref{sec:bti}.

\begin{figure}[!t]
\begin{subfigure}{\columnwidth}
    \centering
    \includegraphics[scale = 1.0]{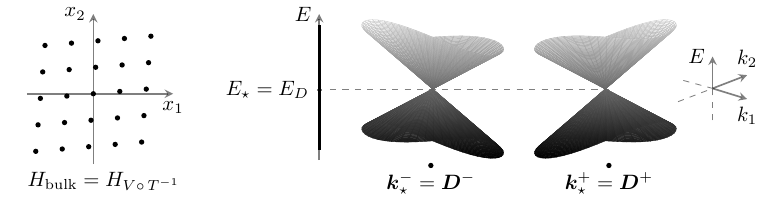}
    \subcaption{{\it (Left)} Deformed square lattice bulk medium, modeled by $\smash{H_{\rm bulk} = H_{V \circ \, T^{-1}}}$ \eqref{eq:dfm-bulk-op}. {\it (Right)} Blow-up of its band structure near two Dirac points $\smash{(E_\star, {\bm k}^\pm_\star) = (E_D, {\bm D}^\pm)}$.}
    \label{fig:schem-dfm-1}
\end{subfigure}
\begin{subfigure}{\columnwidth}
    \centering
    \vspace{0.3cm}
    \includegraphics[scale = 1.0]{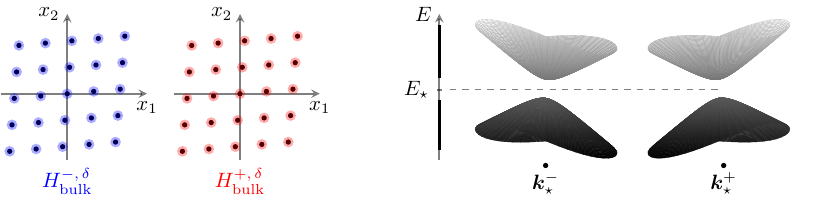}
    \subcaption{{\it (Left)} Asymptotic deformed square lattice media, modeled by $\smash{H^{\pm, \, \delta}_{\rm bulk}}$ \eqref{eq:bulk-op-breakC}. {\it (Right)} $\mathcal{C}$-breaking perturbation induces a band gap near ${(E_\star, {\bm k}^\pm_\star)}$. The two asymptotic media acquire distinct nonzero topological indices (Chern numbers), equal to those arising for the undeformed square lattice; see Figure \ref{fig:schem-sql}.}
    \label{fig:schem-dfm-2}
\end{subfigure}
\begin{subfigure}{\columnwidth}
    \centering
    \vspace{0.3cm}
    \includegraphics[scale = 1.0]{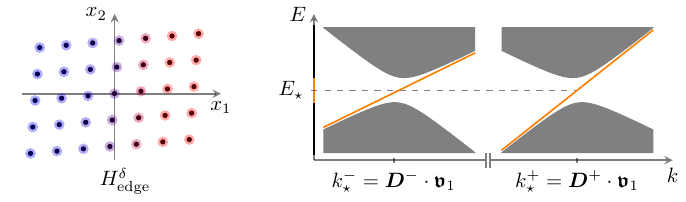}
    \subcaption{{\it (Left)} Deformed square lattice edge medium, modeled by $\smash{H^\delta_{\rm edge}}$ \eqref{eq:edge-op}, which interpolates between $\smash{H^{\pm, \, \delta}_{\rm bulk}}$ across $\smash{\R T \mathfrak{v}_{\rm edge}}$. Again, $\smash{\boldsymbol{\mathfrak{v}}_{\rm edge} = [0, 1]^\mathsf{T}}$. {\it (Right)} The band gap is traversed by two edge state eigenvalue curves, seeded by the eigenvalue curves of two Dirac operators.}
    \label{fig:schem-dfm-3}
\end{subfigure}
\caption{Deformed square lattice media and spectra near energy ${E_\star = E_D}$.}
\label{fig:schem-dfm}
\end{figure}

\subsection{Interpolating between gapped bulk media across a line defect}
\label{sec:the-model-3}

\noindent Finally, choose a vector ${\boldsymbol{\mathfrak{v}}_{\rm edge} \smallin \Lambda}$. The line $\R \boldsymbol{\mathfrak{v}}_{\rm edge}$ divides $\R^2$ into two half-spaces $\Omega_-$ and $\Omega_+$. We define the edge Hamiltonian
\begin{equation}
\label{eq:edge-op}
H^\delta_{\rm edge} = H_{\rm bulk} + \delta^r W^\delta_{\rm edge}({\bm x}, \delta {\bm x}, -i \nabla_{\bm x}) ,
\end{equation}
where $\smash{W^\delta_{\rm edge} \to \pm W_\infty}$, and hence $\smash{H^\delta_{\rm edge} \to H^{\pm, \, \delta}_{\rm bulk}}$ as ${|{\bm x}| \to +\infty}$ for ${{\bm x} \smallin \Omega_\pm}$; see Figure \ref{fig:schem-sql-3} and Figure \ref{fig:schem-dfm-3}. The interpolation is via a domain wall with an $O(\delta^{-1})$ length scale. By the choice of $\boldsymbol{\mathfrak{v}}_{\rm edge}$, $H^\delta_{\rm edge}$ is invariant under translations by elements of $\Z \boldsymbol{\mathfrak{v}}_{\rm edge}$. Consequently, its spectral properties as an operator on $L^2(\R^2)$ can be resolved into those of its restriction onto fiber subspaces  $L^2_k(\R^2/\Z\boldsymbol{\mathfrak{v}}_{\rm edge})$, for ${k \smallin [-\pi, \, \pi]}$ (see Section \ref{sec:fb-thy-edge}), which are pseudoperiodic with wavenumber $k$ parallel to the edge $\R\boldsymbol{\mathfrak{v}}_{\rm edge}$ and square integrable transverse to the edge. Hence, for each fixed ${k \smallin [-\pi, \, \pi]}$, we consider the  spectral problem:
\begin{equation}
\label{eq:edge-evp_1}
H^\delta_{\rm edge} \Psi = E \Psi, \quad \Psi \smallin L^2_k(\R^2/\Z \boldsymbol{\mathfrak{v}}_{\rm edge}) .
\end{equation}
In the remainder of the paper, we address the agenda outlined in the introduction by studying the edge state curves ${k \mapsto E(k)}$ (see Figure \ref{fig:schem-sql-3} and Figure \ref{fig:schem-dfm-3}) via asymptotic analysis of the eigenvalue problem \eqref{eq:edge-evp_1}.

\smallskip

\section{Floquet-Bloch theory}
\label{sec:fb-thy}

\setcounter{equation}{0}
\setcounter{figure}{0}

In this section, we briefly review the spectral theory of periodic, elliptic operators; see, e.g., \cite{RS4-1978, Kuchment:12, Kuchment:16, Eastham:74}.

\subsection{Floquet-Bloch theory: 2D periodicity}
\label{sec:fb-thy-bulk}

Let ${\bm v}_1$, ${{\bm v}_2 \smallin \R^2}$ be linearly independent vectors, and define the 2D {\it (Bravais) lattice} ${\Lambda \equiv \Z {\bm v}_1 \oplus \Z {\bm v}_2}$. Further, define the {\it dual lattice} ${\Lambda^* \equiv 2 \pi \Z {\bm k}_1 \oplus 2 \pi \Z {\bm k}_2}$, where ${\bm k}_1$, ${{\bm k}_2 \smallin \R^2}$ satisfy ${{\bm k}_j \cdot {\bm v}_k = \delta_{j, k}}$ for $j$, ${k = 1}$, $2$. Let $\mathcal{B}$ denote the {\it Brillouin zone}: a choice of fundamental domain for $\R^2/\Lambda^*$. For each ${{\bm k} \smallin \mathcal{B}}$, we define
\begin{equation}
\label{eq:def-L2k}
L^2_{\bm k}(\R^2/\Lambda) \equiv \bigl \{ {\bm x} \mapsto f({\bm x}) \smallin L^2_{\rm loc}(\R^2) : f({\bm x} + {\bm v}) = e^{i {\bm k} \cdot {\bm v}} f({\bm x}), \ {\bm v} \smallin \Lambda \bigr \} .
\end{equation}
Any function ${{\bm x} \mapsto f({\bm x}) \smallin L^2(\R^2)}$ can be expressed as a superposition of functions $\smash{{\bm x} \mapsto \tilde{f}({\bm x}; {\bm k}) \smallin L^2_{\bm k}(\R^2/\Lambda)}$ over ${{\bm k} \smallin \mathcal{B}}$. We express this via the fiber decomposition $\smash{L^2(\R^2) = \int^\oplus_\mathcal{B} L^2_{\bm k}(\R^2/\Lambda) \, {\rm d}{\bm k}}$; see \cite{RS4-1978, Kuchment:16}.

Now consider a self-adjoint $\Lambda$-periodic, elliptic operator
\begin{equation}
\label{eq:def-L}
{\mathcal{L} = -\nabla_{\bm x} \cdot A({\bm x}) \nabla_{\bm x} + Q({\bm x})} ,
\end{equation}
acting in  $L^2(\R^2)$, with ${{\bm x} \mapsto A({\bm x})}$ uniformly positive definite and ${{\bm x} \mapsto Q({\bm x})}$ bounded. Since $\mathcal{L}$ is $\Lambda$-periodic, it maps each $L^2_{\bm k}(\R^2/\Lambda)$ to itself. Hence, the spectrum of $\mathcal{L}$ can be obtained via the family of {\it Floquet-Bloch eigenvalue problems}:
\begin{equation}
\label{eq:Lk-evp}
\mathcal{L} \Phi = E \Phi, \quad \Phi \smallin L^2_{\bm k}(\R^2/\Lambda), \quad {\bm k} \smallin \mathcal{B}.
\end{equation}
The $L^2(\R^2)$ spectrum of $\mathcal{L}$ is equal to the union of its $L^2_{\bm k}(\R^2/\Lambda)$ spectra over ${{\bm k} \smallin \mathcal{B}}$. For each fixed ${{\bm k} \smallin \mathcal{B}}$, the operator $\mathcal{L}$ has compact resolvent and hence real, discrete spectrum:
\begin{equation}
E_1({\bm k}) \leq E_2({\bm k}) \leq ... \leq E_b({\bm k}) \leq ... \to +\infty .
\end{equation}
The functions ${{\bm k} \mapsto E_b({\bm k})}$, ${b \geq 1}$, are known as {\it dispersion relations}, and their graphs over ${{\bm k} \smallin \mathcal{B}}$ are {\it dispersion surfaces}, or {\it bands}. For each fixed ${{\bm k} \smallin \mathcal{B}}$, the corresponding eigenstates ${{\bm x} \mapsto \Phi_b({\bm x}; {\bm k})}$, ${b \geq 1}$, form a complete orthonormal basis of $L^2_{\bm k}(\R^2/\Lambda)$. The collection of all Floquet-Bloch eigenpairs ${{\bm k} \mapsto (E_b({\bm k}), {\bm x} \mapsto \Phi_b({\bm x}; {\bm k}))}$, ${b \geq 1}$, is called the {\it band structure} of $\mathcal{L}$.

\subsection{Floquet-Bloch theory: 1D periodicity}
\label{sec:fb-thy-edge}

We now consider the setting of translation invariance in a single direction ${\boldsymbol{\mathfrak{v}}_1 \smallin \Lambda}$. For each (parallel quasi-momentum) 
${k \smallin \mathcal{B}_1 \equiv [-\pi, \, \pi]}$,  
we define
\begin{align}
\label{eq:def-L2kpar}
L^2_k(\R^2/\Z \boldsymbol{\mathfrak{v}}_1) & \equiv \left \{ {\bm x} \mapsto g({\bm x}) \smallin L^2_{\rm loc}(\R^2) : g({\bm x} + \boldsymbol{\mathfrak{v}}_1) = e^{i k} g({\bm x}), \ \int_{\R^2/\Z\boldsymbol{\mathfrak{v}}_1} |g({\bm x})|^2 \, {\rm d}{\bm x} < +\infty \right \} .
\end{align}
Any function ${{\bm x} \mapsto g({\bm x}) \smallin L^2(\R^2)}$ can be expressed as a superposition of functions ${{\bm x} \mapsto \tilde{g}({\bm x}; k) \smallin L^2_k(\R^2/\Z \boldsymbol{\mathfrak{v}}_1)}$ over ${k \smallin \mathcal{B}_1}$, namely, $\smash{L^2(\R^2) = \int^\oplus_{\mathcal{B}_1} L^2_k(\R^2/\Z \boldsymbol{\mathfrak{v}}_1) \, {\rm d}k}$.

Suppose $\mathcal{L}$, defined in \eqref{eq:def-L}, is translation invariant with respect to the 1D lattice $\Z \boldsymbol{\mathfrak{v}}_1$. Then, in analogy with the previous section, $\mathcal{L}$ maps each $L^2_k(\R^2/\Z\boldsymbol{\mathfrak{v}}_1)$ to itself, and the spectrum of $\mathcal{L}$ acting in $L^2(\R^2)$ can be obtained via the family of spectral problems:
\begin{equation}
\label{eq:kpar-evp}
\mathcal{L} \Psi = E \Psi, \quad \Psi \smallin L^2_k(\R^2/\Z\boldsymbol{\mathfrak{v}}_1), \quad k \smallin \mathcal{B}_1 = [-\pi, \, \pi] . 
\end{equation}
The $L^2(\R^2)$ spectrum of $\mathcal{L}$ is equal to the union of its $L^2_k(\R^2/\Z\boldsymbol{\mathfrak{v}}_1)$ spectra over ${k \smallin \mathcal{B}_1}$. Note that $\mathcal{L}$ acting in $L^2_k(\R^2/\Z\boldsymbol{\mathfrak{v}}_1)$ {\it does not} have a compact resolvent. Hence, for each fixed ${k \smallin \mathcal{B}_1}$, the spectrum will generally have both discrete and continuous components. 
\begin{center}
{\it Values of $E$ in the point spectrum of the spectral problem \eqref{eq:kpar-evp} are the edge state eigenvalues.}
\end{center}
A plot of the spectra of $\mathcal{L}_k$ as $k$ varies over ${\mathcal{B}_1 = [-\pi, \, \pi]}$ is called an {\it edge state diagram}. Edge states appear as curves ${k \smallin \mathcal{B}_1 \mapsto E(k)}$ within a spectral gap of $\mathcal{L}$;
see Figures \ref{fig:schem-sql-intro} and \ref{fig:schem-dfm-intro}.

Finally, we note that there exist ${\boldsymbol{\mathfrak{v}}_2 \smallin \Lambda}$ and $\boldsymbol{\mathfrak{K}}_1$, ${\boldsymbol{\mathfrak{K}}_2 \smallin \Lambda^*}$ satisfying ${\boldsymbol{\mathfrak{K}}_j \cdot \boldsymbol{\mathfrak{v}}_k = \delta_{j, k}}$ for $j$, ${k = 1}$, $2$ such that ${\Lambda = \Z \boldsymbol{\mathfrak{v}}_1 \oplus \Z \boldsymbol{\mathfrak{v}}_2}$ and ${\Lambda^* = 2\pi \Z \boldsymbol{\mathfrak{K}}_1 \oplus 2\pi \Z \boldsymbol{\mathfrak{K}}_2}$; see Section \ref{sec:edge}. Further, any function ${{\bm x} \mapsto \tilde{g}({\bm x}; k) \smallin L^2_k(\R^2/\Z\boldsymbol{\mathfrak{v}}_1)}$ can be expressed as a superposition of functions $\smash{{\bm x} \mapsto \tilde{f}({\bm x}; {\bm k}) \smallin L^2_k(\R^2/\Z\boldsymbol{\mathfrak{v}}_1)}$ over the quasimomentum slice ${q \mapsto {\bm k} = k \boldsymbol{\mathfrak{K}}_1 + q \boldsymbol{\mathfrak{K}}_2}$, where ${q \smallin \mathcal{B}_1}$; that is, ${L^2_k(\R^2/\Z\boldsymbol{\mathfrak{v}}_1) = \int^\oplus_{\mathcal{B}_1} L^2_{k \boldsymbol{\mathfrak{K}}_1 + q \boldsymbol{\mathfrak{K}}_2}(\R^2/\Lambda) \, {\rm d}q}$.

\subsection{Band gaps and spectral gaps}
\label{sec:fb-thy-gaps}

Let ${{\bm k} \mapsto E_\pm({\bm k})}$ be two consecutive dispersion relations of $\mathcal{L}$ as in \eqref{eq:def-L}. Then:

\begin{itemize}
\item We say that $\mathcal{L}$ has a ${{\bm k} \mapsto L^2_{\bm k}(\R^2/\Lambda)}$ {\it band gap} if
\begin{equation}
\min_{{\bm k} \smallin \mathcal{B}} \, (E_+({\bm k}) - E_-({\bm k})) > 0 .
\end{equation}
Equivalently, we say that the dispersion surfaces defined by ${{\bm k} \mapsto E_\pm({\bm k})}$ are {\it non-degenerate}, or {\it isolated}.

\item Denote ${\mathcal{B}_1 \equiv [-\pi, \, \pi]}$. Further, define
\begin{equation}
\mu_+(k) \equiv \min_{q \smallin \mathcal{B}_1} \, E_+(k \boldsymbol{\mathfrak{K}}_1 + q \boldsymbol{\mathfrak{K}}_2) \quad \text{and} \quad \mu_-(k) \equiv \max_{q \smallin \mathcal{B}_1} \, E_-(k \boldsymbol{\mathfrak{K}}_1 + q \boldsymbol{\mathfrak{K}}_2) .
\end{equation}
We say that $\mathcal{L}$ has a ${k \mapsto L^2_k(\R^2/\Z\boldsymbol{\mathfrak{v}}_1)}$ {\it band gap} if
\begin{equation}
\min_{k \smallin \mathcal{B}_1} \, (\mu_+(k) - \mu_-(k)) > 0 .
\end{equation}
If $\mathcal{L}$ has a ${k \mapsto L^2_k(\R^2/\Z\boldsymbol{\mathfrak{v}}_1)}$ band gap, then it necessarily has a ${{\bm k} \mapsto L^2_{\bm k}(\R^2/\Lambda)}$ band gap. In general, the converse does not hold.

\item $\mathcal{L}$ has an $L^2(\R^2)$ {\it spectral gap} if
\begin{equation}
\min_{{\bm k} \smallin \mathcal{B}} \, E_+({\bm k}) - \max_{{\bm k} \smallin \mathcal{B}} \, E_-({\bm k}) > 0 , \quad \text{or equivalently} \quad \min_{k \smallin \mathcal{B}_1} \, \mu_+(k) - \max_{k \smallin \mathcal{B}_1} \, \mu_-(k) > 0 .
\end{equation}
If $\mathcal{L}$ has an $L^2(\R^2)$ spectral gap, then it necessarily has both ${{\bm k} \mapsto L^2_{\bm k}(\R^2/\Lambda)}$ and ${k \mapsto L^2_k(\R^2/\Z\boldsymbol{\mathfrak{v}}_1)}$ band gaps. Again, the converse does not generally hold. 
\end{itemize}
In later sections, we apply analogous terminology to our effective bulk/edge Hamiltonians, which depend on quasimomentum parameters ${{\bm \kappa} \smallin \R^2}$ (bulk) and ${\kappa \smallin \R}$ (edge).

\smallskip

\section{Bulk Hamiltonians for square and deformed square lattice media}
\label{sec:bulk}

\setcounter{equation}{0}
\setcounter{figure}{0}

We introduce two classes of bulk Hamiltonians $H_{\rm bulk}$ and discuss types of  degeneracies that occur in their band structures. Near each degeneracy, the band structure of $H_{\rm bulk}$ is well-approximated by the Fourier symbol of an {\it effective (homogenized) bulk Hamiltonian}. Further, the evolution of wavepackets which are spectrally localized about a given band degeneracy is approximated by the dynamics induced by the corresponding effective Hamiltonian; see, e.g., \cite{FW:14}. Finally, we discuss the lifting of degeneracies via $\mathcal{PC}$ symmetry-breaking perturbations, which induces non-trivial {\it band topology}. This has implications for the global character of edge states, as expressed in the bulk-edge correspondence principle.

\subsection{Bulk Hamiltonians, band structure degeneracies}

\subsubsection{Square lattice bulk Hamiltonian, quadratic band degeneracies}
\label{sec:sql-bulk}

Consider the bulk Hamiltonian given by the Schr\"{o}dinger operator
\begin{equation}
\label{eq:sql-bulk-op_2}
H_{\rm bulk} = H_V = -\Delta_{\bm x} + V({\bm x}) ,
\end{equation}
acting in $L^2(\R^2)$. Here, ${{\bm x} \mapsto V({\bm x})}$ is a {\it square lattice potential} in the following sense:

\begin{definition}
\label{def:sql-pot}
{\rm (Square lattice potential.)}
We say that a smooth potential $V$ is a square lattice potential if it is $\Z^2$-periodic and invariant under all symmetry operations listed in \eqref{eq:sym-op}.
\end{definition}

\noindent Note that the symmetry operations in \eqref{eq:sym-op} generate the full symmetry group of the square (i.e., the dihedral group of order 8). In particular, since ${\mathcal{R}^2 = \mathcal{P}}$, square lattice potentials are invariant under spatial inversion. An example of a square lattice potential is plotted in \cite[Figure 1.1a]{chaban2024instability}; see also \cite{keller2018spectral}.

In \cite{keller2018spectral, keller2020erratum}, it was shown that the band structure of $H_V$ exhibits {\it quadratic band degeneracies} at the high-symmetry quasimomentum ${{\bm M} = [\pi, \pi]^\mathsf{T}}$ at one vertex of the Brillouin zone ${\mathcal{B} = [-\pi, \, \pi] \times [-\pi, \, \pi]}$ and at its other vertices, which are equivalent to ${\bm M}$ modulo $(\Z^2)^*$. Near a quadratic band degeneracy at ${(E_S, {\bm M})}$, the two degenerate dispersion surfaces are approximated by those of an effective bulk Hamiltonian with ${2 \times 2}$ matrix Fourier symbol
\begin{equation}
\label{eq:quad-eff}
H_{\rm eff}^{\bm M}({\bm \kappa}) = (1 - \alpha^{\bm M}_0) |{\bm \kappa}|^2 \sigma_0 - \alpha^{\bm M}_1 ({\bm \kappa} \cdot \sigma_1 {\bm \kappa}) \sigma_1 - \alpha^{\bm M}_2 ({\bm \kappa} \cdot \sigma_3 {\bm \kappa}) \sigma_2 .
\end{equation}
Here, ${{\bm \kappa} = {\bm k} - {\bm M}}$ is a quasimomentum displacement from ${\bm M}$. The parameters $\alpha^{\bm M}_0$, $\alpha^{\bm M}_1$, and ${\alpha^{\bm M}_2 \smallin \R}$ are defined in terms of an orthonormal basis of the degenerate Floquet-Bloch eigenspace at ${(E_S, {\bm M})}$: $\{ \Phi^{\bm M}_1, \, \Phi^{\bm M}_2 = \mathcal{PC}[\Phi^{\bm M}_1] \}$; see Appendix \ref{apx:quad}. The two quadratically touching dispersion surfaces of $H_{\rm eff}^{\bm M}({\bm \kappa})$ are plotted in \cite[Figure 1.1b]{chaban2024instability}.

\subsubsection{Deformed square lattice bulk  Hamiltonian, conical degeneracies (Dirac points)}
\label{sec:dfm-bulk}

In the recent work \cite{chaban2024instability}, the effect of linear deformations of the medium modeled by $H_V$ on its band structure was studied;  specifically, the Schr\"{o}dinger operator
\begin{equation}
\label{eq:dfm-bulk-op_2}
H_{\rm bulk} = H_{V \circ \, T^{-1}} = -\Delta_{\bm x} + V(T^{-1}{\bm x}) ,
\end{equation}
where $T$ is an invertible  ${2 \times 2}$ matrix representing a linear deformation of the square lattice medium described by  ${{\bm x} \mapsto V({\bm x})}$. The resulting potential  ${{\bm x} \mapsto V(T^{-1}{\bm x})}$ is $T\Z^2$-periodic, and may have none of its original point group symmetries. An example of a deformed square lattice potential is plotted in \cite[Figure 1.1c]{chaban2024instability}. By the change of coordinates ${{\bm y} = T^{-1}{\bm x}}$, \eqref{eq:dfm-bulk-op_2} is equivalent to the $\Z^2$-periodic operator
\begin{equation}
\label{eq:dfm-bulk-op_3}
T_* H_V = -\nabla_{\bm y} \cdot (T^\mathsf{T} T)^{-1} \nabla_{\bm y} + V({\bm y}) .
\end{equation}
This change of coordinates enables us work with a fixed lattice and dual lattice for any choice of $T$.

In \cite{chaban2024instability} it is shown that, for $T$ sufficiently close to the identity and not a purely volumetric deformation (i.e., an orthogonal transformation composed with a scaling), the quadratic band degeneracy at ${(E_S, {\bm M})}$ splits into a pair of nearby conical degeneracies, or {\it Dirac points} ${(E_D, {\bm D}^\pm)}$ at quasimomenta ${\bm D}^\pm$ inversion symmetric about ${\bm M}$; see \cite[Figure 1.1d]{chaban2024instability}. In the vicinity of each Dirac point, the two degenerate dispersion surfaces are approximated by those of effective bulk Hamiltonians with ${2 \times 2}$ matrix Fourier symbols
\begin{equation}
\label{eq:dir-eff}
T_* H_{\rm eff}^{{\bm D}^\pm} \! ({\bm \kappa}) = ({\bm \gamma}^{{\bm D}^\pm}_0 \! \cdot {\bm \kappa}) \sigma_0 + ({\bm \gamma}^{{\bm D}^\pm}_1 \! \cdot {\bm \kappa}) \sigma_1 + ({\bm \gamma}^{{\bm D}^\pm}_2 \! \cdot {\bm \kappa}) \sigma_2 .
\end{equation}
Here, ${{\bm \kappa} = {\bm k} - {\bm D}^\pm}$ is a quasimomentum displacement from ${\bm D}^\pm$. The parameters ${\bm \gamma}^{{\bm D}^\pm}_0 \!$, ${\bm \gamma}^{{\bm D}^\pm}_1 \!$, and ${{\bm \gamma}^{{\bm D}^\pm}_2 \! \smallin \R^2}$ are defined in terms of orthonormal basis $\smash{\{ \Phi^{{\bm D}^\pm}_1 \!, \, \Phi^{{\bm D}^\pm}_2 \! = \mathcal{PC}[\Phi^{{\bm D}^\pm}_1] \}}$ spanning the degenerate Floquet-Bloch eigenspaces at ${(E_D, {\bm D}^\pm)}$; see Appendix \ref{apx:dir}. With the choice of basis ${\Phi^{{\bm D}^-}_1 \! = \mathcal{P}[\Phi^{{\bm D}^+}_1]}$ and ${\Phi^{{\bm D}^-}_2 \! = \mathcal{P}[\Phi^{{\bm D}^+}_2]}$, it can be easily shown (see Proposition \ref{prop:gam-sym}) that
\begin{equation}
{\bm \gamma}^{{\bm D}^-}_\ell\! = - {\bm \gamma}^{{\bm D}^+}_\ell\!, \quad 0 \leq \ell \leq 2, \quad \textrm{and hence}\quad {T_* H_{\rm eff}^{{\bm D}^-}\!({\bm \kappa}) = - T_* H_{\rm eff}^{{\bm D}^+}\!({\bm \kappa})}.
\end{equation}
The tilted, conically touching dispersion surfaces of $T_* H_{\rm eff}^{{\bm D}^+}\!({\bm \kappa})$ and $T_* H_{\rm eff}^{{\bm D}^-}\!({\bm \kappa})$ are shown in \cite[Figure 1.2]{chaban2024instability}.

\begin{remark}
\label{rmk:vol-dfm}
It is shown in \cite{chaban2024instability} that, for purely volumetric deformations, the quadratic band degeneracy point {\rm does not} split into two Dirac points; instead, it perturbs to a nearby quadratic band degeneracy point with local character given in \eqref{eq:quad-eff}. This case is identical to the scenario without deformation, so we do not consider it further.
\end{remark}

\subsection{Breaking time-reversal symmetry, lifting degeneracies}
\label{sec:bulk-breakC}

In both cases of the previous section, $H_{\rm bulk}$ commutes with $\mathcal{PC}$, where $\mathcal{P}$ denotes spatial inversion and $\mathcal{C}$ denotes complex conjugation; see \eqref{eq:sym-op}. $\mathcal{PC}$ symmetry is central to the existence of the quadratic degeneracy of $H_V$ (Section \ref{sec:sql-bulk}) and the persistence of degeneracy under deformation, as in $H_{V \circ \, T^{-1}}$ (Section \ref{sec:dfm-bulk}). Under small perturbations which respect self-adjointness and periodicity, but break $\mathcal{PC}$ symmetry, a local ${{\bm k} \mapsto L^2_{\bm k}(\R^2/\Lambda)}$ band gap opens about ${(E_\star, {\bm k}_\star)}$; see \cite{keller2018spectral, keller2020erratum} for $H_V$ and \cite{chaban2024instability} for $H_{V \circ \, T^{-1}}$.

We next study models which break $\mathcal{PC}$ symmetry by breaking $\mathcal{C}$ while preserving $\mathcal{P}$. The complimentary scenario of breaking $\mathcal{P}$ symmetry while preserving $\mathcal{C}$ symmetry is briefly addressed in Remark \ref{rmk:breakP} below.

\subsubsection{Square lattice bulk Hamiltonian, lifting quadratic band degeneracies}
\label{sec:sql-bulk-breakC}

We introduce a $\mathcal{C}$-breaking perturbation of ${H_{\rm bulk} = H_V}$, given in \eqref{eq:sql-bulk-op_2}. For ${\delta > 0}$, define
\begin{equation}
\label{eq:sql-bulk-op-breakC}
H_{V, \, A}^\delta \equiv -\Delta_{\bm x} + V({\bm x}) + \delta^2 \nabla_{\bm x} \cdot A({\bm x}) \sigma_2 \nabla_{\bm x} ,
\end{equation}
where ${{\bm x} \mapsto A({\bm x})}$ is a non-constant, smooth,  $\Z^2$-periodic function which is invariant under the symmetry operations in \eqref{eq:sym-op}. (Note that, for $A$ constant, ${\nabla \cdot A \sigma_2 \nabla}$ vanishes.) This class of $\mathcal{C}$-breaking terms models a {\it magneto-optic effect} \cite{chaban2024instability, FLW-2d_edge:16, HR08, RH:08}. The $\delta^2$ scaling of the $\mathcal{C}$-breaking term is explained in Remark \ref{rmk:scaling}.

Recall that, for ${\delta = 0}$, $\smash{H_{V, \, A}^0 = H_V}$ has quadratic band degeneracies within its band structure; see Section \ref{sec:sql-bulk}. Suppose now that ${(E_S, {\bm M})}$ is one such quadratic band degeneracy of $\smash{H_{V, \, A}^0 = H_V}$. For ${\delta > 0}$ small, the degeneracy is lifted and the bands become non-degenerate. The Fourier symbol of the effective Hamiltonian given in \eqref{eq:quad-eff} perturbs to
\begin{equation}
\label{eq:quad-eff-breakC}
H^{\bm M}_{\rm eff}({\bm \kappa}; \delta) = (1 - \alpha^{\bm M}_0) |{\bm \kappa}|^2 \sigma_0 - \alpha^{\bm M}_1 ({\bm \kappa} \cdot \sigma_1 {\bm \kappa}) \sigma_1 - \alpha^{\bm M}_2 ({\bm \kappa} \cdot \sigma_3 {\bm \kappa}) \sigma_2 + \delta^2 \vartheta^{\bm M} \sigma_3 ,
\end{equation}
where
\begin{equation}
\label{eq:def-thetaM}
\vartheta^{\bm M} \equiv \langle \Phi_1^{\bm M}, \, \nabla \cdot A \sigma_2 \nabla \Phi^{\bm M}_1 \rangle .
\end{equation}
The two gapped dispersion surfaces of $H^{\bm M}_{\rm eff}({\bm \kappa}; \delta)$ are plotted in \cite[Figure 10.3a]{chaban2024instability}.

\subsubsection{Deformed square lattice bulk Hamiltonian, lifting Dirac points}
\label{sec:dfm-bulk-breakC}

Similarly, we introduce a $\mathcal{C}$-breaking perturbation of ${H_{\rm bulk} = H_{V \circ \, T^{-1}}}$, given in \eqref{eq:dfm-bulk-op_2}. For ${\delta > 0}$, let
\begin{equation}
\label{eq:dfm-bulk-op-breakC}
H^\delta_{V \circ \, T^{-1}, \, A \, \circ \, T^{-1}} \equiv -\Delta_{\bm x} + V(T^{-1}{\bm x}) + \delta \nabla_{\bm x} \cdot A(T^{-1}{\bm x}) \sigma_2 \nabla_{\bm x} ,
\end{equation} 
where ${{\bm x} \mapsto A({\bm x})}$ satisfies the same criteria as in the previous section. The $\delta$ scaling chosen for the $\mathcal{C}$-breaking term is discussed in Remark \ref{rmk:scaling}. By the change of coordinates ${{\bm y} = T^{-1}{\bm x}}$, \eqref{eq:dfm-bulk-op-breakC} is equivalent to
\begin{equation}
\label{eq:dfm-bulk-op-breakC_2}
T_* H^\delta_{V, \, A} = - \nabla_{\bm y} \cdot (T^\mathsf{T} T)^{-1} \nabla_{\bm y} + V({\bm y}) + \delta \nabla_{\bm y} \cdot \det(T^{-1}) A({\bm y}) \sigma_2 \nabla_{\bm y} ,
\end{equation}
which is a perturbation of \eqref{eq:dfm-bulk-op_3}. Note that, since $T$ is assumed invertible, ${\det(T^{-1}) \neq 0}$.

Recall that, for ${\delta = 0}$, the band structure of $\smash{T_* H^0_{V, \, A} = T_* H_V}$ has a pair of Dirac points; see Section \ref{sec:dfm-bulk}. Suppose that ${(E_D, \, {\bm D}^\pm)}$ are such a pair of Dirac points of $\smash{T_* H^0_{V, \, A} = T_* H_V}$, inversion symmetric about ${\bm M}$. For ${\delta > 0}$, both degeneracies are lifted and the bands become non-degenerate. The Fourier symbols of the effective Hamiltonians in \eqref{eq:dir-eff} perturb to
\begin{equation}
\label{eq:dir-eff-breakC}
T_* H_{\rm eff}^{{\bm D}^\pm}\!({\bm \kappa}; \delta) = ({\bm \gamma}_0^{{\bm D}^\pm}\! \cdot {\bm \kappa}) \sigma_0 + ({\bm \gamma}_1^{{\bm D}^\pm}\! \cdot {\bm \kappa}) \sigma_1 + ({\bm \gamma}_2^{{\bm D}^\pm}\! \cdot {\bm \kappa}) \sigma_2 + \delta \vartheta^{{\bm D}^\pm} \sigma_3 ,
\end{equation}
where 
\begin{equation}
\label{eq:def-theta-Dpm}
\vartheta^{{\bm D}^\pm} \equiv \det(T^{-1}) \langle \Phi^{{\bm D}^\pm}_1\!, \, \nabla \cdot A \sigma_2 \nabla \Phi^{{\bm D}^\pm}_1 \rangle .
\end{equation}
Using ${\Phi^{{\bm D}^-}_1\! = \mathcal{P}[\Phi^{{\bm D}^+}_1]}$ and ${\Phi^{{\bm D}^-}_2\! = \mathcal{P}[\Phi^{{\bm D}^+}_2]}$, it is easily shown that
\begin{equation}
\label{eq:theta-Dpm-sym}
\vartheta^{{\bm D}^+}\! = \vartheta^{{\bm D}^-} .
\end{equation}
The two gapped dispersion surfaces of $T_* H^{{\bm D}^+}_{\rm eff}({\bm \kappa}; \delta)$ are plotted in \cite[Figure 10.3b]{chaban2024instability}; those of $T_* H^{{\bm D}^-}_{\rm eff}({\bm \kappa}; \delta)$ are qualitatively similar.

\subsection{Bulk topology, bulk-edge correspondence}
\label{sec:bulk-edge}

\subsubsection{Bulk topological indices}
\label{sec:bti}

In Section \ref{sec:bulk-breakC}, we discussed how breaking $\mathcal{PC}$ symmetry lifts band structure degeneracies, opening a local ${{\bm k} \mapsto L^2_{\bm k}(\R^2/\Lambda)}$ band gap between now non-degenerate, or isolated bands. The collection of all eigenspaces associated with bands below the gap, indexed by their quasimomentum, defines a vector bundle over the Brillouin zone ${\mathcal{B} \sim \mathbb{T}^2}$. To this vector bundle, one can assign an integer-valued topological index: the first {\it Chern number}, expressible as an integral of the {\it Berry curvature} over $\mathcal{B}$; see, e.g., \cite{drouot2019a, fruchart2013introduction}.

Two examples of $\mathcal{PC}$ symmetry-breaking terms are those which: (1) break $\mathcal{P}$ symmetry, but preserve $\mathcal{C}$ symmetry, and (2) preserve $\mathcal{P}$ symmetry, but break $\mathcal{C}$ symmetry.

\begin{remark}
\label{rmk:breakP}
{\rm (Topologically trivial case: $\mathcal{P}$-breaking and $\mathcal{C}$-preserving.)}
Consider a $\mathcal{PC}$-invariant bulk Hamiltonian, e.g., ${H_{\rm bulk} = H_V}$ from \eqref{eq:sql-bulk-op_2} or $H_{V \circ \, T^{-1}}$ from \eqref{eq:dfm-bulk-op_2}. Under perturbation by a $\mathcal{P}$-breaking, $\mathcal{C}$-preserving term, $\mathcal{PC}$ symmetry is broken and the band structure degeneracies are lifted. However, the Berry curvature is odd in the presence of $\mathcal{C}$ symmetry, yielding a zero Chern number. 
\end{remark}

\noindent Hence, to obtain nonzero bulk topological indices, $\mathcal{C}$ symmetry must be broken. In our models, this is done by the magneto-optic term; see \eqref{eq:sql-bulk-op-breakC} and \eqref{eq:dfm-bulk-op-breakC}.

For small band gap-opening perturbations, the Berry curvature is concentrated in small quasimomentum neighborhood(s) of the gap opening(s), and the Chern number can be obtained from the perturbed effective bulk Hamiltonian(s) about the former band touching(s); see, e.g., \cite{drouot2019a, chong2008effective}. For a quadratic band degeneracy, this is \eqref{eq:quad-eff-breakC}, and we have
\begin{equation}
c_1(H^\delta; E_S) = - {\rm sgn}(\vartheta^{\bm M}) .
\end{equation} 
For a pair of Dirac points, these are both of \eqref{eq:dir-eff-breakC}. Summing their contributions:
\begin{equation}
c_1(T_* H^\delta; E_D) = - \frac{{\rm sgn}(\vartheta^{{\bm D}^+})}{2} - \frac{{\rm sgn}(\vartheta^{{\bm D}^-})}{2} = - {\rm sgn}(\vartheta^{{\bm D}^+}) ,
\end{equation}
where ${\vartheta^{{\bm D}^-}\! = \vartheta^{{\bm D}^+}}$ from \eqref{eq:theta-Dpm-sym}. For further discussion of band topology in these models, see \cite[Section 10]{chaban2024instability}.

\subsubsection{Bulk-edge correspondence principle}
\label{sec:bec}

In Section \ref{sec:edge}, we introduce a family of edge Hamiltonians $H^\delta_{\rm edge}$ which slowly interpolate between periodic bulk Hamiltonians $\smash{H^{\pm, \, \delta}_{\rm bulk}}$. The operators $\smash{H^{\pm, \, \delta}_{\rm bulk}}$ have a common ${{\bm k} \mapsto L^2_{\bm k}(\R^2/\Lambda)}$ band gap containing the band degeneracy ${(E_\star, {\bm k}_\star)}$ of $\smash{H^{\pm, \, 0}_{\rm bulk} = H_{\rm bulk}}$; see panels (b) of Figures \ref{fig:schem-sql} and \ref{fig:schem-dfm}. Note that ${{\rm spec}(H^{\pm, \, \delta}_{\rm bulk}) \subseteq}$ ${{\rm spec}_{\rm ess}(H^\delta_{\rm edge})}$. Further, when the Chern numbers of $H^{+, \, \delta}_{\rm bulk}$ and $H^{-, \, \delta}_{\rm bulk}$ differ, the ${k \mapsto L^2_k(\R^2/\Z\boldsymbol{\mathfrak{v}}_1)}$ band gap containing ${(E_\star, k_\star = {\bm k}_\star \cdot \boldsymbol{\mathfrak{v}}_1)}$ is traversed by eigenvalue curves ${k \mapsto E(k)}$, where $E(k)$ is in the discrete spectrum of the spectral problem \eqref{eq:edge-evp_1}.

The {\it bulk-edge correspondence principle}, proved in the PDE context of the models in \cite{drouot2021}, states that the difference of Chern numbers ${c_1(H^{+, \, \delta}_{\rm bulk}) - c_1(H^{-, \, \delta}_{\rm bulk})}$ is an algebraic count, or ``spectral flow'', of the number of edge state curves which traverse the the ${k \mapsto L^2_k(\R^2/\Z\boldsymbol{\mathfrak{v}}_1)}$ band gap; see \cite{doll2023spectral}. The spectral flow is also related to a measure of quantum current in non-interacting models of electrons used to describe the integer quantum Hall effect.
 
\smallskip

\section{Rational edges and edge Hamiltonians}
\label{sec:edge}

\setcounter{equation}{0}
\setcounter{figure}{0}

A {\it rational line} in ${\Lambda = \Z {\bm v}_1 \oplus \Z {\bm v}_2}$ is a line ${\R \boldsymbol{\mathfrak{v}}_1}$, where ${\boldsymbol{\mathfrak{v}}_1 \equiv m_1 {\bm v}_1 + n_1 {\bm v}_2}$ for $m_1$, ${n_1 \smallin \Z}$ relatively prime. Given ${(m_1, n_1)}$, there exists a second pair of relatively prime integers ${(m_2, n_2)}$ such that ${m_1 n_2 - m_2 n_1 = 1}$.

\begin{remark}
\label{rmk:alt-edge}
For a rational line defined by ${(m_1, n_1)}$, the solution ${(m_2, n_2)}$ to ${m_1 n_2 - m_2 n_1 = 1}$ is unique modulo an integer translate by ${(m_1, n_1)}$; namely, for any ${\ell \smallin \Z}$, ${(m_2, n_2) + \ell (m_1, n_1)}$ is also a solution. 
\end{remark}

\noindent Setting ${\boldsymbol{\mathfrak{v}}_2 \equiv m_2 {\bm v}_1 + n_2 {\bm v}_2}$, we have ${\Z \boldsymbol{\mathfrak{v}}_1 \oplus \Z \boldsymbol{\mathfrak{v}}_2 = \Lambda}$. The dual lattice vectors are given by $\boldsymbol{\mathfrak{K}}_1 \equiv n_2 {\bm k}_1 - m_2 {\bm k}_2$ and ${\boldsymbol{\mathfrak{K}}_2 \equiv - n_1 {\bm k}_1 + m_1 {\bm k}_2}$, and we recover the relations ${\boldsymbol{\mathfrak{K}}_j \cdot \boldsymbol{\mathfrak{v}}_k = \delta_{j, k}}$ for $j$, ${k = 1}$, $2$.

A rational line ${\R \boldsymbol{\mathfrak{v}}_1}$ divides $\R^2$ into two half-spaces $\Omega_+$ and $\Omega_-$. We next introduce {\it edge Hamiltonians} which model a medium which slowly and smoothly interpolates between two distinct periodic bulk Hamiltonians (of the form introduced in Section \ref{sec:bulk-breakC}) at spatial infinity in $\Omega_+$ and $\Omega_-$ transverse to a rational line defect, or rational edge.

\subsection{Square lattice edge Hamiltonian}
\label{sec:sql-edge}

Let ${H_{\rm bulk} = H_V}$ denote a Schr\"{o}dinger operator with a square lattice potential; see Section \ref{sec:sql-bulk}. Following Section \ref{sec:sql-bulk-breakC}, we introduce a pair of asymptotic bulk operators perturbed by a $\mathcal{C}$-breaking term
\begin{equation}
\label{eq:def_sql-asym}
H^{\delta, \, \pm}_{V, \, A} \equiv -\Delta_{\bm x} + V({\bm x}) \pm \delta^2 \nabla_{\bm x} \cdot A({\bm x}) \sigma_2 \nabla_{\bm x} .
\end{equation}
(Recall that $\delta^r$ for ${r = 2}$ arises for a quadratic band degeneracy of $H_{\rm bulk}$ and ${r = 1}$ for a conical degeneracies; see Remark \ref{rmk:scaling}.) Note that, since ${{\bm x} \mapsto A({\bm x})}$ is even, then ${\mathcal{C} H^{+, \, \delta}_{\rm bulk} = H^{-, \, \delta}_{\rm bulk} \mathcal{C}}$. Hence, $H^{+ , \, \delta}$ and $H^{-, \, \delta}$ have common spectral gapsFix a rational line ${\R \boldsymbol{\mathfrak{v}}_1}$ in ${\Lambda = \Z^2}$. In our construction of an edge Hamiltonian we use a {\it domain wall}.

\begin{definition}
\label{def:dwall}
{\rm (Domain wall functions.)}
A {\it domain wall function} is given by a smooth, real-valued function ${X \mapsto \chi(X)}$ with the asymptotic behavior ${\chi(X) \to \pm 1}$ as ${X \to \pm \infty}$.
\end{definition}

\noindent A simple example of a domain wall function is ${\chi(X) = \tanh(X)}$. In some of our results, we shall require that ${\chi(X)^2 \to 1}$ at a sufficiently rapid rate as ${|X| \to +\infty}$. Further, we believe that our results extend to a relaxed notion of  domain wall function, requiring only that ${\chi(-\infty) \cdot \chi(+\infty) < 0}$.

Let ${X \mapsto \chi(X)}$ denote a domain wall function, as in Definition \ref{def:dwall}. We introduce the edge Hamiltonian:
\begin{equation}
\label{eq:def_sql-edge}
H_{\rm edge}^\delta = H^\delta_{V, \, A} \equiv -\Delta_{\bm x} + V({\bm x}) + \delta^2 \nabla_{\bm x} \cdot \chi(\delta \boldsymbol{\mathfrak{K}}_2 \cdot {\bm x}) A({\bm x}) \sigma_2 \nabla_{\bm x} .
\end{equation}
Since ${\boldsymbol{\mathfrak{K}}_2 \cdot \boldsymbol{\mathfrak{v}}_1 = 0}$, the edge Hamiltonian $H^\delta_{V, \, A}$ commutes with translations parallel to the edge: ${{\bm x} \mapsto {\bm x} + \boldsymbol{\mathfrak{v}_1}}$. \\

\noindent {\bf Notation.} For $V$ and $A$ fixed, we shall denote $H^{\delta, \, \pm}_{V, \, A}$ by $H^{\delta, \, \pm}$ and $H^\delta_{V, \, A}$ by $H^\delta$. \\

\noindent The difference of bulk topological indices between the asymptotic bulk Hamiltonians $H^{+, \, \delta}$ and $H^{\delta, -}$, defined in \eqref{eq:def_sql-asym}, is equal to ${\pm 2}$. The bulk-edge correspondence principle predicts that the edge Hamiltonian $H^\delta$, defined in \eqref{eq:def_sql-edge}, has $2$ edge state curves traversing the band gap in the sense of spectral flow; see Section \ref{sec:bulk-edge}.

\subsection{Deformed square lattice edge Hamiltonian}
\label{sec:dfm-edge}

Let ${H_{\rm bulk} = H_{V \circ \, T^{-1}}}$ be a Schr\"{o}dinger operator with a deformed square lattice potential; see Section \ref{sec:dfm-bulk}. Similarly, following Section \ref{sec:dfm-bulk-breakC}, we introduce a pair of perturbed asymptotic bulk operators:
\begin{equation}
\label{eq:def_dfm-asym_a}
H^{\delta, \, \pm}_{V \circ \, T^{-1}, \, A \, \circ \, T^{-1}} \equiv -\Delta_{\bm x} + V(T^{-1} {\bm x}) \pm \delta \nabla_{\bm x} \cdot A(T^{-1} {\bm x}) \sigma_2 \nabla_{\bm x} .
\end{equation}
Making the change of variables ${{\bm y} = T^{-1} {\bm x}}$ yields:
\begin{equation}
\label{eq:def_dfm-asym}
T_* H^{\delta, \, \pm}_{V, \, A} = -\nabla_{\bm y} \cdot (T^\mathsf{T} T)^{-1} \nabla_{\bm y} + V({\bm y}) \pm \delta \nabla_{\bm y} \cdot \det(T^{-1}) A({\bm y}) \sigma_2 \nabla_{\bm y} ,
\end{equation}
where we used that ${T^{-1} \sigma_2 (T^{-1})^\mathsf{T} = \det(T^{-1}) \sigma_2}$.

Fix a rational line ${\R \tilde{\boldsymbol{\mathfrak{v}}}_1}$ in ${\Lambda = T \Z^2}$. Then, there exists a rational line ${\R \boldsymbol{\mathfrak{v}}_1}$ in $\Z^2$ such that ${\tilde{\boldsymbol{\mathfrak{v}}}_\ell = T \boldsymbol{\mathfrak{v}}_\ell}$ and ${\tilde{\boldsymbol{\mathfrak{K}}}_\ell = (T^{-1})^\mathsf{T} \boldsymbol{\mathfrak{K}}_\ell}$ for ${\ell = 1}$, $2$. Now, let ${X \mapsto \chi(X)}$ denote a domain wall function; see Definition \ref{def:dwall}. We introduce the edge Hamiltonian:
\begin{equation}
\label{eq:def_dfm-edge_a}
H^\delta_{\rm edge} = H^\delta_{V \circ \, T^{-1}, \, A \, \circ \, T^{-1}} \equiv -\Delta_{\bm x} + V(T^{-1} {\bm x}) + \delta \nabla_{\bm x} \cdot \chi(\delta (T^{-1})^\mathsf{T} \boldsymbol{\mathfrak{K}}_2 \cdot {\bm x}) A(T^{-1} {\bm x}) \sigma_2 \nabla_{\bm x} .
\end{equation}
Since $\smash{(T^{-1})^\mathsf{T} \boldsymbol{\mathfrak{K}}_2 \cdot T \boldsymbol{\mathfrak{v}}_1 = \boldsymbol{\mathfrak{K}}_2 T^{-1} T \boldsymbol{\mathfrak{v}}_1 = 0}$, the edge Hamiltonian $\smash{H^\delta_{V \circ \, T^{-1}, \, A \, \circ \, T^{-1}}}$ commutes with translations parallel to the deformed edge: $\smash{{\bm x} \mapsto {\bm x} + T \boldsymbol{\mathfrak{v}}_1}$. Again, making the change of variables ${{\bm y} = T^{-1} {\bm x}}$, we obtain:
\begin{equation}
\label{eq:def_dfm-edge}
T_* H^\delta_{V, \, A} \equiv -\nabla_{\bm y} \cdot (T^\mathsf{T} T)^{-1} \nabla_{\bm y} + V({\bm y}) + \delta \nabla_{\bm y} \cdot \chi(\delta \boldsymbol{\mathfrak{K}}_2 \cdot {\bm y}) \det(T^{-1}) A({\bm y}) \sigma_2 \nabla_{\bm y} .
\end{equation}
The edge Hamiltonian $T_* H^\delta_{V, \, A}$ commutes with translations parallel to the undeformed edge: ${{\bm y} \mapsto {\bm y} + \boldsymbol{\mathfrak{v}}_1}$. \\

\noindent {\bf Notation.} For $V$, $A$, and $T$ fixed, we shall denote $T_* H^{\delta, \, \pm}_{V, \, A}$ by $T_* H^{\delta, \, \pm}$ and $T_* H^\delta_{V, \, A}$ by $T_* H^\delta$. \\

\noindent The difference of bulk topological indices between $T_* H^{+, \, \delta}$ and $T_* H^{-, \, \delta}$, defined in \eqref{eq:def_dfm-asym}, is also equal to ${\pm 2}$. Again, the bulk-edge correspondence principle predicts that the edge Hamiltonian $T_* H^\delta$, defined in \eqref{eq:def_dfm-edge}, has $2$ edge state curves which traverse the band gap in the sense of spectral flow.

\smallskip

\section{Construction of edge states via multiple-scale analysis}
\label{sec:multi}

\setcounter{equation}{0}
\setcounter{figure}{0}

In this section, we  construct approximations to the two gap-traversing edge state eigenvalue curves anticipated by the bulk-edge correspondence principle for each of the edge Hamiltonians ${H^\delta_{\rm edge} = H^\delta}$ and ${T_* H^\delta}$, defined in \eqref{eq:def_sql-edge} and \eqref{eq:def_dfm-edge}. Strategies for extending the approximate edge states to exact, rigorously proven edge states of the eigenvalue problems \eqref{eq:sql-edge-evp} and \eqref{eq:dfm-edge-evp} are discussed in Section \ref{sec:multi-exact}.

The approximate edge states arise from the band structure degeneracies of unperturbed bulk Hamiltonians ${H_{\rm bulk} = H}$ and ${T_* H}$, defined in \eqref{eq:sql-bulk-op_2} and \eqref{eq:dfm-bulk-op_3}. Their bifurcation is governed by {\it effective edge Hamiltonians}. The family of spectra of these effective edge Hamiltonians provide blow-ups of small neighborhoods of both the ${L^2_k(\R^2/\Z \boldsymbol{\mathfrak{v}}_1)}$ spectrum of $\smash{H^\delta_{\rm edge} = H^\delta}$ near ${(E_\star, k_\star = {\bm k}_\star \cdot \boldsymbol{\mathfrak{v}}_1)}$, where ${(E_\star, {\bm k}_\star) = (E_S, {\bm M})}$ is a quadratic band degeneracy of $\smash{H^0_{\rm edge} = H}$, and that of $\smash{H^\delta_{\rm edge} = T_* H^\delta}$ near ${(E_\star, k^\pm_\star) = (E_D, {\bm k}^\pm_\star \cdot \boldsymbol{\mathfrak{v}}_1)}$, where ${(E_\star, {\bm k}^\pm_\star) = (E_D, {\bm D}^\pm)}$ are a pair of Dirac points of $\smash{H^0_{\rm edge} = T_*H}$. For ${H_{\rm edge} = H^\delta}$, the effective edge Hamiltonian is a matrix Schr\"{o}dinger operator $\mathfrak{S}(\kappa)$; see \eqref{eq:sql-edge-eff}. For deformed square lattice media, there are two effective Hamiltonians: these are a pair of Dirac operators $\cancel{\mathfrak{D}}^\pm(\kappa)$; see \eqref{eq:dfm-edge-eff}. The parameter ${\kappa \smallin \R}$ is proportional to an $O(\delta)$ parallel quasimomentum displacement from $k_\star$ or $k^\pm_\star$, respectively. Edge state eigenvalues arise as eigenvalue curves ${\kappa \mapsto \Omega(\kappa)}$ within the band gap of the relevant effective edge Hamiltonian.

\subsection{Construction of approximate edge states: square lattice case}
\label{sec:multi-sql}

The edge state eigenvalue  problem for the (undeformed) edge Hamiltonian ${H^\delta_{\rm edge} = H^\delta}$, given in \eqref{eq:def_sql-edge}, is
\begin{equation}
\label{eq:sql-edge-evp}
H^\delta \Psi = E \Psi , \quad \Psi \smallin L^2_k(\R^2/\Z \boldsymbol{\mathfrak{v}}_1) , \quad k \smallin [-\pi, \, \pi] .
\end{equation}
Let ${(E_\star, {\bm k}_\star) = (E_S, {\bm M})}$ be a quadratic band degeneracy point of ${H^0 = H}$; see Section \ref{sec:sql-bulk}. Further, denote ${k_\star = {\bm M} \cdot \boldsymbol{\mathfrak{v}}_1}$. We use the two-scale character of $\smash{H^\delta}$ to construct approximate, two-scale solutions to \eqref{eq:sql-edge-evp} with energy-parallel quasimomenta ${(E, k)}$ in a neighborhood of ${(E_\star, k_\star)}$.

First, we introduce the effective edge Hamiltonian
\begin{equation}
\label{eq:sql-edge-eff}
\mathfrak{S}(\kappa) \equiv (1 - \alpha^{\bm M}_0) {\bm P}_X(\kappa)\cdot {\bm P}_X(\kappa) \sigma_0 - \alpha^{\bm M}_1 ({\bm P}_X(\kappa) \cdot \sigma_1 {\bm P}_X(\kappa)) \sigma_1 - \alpha^{\bm M}_2 ({\bm P}_X(\kappa) \cdot \sigma_3 {\bm P}_X(\kappa)) \sigma_2 + \vartheta^{\bm M} \chi(X) \sigma_3 ,
\end{equation}
acting on $L^2(\R; \C^2)$, where
\begin{equation}
\label{eq:def_P}
{\bm P}_X(\kappa) \equiv -i \partial_X \boldsymbol{\mathfrak{K}}_2 + \kappa \boldsymbol{\mathfrak{K}}_1 .
\end{equation}
The parameters $\alpha^{\bm M}_\ell$, ${0 \leq \ell \leq 2}$, and $\vartheta^{\bm M}$ are defined in Appendix \ref{apx:quad}. The following result states that the bound state eigenpairs of $\mathfrak{S}(\kappa)$ (formally) seed edge state curves in the ${k \mapsto L^2_k(\R^2/\Z\boldsymbol{\mathfrak{v}}_1)}$ band gap of $H^\delta$, within an ${O(\delta^2) \times O(\delta)}$ energy-parallel quasimomentum neighborhood of ${(E_\star, k_\star)}$.

\begin{theorem}
\label{thm:multi-sql}
{\rm (Approximate edge states near ${(E_\star, k_\star)}$.)}
Fix $\kappa$ in a bounded interval about ${\kappa = 0}$. Then, for ${\delta > 0}$ sufficiently small, the edge state eigenvalue problem \eqref{eq:sql-edge-evp} has an approximate solution ${(E^\delta(k_\star + \delta \kappa),}$ ${{\bm x} \mapsto \Psi^\delta({\bm x}; k_\star + \delta \kappa))}$ given by
\begin{align}
E^\delta(k_\star + \delta \kappa) & = E_\star + \delta^2 E^{(2)}(\kappa) + O(\delta^3) , \\
\Psi^\delta({\bm x}; k_\star + \delta \kappa) & = e^{i \delta \kappa \boldsymbol{\mathfrak{K}}_1 \cdot {\bm x}} \bigl( a_1(\delta \boldsymbol{\mathfrak{K}}_2 \cdot {\bm x}; \kappa) \Phi^{\bm M}_1({\bm x}) + a_2(\delta \boldsymbol{\mathfrak{K}}_2 \cdot {\bm x}; \kappa) \Phi^{\bm M}_2({\bm x}) + O(\delta) \bigr) \ \ \text{as} \ \ \delta \to 0 ,
\end{align}
where ${(E^{(2)}(\kappa), X \mapsto [a_1(X; \kappa), a_2(X; \kappa)]^\mathsf{T})}$ is a bound state eigenpair of $\mathfrak{S}(\kappa)$.
\end{theorem}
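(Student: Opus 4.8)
The plan is to carry out a formal two-scale (homogenization/WKB) expansion of the eigenvalue problem \eqref{eq:sql-edge-evp}, organized so that the bifurcation equation for $\mathfrak{S}(\kappa)$ emerges as a Fredholm solvability condition at order $\delta^2$. First I would normalize the pseudoperiodicity: since $\Psi \in L^2_k(\R^2/\Z\boldsymbol{\mathfrak{v}}_1)$ with $k = k_\star + \delta\kappa$ and $k_\star = {\bm M}\cdot\boldsymbol{\mathfrak{v}}_1$, and since $\boldsymbol{\mathfrak{K}}_1\cdot\boldsymbol{\mathfrak{v}}_1 = 1$, I factor out the phase $e^{i\delta\kappa\boldsymbol{\mathfrak{K}}_1\cdot{\bm x}}$ so that the remaining factor carries quasimomentum exactly $k_\star$. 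I then posit the two-scale ansatz
\[
\Psi^\delta = e^{i\delta\kappa\boldsymbol{\mathfrak{K}}_1\cdot{\bm x}}\Bigl(\psi_0({\bm x},X) + \delta\,\psi_1({\bm x},X) + \delta^2\psi_2({\bm x},X)\Bigr)\Big|_{X=\delta\boldsymbol{\mathfrak{K}}_2\cdot{\bm x}},
\]
where each $\psi_j(\cdot,X)$ is $\Lambda$-pseudoperiodic with quasimomentum ${\bm M}$ in the fast variable ${\bm x}$ and decaying in the transverse slow variable $X$, together with the energy expansion $E = E_\star + \delta E^{(1)} + \delta^2 E^{(2)} + O(\delta^3)$.

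Second, I would substitute and collect powers of $\delta$ using the chain rule $-i\nabla_{\bm x}\mapsto -i\nabla_{\bm x} + \delta\,{\bm P}_X(\kappa)$, with ${\bm P}_X(\kappa) = -i\boldsymbol{\mathfrak{K}}_2\partial_X + \kappa\boldsymbol{\mathfrak{K}}_1$ exactly as in \eqref{eq:def_P}, and noting that the slowly varying coefficient $\chi(\delta\boldsymbol{\mathfrak{K}}_2\cdot{\bm x})$ freezes to $\chi(X)$ at leading order with $O(\delta)$ corrections from $\nabla\chi$. At order $\delta^0$ this gives $(H-E_\star)\psi_0 = 0$, so by the analysis of Section \ref{sec:sql-bulk} the $L^2_{\bm M}(\R^2/\Lambda)$-kernel of $H-E_\star$ is two-dimensional and $\psi_0 = a_1(X)\Phi^{\bm M}_1({\bm x}) + a_2(X)\Phi^{\bm M}_2({\bm x})$, with the envelope $[a_1,a_2]^\mathsf{T}$ still free. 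At order $\delta^1$, $(H-E_\star)\psi_1 = E^{(1)}\psi_0 - \mathcal{L}_1\psi_0$, where $\mathcal{L}_1 = (-i\nabla_{\bm x})\cdot{\bm P}_X(\kappa) + {\bm P}_X(\kappa)\cdot(-i\nabla_{\bm x})$ is the linear-in-momentum cross term. Projecting onto $\mathrm{span}\{\Phi^{\bm M}_1,\Phi^{\bm M}_2\}$ (the Fredholm alternative), the key point is that because $(E_S,{\bm M})$ is a \emph{quadratic} degeneracy, the projected linear symbol --- i.e. the linear-in-${\bm\kappa}$ part of \eqref{eq:quad-eff} --- vanishes identically; this forces $E^{(1)}=0$ and determines $\psi_1 = -(H-E_\star)^{-1}_\perp\mathcal{L}_1\psi_0$ via the reduced resolvent on the orthogonal complement of the kernel.

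Third, at order $\delta^2$ the equation reads $(H-E_\star)\psi_2 = E^{(2)}\psi_0 - \mathcal{L}_1\psi_1 - \mathcal{L}_2\psi_0$, where $\mathcal{L}_2\psi_0 = {\bm P}_X(\kappa)\cdot{\bm P}_X(\kappa)\psi_0 + \bigl(\nabla_{\bm x}\cdot\chi(X)A\sigma_2\nabla_{\bm x}\bigr)\psi_0$. Projecting the right-hand side onto $\mathrm{span}\{\Phi^{\bm M}_1,\Phi^{\bm M}_2\}$ produces the eigenvalue equation $\mathfrak{S}(\kappa)[a_1,a_2]^\mathsf{T} = E^{(2)}[a_1,a_2]^\mathsf{T}$. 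The crux is to show the projection is exactly \eqref{eq:sql-edge-eff}: the effective Hamiltonian of degenerate second-order perturbation theory is the direct quadratic term minus the Schur contribution $\mathcal{L}_1(H-E_\star)^{-1}_\perp\mathcal{L}_1$, and matching this to the matrix elements in Appendix \ref{apx:quad} reproduces precisely $(1-\alpha^{\bm M}_0)\,{\bm P}_X\!\cdot\!{\bm P}_X\,\sigma_0 - \alpha^{\bm M}_1({\bm P}_X\!\cdot\!\sigma_1{\bm P}_X)\sigma_1 - \alpha^{\bm M}_2({\bm P}_X\!\cdot\!\sigma_3{\bm P}_X)\sigma_2$, with the \emph{same} constants $\alpha^{\bm M}_\ell$ as in the bulk symbol \eqref{eq:quad-eff}. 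The domain-wall term is obtained separately: since the magneto-optic operator $\nabla\cdot A\sigma_2\nabla$ commutes with $\mathcal{P}$ and anticommutes with $\mathcal{C}$, and $\Phi^{\bm M}_2 = \mathcal{PC}[\Phi^{\bm M}_1]$, its $2\times2$ projection is traceless with vanishing off-diagonal entries, i.e. $\vartheta^{\bm M}\sigma_3$ with $\vartheta^{\bm M}$ as in \eqref{eq:def-thetaM}; the factor $\chi(X)$ rides along as the frozen slow coefficient.

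Finally, to upgrade the formal series to an \emph{approximate} solution in the sense of the statement, I would take a bound state $(E^{(2)}(\kappa),[a_1,a_2]^\mathsf{T})$ of $\mathfrak{S}(\kappa)$ --- whose existence is furnished by the spectral analysis of Section \ref{sec:spec} and is assumed here --- truncate the ansatz, and estimate the residual $(H^\delta - E^\delta)\Psi^\delta$ in $L^2_k(\R^2/\Z\boldsymbol{\mathfrak{v}}_1)$. Exponential decay of the $\mathfrak{S}(\kappa)$ bound state in $X$ guarantees $\Psi^\delta$ lies in $L^2_k$ and is localized transverse to the edge, and the uncancelled terms (the $O(\delta^3)$ energy correction and the $\nabla\chi = O(\delta)$ contributions) are $O(\delta^3)$ in norm given the smoothness and boundedness of $V,A,\chi,\Phi^{\bm M}_j$. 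The main obstacle is the order-$\delta^2$ bookkeeping: verifying that the first-order solvability condition is genuinely vacuous (which is exactly where the quadratic character of the degeneracy and the $\mathcal{PC}$-structure of the eigenbasis enter), and then checking that the direct term and the reduced-resolvent term combine to the asserted operator-valued symbol with the bulk coefficients $\alpha^{\bm M}_\ell$, tracking operator ordering carefully since ${\bm P}_X(\kappa)$ and $\chi(X)$ do not commute.
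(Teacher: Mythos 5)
Your proposal is correct and follows essentially the same route as the paper: the phase-factored two-scale ansatz in $({\bm x}, X=\delta\boldsymbol{\mathfrak{K}}_2\cdot{\bm x})$, the Fredholm alternative at each order of the hierarchy, the vanishing of the first-order projected matrix elements (the paper's Proposition \ref{prop:quad-1-zero}) forcing $E^{(1)}=0$, and the second-order solvability condition combining the direct $\smash{{\bm P}_X^2}$ term, the reduced-resolvent (Schur) contribution carrying the bulk coefficients $\alpha^{\bm M}_\ell$ (Proposition \ref{prop:quad-2}), and the $\vartheta^{\bm M}\chi(X)\sigma_3$ term from the $\mathcal{PC}$-structure (Proposition \ref{prop:quad-breakC}) into the eigenvalue problem for $\mathfrak{S}(\kappa)$. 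Your closing residual estimate goes slightly beyond what the paper proves here (it defers the quasimode-to-true-eigenpair upgrade to Section \ref{sec:multi-exact}), but this is an addition rather than a discrepancy.
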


\begin{example}[Vertical edge]
\label{ex:sql-edge-eff-vert}
Take the edge direction ${\boldsymbol{\mathfrak{v}}_1 = {\bm v}_2 =[0, 1]^\mathsf{T}}$. Then, ${\boldsymbol{\mathfrak{v}}_2 = [-1, 0]^\mathsf{T}}$, ${\boldsymbol{\mathfrak{K}}_1 = [0, 1]^\mathsf{T}}$, and ${\boldsymbol{\mathfrak{K}}_2 = [-1, 0]^\mathsf{T}}$. Further, we have ${k_\star = {\bm M} \cdot \boldsymbol{\mathfrak{v}}_1 = \pi}$. From \eqref{eq:def_P}, ${{\bm P}_X(\kappa) = [-P_X, \kappa]^\mathsf{T}}$, where ${P_X = -i \partial_X}$. Hence, \eqref{eq:sql-edge-eff} becomes
\begin{equation}
\mathfrak{S}(\kappa) = (1 - \alpha^{\bm M}_0) (P_X^2 + \kappa^2) \sigma_0 + 2 \alpha^{\bm M}_1 P_X \kappa \sigma_1 + \alpha^{\bm M}_2 (-P_X^2 + \kappa^2) \sigma_2 + \vartheta^{\bm M} \chi(X) \sigma_3 .
\end{equation}
\end{example}

\noindent Analytical and computational results on the spectral properties of $\mathfrak{S}(\kappa)$ are presented in Section \ref{sec:spec-sql}. Further, a strategy for extending the approximate edge states of $H^\delta$, given in Theorem \ref{thm:multi-sql}, to true solutions of the edge state eigenvalue problem \eqref{eq:sql-edge-evp} is discussed in Section \ref{sec:multi-exact}. \\

\noindent {\it Proof of Theorem \ref{thm:multi-sql}.} Consider the eigenstate ansatz
\begin{equation}
\Psi({\bm x}; k_\star + \delta \kappa) = e^{i \delta \kappa \boldsymbol{\mathfrak{K}}_1 \cdot {\bm x}} \psi({\bm x}; \kappa) , \quad \text{where} \quad {\bm x} \mapsto \psi({\bm x}; \kappa) \smallin L^2_{k_\star}(\R^2/\Z\boldsymbol{\mathfrak{v}}_1) .
\end{equation}
Substituting into the edge state eigenvalue problem \eqref{eq:sql-edge-evp} yields
\begin{align}
\label{eq:sql-edge-evp_2}
& \bigl( -(\nabla_{\bm x} + i \delta \kappa \boldsymbol{\mathfrak{K}}_1) \cdot (\nabla_{\bm x} + i \delta \kappa \boldsymbol{\mathfrak{K}}_1) + V({\bm x}) \\
& \qquad + \delta^2 (\nabla_{\bm x} + i \delta \kappa \boldsymbol{\mathfrak{K}}_1) \cdot \chi(\delta \boldsymbol{\mathfrak{K}}_2 \cdot {\bm x}) A({\bm x}) \sigma_2 (\nabla_{\bm x} + i \delta \kappa \boldsymbol{\mathfrak{K}}_1) \bigr) \psi({\bm x}; \kappa) = E \psi({\bm x}; \kappa) . \nonumber
\end{align}
For ${\delta > 0}$ small, \eqref{eq:sql-edge-evp_2} has explicitly two-scale character. We seek a two-scale solution, depending on ${\bm x}$ and ${X = \delta \boldsymbol{\mathfrak{K}}_2 \cdot {\bm x}}$, of the form:
\begin{align}
E^\delta(\kappa) & = E_\star + \delta E^{(1)}(\kappa) + \delta^2 E^{(2)}(\kappa) + O(\delta^3) , \\
\psi^\delta({\bm x}; \kappa) & = \psi^{(0)}({\bm x}, X; \kappa) + \delta \psi^{(1)}({\bm x}, X; \kappa) + \delta^2 \psi^{(2)}({\bm x}, X; \kappa) + O(\delta^3) \ \ \text{as} \ \ \delta \to 0 .
\end{align}
With this extended set of variables, we have ${\nabla_{\bm x} \to \nabla_{\bm x} + \delta \partial_X \boldsymbol{\mathfrak{K}}_2}$. To impose $\smash{{\bm x} \mapsto \psi^\delta({\bm x}; \kappa) \smallin L^2_{k_\star}(\R^2 / \Z \boldsymbol{\mathfrak{v}}_1)}$, we require that, for ${0 \leq \ell \leq 2}$,
\begin{equation}
{\bm x} \mapsto \psi^{(\ell)}({\bm x}, X; \kappa) \smallin L^2_{\bm M}(\R^2/\Z^2) \quad \text{and} \quad X \mapsto \psi^{(\ell)}({\bm x}, X; \kappa) \smallin L^2(\R) .
\end{equation}
Substituting and grouping terms with like orders of $\delta$ yields a hierarchy of equations:
\begin{subequations}
\label{eq:multi-sql}
\begin{align}
\label{eq:multi-sql-0}
\delta^0: \quad & (H - E_\star) \psi^{(0)}({\bm x}, X; \kappa) = 0 , \\
\label{eq:multi-sql-1}
\delta^1: \quad & (H - E_\star) \psi^{(1)}({\bm x}, X; \kappa) = (E^{(1)}(\kappa) + (-i \partial_X \boldsymbol{\mathfrak{K}}_2 + \kappa \boldsymbol{\mathfrak{K}}_1) \cdot 2i \nabla_{\bm x}) \psi^{(0)}({\bm x}, X; \kappa) , \\
\label{eq:multi-sql-2}
\delta^2: \quad & (H - E_\star) \psi^{(2)}({\bm x}, X; \kappa) = (E^{(1)}(\kappa) + (-i \partial_X \boldsymbol{\mathfrak{K}}_2 + \kappa \boldsymbol{\mathfrak{K}}_1) \cdot 2i \nabla_{\bm x}) \psi^{(1)}({\bm x}, X; \kappa) \\ 
& \qquad + (E^{(2)}(\kappa) - (-i \partial_X \boldsymbol{\mathfrak{K}}_2 + \kappa \boldsymbol{\mathfrak{K}}_1)^2 - \chi(X) \nabla_{\bm x} \cdot A({\bm x}) \sigma_2 \nabla_{\bm x}) \psi^{(0)}({\bm x}, X; \kappa) . \nonumber
\end{align}
\end{subequations}
We solve the equations in order with $X$, ${\kappa \smallin \R}$ fixed.

First, we solve the order $\delta^0$ equation \eqref{eq:multi-sql-0}. This homogeneous equation for ${H - E_\star}$ has the solution
\begin{equation}
{\bm x} \mapsto \psi^{(0)}({\bm x}, X; \kappa) = a^{(0)}_k(X; \kappa) \Phi^{\bm M}_k({\bm x}) .
\end{equation}
(Summation over repeated indices is implied.) The functions ${X \mapsto a^{(0)}_k(X; \kappa)}$, ${k = 1}$, $2$, are determined below and decay rapidly as ${|X| \to +\infty}$. (In the statement of Theorem \ref{thm:multi-sql}, we omit their superscripts.)

Next, we consider the $\delta^\ell$ equations, for ${\ell \geq 1}$, in the hierarchy \eqref{eq:multi-sql}. For each fixed $X$, ${\kappa \smallin \R}$, we must solve a problem of the form:
\begin{equation}
\label{eq:sql-multi-inh}
(H - E_\star) \tilde{\psi}({\bm x}, X; \kappa) = F({\bm x}, X; \kappa), \quad \text{where} \quad {\bm x} \mapsto \tilde{\psi}({\bm x}, X; \kappa), \, F({\bm x}, X; \kappa) \smallin L^2_{\bm M}(\R^2/\Z^2) .
\end{equation}
The inhomogeneous problem \eqref{eq:sql-multi-inh} is solvable if and only if ${{\bm x} \mapsto F({\bm x}, X; \kappa)}$ is orthogonal to ${\C \Phi^{\bm M}_1 \oplus \C \Phi^{\bm M}_2}$, the $L^2_{\bm M}(\R^2/\Z^2)$ kernel of ${H - E_\star}$. Any solution of \eqref{eq:sql-multi-inh} is of the form:
\begin{equation}
{\bm x} \mapsto \tilde{\psi}({\bm x}, X; \kappa) = \tilde{\psi}_p({\bm x}, X; \kappa) + \tilde{a}_k(X; \kappa) \Phi^{\bm M}_k({\bm x}) ,
\end{equation}
where ${{\bm x} \mapsto \tilde{\psi}_p({\bm x}, X; \kappa)}$ is the unique (particular) solution of \eqref{eq:sql-multi-inh} orthogonal to ${\C \Phi^{\bm M}_1 \oplus \C \Phi^{\bm M}_2}$. We write
\begin{equation}
{\bm x} \mapsto \tilde{\psi}_p({\bm x}, X; \kappa) = \mathscr{R}(E_\star) F({\bm x}, X; \kappa) ,
\end{equation}
where $\mathscr{R}(E_\star)$ is the resolvent ${(H - E_\star)^{-1}}$ acting in the $L^2_{\bm M}(\R^2/\Z^2)$ orthogonal complement of ${\C \Phi^{\bm M}_1 \oplus \C \Phi^{\bm M}_2}$.

We now turn to the order $\delta^1$ equation \eqref{eq:multi-sql-1}.

\begin{proposition}
\label{prop:multi-sql-1-sol}
The order $\delta^1$ equation \eqref{eq:multi-sql-1} is solvable if and only if ${E^{(1)}(\kappa) = 0}$.
\end{proposition}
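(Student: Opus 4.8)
The plan is to apply the Fredholm solvability condition (the Fredholm alternative) to the inhomogeneous problem \eqref{eq:multi-sql-1}, exactly as formalized in \eqref{eq:sql-multi-inh}. For each fixed $X$ and $\kappa$, the operator $H - E_\star$ acting on $L^2_{\bm M}(\R^2/\Z^2)$ is self-adjoint with two-dimensional kernel $\C\Phi^{\bm M}_1 \oplus \C\Phi^{\bm M}_2$; hence \eqref{eq:multi-sql-1} is solvable for $\psi^{(1)}$ if and only if its right-hand side is $L^2_{\bm M}(\R^2/\Z^2)$-orthogonal to both $\Phi^{\bm M}_1$ and $\Phi^{\bm M}_2$. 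The whole proposition reduces to extracting these two scalar conditions.

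Concretely, I would substitute the leading-order solution $\psi^{(0)}(\bm x, X; \kappa) = a^{(0)}_k(X; \kappa)\Phi^{\bm M}_k(\bm x)$ (summation over $k = 1, 2$) into the right-hand side of \eqref{eq:multi-sql-1} and expand the operator $(-i\partial_X\boldsymbol{\mathfrak{K}}_2 + \kappa\boldsymbol{\mathfrak{K}}_1)\cdot 2i\nabla_{\bm x}$. Since $a^{(0)}_k$ depends only on $X$ and $\Phi^{\bm M}_k$ only on $\bm x$, the derivative in $X$ acts on the amplitudes and the derivative in $\bm x$ on the Bloch modes, producing
\begin{equation}
2(\partial_X a^{(0)}_k)\,\boldsymbol{\mathfrak{K}}_2 \cdot \nabla_{\bm x}\Phi^{\bm M}_k + 2i\kappa\, a^{(0)}_k\,\boldsymbol{\mathfrak{K}}_1 \cdot \nabla_{\bm x}\Phi^{\bm M}_k ,
\end{equation}
together with the term $E^{(1)}(\kappa)\,a^{(0)}_k\Phi^{\bm M}_k$. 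Taking the $L^2_{\bm M}(\R^2/\Z^2)$ inner product against $\Phi^{\bm M}_j$ and using the orthonormality $\langle\Phi^{\bm M}_j, \Phi^{\bm M}_k\rangle = \delta_{jk}$, the solvability condition becomes, for $j = 1, 2$,
\begin{equation}
E^{(1)}(\kappa)\,a^{(0)}_j + 2(\partial_X a^{(0)}_k)\,\langle\Phi^{\bm M}_j, \boldsymbol{\mathfrak{K}}_2\cdot\nabla_{\bm x}\Phi^{\bm M}_k\rangle + 2i\kappa\,a^{(0)}_k\,\langle\Phi^{\bm M}_j, \boldsymbol{\mathfrak{K}}_1\cdot\nabla_{\bm x}\Phi^{\bm M}_k\rangle = 0 .
\end{equation}

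The decisive step is to observe that every gradient matrix element $\langle\Phi^{\bm M}_j, \nabla_{\bm x}\Phi^{\bm M}_k\rangle$ vanishes for $j, k = 1, 2$. This is precisely the defining feature of a quadratic band degeneracy: it is equivalent to the absence of any term linear in $\bm\kappa$ in the effective bulk symbol $H_{\rm eff}^{\bm M}(\bm\kappa)$ of \eqref{eq:quad-eff}, established in \cite{keller2018spectral, keller2020erratum} and reviewed in Appendix \ref{apx:quad}. Indeed, since $H(\bm\kappa) = H - 2i\bm\kappa\cdot\nabla_{\bm x} + |\bm\kappa|^2$, the first-order degenerate perturbation within the $\{\Phi^{\bm M}_1, \Phi^{\bm M}_2\}$ subspace is $-2i\bm\kappa\cdot\langle\Phi^{\bm M}_j, \nabla_{\bm x}\Phi^{\bm M}_k\rangle$, whose vanishing for all $\bm\kappa$ is exactly what makes the touching quadratic rather than conical. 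As $\boldsymbol{\mathfrak{K}}_1$ and $\boldsymbol{\mathfrak{K}}_2$ are fixed vectors, both bracketed coefficients are zero, and the solvability condition collapses to $E^{(1)}(\kappa)\,a^{(0)}_j = 0$ for $j = 1, 2$.

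To finish, note that the leading-order amplitude $[a^{(0)}_1, a^{(0)}_2]^\mathsf{T}$ is not identically zero (otherwise $\psi^{(0)} \equiv 0$ and the edge state is trivial), so $E^{(1)}(\kappa) = 0$ is forced. Conversely, if $E^{(1)}(\kappa) = 0$ then the right-hand side of \eqref{eq:multi-sql-1} is orthogonal to the kernel by the same vanishing matrix elements, so \eqref{eq:multi-sql-1} is solvable. I expect the only genuine subtlety to lie in this decisive step: it rests entirely on the quadratic, as opposed to Dirac, character of the degeneracy, i.e. on the vanishing of the first-order gradient couplings. This is also why the analogous computation in the deformed case, where these couplings survive, yields the linear Dirac term rather than $E^{(1)} = 0$.
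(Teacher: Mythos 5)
Your proposal is correct and follows essentially the same route as the paper: impose orthogonality of the right-hand side of \eqref{eq:multi-sql-1} to $\C\Phi^{\bm M}_1 \oplus \C\Phi^{\bm M}_2$, observe that the first-order gradient matrix elements $\langle\Phi^{\bm M}_j, \nabla_{\bm x}\Phi^{\bm M}_k\rangle$ vanish (the paper invokes Proposition \ref{prop:quad-1-zero}, which rests on the $\mathcal{R}$-eigenspace structure \ref{itm:quad-dgn-2}--\ref{itm:quad-dgn-3} via \cite{keller2018spectral}), and conclude $E^{(1)}(\kappa)=0$ from the nontriviality of $a^{(0)}$. The only cosmetic difference is that you justify the vanishing by identifying it with the quadratic (rather than conical) character of the degeneracy, whereas the paper cites the symmetry-based computation directly; both amount to the same fact.
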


\noindent Proposition \ref{prop:multi-sql-1-sol} is proven in Appendix \ref{apx:pf_multi-sql}. Henceforth, we set ${E^{(1)}(\kappa) = 0}$. The solution is then
\begin{align}
{\bm x} \mapsto \psi^{(1)}({\bm x}, X; \kappa) & = \mathscr{R}(E_\star) (-i \partial_X \boldsymbol{\mathfrak{K}}_2 + \kappa \boldsymbol{\mathfrak{K}}_1) \cdot 2i \nabla_{\bm x} \psi^{(0)}({\bm x}, X; \kappa) + a^{(1)}_k(X; \kappa) \Phi^{\bm M}_k({\bm x}) \\
& = \mathscr{R}(E_\star) (-i \partial_X \boldsymbol{\mathfrak{K}}_2 + \kappa \boldsymbol{\mathfrak{K}}_1) \cdot 2i \nabla_{\bm x} a^{(0)}_k(X; \kappa) \Phi^{\bm M}_k({\bm x}) + a^{(1)}_k(X; \kappa) \Phi^{\bm M}_k({\bm x}) . \nonumber
\end{align}
The functions $\smash{X \mapsto a^{(1)}_k(X; \kappa)}$, ${k = 1}$, $2$, to be determined at higher order, decay rapidly as ${|X| \to +\infty}$.

Finally, we solve the order $\delta^2$ equation \eqref{eq:multi-sql-2}. By the above discussion, \eqref{eq:multi-sql-2} is solvable if and only if the right-hand side is orthogonal to ${\C \Phi^{\bm M}_1 \oplus \C \Phi^{\bm M}_2}$. This solvability condition is equivalent to the eigenvalue problem for the effective edge Hamiltonian $\mathfrak{S}(\kappa)$, given in \eqref{eq:sql-edge-eff}:

\begin{proposition}
\label{prop:multi-sql-2-sol}
The order $\delta^2$ equation \eqref{eq:multi-sql-2} is solvable if and only if the pair ${(E^{(2)}(\kappa), X \mapsto a^{(0)}(X; \kappa))}$, with ${a^{(0)}(X; \kappa) = [a^{(0)}_1(X; \kappa), a^{(0)}_2(X; \kappa)]^\mathsf{T}}$, is a solution of
\begin{equation}
\label{eq:sql-edge-eff-evp}
\mathfrak{S}(\kappa) a^{(0)}(X; \kappa) = E^{(2)}(\kappa) a^{(0)}(X; \kappa) , \quad \text{where} \quad X \mapsto a^{(0)}(X; \kappa) \smallin L^2(\R; \C^2).
\end{equation}
\end{proposition}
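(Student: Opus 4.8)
\emph{Proof proposal.} The plan is to obtain the statement directly from the Fredholm alternative for the self-adjoint operator $H - E_\star$ acting in $L^2_{\bm M}(\R^2/\Z^2)$, whose kernel is $\C\Phi^{\bm M}_1 \oplus \C\Phi^{\bm M}_2$. By the discussion surrounding \eqref{eq:sql-multi-inh}, for fixed $X,\kappa \smallin \R$ the order-$\delta^2$ equation \eqref{eq:multi-sql-2} admits a solution ${\bm x} \mapsto \psi^{(2)}({\bm x}, X; \kappa) \smallin L^2_{\bm M}(\R^2/\Z^2)$ if and only if its right-hand side is $L^2_{\bm M}$-orthogonal to both $\Phi^{\bm M}_1$ and $\Phi^{\bm M}_2$. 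Thus the biconditional asserted in the proposition is already encoded in the alternative; the task is to project the right-hand side of \eqref{eq:multi-sql-2} against each $\Phi^{\bm M}_j$, $j=1,2$, and to recognize the two resulting scalar conditions as the components of $\mathfrak{S}(\kappa) a^{(0)} = E^{(2)}(\kappa) a^{(0)}$.

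First I would substitute $E^{(1)}(\kappa) = 0$ (Proposition \ref{prop:multi-sql-1-sol}) together with the known expression for $\psi^{(1)}$, and then take the $L^2_{\bm M}$ inner product with $\Phi^{\bm M}_j$. Three groups of terms arise: (i) the ``bare'' contribution $-\langle \Phi^{\bm M}_j, {\bm P}_X(\kappa)\cdot{\bm P}_X(\kappa)\, \psi^{(0)}\rangle$, which by orthonormality of the $\Phi^{\bm M}_k$ equals $-{\bm P}_X(\kappa)\cdot{\bm P}_X(\kappa)\, a^{(0)}_j$ (the operator acting only in $X$); (ii) the second-order ``resolvent'' contribution $\langle \Phi^{\bm M}_j, {\bm P}_X(\kappa)\cdot 2i\nabla_{\bm x}\, \mathscr{R}(E_\star)\, {\bm P}_X(\kappa)\cdot 2i\nabla_{\bm x}\, a^{(0)}_k\Phi^{\bm M}_k\rangle$; and (iii) the domain-wall magnetic term $-\chi(X)\, a^{(0)}_k \langle \Phi^{\bm M}_j, \nabla_{\bm x}\cdot A\sigma_2\nabla_{\bm x}\Phi^{\bm M}_k\rangle$, alongside the diagonal piece $E^{(2)}(\kappa) a^{(0)}_j$ coming from $E^{(2)}\psi^{(0)}$. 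A preliminary step is to show that the free part $a^{(1)}_k\Phi^{\bm M}_k$ of $\psi^{(1)}$ does not contribute: its projection is proportional to the matrix $\langle \Phi^{\bm M}_j, \nabla_{\bm x}\Phi^{\bm M}_k\rangle$, which vanishes by the same $\mathcal{PC}$/parity mechanism that forces $E^{(1)}(\kappa)=0$. Hence $a^{(1)}$ decouples and the condition closes on $a^{(0)}$ alone.

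The central identification lies in group (ii). The point I would emphasize is that this projected second-order term is computed by precisely the degenerate perturbation-theoretic reduction that defines the effective \emph{bulk} Hamiltonian symbol \eqref{eq:quad-eff} in Appendix \ref{apx:quad}: the coefficients $\alpha^{\bm M}_0, \alpha^{\bm M}_1, \alpha^{\bm M}_2$ and the $\sigma_0,\sigma_1,\sigma_2$ structure emerge identically, the only difference being that the constant quasimomentum displacement ${\bm\kappa}$ there is here replaced by the $X$-differential operator ${\bm P}_X(\kappa)$ defined in \eqref{eq:def_P}. Combining groups (i) and (ii) then produces the kinetic part $(1-\alpha^{\bm M}_0)\, {\bm P}_X(\kappa)\cdot{\bm P}_X(\kappa)\sigma_0 - \alpha^{\bm M}_1({\bm P}_X(\kappa)\cdot\sigma_1{\bm P}_X(\kappa))\sigma_1 - \alpha^{\bm M}_2({\bm P}_X(\kappa)\cdot\sigma_3{\bm P}_X(\kappa))\sigma_2$, where the ``$1$'' in $(1-\alpha^{\bm M}_0)$ is the bare Laplacian and $-\alpha^{\bm M}_0$ the resolvent correction. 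For group (iii) I would invoke the symmetry relations of Appendix \ref{apx:quad} (using $\Phi^{\bm M}_2 = \mathcal{PC}[\Phi^{\bm M}_1]$) to evaluate $\langle \Phi^{\bm M}_j, \nabla_{\bm x}\cdot A\sigma_2\nabla_{\bm x}\Phi^{\bm M}_k\rangle = \vartheta^{\bm M}(\sigma_3)_{jk}$ with $\vartheta^{\bm M}$ as in \eqref{eq:def-thetaM}, yielding the $\vartheta^{\bm M}\chi(X)\sigma_3$ term. Moving $E^{(2)}(\kappa) a^{(0)}$ to the other side assembles the solvability conditions into exactly $\mathfrak{S}(\kappa) a^{(0)} = E^{(2)}(\kappa) a^{(0)}$, i.e.\ \eqref{eq:sql-edge-eff-evp}, and the decay of $a^{(0)}$ gives membership in $L^2(\R;\C^2)$.

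The main obstacle I anticipate is the bookkeeping in group (ii): because ${\bm P}_X(\kappa)$ is a differential operator rather than a vector, one must track operator ordering so that the $-i\partial_X$ factors and the $\bm x$-matrix elements combine into the \emph{symmetrized} quadratic forms ${\bm P}_X(\kappa)\cdot\sigma_a{\bm P}_X(\kappa)$ appearing in $\mathfrak{S}(\kappa)$, and to confirm self-adjointness of the resulting operator on $L^2(\R;\C^2)$. Concretely, this amounts to verifying that the matrix elements $\langle \Phi^{\bm M}_j, \partial_{x_p}\mathscr{R}(E_\star)\partial_{x_q}\Phi^{\bm M}_k\rangle$ assemble, under the rotation/reflection and $\mathcal{PC}$ constraints recorded in Appendix \ref{apx:quad}, into exactly the coefficient tensors multiplying $\sigma_0,\sigma_1,\sigma_2$ in \eqref{eq:quad-eff}; the delicate points are the off-diagonal ($\sigma_1,\sigma_2$) structure and the absence of any $\sigma_3$ contribution from the kinetic part, both of which follow from the same symmetry relations that fixed the form of $H^{\bm M}_{\rm eff}$ in the bulk.
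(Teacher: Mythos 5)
Your proposal is correct and follows essentially the same route as the paper's proof in Appendix \ref{apx:pf_multi-sql}: project the right-hand side of \eqref{eq:multi-sql-2} onto $\Phi^{\bm M}_j$, eliminate the $a^{(1)}_k$ contribution via the vanishing of ${\bm P}_X(\kappa)\cdot\langle\Phi^{\bm M}_j,\,-2i\nabla\Phi^{\bm M}_k\rangle$ (Proposition \ref{prop:quad-1-zero}), identify the resolvent term with the $\alpha^{\bm M}_\ell$/Pauli structure of the effective bulk symbol (Proposition \ref{prop:quad-2}), and evaluate the magnetic matrix element as $\vartheta^{\bm M}(\sigma_3)_{jk}$ via $\mathcal{PC}$ antisymmetry (Proposition \ref{prop:quad-breakC}). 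The three groups of terms you list, and the symmetry lemmas you anticipate needing, are exactly the ones the paper uses.
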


\noindent Proposition \ref{prop:multi-sql-2-sol} is proven in Appendix \ref{apx:pf_multi-sql}. We have shown that the leading order term of an asymptotic solution of the edge state eigenvalue problem \eqref{eq:sql-edge-evp} near ${(E, k) = (E_\star, k_\star)}$ arises from the $L^2(\R; \C^2)$ eigenstates of $\mathfrak{S}(\kappa)$. The proof of Theorem \ref{thm:multi-sql} is complete. \qed

\subsection{Construction of approximate edge states: deformed square lattice case}
\label{sec:multi-dfm}

The edge state eigenvalue problem for the deformed edge Hamiltonian ${H^\delta_{\rm edge} = T_* H^\delta}$, given in \eqref{eq:def_dfm-edge}, is
\begin{equation}
\label{eq:dfm-edge-evp}
T_* H^\delta \Psi = E \Psi , \quad \Psi \smallin L^2_k(\R^2/\Z\boldsymbol{\mathfrak{v}}_1) , \quad k \smallin [-\pi, \, \pi] .
\end{equation}
Suppose ${(E_\star, {\bm k}^\pm_\star) = (E_D, {\bm D}^\pm)}$ is a pair of Dirac points of ${T_* H^0 = T_* H}$; see Section \ref{sec:dfm-bulk}. Further, denote ${k^\pm_\star = {\bm D}^\pm \cdot \boldsymbol{\mathfrak{v}}_1}$. We use the two-scale character of $\smash{T_* H^\delta}$ to construct approximate, two-scale solutions to \eqref{eq:dfm-edge-evp} with energy-parallel quasimomenta ${(E, k)}$ in a neighborhood of ${(E_\star, k^\pm_\star)}$.

We introduce the pair of effective edge Hamiltonians
\begin{equation}
\label{eq:dfm-edge-eff}
\cancel{\mathfrak{D}}^\pm(\kappa) \equiv ({\bm \gamma}_0^{{\bm D}^\pm} \! \cdot {\bm P}_X(\kappa)) \sigma_0 + ({\bm \gamma}_1^{{\bm D}^\pm} \! \cdot {\bm P}_X(\kappa)) \sigma_1 + ({\bm \gamma}_2^{{\bm D}^\pm} \! \cdot {\bm P}_X(\kappa)) \sigma_2 + \vartheta^{{\bm D}^\pm} \chi(X) \sigma_3 ,
\end{equation}
acting on $L^2(\R; \C^2)$, with ${\bm P}_X$ as in \eqref{eq:def_P}. The parameters ${\gamma}^{{\bm D}^\pm}_\ell$, ${0 \leq \ell \leq 2}$, and $\vartheta^{{\bm D}^\pm}$ are defined in Appendix \ref{apx:dir}. The following result states that the bound state eigenpairs of $\cancel{\mathfrak{D}}^\pm(\kappa)$ (formally) seed edge state curves in the band gap of $T_* H^\delta$, within an ${O(\delta) \times O(\delta)}$ energy-quasimomentum neighborhood of ${(E_\star, k^\pm_\star)}$.

\begin{theorem}
\label{thm:multi-dfm}
{\rm (Approximate edge states near ${(E_\star, k^\pm_\star)}$.)} 
For $\kappa$ in an interval about ${\kappa = 0}$ and ${\delta > 0}$ small, the edge state eigenvalue problem \eqref{eq:dfm-edge-evp} for ${T_* H^\delta}$ has solutions ${(E^\delta(k^\pm_\star + \delta \kappa), {\bm x} \mapsto \Psi^\delta({\bm x}; k^\pm_\star + \delta \kappa))}$ with
\begin{align}
E^\delta(k^\pm_\star + \delta \kappa) & = E_\star + \delta E^{(1)}(\kappa) + O(\delta^2) , \\
\Psi^\delta({\bm x}; k^\pm_\star + \delta \kappa) & = e^{i \delta \kappa \boldsymbol{\mathfrak{K}}_1 \cdot {\bm x}} \bigl( b_1(\delta \boldsymbol{\mathfrak{K}}_2 \cdot {\bm x}; \kappa) \Phi^{{\bm D}^+}_1\!({\bm x}) + b_2(\delta \boldsymbol{\mathfrak{K}}_2 \cdot {\bm x}; \kappa) \Phi^{{\bm D}^+}_2\!({\bm x}) + O(\delta) \bigr) \ \ \text{as} \ \ \delta \to 0 ,
\end{align}
where:
\begin{enumerate}
\item \label{itm:multi-dfm-p} 
For ${k = k^+_\star + \delta \kappa}$, ${(E^{(1)}(\kappa), X \mapsto [b_1(X; \kappa), b_2(X; \kappa)]^\mathsf{T})}$ is a bound state eigenpair of $\cancel{\mathfrak{D}}^+(\kappa)$, from \eqref{eq:dfm-edge-eff}.
\item \label{itm:multi-dfm-m} For ${k = k^-_\star + \delta \kappa}$, ${(E^{(1)}(\kappa), X \mapsto [b_1(X; \kappa), b_2(X; \kappa)]^\mathsf{T})}$ is instead a bound state eigenpair of $\cancel{\mathfrak{D}}^-(\kappa)$.
\end{enumerate}
\end{theorem}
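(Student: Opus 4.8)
The plan is to mirror the two-scale construction of Theorem~\ref{thm:multi-sql}, adapted to the conical geometry of the degeneracy and to the $O(\delta)$ (rather than $O(\delta^2)$) scaling of the $\mathcal{C}$-breaking term. I would take the same eigenstate ansatz ${\Psi({\bm x}; k^\pm_\star + \delta\kappa) = e^{i\delta\kappa\boldsymbol{\mathfrak{K}}_1\cdot{\bm x}}\psi({\bm x};\kappa)}$, with ${\bm x}\mapsto\psi \smallin L^2_{k^\pm_\star}(\R^2/\Z\boldsymbol{\mathfrak{v}}_1)$, and substitute into the eigenvalue problem \eqref{eq:dfm-edge-evp} for $T_* H^\delta$. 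Conjugation by the plane wave sends $\nabla_{\bm x}\mapsto\nabla_{\bm x}+i\delta\kappa\boldsymbol{\mathfrak{K}}_1$, and introducing the transverse slow variable ${X = \delta\boldsymbol{\mathfrak{K}}_2\cdot{\bm x}}$ further sends $\nabla_{\bm x}\mapsto\nabla_{\bm x}+\delta\partial_X\boldsymbol{\mathfrak{K}}_2$, so the combined displacement reads $\nabla_{\bm x}+i\delta{\bm P}_X(\kappa)$ with ${\bm P}_X(\kappa)$ as in \eqref{eq:def_P}. Writing ${M \equiv (T^\mathsf{T} T)^{-1}}$ and expanding ${E^\delta = E_\star + \delta E^{(1)}(\kappa) + O(\delta^2)}$ and ${\psi^\delta = \psi^{(0)} + \delta\psi^{(1)} + O(\delta^2)}$ (with ${\bm x}\mapsto\psi^{(\ell)}\smallin L^2_{{\bm D}^\pm}(\R^2/\Z^2)$ and $X\mapsto\psi^{(\ell)}\smallin L^2(\R)$), I would collect like powers of $\delta$.

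At order $\delta^0$ one obtains $(T_* H - E_\star)\psi^{(0)} = 0$, whose $L^2_{{\bm D}^\pm}(\R^2/\Z^2)$ solution is $\psi^{(0)} = b_k(X;\kappa)\Phi^{{\bm D}^\pm}_k({\bm x})$ (summed over $k = 1, 2$) for transverse amplitudes ${X\mapsto b_k}$ decaying as $|X|\to+\infty$; the $\mathcal{P}$-relation $\Phi^{{\bm D}^-}_k = \mathcal{P}[\Phi^{{\bm D}^+}_k]$ connects the two mode families. The decisive difference from the square lattice case now appears at first order: because the magneto-optic term enters at $O(\delta)$ and the degeneracy is conical, the order-$\delta^1$ equation already carries the physical content,
\[
(T_* H - E_\star)\psi^{(1)} = \bigl(E^{(1)} + i(\nabla_{\bm x}\cdot M{\bm P}_X + {\bm P}_X\cdot M\nabla_{\bm x}) - \chi(X)\det(T^{-1})\,\nabla_{\bm x}\cdot A\sigma_2\nabla_{\bm x}\bigr)\psi^{(0)} .
\]
As in \eqref{eq:sql-multi-inh}, this inhomogeneous problem for ${T_* H - E_\star}$ is solvable if and only if its right-hand side is orthogonal to the kernel $\C\Phi^{{\bm D}^\pm}_1\oplus\C\Phi^{{\bm D}^\pm}_2$.

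I would then carry out the solvability projection by pairing the right-hand side with $\Phi^{{\bm D}^\pm}_j$ in $L^2_{{\bm D}^\pm}(\R^2/\Z^2)$ for $j = 1, 2$, obtaining a closed $2\times 2$ system for $b = [b_1, b_2]^\mathsf{T}$. Using that ${\bm P}_X$ (acting on $X$) commutes with $\nabla_{\bm x}$ (acting on the Bloch factor) and that $M$ is symmetric, the transport term reduces to ${2i\,M\langle\Phi^{{\bm D}^\pm}_j, \nabla_{\bm x}\Phi^{{\bm D}^\pm}_k\rangle\cdot{\bm P}_X(\kappa)}$, which I claim reproduces precisely the linear symbol $\sum_\ell ({\bm\gamma}^{{\bm D}^\pm}_\ell\cdot{\bm P}_X(\kappa))\sigma_\ell$ of the effective bulk Dirac Hamiltonian \eqref{eq:dir-eff}, while the magneto-optic pairing yields the domain-wall mass $\vartheta^{{\bm D}^\pm}\chi(X)\sigma_3$ through \eqref{eq:def-theta-Dpm}. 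The solvability condition is therefore exactly the effective eigenvalue equation $\cancel{\mathfrak{D}}^\pm(\kappa)\,b = E^{(1)}(\kappa)\,b$ of \eqref{eq:dfm-edge-eff}, identifying $(E^{(1)}, b)$ as a bound state eigenpair of $\cancel{\mathfrak{D}}^\pm(\kappa)$. Running this construction at ${\bm D}^+$ (with modes $\Phi^{{\bm D}^+}_k$, producing $\cancel{\mathfrak{D}}^+$) gives case~\ref{itm:multi-dfm-p}, and at ${\bm D}^-$ (with $\Phi^{{\bm D}^-}_k$, producing $\cancel{\mathfrak{D}}^-$) gives case~\ref{itm:multi-dfm-m}; one then solves the $\delta^1$ equation for $\psi^{(1)}$ via the reduced resolvent $\mathscr{R}(E_\star)$, which yields the stated $O(\delta)$ eigenfunction remainder.

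The step I expect to be the main obstacle is precisely this first-order matrix-element computation: verifying that projecting $i(\nabla_{\bm x}\cdot M{\bm P}_X + {\bm P}_X\cdot M\nabla_{\bm x})$ onto the degenerate eigenspace recovers the conical symbol of \eqref{eq:dir-eff}, i.e.\ that the effective edge operator is obtained from the effective bulk symbol under ${\bm\kappa}\mapsto{\bm P}_X(\kappa)$ together with the added mass. This is the envelope/Peierls reduction, and it is exactly where the conical geometry does work absent in the quadratic case: there the analogous first-order projection vanishes (Proposition~\ref{prop:multi-sql-1-sol}), forcing $E^{(1)} = 0$ and deferring the content to order $\delta^2$, whereas here the group-velocity matrix elements $\langle\Phi^{{\bm D}^\pm}_j, \nabla_{\bm x}\Phi^{{\bm D}^\pm}_k\rangle$ are nonzero and carry the entire Dirac structure. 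The identification rests on the characterization of the ${\bm\gamma}^{{\bm D}^\pm}_\ell$ in Appendix~\ref{apx:dir} as the same matrix elements arising in the bulk reduction, so the argument amounts to tracking $\nabla_{\bm x}$ onto the Bloch factor and ${\bm P}_X$ onto the envelope and invoking the self-adjointness and $\mathcal{PC}$-symmetry relations of $T_* H$ to match coefficients; the remainder estimate then parallels that of Theorem~\ref{thm:multi-sql}.
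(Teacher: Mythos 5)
Your proposal follows essentially the same route as the paper: the plane-wave conjugation and two-scale ansatz in $({\bm x}, X=\delta\boldsymbol{\mathfrak{K}}_2\cdot{\bm x})$, the order-$\delta^0$ reduction to the degenerate Bloch eigenspace, and the order-$\delta^1$ Fredholm solvability condition whose projection onto $\C\Phi^{{\bm D}^\pm}_1\oplus\C\Phi^{{\bm D}^\pm}_2$ yields the Dirac eigenvalue problem via the matrix elements \eqref{eq:def-gam-par} and the mass term from Proposition \ref{prop:dir-breakC}. You also correctly identify the key structural difference from the quadratic case, namely that the first-order group-velocity matrix elements are nonvanishing here, which is exactly why the content appears at order $\delta$ rather than $\delta^2$.
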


\noindent Analytical and computational results on the spectral properties of $\cancel{\mathfrak{D}}^\pm(\kappa)$ are presented in Section \ref{sec:spec-dfm}.  \\

\noindent {\it Proof of Theorem \ref{thm:multi-dfm}.} We prove part \ref{itm:multi-dfm-p}; part \ref{itm:multi-dfm-m} proceeds analogously, replacing ${\bm D}^+$ with ${\bm D}^-$. Consider the eigenstate ansatz
\begin{equation}
\Psi({\bm x};k^+_\star + \delta \kappa) = e^{i \delta \kappa \boldsymbol{\mathfrak{K}}_1 \cdot {\bm x}} \psi({\bm x}; \kappa) \, , \quad \text{where} \quad {\bm x} \mapsto \psi({\bm x}; \kappa) \smallin L^2_{k^+_\star}(\R^2/\Z\boldsymbol{\mathfrak{v}}_1) .
\end{equation}
Substituting into the edge state eigenvalue problem \eqref{eq:dfm-edge-evp} yields 
\begin{align}
\label{eq:dfm-edge-evp_2}
& \bigl( -(\nabla_{\bm x} + i \delta \kappa \boldsymbol{\mathfrak{K}}_1) \cdot (T^\mathsf{T} T)^{-1} (\nabla_{\bm x} + i \delta \kappa \boldsymbol{\mathfrak{K}}_1) + V({\bm x}) \\
& \qquad + \delta (\nabla_{\bm x} + i \delta \kappa \boldsymbol{\mathfrak{K}}_1) \cdot \det(T^{-1}) \chi(\delta \boldsymbol{\mathfrak{K}}_2 \cdot {\bm x}) A({\bm x}) \sigma_2 (\nabla_{\bm x} + i \delta \kappa \boldsymbol{\mathfrak{K}}_1) \bigr) \psi({\bm x}; \kappa) = E \psi({\bm x}; \kappa) . \nonumber
\end{align}
For ${\delta > 0}$ small, \eqref{eq:dfm-edge-evp_2} has explicitly two-scale character. We seek a two-scale solution, depending on ${\bm x}$ and ${X = \delta \boldsymbol{\mathfrak{K}}_2 \cdot {\bm x}}$, of the form:
\begin{align}
E^\delta(\kappa) & = E_\star + \delta E^{(1)}(\kappa) + O(\delta^2) , \\
\psi^\delta({\bm x}; \kappa) & = \psi^{(0)}({\bm x}, X; \kappa) + \delta \psi^{(1)}({\bm x}, X; \kappa) + O(\delta^2) \ \ \text{as} \ \ \delta \to 0 .
\end{align}
with this extended set of variables, we have ${\nabla_{\bm x} \to \nabla_{\bm x} + \delta \partial_X \boldsymbol{\mathfrak{K}}_2}$. To impose $\smash{{\bm x} \mapsto \psi^\delta({\bm x}; \kappa) \smallin L^2_{k^+_\star}(\R^2/\Z\boldsymbol{\mathfrak{v}}_1)}$, we require that, for ${\ell = 0}$, $1$,
\begin{equation}
{\bm x} \mapsto \psi^{(\ell)}({\bm x}, X; \kappa) \smallin L^2_{{\bm D}^+}(\R^2/\Z^2) \quad \text{and} \quad X \mapsto \psi^{(\ell)}({\bm x}, X; \kappa) \smallin L^2(\R) .
\end{equation}
Substituting and grouping terms with like orders of $\delta$ yields a hierarchy of equations:
\begin{subequations}
\begin{align}
\label{eq:multi-dfm-0}
\delta^0: \quad & (T_* H - E_\star) \psi^{(0)}({\bm x}, X; \kappa) = 0 , \\
\label{eq:multi-dfm-1}
\delta^1: \quad & (T_* H - E_\star) \psi^{(1)}({\bm x}, X; \kappa) = (E^{(1)}(\kappa) + (-i \partial_X \boldsymbol{\mathfrak{K}}_2 + \kappa \boldsymbol{\mathfrak{K}}_1) \cdot (T^\mathsf{T} T)^{-1} 2i \nabla_{\bm x} \\
& \qquad - \chi(X) \nabla_{\bm x} \cdot \det(T^{-1}) A({\bm x}) \sigma_2 \nabla_{\bm x}) \psi^{(0)}({\bm x}, X; \kappa) . \nonumber
\end{align}
\end{subequations}
We solve the equations in order with $X$, ${\kappa \smallin \R}$ fixed.

First, we solve the order $\delta^0$ equation \eqref{eq:multi-dfm-0}. This homogeneous equation for ${T_* H - E_\star}$ has the solution
\begin{equation}
{\bm x} \mapsto \psi^{(0)}({\bm x}, X; \kappa) = b^{(0)}_k(X; \kappa) \Phi^{{\bm D}^+}_k({\bm x}) .
\end{equation}
(Summation over repeated indices is implied.) The functions ${X \mapsto b^{(0)}_k(X; \kappa)}$, ${k = 1}$, $2$, are determined below and decay rapidly as ${|X| \to +\infty}$. (In the statement of Theorem \ref{thm:multi-dfm}, we omit their superscripts.)

Next, we solve the order $\delta^1$ equation \eqref{eq:multi-dfm-1}. In a manner analogous to the discussion of Section \ref{sec:multi-sql}, \eqref{eq:multi-dfm-1} has a solution if and only if the right-hand side is orthogonal to $\smash{\C \Phi^{{\bm D}^+}_1 \oplus \C \Phi^{{\bm D}^+}_2}$. This solvability condition is equivalent to the eigenvalue problem for the effective edge Hamiltonian $\cancel{\mathfrak{D}}^+(\kappa)$ given in \eqref{eq:dfm-edge-eff}:

\begin{proposition}
\label{prop:multi-dfm-1-sol}
The order $\delta^1$ equation \eqref{eq:multi-dfm-1} is solvable if and only if the pair ${(E^{(1)}, X \mapsto b^{(0)}(X; \kappa))}$, where ${b^{(0)}(X; \kappa) = [b^{(0)}_1(X; \kappa), b^{(0)}_2(X; \kappa)]^\mathsf{T})}$, is a solution of
\begin{equation}
\label{eq:dfm-edge-eff-evp}
\cancel{\mathfrak{D}}^+(\kappa) b^{(0)}(X; \kappa) = E^{(1)} b^{(0)}(X; \kappa) .
\end{equation}
\end{proposition}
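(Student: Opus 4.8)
The plan is to prove Proposition~\ref{prop:multi-dfm-1-sol} by a Lyapunov--Schmidt reduction (Fredholm alternative) applied to the inhomogeneous equation \eqref{eq:multi-dfm-1}, entirely parallel to the proof of Proposition~\ref{prop:multi-sql-2-sol}, with the one structural difference that the effective eigenvalue problem now emerges already at order $\delta^1$ rather than $\delta^2$. The reason is that the degeneracy at ${\bm D}^+$ is conical: the velocity matrix elements in the degenerate eigenspace are nonzero and furnish the Dirac coefficients ${\bm\gamma}^{{\bm D}^+}_\ell$, in contrast to the quadratic degeneracy at ${\bm M}$, where they vanish and force $E^{(1)}=0$.

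First I would recall that, for each fixed $X$, $\kappa\smallin\R$, the self-adjoint operator $T_* H - E_\star$ acting in $L^2_{{\bm D}^+}(\R^2/\Z^2)$ has a two-dimensional kernel spanned by $\{\Phi^{{\bm D}^+}_1,\Phi^{{\bm D}^+}_2\}$ (the defining property of the Dirac point $(E_D,{\bm D}^+)$) and compact resolvent. Hence the Fredholm alternative gives that \eqref{eq:multi-dfm-1} is solvable for $\psi^{(1)}\smallin L^2_{{\bm D}^+}(\R^2/\Z^2)$ if and only if, for $j=1,2$, the $L^2_{{\bm D}^+}(\R^2/\Z^2)$ inner product of $\Phi^{{\bm D}^+}_j$ against the right-hand side of \eqref{eq:multi-dfm-1} vanishes. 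Substituting $\psi^{(0)}=b^{(0)}_k(X;\kappa)\Phi^{{\bm D}^+}_k({\bm x})$ turns these two conditions into a $2\times 2$ system for $b^{(0)}=[b^{(0)}_1,b^{(0)}_2]^\mathsf{T}$, and the content of the proposition is that this system is precisely $\cancel{\mathfrak{D}}^+(\kappa)\,b^{(0)} = E^{(1)} b^{(0)}$.

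The bulk of the work is evaluating three families of matrix elements. The $E^{(1)}$ term contributes $E^{(1)}b^{(0)}_j$ by orthonormality. In the kinetic term ${\bm P}_X(\kappa)\cdot(T^\mathsf{T}T)^{-1}2i\nabla_{\bm x}$, the operator ${\bm P}_X(\kappa)$ acts on the slow variable $X$ through $b^{(0)}_k$ while $\nabla_{\bm x}$ acts on $\Phi^{{\bm D}^+}_k$; projecting onto $\Phi^{{\bm D}^+}_j$ recovers exactly the velocity matrix elements that define the Dirac symbol in Appendix~\ref{apx:dir}, so this term reproduces $({\bm\gamma}^{{\bm D}^+}_0\cdot{\bm P}_X(\kappa))\sigma_0 + ({\bm\gamma}^{{\bm D}^+}_1\cdot{\bm P}_X(\kappa))\sigma_1 + ({\bm\gamma}^{{\bm D}^+}_2\cdot{\bm P}_X(\kappa))\sigma_2$ acting on $b^{(0)}$, in parallel with the bulk computation leading to \eqref{eq:dir-eff}; in particular this matrix has no $\sigma_3$ component, by $\mathcal{PC}$-invariance of $T_* H$. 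Finally, since $\chi(X)$ is scalar and independent of ${\bm x}$, it factors out of the magnetic term, leaving the matrix $\det(T^{-1})\langle\Phi^{{\bm D}^+}_j,\nabla\cdot A\sigma_2\nabla\Phi^{{\bm D}^+}_k\rangle$, whose $(1,1)$ entry is $\vartheta^{{\bm D}^+}$ by \eqref{eq:def-theta-Dpm}. Assembling the three contributions and tracking signs then yields the claimed eigenvalue problem \eqref{eq:dfm-edge-eff-evp}.

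The step I expect to be the main obstacle is showing that this last magnetic matrix equals $\vartheta^{{\bm D}^+}\sigma_3$ rather than a general Hermitian matrix---that is, that its off-diagonal entries vanish and its diagonal entries are opposite. This is the same structural fact that produces the $\delta\,\vartheta^{{\bm D}^+}\sigma_3$ term in the bulk effective Hamiltonian \eqref{eq:dir-eff-breakC}, and it rests on the relation $\Phi^{{\bm D}^+}_2=\mathcal{PC}[\Phi^{{\bm D}^+}_1]$ together with the fact that $\nabla\cdot A\sigma_2\nabla$ commutes with $\mathcal{P}$ and anticommutes with $\mathcal{C}$ (with $A$ real and even); I would carry out this symmetry bookkeeping as in the derivation of \eqref{eq:def-theta-Dpm}, relegating the details to Appendix~\ref{apx:dir}. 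A secondary point requiring care is the sign generated when $2i\nabla_{\bm x}$ acts on the Bloch factor $e^{i{\bm D}^+\cdot{\bm x}}$ in $\Phi^{{\bm D}^+}_k$, which must be reconciled with the sign convention of the bulk symbol so that the velocity term matches $\cancel{\mathfrak{D}}^+(\kappa)$ in \eqref{eq:dfm-edge-eff}. Part~\ref{itm:multi-dfm-m} then follows verbatim with ${\bm D}^+$ replaced by ${\bm D}^-$.
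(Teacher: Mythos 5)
Your proposal is correct and follows essentially the same route as the paper: impose the Fredholm solvability condition (orthogonality of the right-hand side of \eqref{eq:multi-dfm-1} to $\C\Phi^{{\bm D}^+}_1\oplus\C\Phi^{{\bm D}^+}_2$), substitute $\psi^{(0)}=b^{(0)}_k\Phi^{{\bm D}^+}_k$, and evaluate the three matrix-element families, with the kinetic term producing the $({\bm\gamma}^{{\bm D}^+}_\ell\cdot{\bm P}_X(\kappa))$ Pauli structure and the magnetic term reducing to $\vartheta^{{\bm D}^+}\sigma_3$. The symmetry fact you flag as the main obstacle is exactly the paper's Proposition \ref{prop:dir-breakC} (the analogue of Proposition \ref{prop:quad-breakC}), proved by the same $\mathcal{PC}$ bookkeeping you describe.
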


\noindent Proposition \ref{prop:multi-dfm-1-sol} is proven in Appendix \ref{apx:pf_multi-dfm}. We have shown that the leading order term of an asymptotic solution of the edge state eigenvalue problem \eqref{eq:dfm-edge-evp} near ${(E, k) = (E_\star, k^+_\star)}$ arises from the bound states (i.e., the $L^2(\R; \C^2)$ eigenstates) of $\cancel{\mathfrak{D}}(\kappa)$. The proof of Theorem \ref{thm:multi-dfm} is complete. \qed

\subsection{Remarks on effective Hamiltonians}
\label{sec:multi-rmk}

\begin{remark}
\label{rmk:bulk-edge-eff}
{\rm (Relation between effective bulk and edge Hamiltonians.)}
For ${|X| \to +\infty}$, the effective edge Hamiltonians $\mathfrak{S}(\kappa)$ in \eqref{eq:sql-edge-eff} and $\cancel{\mathfrak{D}}^\pm(\kappa)$ in \eqref{eq:dfm-edge-eff} are related to the perturbed effective bulk Hamiltonians $\smash{H_{\rm eff}^{\bm M}({\bm \kappa}; \delta)}$ in \eqref{eq:quad-eff-breakC} and $\smash{H_{\rm eff}^{{\bm D}^\pm}({\bm \kappa}; \delta)}$ in \eqref{eq:dir-eff-breakC}, respectively: ${\bm \kappa}$ is replaced with $\delta^r {\bm P}_X(\kappa)$, and an overall factor of $\delta^r$ is divided out. (Note that their Fourier transforms are exactly equal, up to a factor of $\delta^r$.)
\end{remark}

\begin{remark}
\label{rmk:alt-edge-eff}
{\rm (Invariance with respect to edge parameterization.)}
Consider the effective edge Hamiltonians arising from the multiple-scale procedures of Sections \ref{sec:multi-sql} - \ref{sec:multi-dfm}, analogous to $\mathfrak{S}(\kappa)$ and $\cancel{\mathfrak{D}}^\pm(\kappa)$, but obtained via the alternate edge parameterization ${(\boldsymbol{\mathfrak{v}}_1, \boldsymbol{\mathfrak{v}}_2 + j \boldsymbol{\mathfrak{v}}_1)}$, ${(\boldsymbol{\mathfrak{K}}_1 - j \boldsymbol{\mathfrak{K}}_2, \boldsymbol{\mathfrak{K}}_2)}$, with ${j \smallin \Z}$; see Remark \ref{rmk:alt-edge}. Their expressions are identical, but replacing ${\bm P}_X(\kappa)$ with
\begin{equation}
\label{eq:def-P-j}
{\bm P}_X(\kappa; j) \equiv (-i \partial_X - j \kappa) \boldsymbol{\mathfrak{K}}_2 + \kappa \boldsymbol{\mathfrak{K}}_1 = e^{i j \kappa X} {\bm P}_X(\kappa) e^{-i j \kappa X} .
\end{equation}
Hence, effective edge Hamiltonians obtained via equivalent edge parameterizations are unitarily equivalent.
\end{remark}

\subsection{Formal multiple-scale analysis and rigorous proofs}

\subsubsection{From approximate to exact edge states; spectral no-fold condition}
\label{sec:multi-exact}

Our multiple-scale construction of approximate solutions of the edge state eigenvalue problems for $\smash{H^\delta_{\rm edge} = H^\delta}$ (Section \ref{sec:multi-sql}) and $\smash{H^\delta_{\rm edge} = T_* H^\delta}$ (Section \ref{sec:multi-dfm}) can be continued to arbitrary finite order $N$ in $\delta$. For each fixed ${k \smallin [-\pi, \, \pi]}$, this yields an approximate eigenpair, or ``quasimode'' $\smash{(E^\delta_N, \Psi^\delta_N)}$ such that $\smash{\lVert \Psi^\delta_N \rVert = 1}$, where ${\lVert \cdot \rVert}$ denotes the norm on $L^2_k(\R^2/\Z\boldsymbol{\mathfrak{v}}_1)$, and 
\begin{equation}
\label{eq:quasimode}
\lVert (H^\delta_{\rm edge} - E_N^\delta) \Psi^\delta_N \rVert = O(\delta^{N}) \ \ \text{as} \ \ \delta \to 0 .
\end{equation}
Proving that ${(E^\delta_N, \Psi^\delta_N)}$ is an approximation of a true edge state eigenpair ${(E^\delta, \Psi^\delta)}$, to some order $\delta^N$, requires order $\delta^N$ upper bounds on the remainder ${(\Psi^\delta -\Psi^\delta_N, E^\delta - E^\delta_N)}$. Two strategies that can be implemented are:
\begin{enumerate}
\renewcommand{\theenumi}{\alph{enumi}}
\item Direct solution of the equation for the remainder and a bound on its norm; see, e.g., \cite{FLW-PNAS:14,FLW-MAMS:17,FLW-2d_edge:16}.
\item Obtaining a resolvent expansion for $\smash{H^\delta_{\rm edge}}$ for energies in the band gap, and constructing a quasimode whose energy $E^\delta_N$ is sufficiently isolated from any other eigenvalues within the spectral gap; see, e.g., \cite[Lemma 3.1]{drouot2020defect}, \cite{drouotweinstein2020}, \cite[Appendix C]{CW21}.
\end{enumerate}
An edge state with quasimomentum $k$ corresponds to an eigenvalue of $\smash{H^\delta_{\rm edge}}$ within a gap in its $L^2_k(\R^2/\Z\boldsymbol{\mathfrak{v}}_1)$ essential spectrum. Yet, the multiple-scale expansion defining $\Psi^\delta_N$, which involves only band structure information at the bulk band degeneracy ${(E_\star, {\bm k}_\star)}$, requires only a {\rm local} band gap. A true spectral gap is ensured by an additional ``spectral no-fold condition'' on the band structure of the unperturbed bulk operator $\smash{H^0_{\rm edge} = H_{\rm bulk}}$, which must be verified separately. This condition states, roughly, that the dispersion surfaces do not fold over the energy $E_\star$ away from the degeneracy at ${(E_\star, {\bm k}_\star)}$ in the direction of $\boldsymbol{\mathfrak{K}}_2$. Finally, strategy (b) above enables one to conclude that all eigenvalues of $H^\delta_{\rm edge}$ within the band gap arise from multiple-scale expansions, seeded by the discrete spectra of effective edge Hamiltonians. For a full implementation of these strategies in the context of 2D honeycomb media and 1D dislocations, see \cite{FLW-PNAS:14, FLW-MAMS:17, FLW-2d_edge:16, drouot2020defect, drouotweinstein2020}.

\subsubsection{Multiple-scale analysis and the bulk-edge correspondence principle}
\label{sec:multi-bec}

As remarked in Section \ref{sec:summary}, multiple-scale analysis reduces the computation of bulk and edge topological indices to corresponding calculations for effective Hamiltonians. Indeed, for $\delta$ small, the Berry curvature is concentrated near the bulk band structure degeneracies, which have been lifted by the symmetry-breaking term in $\smash{H^{\pm, \, \delta}_{\rm bulk}}$. These neighborhoods provide the dominant contribution to the integral expression for the first Chern number. Since the spectral properties of $\smash{H^{\delta,\pm}_{\rm bulk}}$, and hence the Berry curvature, are accurately approximated by the relevant effective bulk Hamiltonian, the Chern number calculations are reduced to (and simplified by) calculations involving the effective bulk Hamiltonians, which are constant-coefficient partial differential operators. Similarly, the spectral flow/quantum current associated with $H^\delta_{\rm edge}$ can be reduced to the spectral flow for the effective edge Hamiltonians. For implementations of this strategy, see, e.g., \cite{drouot2019a,drouot2020defect,drouotweinstein2020}.

\smallskip

\section{Spectral analysis of effective edge Hamiltonians $\mathfrak{S}(\kappa)$ and $\cancel{\mathfrak{D}}^\pm(\kappa)$}
\label{sec:spec}

\setcounter{equation}{0}
\setcounter{figure}{0}

In this section, we present numerical simulations of the  spectra of our effective edge Hamiltonians $\mathfrak{S}(\kappa)$ and $\cancel{\mathfrak{D}}^\pm(\kappa)$, where ${\kappa \in \R}$. Further, we study the continuous spectra of $\mathfrak{S}(\kappa)$ and $\cancel{\mathfrak{D}}^\pm(\kappa)$, and the discrete spectra of $\mathfrak{S}(0)$ and $\cancel{\mathfrak{D}}^\pm(0)$ using PDE and spectral theoretic methods. Our results are consistent with topological arguments \cite{bal-cpde-2022, bal-jmp-2023, quinn-ball-sima-2024} implying that the spectral flow (or algebraic count of gap-traversing curves) is equal to ${\pm 2}$. In this section we do not make use of arguments rooted in the topology of the space of Fredholm operators.

\subsection{Spectrum of matrix Schr\"{o}dinger operator $\mathfrak{S}(\kappa)$}
\label{sec:spec-sql}

We focus on a simple choice of rational edge whose direction is parallel to an axis of reflection symmetry. \\

\noindent {\bf Notation.} Throughout this section, we suppress ${\bm M}$ superscripts on all parameters. \\

Assume the edge direction ${\boldsymbol{\mathfrak{v}}_1 = [0, 1]^\mathsf{T}}$. From Example \ref{ex:sql-edge-eff-vert}, $\mathfrak{S}(\kappa)$ is given by 
\begin{equation}
\label{eq:sql-edge-eff_2}
\mathfrak{S}(\kappa) = (1 - \alpha_0) (P_X^2 + \kappa^2)\ \sigma_0 - 2 \alpha_1 P_X \kappa\ \sigma_1 - \alpha_2 (P_X^2 -\kappa^2)\ \sigma_2 + \vartheta \chi(X)\ \sigma_3 .
\end{equation}
Here, ${\kappa \smallin \R}$, ${P_X = -i \partial_X}$, and  ${X \mapsto \chi(X)}$ is a domain wall function; see Definition \ref{def:dwall}. The parameters $\alpha_\ell$, ${0 \leq \ell \leq 2}$, are given in \eqref{eq:def-a-par}. The non-degeneracy assumption \ref{itm:quad-dgn-5} implies $\alpha_1$, ${\alpha_2 \neq 0}$.

\begin{figure}[t]
\centering
\includegraphics[scale = 0.6]{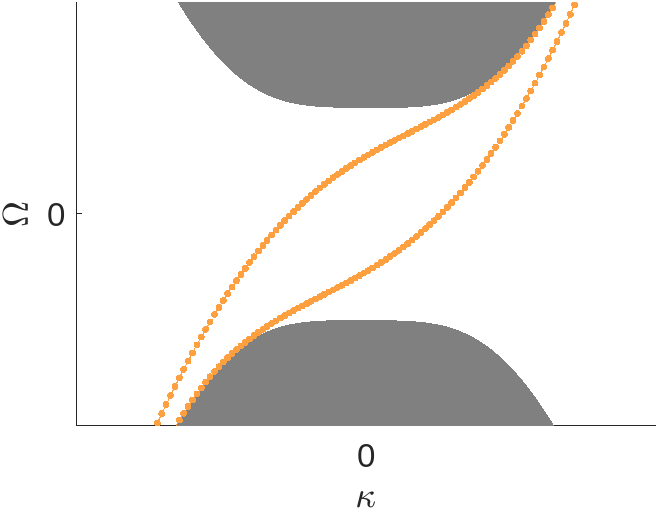}
\caption{Numerically generated spectra of the family ${\kappa \mapsto \mathfrak{S}(\kappa)}$ \eqref{eq:sql-edge-eff_2} with $\alpha_0$, $\alpha_1$, $\alpha_2$, ${\vartheta = 1}$, ${\chi(X) = \tanh(X)}$. The union of essential spectra {\it (gray)} consists of two connected regions, or bands, separated by a band gap. Two \nolinebreak eigenvalue curves {\it (orange)} traverse the gap; the spectral flow is ${+2}$.}
\label{fig:sql-eff-spec}
\end{figure}
 
Figure \ref{fig:sql-eff-spec} displays the numerically computed spectra of the family ${\kappa \mapsto \mathfrak{S}(\kappa)}$. For each fixed ${\kappa = \kappa_0}$, the spectrum of $\mathfrak{S}(\kappa_0)$ is the colored subset of the slice at ${\kappa = \kappa_0}$. It consists of two semi-infinite intervals of continuous spectrum ${(-\infty, \, \eta_-(\kappa_0)] \cup [\eta_+(\kappa_0), \, +\infty)}$ (gray), together with two simple eigenvalues $\Omega_+(\kappa_0)$ and $\Omega_-(\kappa_0)$ (orange), with
\begin{equation}
\label{eq:gap-ineq}
\eta_-(\kappa_0) < \Omega_-(\kappa_0) < \Omega_+(\kappa_0) < \eta_+(\kappa_0) .
\end{equation}
Figure \ref{fig:sql-eff-spec} shows two eigenvalue curves ${\kappa \mapsto \Omega_\pm(\kappa)}$ (orange) which traverse the band gap. This is consistent with the discussion in Section \ref{sec:sql-edge}, based on the bulk-edge correspondence principle. Further, our asymptotic analysis is in very good agreement with numerical simulations, presented in Section \ref{sec:numerics}, of the full $L^2_k(\R^2/\Z \boldsymbol{\mathfrak{v}}_1)$ spectrum of $\smash{H^\delta_{\rm edge} = H^\delta}$ near ${(E_\star, k_\star)}$.

Note that the multiple-scale analysis of Section \ref{sec:multi-sql} applies for  ${X \mapsto \chi(X)}$ belonging to a much broader class than domain wall functions. In particular:

\begin{remark}
\label{rmk:chi-cases}
{\rm (Topologically trivial edge media.)}
Consider a topologically trivial case, i.e., an edge medium which interpolates between two periodic bulk media with {\rm equal}  topological indices. One can arrange this by taking ${X \mapsto \chi(X)}$ non-constant and such that ${\chi(X) \to +1}$ as ${X \to \pm \infty}$, so that ${H^{+, \, \delta}_{\rm bulk} = H^{-, \, \delta}_{\rm bulk}}$. 
\begin{itemize}
\item Figure \ref{fig:sql-eff-spec-conj-4} displays numerical simulations of the spectrum of $\mathfrak{S}(\kappa)$ in such a case. In contrast to the domain wall case, we observe two eigenvalue curves which do not traverse the band gap; the spectral flow is equal to zero, as anticipated by the bulk-edge correspondence principle.

\item We have also numerically investigated the case where ${X \mapsto \chi(X)}$ of this scenario is perturbed by a large, localized function of $X$ with the important property that ${\int_\R (\chi(X)^2 - 1) \, {\rm d}X > 0}$.  In these simulations, there are no eigenvalue curves in the band gap; they have been deformed away into the bands.

\item Other numerical simulations show multiple eigenvalue curves within, but not traversing the band gap; again, the spectral flow is zero.
\end{itemize}
See further discussion below of the spectra of $\mathfrak{S}(\kappa)$, contrasting the topologically distinct cases: ${\chi(X) \to \pm 1}$ and ${\chi(X) \to +1}$ as ${X \to \pm \infty}$.
\end{remark}

\noindent We next present analytical results in support of the above observations.

\subsubsection{Essential spectrum of $\mathfrak{S}(\kappa)$}
\label{sec:spec-sql-ess}

That the essential spectrum of $\mathfrak{S}(\kappa)$, ${\kappa \smallin \R}$, is as displayed in Figure \ref{fig:sql-eff-spec} is a consequence of:

\begin{proposition}
\label{prop:spec-sql-ess}
{\rm (Essential spectrum of $\mathfrak{S}(\kappa)$.)}
Consider $\mathfrak{S}(\kappa)$ \eqref{eq:sql-edge-eff_2} with ${\alpha_0 = 1}$ and $\alpha_1$, $\alpha_2$, ${\vartheta \neq 0}$. Assume ${\chi^2(X) \to 1}$ sufficiently rapidly as ${|X| \to +\infty}$. Then, for each fixed ${\kappa \smallin \R}$, the essential spectrum of $\mathfrak{S}(\kappa)$ consists of two semi-infinite intervals:
\begin{equation}
\label{eqn:specwindows}
{\rm spec}_{\rm ess}(\mathfrak{S}(\kappa)) = (-\infty, \, \eta_-(\kappa)] \cup [\eta_+(\kappa), \, +\infty) ,
\end{equation}
separated by an open interval ${(\eta_-(\kappa), \, \eta_+(\kappa))}$ containing ${(-|\vartheta|, \, |\vartheta|)}$. Consequently,
\begin{equation}
\inf_{\kappa \smallin \R} \, \eta_+(\kappa) - \sup_{\kappa \smallin \R} \, \eta_-(\kappa) \geq 2 |\theta| > 0 ,
\end{equation}
implying ${\int_\R^\oplus \mathfrak{S}(\kappa) \, {\rm d}\kappa}$ has a spectral gap.  
\end{proposition}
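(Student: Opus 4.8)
The plan is to identify $\mathrm{spec}_{\rm ess}(\mathfrak{S}(\kappa))$ with the spectra of the two constant-coefficient operators obtained by freezing $\chi(X)$ at its limits as $X\to\pm\infty$. First observe that the hypothesis $\alpha_0=1$ removes the $\sigma_0$ term, so that
\[
\mathfrak{S}(\kappa) = -2\alpha_1\kappa P_X\,\sigma_1 - \alpha_2(P_X^2-\kappa^2)\,\sigma_2 + \vartheta\chi(X)\,\sigma_3 =: \mathfrak{S}_0(\kappa) + \vartheta\chi(X)\sigma_3 ,
\]
where $\mathfrak{S}_0(\kappa)$ is constant-coefficient. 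Since $\alpha_2\neq 0$ and $\sigma_2$ is invertible, $\mathfrak{S}_0(\kappa)$ is elliptic with principal symbol $-\alpha_2\xi^2\sigma_2$, hence self-adjoint on $H^2(\R;\C^2)$ via the Fourier transform; adding the bounded self-adjoint multiplication $\vartheta\chi\sigma_3$ leaves the domain unchanged (Kato--Rellich). Writing $\chi(X)\to\chi_\pm\in\{+1,-1\}$ as $X\to\pm\infty$ (guaranteed by $\chi^2\to1$), the limiting operators are $\mathfrak{S}_\pm(\kappa) := \mathfrak{S}_0(\kappa) + \vartheta\chi_\pm\sigma_3$.

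Next I would compute $\mathrm{spec}(\mathfrak{S}_\pm(\kappa))$ by Fourier transform. The symbol is the Hermitian matrix $a\sigma_1+b\sigma_2+c\sigma_3$ with $a=-2\alpha_1\kappa\xi$, $b=-\alpha_2(\xi^2-\kappa^2)$, $c=\vartheta\chi_\pm$, whose eigenvalues are $\pm\sqrt{a^2+b^2+c^2}=\pm\sqrt{g(\xi;\kappa)}$, where, using $\chi_\pm^2=1$,
\[
g(\xi;\kappa) = 4\alpha_1^2\kappa^2\xi^2 + \alpha_2^2(\xi^2-\kappa^2)^2 + \vartheta^2 .
\]
Since $g$ depends on the limit only through $\chi_\pm^2$, one has $\mathrm{spec}(\mathfrak{S}_+(\kappa))=\mathrm{spec}(\mathfrak{S}_-(\kappa))$. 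As $g(\cdot;\kappa)$ is continuous, coercive, and bounded below by $\vartheta^2$, its range is $[\min_\xi g(\xi;\kappa),\,+\infty)$, so
\[
\mathrm{spec}(\mathfrak{S}_\pm(\kappa)) = (-\infty,\,-\eta(\kappa)]\cup[\eta(\kappa),\,+\infty), \qquad \eta(\kappa):=\sqrt{\textstyle\min_{\xi\in\R} g(\xi;\kappa)}\ \geq\ |\vartheta| .
\]

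The heart of the proof is the identity $\mathrm{spec}_{\rm ess}(\mathfrak{S}(\kappa)) = \mathrm{spec}(\mathfrak{S}_+(\kappa))\cup\mathrm{spec}(\mathfrak{S}_-(\kappa))$, which yields \eqref{eqn:specwindows} with $\eta_\pm(\kappa)=\pm\eta(\kappa)$. For the inclusion ``$\supseteq$'' I would use singular (Weyl) sequences: take the spreading plane-wave wave packets realizing a point $\lambda\in\mathrm{spec}(\mathfrak{S}_+(\kappa))$ for the constant-coefficient operator $\mathfrak{S}_+(\kappa)$, translate their supports to $X\to+\infty$, and note that there $\vartheta(\chi(X)-\chi_+)\sigma_3\to 0$, so the translated sequence is also singular for $\mathfrak{S}(\kappa)$; the symmetric construction at $X\to-\infty$ handles $\mathfrak{S}_-(\kappa)$. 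The inclusion ``$\subseteq$'' is the main obstacle: I would prove $\mathfrak{S}(\kappa)-\lambda$ is Fredholm for every $\lambda\notin\mathrm{spec}(\mathfrak{S}_+(\kappa))\cup\mathrm{spec}(\mathfrak{S}_-(\kappa))$. The natural route is to compare $\mathfrak{S}(\kappa)$ with the decoupled operator acting as $\mathfrak{S}_+(\kappa)$ on $\{X>0\}$ and $\mathfrak{S}_-(\kappa)$ on $\{X<0\}$ (self-adjoint decoupling at $X=0$ is a finite-rank change of resolvent), and to show the remaining difference, built from the multiplication $\vartheta(\chi(X)-\chi_\pm)\sigma_3$, is relatively compact. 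This is exactly where the hypothesis that $\chi^2\to1$ \emph{sufficiently rapidly} enters: because $\chi-\chi_\pm$ vanishes at infinity and the elliptic resolvent $(\mathfrak{S}_0(\kappa)-z)^{-1}$ gains two derivatives, the product $(\chi-\chi_\pm)(\mathfrak{S}_0(\kappa)-z)^{-1}$ is compact on $L^2(\R;\C^2)$ by Rellich, and the decay rate provides clean control of the commutator/error terms in a parametrix for $\mathfrak{S}(\kappa)-\lambda$. Weyl's theorem on the stability of the essential spectrum then gives the identity.

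Finally, the stated consequences follow immediately. Since $\eta(\kappa)\geq|\vartheta|$, the gap $(\eta_-(\kappa),\eta_+(\kappa))=(-\eta(\kappa),\eta(\kappa))$ contains $(-|\vartheta|,|\vartheta|)$ for every $\kappa$. Moreover $\min_\xi g(\xi;\kappa)=\vartheta^2$ is attained only at $\kappa=0$ (both first terms vanish simultaneously only there), so $\inf_{\kappa}\eta(\kappa)=|\vartheta|$; hence $\inf_\kappa\eta_+(\kappa)-\sup_\kappa\eta_-(\kappa)=2\inf_\kappa\eta(\kappa)=2|\vartheta|>0$, and $\int_\R^\oplus\mathfrak{S}(\kappa)\,{\rm d}\kappa$ has a spectral gap of width $2|\vartheta|$.
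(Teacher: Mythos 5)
Your proposal is correct and follows essentially the same route as the paper: reduce to the constant-coefficient limiting operators $\mathfrak{S}_\pm(\kappa)$, diagonalize their Fourier symbols to get the eigenvalue branches $\pm\sqrt{4\alpha_1^2\kappa^2\xi^2+\alpha_2^2(\xi^2-\kappa^2)^2+\vartheta^2}$ (the paper keeps the $(1-\alpha_0)(\xi^2+\kappa^2)$ term before specializing to $\alpha_0=1$), and take the union of their ranges. The only difference is that you supply the standard Weyl-sequence and relative-compactness justification for ${\rm spec}_{\rm ess}(\mathfrak{S}(\kappa))={\rm spec}(\mathfrak{S}_+(\kappa))\cup{\rm spec}(\mathfrak{S}_-(\kappa))$, a step the paper simply asserts.
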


\begin{remark}
\label{rmk:sql-edge-eff-gap}
{\rm (Band gaps, spectral gaps for ${\alpha_0 \neq 1}$.)}
We conjecture, under the relaxed hypothesis
\begin{equation}
\label{eq:a-par-gap}
-(1 - \alpha_0)^2 + \alpha_2^2 > 0 ,
\end{equation}
that ${\kappa \mapsto \mathfrak{S}(\kappa)}$ has a band gap, corresponding to the condition
\begin{equation}
\inf_{\kappa \smallin \R} (\eta_+(\kappa) - \eta_-(\kappa)) > 0 .
\end{equation}
It can be easily proven that \eqref{eq:a-par-gap} is necessary for the existence of a band gap; we have verified sufficiency in several parameter regimes. More generally, the validity of this property is supported by numerical simulations.

On the other hand, \eqref{eq:a-par-gap} does not generally imply the stronger spectral gap property: 
\begin{equation}
\inf_{\kappa \smallin \R} \, \eta_+(\kappa) - \sup_{\kappa \smallin \R} \, \eta_-(\kappa) > 0 ;
\end{equation}
see Figure \ref{f:71kappa}.
\end{remark}

\noindent For the proof of Proposition \ref{prop:spec-sql-ess}, see Appendix \ref{apx:spec-sql-ess}. For further discussion of the generic ${\alpha_0 \neq 1}$ case, see \cite{chaban2025thesis}.

\begin{figure}[t]
    \centering
    \hspace{0.3cm}
    \begin{subfigure}{0.32\textwidth}
        \centering
        \subcaption{}
        \label{fig:sql-eff-spec-conj-2}
        \vspace{0.1cm}
        \hspace{-0.8cm} \includegraphics[scale = 0.5]{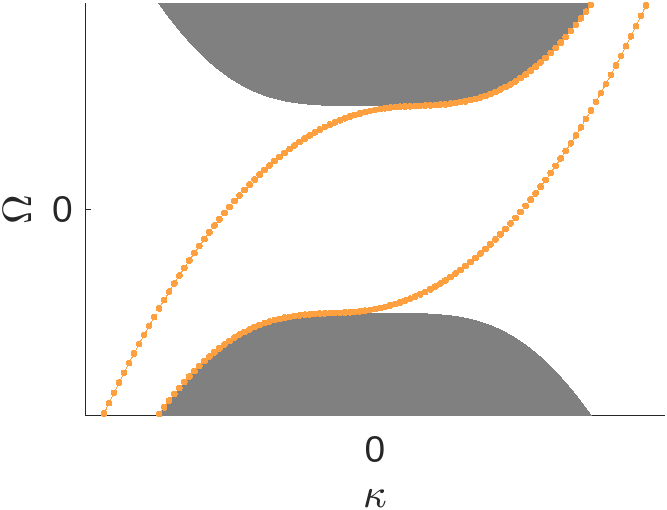}
    \end{subfigure}
    \begin{subfigure}{0.32\textwidth}
        \centering
        \subcaption{}
        \label{fig:sql-eff-spec-conj-3}
        \vspace{0.1cm}
        \hspace{-0.8cm} \includegraphics[scale = 0.5]{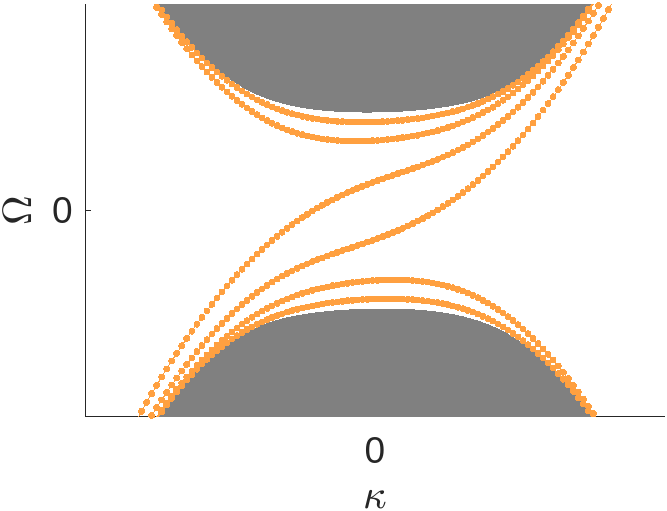}
    \end{subfigure}
    \begin{subfigure}{0.32\textwidth}
        \centering
        \subcaption{}
        \label{fig:sql-eff-spec-conj-4}
        \vspace{0.1cm}
        \hspace{-0.8cm} \includegraphics[scale = 0.5]{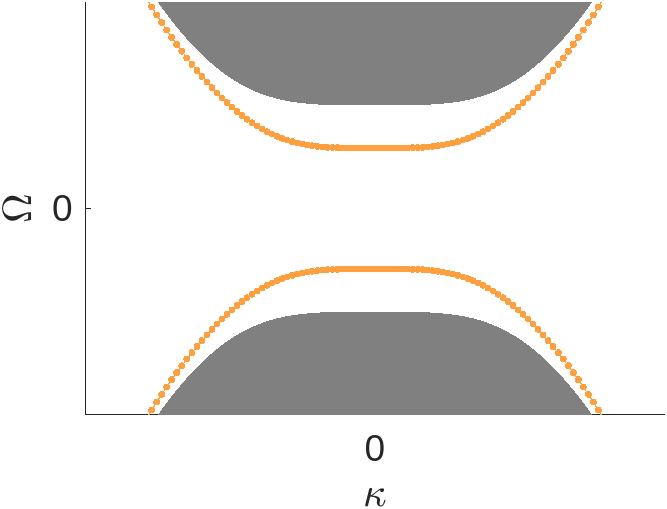}
    \end{subfigure}
\caption{Numerically generated spectra of the families ${\kappa \mapsto \mathfrak{S}(\kappa)}$ \eqref{eq:sql-edge-eff_2} with $\alpha_0$, $\alpha_1$, $\alpha_2$, ${\vartheta = 1}$, $\chi(X)$ given \nolinebreak by: {\bf (a)} ${\chi(X) = \tanh(X) + 50 \exp(-X^2)}$; the sign condition \eqref{eq:dw-varcond} is broken, but the two gap-traversing eigenvalue curves present in Figure \ref{fig:sql-eff-spec} persist. {\bf (b)} ${\chi(X) = \tanh(X/4)}$; the two gap-traversing eigenvalue curves persist, two additional pairs of non-traversing curves emerge. {\bf (c)} ${\chi(X) = 1 - \exp(-X^2)}$ (which, since ${\chi(-\infty) \cdot \chi(+\infty) > 0}$, is not a domain wall function); there are two non-traversing eigenvalue curves, but no gap-traversing curves are present.}
\label{fig:sql-eff-spec-conj}
\end{figure}

\subsubsection{Discrete spectrum of $\mathfrak{S}(0)$}
\label{sec:spec-sql-disc}

We now turn to a discussion of the discrete spectrum. Numerical simulations suggest that, if the family of effective edge Hamiltonians ${\kappa \mapsto \mathfrak{S}(\kappa)}$ has a band gap, then there are exactly two gap-traversing eigenvalue curves;
see Figure \ref{fig:sql-eff-spec}. 
These results are consistent with the bulk edge correspondence principle, based on topological arguments, discussed earlier.
In this section, we initiate a direct analytical study.

For simplicity, we shall restrict our attention to the discrete spectrum of $\mathfrak{S}(\kappa)$, defined in \eqref{eq:sql-edge-eff_2}, at ${\kappa = 0}$. We assume ${\alpha_0 = 1}$ and $\alpha_1$, $\alpha_2$, ${\vartheta \neq 0}$, which yields a spectral gap by Proposition \ref{prop:spec-sql-ess}. Hence, we consider  
\begin{align}
\label{eq:sql-edge-eff_2a}
\mathfrak{S}(0) = -\alpha_2 P_X^2 \sigma_2 + \vartheta \chi(X) \sigma_3 .
\end{align}
A short calculation shows that $\mathfrak{S}(0)$ has a gap in its essential spectrum given by ${(-|\vartheta|, \, |\vartheta|)}$. Our first result, proven in Appendix \ref{apx:sql-eff-no-zero}, rules out the possibility of zero-energy bound states.

\begin{proposition}
\label{prop:sql-eff-no-zero}
{\rm (No zero-energy bound states.)}
${\Omega = 0}$ is not an eigenvalue of $\mathfrak{S}(0)$.
\end{proposition}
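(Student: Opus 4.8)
The plan is to turn the vanishing condition $\mathfrak{S}(0)\psi = 0$ into two decoupled scalar ODEs and then run a Green's-identity (energy) argument. Writing $P_X^2 = -\partial_X^2$ in \eqref{eq:sql-edge-eff_2a}, a zero-energy bound state $\psi = [\psi_1, \psi_2]^\mathsf{T} \smallin L^2(\R; \C^2)$ must satisfy
\begin{equation*}
\alpha_2 \sigma_2 \psi'' + \vartheta \chi(X) \sigma_3 \psi = 0 .
\end{equation*}
First I would record that such a $\psi$ is automatically regular and decaying: since $\sigma_2$ is invertible and $\chi$ is bounded, the equation gives ${\psi'' = -(\vartheta/\alpha_2)\, \sigma_2 \sigma_3\, \chi\, \psi \smallin L^2}$, so ${\psi \smallin H^2(\R; \C^2)}$ and hence ${\psi, \psi' \to 0}$ as ${|X| \to +\infty}$. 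This is exactly what is needed to discard boundary terms in the integration by parts below.

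Next I would left-multiply by $\sigma_2$, using ${\sigma_2^2 = \sigma_0}$ and the anticommutation identity ${\sigma_2 \sigma_3 = i \sigma_1}$, to obtain the constant-Pauli-matrix form ${\alpha_2 \psi'' + i \vartheta \chi(X) \sigma_1 \psi = 0}$. Passing to the eigenbasis of $\sigma_1$ via ${w_\pm \equiv \psi_1 \pm \psi_2}$ (the $\pm 1$ eigencomponents) decouples the system into two scalar equations,
\begin{equation*}
\alpha_2 w_\pm'' \pm i \vartheta \chi(X) w_\pm = 0 , \qquad w_\pm \smallin L^2(\R) .
\end{equation*}
The key structural point is that each scalar operator ${-\alpha_2 \partial_X^2 \pm i \vartheta \chi}$ carries a \emph{purely imaginary} potential, and it is precisely this non-self-adjoint character that will forbid a zero mode.

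The final step is the energy estimate: multiply each scalar equation by $\overline{w_\pm}$, integrate over $\R$, and integrate by parts (the boundary terms vanish by the decay established above), yielding
\begin{equation*}
-\alpha_2 \int_\R |w_\pm'|^2 \, {\rm d}X = \mp\, i \vartheta \int_\R \chi(X) |w_\pm|^2 \, {\rm d}X .
\end{equation*}
Here $\alpha_2$, $\vartheta$, and $\chi$ are all real, so the left-hand side is real while the right-hand side is purely imaginary; hence both sides vanish. Since ${\alpha_2 \neq 0}$ (the non-degeneracy assumption), this forces ${\int_\R |w_\pm'|^2 \, {\rm d}X = 0}$, so $w_\pm$ is constant, and the $L^2$ constraint then gives ${w_\pm \equiv 0}$. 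Therefore ${\psi \equiv 0}$, proving ${\Omega = 0}$ is not an eigenvalue.

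I do not expect a serious obstacle here: the computation is elementary once the reduction is in place. The only points requiring care are the automatic $H^2$-regularity/decay that justifies dropping boundary terms (clean, by ellipticity since $\alpha_2 \sigma_2$ is invertible) and the bookkeeping of the reality of the coefficients. The one genuinely load-bearing observation is the anticommutation ${\{\sigma_2, \sigma_3\} = 0}$, equivalently ${\sigma_2 \sigma_3 = i \sigma_1}$, which converts the matrix-coefficient problem into two scalar problems with imaginary potential; this is what makes the real-versus-imaginary dichotomy in the energy identity decisive.
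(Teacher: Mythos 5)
Your proof is correct and takes essentially the same route as the paper's: left-multiply the zero-energy equation by $\sigma_2$ (using $\sigma_2\sigma_3 = i\sigma_1$), form the energy identity, and take the real part to force $\lVert \psi' \rVert^2 = 0$. The paper skips your decoupling into the $\sigma_1$-eigenbasis and simply observes that $\langle \psi, \chi \sigma_1 \psi \rangle$ is real because $\chi \sigma_1$ is self-adjoint, which is a cosmetic rather than substantive difference.
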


We turn to the existence of nonzero eigenvalues in the spectral gap of $\mathfrak{S}(0)$. By topological arguments cited earlier, there are at least two eigenvalues in its spectral gap. It would be of interest to obtain this result, as well as detailed information about the eigenstates and eigenvalue curves by direct analytical methods; Theorems \ref{thm:sql-eff-eig_1} and \ref{thm:sql-eff-eig_2}, and an explicitly solvable case (Supplementary Material \ref{supp:sql-eff-exact}) are steps in this direction.

\begin{theorem}
\label{thm:sql-eff-eig_1}
{\rm (Result 1: Two eigenvalues in the spectral gap of $\mathfrak{S}(0)$ under a sign condition.)}
Consider $\mathfrak{S}(0)$ \eqref{eq:sql-edge-eff_2a} acting in $L^2(\R; \C^2)$ where $\alpha_1$, $\alpha_2$, ${\vartheta \neq 0}$. Assume that ${X \mapsto \chi(X)}$ is smooth with ${\chi(X)^2 \to 1}$ sufficiently rapidly as ${|X| \to +\infty}$. Further, assume
\begin{equation}
\label{eq:dw-varcond}
\int_\R (\chi(X)^2 - 1) \, {\rm d}X < 0 .
\end{equation}
Then, $\mathfrak{S}(0)$ has at least two nonzero eigenvalues $\Omega_+ > 0$ and $\Omega_- = -\Omega_+$ within its spectral gap.
\end{theorem}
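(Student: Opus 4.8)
The plan is to combine a chiral (particle--hole) symmetry of $\mathfrak{S}(0)$ with a variational argument driven by the sign condition \eqref{eq:dw-varcond}. First I would record that, since $\sigma_1\sigma_2\sigma_1 = -\sigma_2$ and $\sigma_1\sigma_3\sigma_1 = -\sigma_3$, the operator in \eqref{eq:sql-edge-eff_2a} obeys $\sigma_1\,\mathfrak{S}(0)\,\sigma_1 = -\mathfrak{S}(0)$. Consequently $\Psi \mapsto \sigma_1\Psi$ sends an eigenpair $(\Omega,\Psi)$ to $(-\Omega,\sigma_1\Psi)$, so the spectrum is symmetric about $0$ and it suffices to produce a single eigenvalue $\Omega_+ \smallin (0,|\vartheta|)$; the partner $\Omega_- = -\Omega_+$ is then automatic. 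Because $\sigma_1$ commutes with $\mathfrak{S}(0)^2$ and the short calculation above gives ${\rm spec}_{\rm ess}(\mathfrak{S}(0)^2) = [\vartheta^2, +\infty)$, the whole statement reduces to exhibiting an eigenvalue $\lambda_0$ of $\mathfrak{S}(0)^2 \geq 0$ strictly below $\vartheta^2$: Proposition \ref{prop:sql-eff-no-zero} then forces $\lambda_0 > 0$ (as $\ker \mathfrak{S}(0)^2 = \ker \mathfrak{S}(0)$), and $\Omega_\pm = \pm\sqrt{\lambda_0}$ are the two gap eigenvalues sought.

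Second, to find such a $\lambda_0$ I would apply the min--max principle, restricting to trial states in the $+1$ eigenspace of $\sigma_1$, i.e.\ $\Psi = [u, u]^\mathsf{T}$ with $u$ real-valued. Using $P_X^2 u = -u''$, a direct computation gives
\[
\mathfrak{S}(0)\Psi = \bigl[\, \vartheta \chi u - i \alpha_2 u'', \ \ i \alpha_2 u'' - \vartheta \chi u \,\bigr]^\mathsf{T} ,
\]
whose two components have equal modulus, so that $\langle \Psi, \mathfrak{S}(0)^2 \Psi \rangle = \| \mathfrak{S}(0)\Psi \|^2 = 2 \int_\R ( \vartheta^2 \chi^2 u^2 + \alpha_2^2 (u'')^2 ) \, {\rm d}X$ while $\| \Psi \|^2 = 2 \int_\R u^2 \, {\rm d}X$. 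Hence the Rayleigh quotient of $\mathfrak{S}(0)^2$ lies below $\vartheta^2$ exactly when
\[
\alpha_2^2 \int_\R (u'')^2 \, {\rm d}X + \vartheta^2 \int_\R (\chi^2 - 1) \, u^2 \, {\rm d}X < 0 .
\]

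Third, I would close this inequality by a wide-envelope scaling. Taking $u = u_L(X) = u_0(X/L)$ for a fixed real Schwartz profile with $u_0(0) \neq 0$, one has $\int_\R (u_L'')^2 \, {\rm d}X = L^{-3} \int_\R (u_0'')^2 \, {\rm d}X \to 0$, while by dominated convergence (with $\chi^2 - 1 \smallin L^1(\R)$, ensured by the rapid-decay hypothesis) $\int_\R (\chi^2 - 1) u_L^2 \, {\rm d}X \to u_0(0)^2 \int_\R (\chi^2 - 1) \, {\rm d}X < 0$ by \eqref{eq:dw-varcond}. Thus the displayed quantity is strictly negative for $L$ large, the Rayleigh quotient drops below $\vartheta^2$, and the reduction above produces $\Omega_\pm = \pm\sqrt{\lambda_0}$ in the gap. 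The main obstacle is precisely the ``$d$-wave'', fourth-order kinetic term $\alpha_2^2 \partial_X^4$ arising from the quadratic band degeneracy (in place of the first-order Dirac operator of the honeycomb setting): one must verify that a spreading envelope still beats the band edge. The scaling shows this term decays like $L^{-3}$ and is therefore subdominant to the $O(1)$ negative contribution of $\int_\R (\chi^2 - 1) \, {\rm d}X$, exactly as in the classical one-dimensional binding criterion for $-\partial_X^2$; the only remaining bookkeeping is the self-adjointness and $H^2(\R; \C^2)$ form-domain setup underlying the min--max characterization.
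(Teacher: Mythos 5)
Your proof is correct and follows essentially the same route as the paper: both reduce the problem, via the chiral symmetry $\sigma_1\mathfrak{S}(0)\sigma_1=-\mathfrak{S}(0)$ and the squared operator $\mathfrak{S}(0)^2$, to exhibiting a trial state whose Rayleigh quotient drops below the essential-spectrum threshold $\vartheta^2$, and both achieve this with a wide, slowly-spreading envelope whose $O(L^{-3})$ fourth-order kinetic cost is beaten by the $O(1)$ negative contribution of $\int_\R(\chi^2-1)$. Your restriction to the $\sigma_1$-invariant trial states $[u,u]^\mathsf{T}$ with $u$ real is a pleasant simplification that makes the cross term $-\alpha_2\vartheta\langle\psi,(P_X\chi'+\chi'P_X)\sigma_1\psi\rangle$ vanish identically, whereas the paper keeps a general profile and only needs that this term is $O(h^2)$, hence subdominant.
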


\noindent The proof of Theorem \ref{thm:sql-eff-eig_1} is located in Appendix \ref{apx:sql-eff-eig_1}. The next result allows for domain wall functions with a sufficiently steep transition, extending our results to a class of domain walls which do not satisfy the sign condition \eqref{eq:dw-varcond}. 

\begin{theorem}
\label{thm:sql-eff-eig_2}
{\rm (Result 2: Two eigenvalues in the spectral gap of $\mathfrak{S}(0)$ for steep domain walls.)}
There are two scenarios where we can prove the existence of at least two eigenvalues when \eqref{eq:dw-varcond} is violated:
\begin{enumerate}
\item \label{itm:sql-eff-eig_2-1} {\rm (Steep domain wall transitions.)} Assume $\alpha_2$, ${\vartheta \neq 0}$, and let ${X \mapsto \chi(X)}$ be a domain wall function; see Definition \ref{def:dwall}. For ${\varepsilon > 0}$, define
\begin{equation}
\mathfrak{S}^\varepsilon(0) = -\alpha_2 P_X^2 \sigma_2 + \vartheta \chi_\varepsilon(X) \sigma_3 , \quad \text{where} \quad \chi_\varepsilon(X) \equiv \chi \biggl( \frac{X}{\varepsilon} \biggr) .
\end{equation}
Then, for ${\varepsilon > 0}$ sufficiently small, $\mathfrak{S}^\varepsilon(0)$ has at least two nonzero eigenvalues ${\Omega^\varepsilon_{+1} > 0}$ and ${\Omega^\varepsilon_{-1} = -\Omega^\varepsilon_{+1}}$ within its spectral gap.

\item \label{itm:sql-eff-eig_2-2} {\rm (Explicitly solvable model: ${\chi(X) = \sgn(X)}$.)} Take $\mathfrak{S}(0)$ \eqref{eq:sql-edge-eff_2a} with ${\chi(X) = {\rm sgn}(X)}$. One can solve the constant-coefficient equations for ${X > 0}$ and ${X < 0}$ and impose matching conditions across ${X = 0}$ to obtain two explicit bound state eigenpairs; see Figure \ref{fig:sql-eff-exact}. 
\end{enumerate}
\end{theorem}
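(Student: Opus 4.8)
The plan is to handle the two parts in the order \ref{itm:sql-eff-eig_2-2} then \ref{itm:sql-eff-eig_2-1}, since the explicitly solvable model furnishes the approximate eigenstate needed for the steep case. The organizing observation is a chiral (particle--hole) symmetry: because $\sigma_1 \sigma_2 \sigma_1 = -\sigma_2$ and $\sigma_1 \sigma_3 \sigma_1 = -\sigma_3$, while $\alpha_2$, $\vartheta$, $\partial_X^2$ and the real scalar $\chi$ are untouched by conjugation on spinor indices, one has $\sigma_1\, \mathfrak{S}(0)\, \sigma_1 = -\mathfrak{S}(0)$ for every real-valued $\chi$, in particular for $\chi_\varepsilon$ and for $\sgn$. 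Hence the spectrum is symmetric about $0$, and any nonzero eigenvalue $\Omega$ in the gap $(-|\vartheta|,\,|\vartheta|)$ automatically yields a second, distinct eigenvalue $-\Omega$. It therefore suffices, in each scenario, to exhibit a single nonzero gap eigenvalue; the open gap $(-|\vartheta|,\,|\vartheta|)$ itself comes from Proposition \ref{prop:spec-sql-ess} and its immediate analogue for $\chi = \sgn$ (where $\chi^2 \equiv 1$), whose essential-spectrum computation applies verbatim to $\chi_\varepsilon$ and to $\sgn$.

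For part \ref{itm:sql-eff-eig_2-2}, I would solve $\mathfrak{S}(0) a = \Omega a$ as a constant-coefficient system of two second-order ODEs on each half-line. Writing the equation from \eqref{eq:sql-edge-eff_2a} as $\alpha_2 \sigma_2 a'' = (\Omega - \vartheta\,\sgn(X)\,\sigma_3) a$ and multiplying by $\sigma_2$ gives $a'' = N_\pm(\Omega)\, a$ on $\pm X > 0$, where $N_\pm(\Omega) = \alpha_2^{-1} \sigma_2 (\Omega \mp \vartheta \sigma_3)$ has eigenvalues $\pm \alpha_2^{-1}\sqrt{\Omega^2 - \vartheta^2}$. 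For $\Omega$ in the gap these are purely imaginary, so each half-line carries a two-dimensional space of $L^2$-decaying (oscillatory-decaying) solutions. Imposing $C^1$ matching of $a$ and $a'$ across $X = 0$ produces a $4 \times 4$ homogeneous linear system whose determinant is the secular function; its vanishing determines the eigenvalues explicitly. I would exhibit a root $\Omega_0 \in (0,\,|\vartheta|)$ by direct computation (the model being exactly solvable), and chiral symmetry then supplies $-\Omega_0$; this is the content of Figure \ref{fig:sql-eff-exact}.

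For part \ref{itm:sql-eff-eig_2-1}, I would use the explicit eigenstate $a_0$ of $\mathfrak{S}^{\sgn}(0)$ at energy $\Omega_0$ from part \ref{itm:sql-eff-eig_2-2} as an approximate eigenstate (``quasimode'') for $\mathfrak{S}^\varepsilon(0)$. Since $a_0$ is $C^1$ with $a_0'' \in L^2$, it lies in $H^2(\R; \C^2)$, the common operator domain. As $\mathfrak{S}^\varepsilon(0) - \mathfrak{S}^{\sgn}(0) = \vartheta(\chi_\varepsilon - \sgn)\sigma_3$ is a bounded multiplication operator, we obtain $\lVert (\mathfrak{S}^\varepsilon(0) - \Omega_0) a_0 \rVert^2 = \vartheta^2 \int_\R |\chi_\varepsilon(X) - \sgn(X)|^2\, |a_0(X)|^2 \, {\rm d}X$, whose integrand tends to $0$ pointwise a.e. as $\varepsilon \to 0^+$ and is dominated by a fixed $L^1$ function, so the right-hand side vanishes by dominated convergence. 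For a self-adjoint operator, ${\rm dist}(\Omega_0,\, {\rm spec}(\mathfrak{S}^\varepsilon(0))) \leq \lVert (\mathfrak{S}^\varepsilon(0) - \Omega_0) a_0 \rVert / \lVert a_0 \rVert$, so for $\varepsilon$ small the spectrum of $\mathfrak{S}^\varepsilon(0)$ meets every neighborhood of $\Omega_0$; once that neighborhood is smaller than ${\rm dist}(\Omega_0, \{\pm |\vartheta|\})$ and excludes $0$, the nearby spectral point is positive and lies strictly inside the gap, where only discrete spectrum occurs, hence is an eigenvalue $\Omega^\varepsilon_{+1}$. Chiral symmetry then gives $\Omega^\varepsilon_{-1} = -\Omega^\varepsilon_{+1}$.

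I expect the main obstacle to be the bookkeeping in part \ref{itm:sql-eff-eig_2-2}: verifying that the secular determinant genuinely vanishes for some $\Omega_0 \in (0,|\vartheta|)$ (rather than only at $|\Omega| \geq |\vartheta|$ or at the spurious value $\Omega = 0$, the latter separately excludable in the $\sgn$ model by direct inspection, consistent with Proposition \ref{prop:sql-eff-no-zero}), together with the correct selection of the decaying branch among the two conjugate oscillatory-decaying modes on each half-line and the check that the matched solution is genuinely $L^2(\R; \C^2)$. The quasimode upgrade in part \ref{itm:sql-eff-eig_2-1} is then routine given the open gap and the vanishing residual; the only point demanding care there is that the limiting eigenstate $a_0$ has a curvature jump at the interface, which must be confirmed to keep $a_0 \in H^2$ and hence a legitimate test vector.
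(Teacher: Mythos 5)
Your treatment of part \ref{itm:sql-eff-eig_2-2} coincides with the paper's (Supplementary Material \ref{supp:sql-eff-exact}): solve the constant-coefficient system on each half-line, retain the two oscillatory-decaying exponentials on each side, and impose $C^1$ matching at $X=0$ to obtain a $4\times 4$ secular determinant; the paper computes $D(\Omega)=8(1-2\Omega^2)/(1-\Omega^2)^{3/2}$ for $\alpha_2=\vartheta=1$, whose roots $\pm 1/\sqrt{2}$ supply exactly the gap eigenvalue $\Omega_0$ your plan requires. For part \ref{itm:sql-eff-eig_2-1}, however, you take a genuinely different route. The paper argues variationally via Proposition \ref{prop:sql-eff-exist}: it evaluates $\langle \psi_h, \mathcal{L}^\varepsilon \psi_h\rangle$, with $\mathcal{L}^\varepsilon=\mathfrak{S}^\varepsilon(0)^2-\vartheta^2$, on scaled test functions $\psi_h(X)=\sqrt{h}\,\phi(hX)$, chooses $h=\varepsilon^{1/2}$ so that the cross term $-4\alpha_2\vartheta\,\varepsilon\,\mathrm{Re}\bigl[\overline{\phi(0)}\cdot\sigma_1(P_Y\phi)(0)\bigr]$ dominates, and selects $\phi$ to make it negative. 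You instead use the exact bound state $a_0$ of the $\mathrm{sgn}$ model as a quasimode for $\mathfrak{S}^\varepsilon(0)$, note that the residual is the bounded multiplication operator $\vartheta(\chi_\varepsilon-\mathrm{sgn})\sigma_3$ whose norm against $|a_0|^2$ vanishes by dominated convergence, and combine the distance-to-spectrum bound with the fixed essential-spectrum gap $(-|\vartheta|,\,|\vartheta|)$ and the chiral symmetry $\sigma_1\mathfrak{S}(0)\sigma_1=-\mathfrak{S}(0)$. This is correct (and your check that $a_0\in H^2$ despite the curvature jump is the right point to verify), and it buys more than the paper's argument: it locates the two eigenvalues near $\pm|\vartheta|/\sqrt{2}$ as $\varepsilon\to 0$, not merely their existence. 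Its cost is the logical dependence of part \ref{itm:sql-eff-eig_2-1} on the explicitly solvable limit of part \ref{itm:sql-eff-eig_2-2}, whereas the paper's test-function computation is self-contained; both arguments share the implicit hypothesis that $\chi^2\to 1$ rapidly enough at infinity for Proposition \ref{prop:spec-sql-ess} to fix the essential spectrum.
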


\noindent The proof of Theorem \ref{thm:sql-eff-eig_2} is located in Appendix \ref{apx:sql-eff-eig_2}. Detailed calculations for the explicitly solvable case are presented in Supplementary Material \ref{supp:sql-eff-exact}. Part \ref{itm:sql-eff-eig_2-2} of Theorem \ref{thm:sql-eff-eig_2} shows that the result of part \ref{itm:sql-eff-eig_2-1} extends to the limit ${\varepsilon \to 0}$.

\begin{figure}[!t]
\centering
\includegraphics[scale = 0.55]{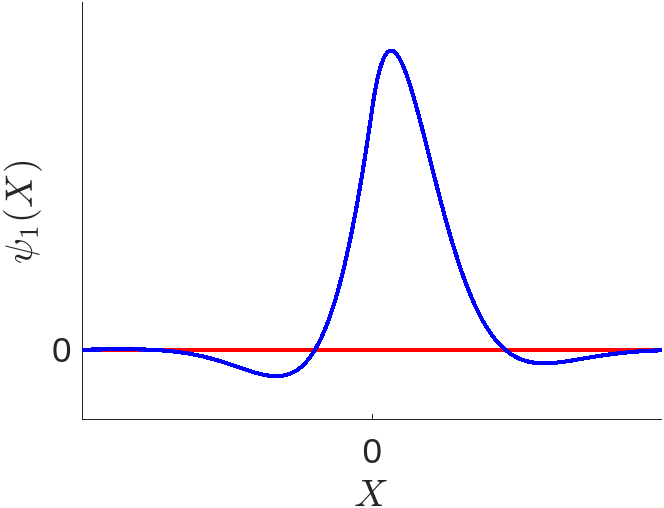}
\includegraphics[scale = 0.55]{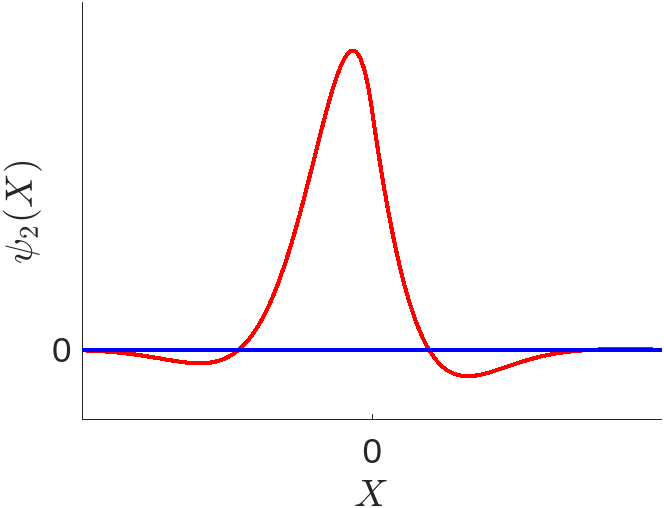}
\caption{Real {\it (red)} and imaginary {\it (blue)} parts of first {\it (left)} and second {\it (right)} components of the bound state of $\mathfrak{S}(0)$ \eqref{eq:sql-edge-eff_2a} with eigenvalue ${\Omega_+ = 1/\sqrt{2}}$. Here, $\alpha_2$, ${\vartheta = 1}$, ${\chi(X) = {\rm sgn}(X)}$; see Theorem \ref{thm:sql-eff-eig_2}, part \ref{itm:sql-eff-eig_2-2}. The bound state with eigenvalue ${\Omega_- = -1/\sqrt{2}}$ {\it (not pictured)} is obtained by interchanging the components.}
\label{fig:sql-eff-exact}
\end{figure}

The proofs of Theorems \ref{thm:sql-eff-eig_1} and \ref{thm:sql-eff-eig_2} are based on the following: Note that the squared operator $\mathfrak{S}(0)^2$ is non-negative, and hence bounded below. We define
\begin{align}
\label{eq:def-calL}
\mathcal{L} = \mathfrak{S}(0)^2 - \vartheta^2 & = (\alpha_2^2 P_X^4 + \vartheta^2 (\chi(X)^2 - 1)) I - i \alpha_2 \vartheta (P^2_X \chi(X) - \chi(X) P^2_X) \sigma_1 \\
& = (\alpha_2^2 P_X^4 + \vartheta^2 (\chi(X)^2 - 1)) I - \alpha_2 \vartheta (P_X \chi'(X) + \chi'(X) P_X) \sigma_1 , \nonumber
\end{align}
with domain $H^4(\R; \C^2)$. For ${\chi(X)^2 \to 1}$ sufficiently rapidly as ${|X| \to +\infty}$, the essential spectrum of $\mathcal{L}$ is
\begin{equation}
{\rm spec}_{\rm ess}(\mathcal{L}) = {\rm spec}(P_X^4) = [0, \, +\infty) .
\end{equation}
Since $\mathfrak{S}(0)^2$ is non-negative, the spectrum of $\mathcal{L}$ is a subset of ${[-\vartheta^2, \, +\infty)}$; the gap in the essential spectrum of $\mathfrak{S}(0)$ is mapped to ${[-\vartheta^2, \, 0)}$. Hence, $\mathcal{L}$ has discrete spectrum if we can produce a function in its domain for which the associated quadratic form is strictly negative. Further, an element of the discrete spectrum of $\mathcal{L}$ gives rise to two eigenvalues of $\mathfrak{S}(0)$. In summary:

\begin{proposition}
\label{prop:sql-eff-exist}
{\rm (Condition for the existence of two eigenvalues of $\mathfrak{S}(0)$.)}
The operator $\mathfrak{S}(0)$ \eqref{eq:sql-edge-eff_2a} has a pair of eigenvalues ${\Omega_+ > 0}$ and ${\Omega_- = - \Omega_+}$ in its spectral gap, i.e.,
\begin{equation}
-|\vartheta| < \Omega_- = -\Omega_+ < 0 < \Omega_+ < |\vartheta| ,
\end{equation}
if and only if there exists ${\psi \smallin L^2(\R; \C^2)}$ such that ${\langle \psi, \, \mathcal{L} \psi \rangle < 0}$, where ${\mathcal{L} = \mathfrak{S}(0)^2 - \vartheta^2}$ \eqref{eq:def-calL}.
\end{proposition}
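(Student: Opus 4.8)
The plan is to exploit two structural features of $\mathfrak{S}(0)$: a chiral anti-symmetry that forces nonzero eigenvalues to occur in symmetric pairs $\pm\Omega$, together with the fact that passing to the squared operator converts the first-order spectral problem inside the gap into a semibounded, Schr\"odinger-type problem whose negative spectrum is detected by a single quadratic form. First I would record the chiral symmetry. Since $P_X^2$ acts as a scalar in spin space and $\sigma_1$ anticommutes with both $\sigma_2$ and $\sigma_3$ (indeed $\sigma_1 \sigma_2 \sigma_1 = -\sigma_2$ and $\sigma_1 \sigma_3 \sigma_1 = -\sigma_3$), one computes $\sigma_1 \mathfrak{S}(0) \sigma_1 = -\mathfrak{S}(0)$, i.e.\ $\{\sigma_1, \mathfrak{S}(0)\} = 0$. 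Consequently, if $\mathfrak{S}(0)\phi = \Omega \phi$ then $\mathfrak{S}(0)(\sigma_1 \phi) = -\Omega (\sigma_1 \phi)$, so the nonzero eigenvalues of $\mathfrak{S}(0)$ come in pairs $\pm\Omega$, with eigenvectors related by $\sigma_1$.

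Next I would set up the squaring correspondence. Because $\mathfrak{S}(0)$ is self-adjoint, $\mathfrak{S}(0)^2 \geq 0$, so $\mathcal{L} = \mathfrak{S}(0)^2 - \vartheta^2 \geq -\vartheta^2$. Using the facts already established in the text, namely ${\rm spec}_{\rm ess}(\mathcal{L}) = [0, +\infty)$ and that $\mathfrak{S}(0)$ has essential-spectrum gap $(-|\vartheta|, |\vartheta|)$, the map $\lambda \mapsto \lambda^2 - \vartheta^2$ carries the gap of $\mathfrak{S}(0)$ onto the interval $[-\vartheta^2, 0)$ for $\mathcal{L}$. Thus $\mathfrak{S}(0)$ has a gap eigenvalue $\Omega$ with $|\Omega| < |\vartheta|$ precisely when $\mathcal{L}$ has an eigenvalue in $[-\vartheta^2, 0)$, which is necessarily discrete since it lies below ${\rm spec}_{\rm ess}(\mathcal{L})$.

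For the reverse implication (quadratic form $\Rightarrow$ eigenvalues), suppose $\langle \psi, \mathcal{L}\psi \rangle < 0$ for some $\psi \smallin L^2(\R; \C^2)$. By the min-max principle together with ${\rm spec}_{\rm ess}(\mathcal{L}) = [0, +\infty)$, the bottom of ${\rm spec}(\mathcal{L})$ is a discrete eigenvalue $\mu \leq \langle \psi, \mathcal{L}\psi \rangle / \lVert \psi \rVert^2 < 0$, with eigenvector $\phi$ satisfying $\mathfrak{S}(0)^2 \phi = (\mu + \vartheta^2)\phi$. Proposition \ref{prop:sql-eff-no-zero} rules out $\mu = -\vartheta^2$ (which would force $\mathfrak{S}(0)\phi = 0$), so $\Omega_+ := \sqrt{\mu + \vartheta^2} \smallin (0, |\vartheta|)$. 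Decomposing $\phi$ via the spectral projections of $\mathfrak{S}(0)$ onto $\{\Omega_+\}$ and $\{-\Omega_+\}$, whose sum equals the spectral projection of $\mathfrak{S}(0)^2$ onto $\{\Omega_+^2\}$, shows that at least one of $\pm\Omega_+$ is an eigenvalue of $\mathfrak{S}(0)$; the chiral symmetry then promotes this to both, yielding the claimed pair $\Omega_- = -\Omega_+$ inside the gap. The forward implication is immediate: an eigenvector $\phi$ with $\mathfrak{S}(0)\phi = \Omega_+\phi$ and $0 < \Omega_+ < |\vartheta|$ gives $\langle \phi, \mathcal{L}\phi \rangle = (\Omega_+^2 - \vartheta^2)\lVert \phi \rVert^2 < 0$.

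The main obstacle I anticipate is the clean passage from the second-order operator $\mathcal{L}$ back to the first-order operator $\mathfrak{S}(0)$. The squaring step directly furnishes only an eigenvalue $\Omega_+^2$ of $\mathfrak{S}(0)^2$, and care is required to certify that $\pm\Omega_+$ genuinely belong to the point spectrum of $\mathfrak{S}(0)$ and lie in \emph{its} gap, rather than merely the gap of the squared operator. This is exactly where the spectral-theorem decomposition of the $\Omega_+^2$-eigenspace of $\mathfrak{S}(0)^2$ into the $\pm\Omega_+$-eigenspaces of $\mathfrak{S}(0)$, the no-zero-mode input of Proposition \ref{prop:sql-eff-no-zero}, and the chiral symmetry $\sigma_1$ must be combined; the remaining ingredients (min-max and the spectral-mapping bookkeeping) are routine.
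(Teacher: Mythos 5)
Your argument is correct and follows essentially the same route as the paper's proof in Appendix \ref{apx:sql-eff-exist}: square the operator, use the spectral theorem together with ${\rm spec}_{\rm ess}(\mathcal{L})=[0,+\infty)$ to extract a negative discrete eigenvalue of $\mathcal{L}$, pass back to $\mathfrak{S}(0)$ via the decomposition of the $\Omega_+^2$-eigenspace of $\mathfrak{S}(0)^2$ (the paper writes this as the factorization $(\mathfrak{S}(0)+\Omega_+)(\mathfrak{S}(0)-\Omega_+)\psi_\star=0$), and use the anticommutation $\mathfrak{S}(0)\sigma_1=-\sigma_1\mathfrak{S}(0)$ to produce the symmetric pair $\pm\Omega_+$. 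Your explicit invocation of Proposition \ref{prop:sql-eff-no-zero} to exclude $\Omega_+=0$, and your inclusion of the trivial forward implication, only make explicit steps the paper leaves implicit.
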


\noindent The proof of Proposition \ref{prop:sql-eff-exist} is located in Appendix \ref{apx:sql-eff-exist}.

\subsubsection{Eigenvalue curves of ${\kappa \mapsto \mathfrak{S}(\kappa)}$}

Perturbation theory of self-adjoint linear operators \cite{kato1995perturbation} ensures that, for $|\kappa|$ sufficiently small, there exist real analytic eigenvalue curves in the band gap of ${\kappa \mapsto \mathfrak{S}(\kappa)}$. In the case where $\Omega_\pm$ are simple eigenvalues, there are two curves: ${\kappa \mapsto \Omega_\pm(\kappa)}$, with ${\Omega_\pm(0) = \Omega_\pm}$. The slope of each eigenvalue curve at ${\kappa = 0}$ is obtained from the expansion of $\Omega_\pm(\kappa)$ to first order. We contrast the cases: (1) ${\chi(X) \to \pm 1}$, and (2) ${\chi(X) \to +1}$ as ${X \to \pm \infty}$.

\begin{proposition}
\label{prop:sql-eig-local}
Consider $\mathfrak{S}(\kappa)$, defined in \eqref{eq:sql-edge-eff_2}, with ${\alpha_0 = 1}$ and $\alpha_2$, ${\vartheta \neq 0}$. Then:
\begin{enumerate}
\item \label{itm:sql-eig-local_1} If ${\chi(X) \to \pm 1}$ as ${X \to \pm \infty}$ and ${\chi(-X) = -\chi(X)}$,
\begin{equation}
\Omega'_-(0) = \Omega'_+(0) =  2 \alpha_1 \langle \psi_+, \, P_X \sigma_1 \psi_+ \rangle .
\end{equation}

\item \label{itm:sql-eig-local_2} If ${\chi(X) \to +1}$ as ${X \to \pm \infty}$ and ${\chi(-X) = \chi(X)}$,
\begin{equation}
\Omega'_+(0) = 0 \quad \text{and} \quad \Omega'_-(0) = 0 .
\end{equation}
\end{enumerate}
\end{proposition}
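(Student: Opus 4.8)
The plan is to obtain both slopes from first-order (Hellmann--Feynman) perturbation theory, and then to read off the qualitative contrast between the two cases purely from the symmetries that $\mathfrak{S}(0)$ enjoys in each. The preceding discussion already supplies, via Kato's analytic perturbation theory and the simplicity of $\Omega_\pm$, real-analytic eigenvalue curves $\kappa \mapsto \Omega_\pm(\kappa)$ with analytic normalized eigenvectors $\psi_\pm(\kappa)$, so I may write $\Omega'_\pm(0) = \langle \psi_\pm, \, \partial_\kappa \mathfrak{S}(0) \, \psi_\pm \rangle$, with $\psi_\pm := \psi_\pm(0)$. Differentiating the $\kappa$-dependence of \eqref{eq:sql-edge-eff_2} at $\kappa = 0$ with $\alpha_0 = 1$: the $\sigma_0$ term is absent, the $\kappa$-derivative of the $\sigma_2$ term equals $2\alpha_2 \kappa \sigma_2$ and vanishes at $\kappa = 0$, so only the $\sigma_1$ term survives, giving $\partial_\kappa \mathfrak{S}(0) = 2\alpha_1 P_X \sigma_1$ (in the sign convention of the statement). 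Hence $\Omega'_\pm(0) = 2 \alpha_1 \langle \psi_\pm, \, P_X \sigma_1 \psi_\pm \rangle$, the common expression to be analyzed.

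For part \ref{itm:sql-eig-local_1}, the key input is the chiral symmetry $\sigma_1 \mathfrak{S}(0) \sigma_1 = - \mathfrak{S}(0)$, which holds because $\sigma_1$ anticommutes with both $\sigma_2$ and $\sigma_3$ (and is valid for any $\chi$). This forces $\Omega_- = -\Omega_+$ and, by simplicity of $\Omega_-$, identifies $\psi_- = e^{i\theta} \sigma_1 \psi_+$ for some phase. Substituting into the slope formula and using $\sigma_1^2 = I$, self-adjointness of $\sigma_1$, and $[\sigma_1, P_X] = 0$ gives $\Omega'_-(0) = \Omega'_+(0) = 2\alpha_1 \langle \psi_+, P_X \sigma_1 \psi_+ \rangle$, the claimed equality, generically nonzero. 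The role of $\chi(-X) = -\chi(X)$ is conceptual here: odd $\chi$ makes $\mathfrak{S}(0)$ commute with the combined reflection $\sigma_2 \Pi$, where $(\Pi u)(X) = u(-X)$; a short check shows $\partial_\kappa \mathfrak{S}(0) = 2\alpha_1 P_X \sigma_1$ commutes with $\sigma_2 \Pi$ as well, so this symmetry imposes no vanishing, consistent with a nonzero slope.

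For part \ref{itm:sql-eig-local_2}, the decisive point is that evenness $\chi(-X) = \chi(X)$ upgrades the symmetry to the bare reflection: since $P_X^2$ is invariant under $\Pi$ and $\chi(-X) = \chi(X)$, one has $\Pi \mathfrak{S}(0) \Pi = \mathfrak{S}(0)$. By simplicity, each $\psi_\pm$ is then an eigenvector of $\Pi$, hence has definite parity in $X$. On the other hand $\partial_\kappa \mathfrak{S}(0) = 2\alpha_1 P_X \sigma_1$ is \emph{odd} under $\Pi$, because $\Pi P_X \Pi = -P_X$ while $\sigma_1$ is untouched, so $\Pi \, \partial_\kappa \mathfrak{S}(0) \, \Pi = - \partial_\kappa \mathfrak{S}(0)$. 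Inserting $\Pi^2 = I$ and using $\Pi \psi_\pm = \pm \psi_\pm$ yields $\langle \psi_\pm, \partial_\kappa \mathfrak{S}(0) \psi_\pm \rangle = - \langle \psi_\pm, \partial_\kappa \mathfrak{S}(0) \psi_\pm \rangle$, forcing both to vanish; thus $\Omega'_+(0) = \Omega'_-(0) = 0$.

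I expect the main obstacle to be bookkeeping rather than analysis: correctly identifying which reflection commutes with $\mathfrak{S}(0)$ ($\sigma_2 \Pi$ in the odd case, $\Pi$ in the even case) and then tracking how $\partial_\kappa \mathfrak{S}(0)$ transforms under it --- even under $\sigma_2\Pi$ (no constraint) versus odd under $\Pi$ (forced vanishing). The only genuinely necessary input beyond this algebra is the simplicity of $\Omega_\pm$, which guarantees that the eigenvectors inherit the relevant symmetry; this is precisely the hypothesis under which the curves $\Omega_\pm(\kappa)$ were introduced, so no extra work is needed there.
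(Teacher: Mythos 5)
Your proof is correct. The Hellmann--Feynman identity $\Omega'_\pm(0) = \langle \psi_\pm, \partial_\kappa \mathfrak{S}(0)\, \psi_\pm\rangle = 2\alpha_1 \langle \psi_\pm, P_X\sigma_1 \psi_\pm\rangle$ is exactly what the paper extracts from the solvability condition of the order-$\kappa^1$ equation in its formal expansion, and your part \ref{itm:sql-eig-local_2} argument (evenness of $\chi$ gives $[\mathfrak{S}(0),\mathcal{P}]=0$, hence definite parity of $\psi_\pm$ by simplicity, while $P_X\sigma_1$ is odd under $\mathcal{P}$) coincides with the paper's. Where you genuinely diverge is part \ref{itm:sql-eig-local_1}: the paper identifies $\psi_-$ with $\sigma_3\mathcal{P}[\psi_+]$ using the anticommutation $\mathfrak{S}(0)\circ\sigma_3\mathcal{P} = -\sigma_3\mathcal{P}\circ\mathfrak{S}(0)$, which holds only because $\chi$ is odd, whereas you use the chiral symmetry $\sigma_1\mathfrak{S}(0)\sigma_1 = -\mathfrak{S}(0)$, which holds for every $\chi$ (and which the paper itself exploits in the proof of Proposition \ref{prop:sql-eff-exist} to produce the pair $(\Omega_-,\sigma_1\psi_{+1})$). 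Your route therefore proves the slightly stronger statement that $\Omega'_-(0)=\Omega'_+(0)$ for an arbitrary domain wall, with the oddness hypothesis playing no role in the equality; the paper's route instead showcases the specific parity relation between the two bound states that odd $\chi$ induces. Both computations reduce to the same algebra ($\sigma_1$ self-adjoint, $[\sigma_1,P_X]=0$ for yours; $\sigma_1\sigma_3=-\sigma_3\sigma_1$ and $P_X\circ\mathcal{P}=-\mathcal{P}\circ P_X$ for the paper's) and yield the identical formula, so there is no gap; your observation that $\sigma_2\mathcal{P}$ commutes with both $\mathfrak{S}(0)$ and $P_X\sigma_1$ in the odd case is a correct, if inessential, consistency check.
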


\noindent The proof of Proposition \ref{prop:sql-eig-local} is located in Supplementary Material \ref{apx:sql-eig-local}. This result is consistent with  Figure \ref{fig:sql-eff-spec}, Figure \ref{fig:sql-eff-spec-conj-2} - \ref{fig:sql-eff-spec-conj-3}, and Figure \ref{fig:sql-eff-spec-conj-4}, respectively.

\subsection{Spectra of Dirac operators $\cancel{\mathfrak{D}}^\pm(\kappa)$}
\label{sec:spec-dfm}

We next discuss the effective edge Hamiltonians $\cancel{\mathfrak{D}}^\pm(\kappa)$ derived in Section \ref{sec:multi-dfm}. Let us define parameters $a_\ell$, ${b_\ell \smallin \R}$, ${0 \leq \ell \leq 2}$, given by the edge-dependent expressions
\begin{equation}
\label{eqn:abdefs}
a_\ell = {\bm \gamma}_\ell^{{\bm D}^+} \! \cdot \boldsymbol{\mathfrak{K}}_2 \quad \text{and} \quad b_\ell = {\bm \gamma}_\ell^{{\bm D}^+} \! \cdot \boldsymbol{\mathfrak{K}}_1 , \quad 0 \leq \ell \leq 2 ,
\end{equation}
where ${\bm \gamma}^{{\bm D}^\pm}_\ell$, ${0 \leq \ell \leq 2}$, are given in \eqref{eq:def-gam-par}. The parameter ${c = \vartheta^{{\bm D}^+}}$ is real and non-zero, and independent of the choice of edge. For notational convenience, we introduce the triples ${a \equiv (a_0, a_1, a_2)}$ and ${b \equiv (b_0, b_1, b_2)}$, and denote ${\sigma \equiv [\sigma_0, \sigma_1, \sigma_2]^\mathsf{T}}$; see \eqref{eq:def-pauli}. We write
\begin{equation}
\label{eqn:sigma-dot}
u \cdot \sigma = u_0 \sigma_0 + u_1 \sigma_1 + u_2 \sigma_2 ,
\end{equation} 
defined for any ${u = (u_0, u_1, u_2) \smallin \R^3}$.

\begin{remark}
\label{rmk:gamma}
Since ${{\bm \gamma}^{{\bm D}^+}_1 \!}$, ${{\bm \gamma}^{{\bm D}^+}_2 \! \smallin \R^2}$ are nonzero (see the non-degeneracy condition \ref{itm:dir-pt-4}), and ${\{ \boldsymbol{\mathfrak{K}_1}, \boldsymbol{\mathfrak{K}_2} \}}$ spans $\R^2$, it follows that $a_1$ and $b_1$, and similarly $a_2$ and $b_2$, cannot be simultaneously zero.
\end{remark}

Our class of Dirac operators $\cancel{\mathfrak{D}}^\pm(\kappa)$ is of the general form:
\begin{align}
\label{eq:dfm-edge-eff_2}
\cancel{\mathfrak{D}}^\pm(\kappa) = \pm \bigl( (a \cdot \sigma) P_X + (b \cdot \sigma) \kappa \bigr)  + c \chi(X) \sigma_3 ,
\end{align}
where ${\kappa \smallin \R}$, ${P_X = -i \partial_X}$, and $\chi(X)$ is a domain wall function; see Definition \ref{def:dwall}.

\begin{remark}
\label{rmk:honey-zigzag}
Setting $a_0$, $b_0$, $b_1$, and ${a_2 = 0}$ yields
\begin{equation}
\cancel{\mathfrak{D}}^\pm(\kappa) = \pm (a_1 P_X \sigma_1 + b_2 \kappa \sigma_2) + c \chi(X) \sigma_3 .
\end{equation}
For appropriate parameters $a_1$, $b_2$, and $c$, this is the Dirac operator obtained as an effective edge Hamiltonian in \cite[Proposition 4.3]{drouotweinstein2020} in the context of 2D honeycomb media.
\end{remark}

\begin{figure}[!t]
\centering
\includegraphics[scale = 0.55]{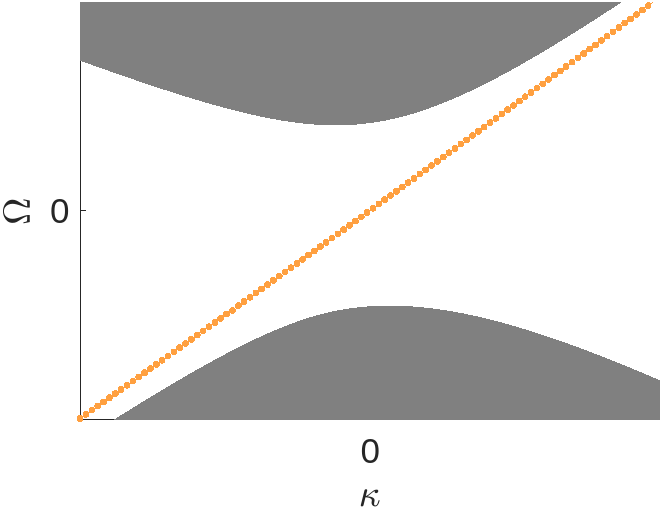} 
\includegraphics[scale = 0.55]{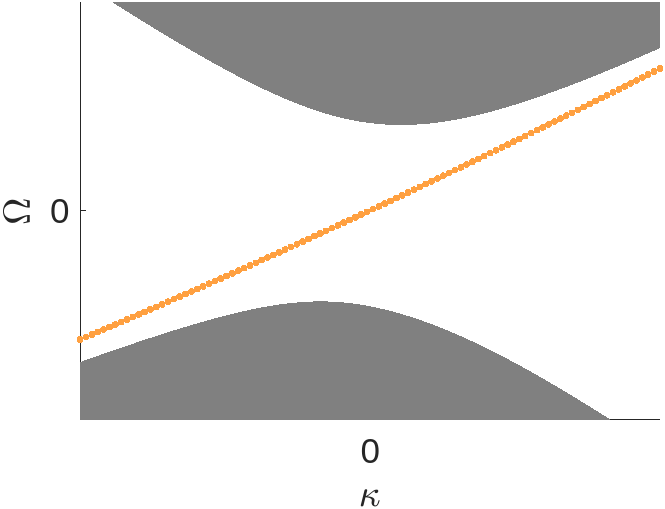} 
\caption{Numerically generated spectra of the families ${\kappa \mapsto \cancel{\mathfrak{D}}^-(\kappa)}$ {\it (left)} and ${\kappa \mapsto \cancel{\mathfrak{D}}^-(\kappa)}$ {\it (right)} \eqref{eq:dfm-edge-eff_2} with ${a_0 = 0.3}$, $a_1$, $a_2$, ${b_2 = 1}$, ${c = 0.5}$, ${\chi(X) = \tanh(X)}$; all other parameters are zero. In each plot, the union of essential spectra {\it (gray)} consists of two connected regions (i.e., bands) separated by a band gap. One eigenvalue curve {\it (orange)} traverses the gap, contributing $+1$ to the spectral flow; summing both contributions, the spectral flow is $+2$.}
\label{fig:effDirac-spec}
\end{figure}

Figure \ref{fig:effDirac-spec} displays computed spectra of the families ${\kappa \mapsto \cancel{\mathfrak{D}}^\pm(\kappa)}$, analogous to Figure \ref{fig:sql-eff-spec} in the previous section. Here, we observe a single eigenvalue curve traversing the band gap for each operator, consistent with the bulk-edge correspondence principle as discussed in Section \ref{sec:dfm-edge}. The plots are in excellent agreement with numerical simulations of the $L^2_k(\R^2/\Z\boldsymbol{\mathfrak{v}}_1)$ spectrum of ${H^\delta_{\rm edge} = T_* H^\delta}$, located in Section \ref{sec:numerics}. Our analytical results, presented in the subsequent sections, explain these features.

\subsubsection{Essential spectra of $\cancel{\mathfrak{D}}^\pm(\kappa)$}
\label{sec:spec-dfm-ess}

Fourier analysis of the constant-coefficient asymptotic Hamiltonians arising for ${|X| \to +\infty}$ yields a characterization of the essential spectrum of $\cancel{\mathfrak{D}}^\pm(\kappa)$. 

\begin{proposition}
\label{prop:spec-dfm-gap}
{\rm (Essential spectra of $\cancel{\mathfrak{D}}^\pm(\kappa)$.)}
Consider $\cancel{\mathfrak{D}}^+(\kappa)$ \eqref{eq:dfm-edge-eff_2}, where we assume ${\chi(X)^2 \to 1}$ sufficiently rapidly as ${|X| \to +\infty}$. Then, the following statements are equivalent:
\begin{enumerate}
\item ${{-a_0^2} + a_1^2 + a_2^2 > 0}$ and ${c \neq 0}$,
\item For each ${\kappa \smallin \R}$, the essential spectrum of $\cancel{\mathfrak{D}}^+(\kappa)$ consists of two semi-infinite intervals
\begin{equation}
\label{eqn:specwindowsDirac}
{\rm spec}_{\rm ess}(\cancel{\mathfrak{D}}^+(\kappa)) = (-\infty, \, \nu_-(\kappa)] \cup [\nu_+(\kappa), \, +\infty) ,
\end{equation}
separated by an open interval ${(\nu_-(\kappa), \, \nu_+(\kappa))}$ containing ${\Omega = 0}$. Further, ${\kappa \mapsto \cancel{\mathfrak{D}}^+(\kappa)}$ has a band gap:
\begin{equation}
\label{eqn:DirBandGap}
\inf_{\kappa \smallin \R} \bigl( \nu_+(\kappa) - \nu_-(\kappa) \bigr) > 0 ;
\end{equation}
see Section \ref{sec:fb-thy-gaps}.
\end{enumerate}
The same holds when $\cancel{\mathfrak{D}}^+(\kappa)$ is replaced by $\cancel{\mathfrak{D}}^-(\kappa)$.
\end{proposition}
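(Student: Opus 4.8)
The plan is to compute $\mathrm{spec}_{\rm ess}(\cancel{\mathfrak{D}}^+(\kappa))$ from the two constant-coefficient operators arising in the limits $X \to \pm\infty$, and then to read off the band structure by Fourier transform. Throughout, write $q(u) = u_0^2 - u_1^2 - u_2^2$ for $u = (u_0, u_1, u_2) \smallin \R^3$, and recall $a = (a_0, a_1, a_2)$, $b = (b_0, b_1, b_2)$. Since $\chi$ is a domain wall with $\chi(X)^2 \to 1$ rapidly, replacing $\chi$ by its limits $\pm 1$ produces the two constant-coefficient Dirac operators $\cancel{\mathfrak{D}}^+_\pm(\kappa) \equiv (a \cdot \sigma) P_X + (b \cdot \sigma) \kappa \pm c\, \sigma_3$ on $L^2(\R; \C^2)$. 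At each end the difference $c(\chi(X) \mp 1) \sigma_3$ is a bounded multiplication operator decaying at infinity, hence relatively compact with respect to the first-order operator (multiplication by a function vanishing at infinity times $(\cancel{\mathfrak{D}}^+(\kappa) - z)^{-1}$ is compact). A standard decomposition principle then gives $\mathrm{spec}_{\rm ess}(\cancel{\mathfrak{D}}^+(\kappa)) = \mathrm{spec}(\cancel{\mathfrak{D}}^+_+(\kappa)) \cup \mathrm{spec}(\cancel{\mathfrak{D}}^+_-(\kappa))$; since $c$ enters only through $c^2$, the two limiting operators share the same spectrum.

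\textbf{Dispersion branches.} Fourier transforming $P_X \mapsto \xi$ diagonalizes each limiting operator through its $2 \times 2$ symbol $M(\xi) = (a_0 \xi + b_0 \kappa) \sigma_0 + (a_1 \xi + b_1 \kappa) \sigma_1 + (a_2 \xi + b_2 \kappa) \sigma_2 \pm c\, \sigma_3$, whose eigenvalues are $\lambda_\pm(\xi) = (a_0 \xi + b_0 \kappa) \pm \sqrt{(a_1 \xi + b_1 \kappa)^2 + (a_2 \xi + b_2 \kappa)^2 + c^2}$. Thus $\mathrm{spec}_{\rm ess}(\cancel{\mathfrak{D}}^+(\kappa)) = \overline{\lambda_+(\R)} \cup \overline{\lambda_-(\R)}$, with fiberwise separation $\lambda_+(\xi) - \lambda_-(\xi) \geq 2|c|$.

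\textbf{The two implications.} For (1)$\Rightarrow$(2): as $\xi \to \pm\infty$ one has $\lambda_\pm(\xi) \sim (a_0 \pm \sqrt{a_1^2 + a_2^2})\, \xi$, and $a_1^2 + a_2^2 > a_0^2$ forces $a_0 + \sqrt{a_1^2 + a_2^2} > 0 > a_0 - \sqrt{a_1^2 + a_2^2}$, so $\lambda_+ \to +\infty$ and $\lambda_- \to -\infty$ at both ends; by continuity $\lambda_+(\R) = [\nu_+(\kappa), +\infty)$ and $\lambda_-(\R) = (-\infty, \nu_-(\kappa)]$. That $0$ lies strictly between, $\nu_-(\kappa) < 0 < \nu_+(\kappa)$, is equivalent to $\det M(\xi) = q(a\xi + b\kappa) - c^2 < 0$ for all $\xi$; as a downward parabola in $\xi$ (leading coefficient $q(a) = a_0^2 - a_1^2 - a_2^2 < 0$) its maximum equals $\kappa^2\, D/q(a) - c^2$, where $D = q(a) q(b) - (a_0 b_0 - a_1 b_1 - a_2 b_2)^2$ is the Lorentzian Gram determinant of $\{a, b\}$. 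For the present upright (type-I) Dirac cones one has $D \geq 0$ (equivalently, ${\bm\gamma}_0^{{\bm D}^+}({\bm\gamma}_0^{{\bm D}^+})^{\mathsf T} - {\bm\gamma}_1^{{\bm D}^+}({\bm\gamma}_1^{{\bm D}^+})^{\mathsf T} - {\bm\gamma}_2^{{\bm D}^+}({\bm\gamma}_2^{{\bm D}^+})^{\mathsf T}$ is negative semidefinite), so this maximum is $\leq -c^2 < 0$, yielding both $0 \smallin (\nu_-(\kappa), \nu_+(\kappa))$ and the uniform band gap. For (2)$\Rightarrow$(1) I argue by contrapositive: if $c = 0$ then $\lambda_+$ and $\lambda_-$ coincide at $(\xi, \kappa) = (0, 0)$ (using Remark \ref{rmk:gamma}), so $0 \smallin \mathrm{spec}_{\rm ess}$ at $\kappa = 0$; and if $a_1^2 + a_2^2 \leq a_0^2$ the two leading coefficients $a_0 \pm \sqrt{a_1^2 + a_2^2}$ share a sign, so one branch runs monotonically from $-\infty$ to $+\infty$ (or the fiber spectrum degenerates when $a_1 = a_2 = 0$), destroying the two-interval-with-gap structure.

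\textbf{Main obstacle.} The delicate point is the uniform-in-$\kappa$ control: showing that the gap neither closes nor drifts off $0$ as $|\kappa| \to \infty$. This reduces precisely to the sign of the Lorentzian Gram determinant $D$, i.e. to the geometric fact that the Dirac cones produced by the construction of Section \ref{sec:dfm-bulk} are upright rather than overtilted; a secondary technical point is the clean justification of the relative compactness underlying the decomposition principle for first-order operators. The $\cancel{\mathfrak{D}}^-(\kappa)$ case is identical: its symbol differs only by the sign of $(a \cdot \sigma) \xi + (b \cdot \sigma) \kappa$, which is absorbed by $\xi \mapsto -\xi$, $\kappa \mapsto -\kappa$ and leaves the union $\overline{\lambda_+(\R)} \cup \overline{\lambda_-(\R)}$ unchanged.
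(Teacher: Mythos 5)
Your overall architecture --- passing to the constant-coefficient operators at $X\to\pm\infty$ via relative compactness, diagonalizing the $2\times2$ Fourier symbol, and reading off the necessary condition ${-a_0^2+a_1^2+a_2^2>0}$ from the $|\xi|\to\infty$ asymptotics of the two branches --- coincides with the paper's proof in Appendix \ref{apx:dfm-edge-eff-gap}, and the necessity direction is fine. The genuine gap is in the sufficiency step. You reduce the conclusion to $\max_\xi \det M(\xi)=\kappa^2 D/q(a)-c^2\le -c^2$, which holds if and only if the Lorentzian Gram determinant $D=q(a)q(b)-(a_0b_0-a_1b_1-a_2b_2)^2$ is nonnegative; you then assert $D\ge 0$ ``for upright (type-I) cones.'' That inequality is not among the hypotheses of the proposition (which assumes only ${-a_0^2+a_1^2+a_2^2>0}$ and ${c\neq 0}$), is not proved anywhere in your argument, and is a genuine restriction on $b$, i.e.\ on the tilt of the cone in the $\kappa$-direction. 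The paper's proof needs no such condition: it locates the band edges by solving ${\partial_\xi\Omega_+\cdot\partial_\xi\Omega_-=0}$ explicitly, evaluates $\nu_\pm(\kappa)=\Omega_\pm(\xi_\pm(\kappa);\kappa)$ in closed form, and obtains
\begin{equation*}
\nu_+(\kappa)-\nu_-(\kappa)=\frac{2\sqrt{({-a_0^2}+a_1^2+a_2^2)\bigl((a_1b_2-a_2b_1)^2\kappa^2+(a_1^2+a_2^2)c^2\bigr)}}{a_1^2+a_2^2}\;\ge\;2|c|\sqrt{\frac{-a_0^2+a_1^2+a_2^2}{a_1^2+a_2^2}}>0,
\end{equation*}
so the uniform band gap \eqref{eqn:DirBandGap} follows from hypothesis (1) alone, with no assumption on $D$.

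There is a second, independent flaw: even granting $D\ge 0$, the pointwise bound $\lambda_+(\xi)\lambda_-(\xi)=\det M(\xi)\le -c^2$ gives $\nu_-(\kappa)<0<\nu_+(\kappa)$ for each fixed $\kappa$, but it does not yield the $\kappa$-uniform lower bound on $\nu_+(\kappa)-\nu_-(\kappa)$ demanded by \eqref{eqn:DirBandGap}: the infimum of $\lambda_+$ and the supremum of $\lambda_-$ are attained at different values of $\xi$, so neither the fiberwise separation $\lambda_+-\lambda_-\ge 2|c|$ nor the product bound controls their difference uniformly in $\kappa$. You do correctly sense that the clause ``containing $\Omega=0$'' for \emph{every} $\kappa$ hinges on exactly the Gram condition you introduce (the paper's computation of $\nu_\pm(\kappa)$ shows the gap window drifts linearly in $\kappa$ unless that condition holds), but the substantive conclusion \eqref{eqn:DirBandGap} does not, and your route establishes neither without the explicit critical-point computation.
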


\begin{remark}
{\rm (Spectral gaps.)}
In general, the families ${\kappa \mapsto \cancel{\mathfrak{D}}^\pm(\kappa)}$ do not yield a spectral gap. (However, there may be a spectral gap, e.g., for ${a_0 = 0}$.)
\end{remark}

\noindent The proof of Proposition \ref{prop:spec-dfm-gap} is presented in Appendix \ref{apx:dfm-edge-eff-gap}. 

\subsubsection{Discrete spectrum of $\cancel{\mathfrak{D}}^\pm(0)$}
\label{sec:spec-dfm-disc}

Let $\cancel{\mathfrak{D}}^\pm(\kappa)$ be given by \eqref{eq:dfm-edge-eff_2}. Numerical simulations indicate that, for each ${\kappa \smallin \R}$, $\cancel{\mathfrak{D}}^\pm(\kappa)$ have at least one eigenvalue $\Omega_\pm(\kappa)$ in the gap, and that the curves ${\kappa \mapsto \Omega_\pm(\kappa)}$ traverse the gap.

Assume the band gap condition ${{-a_0}^2 + a_1^2 + a_2^2 > 0}$ of Proposition \ref{prop:spec-dfm-gap}. We first establish the existence of distinguished bound states of both $\cancel{\mathfrak{D}}^-(0)$ and $\cancel{\mathfrak{D}}^+(0)$.

\begin{proposition}
\label{prop:zero-energy}
{\rm (Zero-energy bound states of $\cancel{\mathfrak{D}}^-(0)$ and $\cancel{\mathfrak{D}}^+(0)$.)}
Assume ${{-a_0}^2 + a_1^2 + a_2^2 > 0}$ and ${c \neq 0}$, so that the essential spectra of $\cancel{\mathfrak{D}}^\pm(0)$ have a gap containing ${\Omega = 0}$; see Proposition \ref{prop:spec-dfm-gap}. Then, ${\Omega = 0}$ is a simple eigenvalue of both $\cancel{\mathfrak{D}}^-(0)$ and $\cancel{\mathfrak{D}}^+(0)$ with corresponding eigenstates
\begin{equation}
\label{eq:psi-star}
\psi^\star_\pm(X) = \exp \biggl( {\frac{-|c|}{\sqrt{{-a_0^2} + a_1^2 + a_2^2}}} \int_{[0, \, X]} \chi(X^\prime) \, {\rm d}X^\prime \biggr) \psi_\pm ,
\end{equation}
where ${\psi_\pm \smallin \C^2}$ are eigenvectors of the matrix
\begin{equation}
\label{eqn:RHSmatrix}
\begin{bmatrix}
i a_0 & i a_1 + a_2 \\
{-i a_1} + a_2 & {-i a_0}
\end{bmatrix}
\end{equation}
associated with eigenvalues ${\lambda_\pm \equiv \mp \sqrt{{-a_0}^2 + a_1^2 + a_2^2}}$, respectively.
\end{proposition}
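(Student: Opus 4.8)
We seek zero-energy bound states of $\cancel{\mathfrak{D}}^\pm(0)$, so we set $\kappa = 0$ and $\Omega = 0$ in the equation $\cancel{\mathfrak{D}}^\pm(0)\psi = 0$. From \eqref{eq:dfm-edge-eff_2}, this reads
\begin{equation}
\pm (a\cdot\sigma)(-i\partial_X)\psi(X) + c\chi(X)\sigma_3\psi(X) = 0.
\end{equation}
The key structural observation is that $a\cdot\sigma = a_0\sigma_0 + a_1\sigma_1 + a_2\sigma_2$ is an invertible $2\times 2$ matrix precisely when $-a_0^2 + a_1^2 + a_2^2 \ne 0$ (its determinant is $a_0^2 - a_1^2 - a_2^2$), which holds under our hypothesis. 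Multiplying on the left by $(a\cdot\sigma)^{-1}$ converts the first-order ODE system into
\begin{equation}
\partial_X \psi(X) = \mp\, i\, c\,\chi(X)\, (a\cdot\sigma)^{-1}\sigma_3\, \psi(X).
\end{equation}

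\textbf{Diagonalizing the coefficient matrix.} The plan is to solve this by separating the $X$-dependence from the matrix structure. Since the matrix $M \equiv \mp i c (a\cdot\sigma)^{-1}\sigma_3$ is independent of $X$, if $\psi_\pm \in \C^2$ is an eigenvector of $M$ with eigenvalue $\mu$, then $\psi(X) = \exp\!\big(\mu\int_{[0,X]}\chi(X')\,\mathrm{d}X'\big)\psi_\pm$ solves the ODE. To match the claimed form \eqref{eq:psi-star}, I would compute $(a\cdot\sigma)^{-1}\sigma_3$ explicitly; using $(a\cdot\sigma)^{-1} = (a_0\sigma_0 - a_1\sigma_1 - a_2\sigma_2)/(a_0^2-a_1^2-a_2^2)$ and $\sigma_1\sigma_3 = -i\sigma_2$, $\sigma_2\sigma_3 = i\sigma_1$, a short calculation identifies $\mp i c(a\cdot\sigma)^{-1}\sigma_3$ with a scalar multiple of the matrix \eqref{eqn:RHSmatrix}. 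Concretely, I expect the eigenvalues of $M$ to be $\mu_\pm = \lambda_\pm \cdot |c|/\sqrt{-a_0^2+a_1^2+a_2^2}\cdot(\text{sign factor})$, where $\lambda_\pm = \mp\sqrt{-a_0^2+a_1^2+a_2^2}$ are the stated eigenvalues of \eqref{eqn:RHSmatrix}; the point is that the matrix governing the eigenvector $\psi_\pm$ is exactly \eqref{eqn:RHSmatrix}, and the scalar exponential rate is exactly the prefactor in \eqref{eq:psi-star}.

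\textbf{Selecting the decaying (normalizable) solution.} For the solution to lie in $L^2(\R;\C^2)$, the exponent $\mu\int_{[0,X]}\chi$ must tend to $-\infty$ as $|X|\to+\infty$. Since $\chi(X)\to\pm 1$ as $X\to\pm\infty$, we have $\int_{[0,X]}\chi(X')\,\mathrm{d}X' \sim |X|$ in both directions, so decay requires $\mathrm{Re}\,\mu < 0$. The eigenvalues $\mu_\pm$ come in a $\pm$ pair with opposite sign; exactly one is negative (namely the one producing the $-|c|/\sqrt{\cdots}$ rate in \eqref{eq:psi-star}), so precisely one eigenvector yields a normalizable state. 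This is why $\Omega=0$ is a \emph{simple} eigenvalue: the two-dimensional solution space of the first-order ODE is spanned by the two eigenvector solutions, and only one decays at both ends. I would confirm simplicity by noting that any zero-energy solution is a linear combination of the two exponential solutions, and the growing one must have zero coefficient.

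\textbf{Main obstacle.} The routine but error-prone part is verifying that the coefficient matrix $(a\cdot\sigma)^{-1}\sigma_3$, after clearing the scalar factor $|c|/\sqrt{-a_0^2+a_1^2+a_2^2}$, coincides with \eqref{eqn:RHSmatrix} with the correct identification of $\lambda_\pm$ and the correct sign of the surviving exponential rate for both $\cancel{\mathfrak{D}}^+(0)$ and $\cancel{\mathfrak{D}}^-(0)$. The $\pm$ overall sign in \eqref{eq:dfm-edge-eff_2} flips $\mu$, so one must check that in both cases the \emph{same} eigenvector data $\psi_\pm$ and the \emph{same} negative rate survive as claimed. I expect the $\mathcal{P}$-symmetry relation $\cancel{\mathfrak{D}}^-(\kappa) = -\cancel{\mathfrak{D}}^+(-\kappa)$ (inherited from ${\bm\gamma}^{{\bm D}^-}_\ell = -{\bm\gamma}^{{\bm D}^+}_\ell$) to make the two cases symmetric, so once $\cancel{\mathfrak{D}}^+(0)$ is settled, $\cancel{\mathfrak{D}}^-(0)$ follows with minimal extra work. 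The only genuine subtlety is tracking the branch of $|c|$ versus $c$ and ensuring the sign bookkeeping in the exponential rate is consistent with square-integrability at both spatial infinities.
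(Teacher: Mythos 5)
Your proposal is correct and follows essentially the same route as the paper: rewrite $\cancel{\mathfrak{D}}^\pm(0)\psi=0$ as a constant-coefficient first-order system by inverting $a\cdot\sigma$ (invertible since its determinant is $a_0^2-a_1^2-a_2^2\neq 0$ under the gap hypothesis), diagonalize the resulting matrix — which is exactly \eqref{eqn:RHSmatrix} up to the scalar $c/(-a_0^2+a_1^2+a_2^2)$ — and keep the unique exponentially decaying solution, with the sign of $c$ dictating whether $(\lambda_+,\psi_+)$ or $(\lambda_-,\psi_-)$ survives for $\cancel{\mathfrak{D}}^+$ versus $\cancel{\mathfrak{D}}^-$. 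Your explicit argument for simplicity (only one of the two exponential solutions is normalizable at both ends) is a point the paper leaves implicit.
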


\begin{proof}
First, consider $\cancel{\mathfrak{D}}^+(0)$. The ${\Omega = 0}$ eigenvalue equation is
\begin{equation}
\label{eq:0energy-state}
\! \begin{bmatrix}
a_0 & a_1 - i a_2 \\
a_1 + i a_2 & a_0
\end{bmatrix}
\! (-i \partial_X) \psi^\star(X) = {-
\! \begin{bmatrix}
c & 0 \\
0 & -c
\end{bmatrix}} \!
\chi(X) \psi^\star(X) .
\end{equation}
The determinant of the matrix on the left-hand side is ${-(-a_0^2 + a_1^2 + a_2^2)}$, which is strictly negative by the gap condition. Hence, the matrix is invertible and
\begin{equation}
\label{eq:direigmateq}
\partial_X \psi^\star(X) = \frac{c}{-a_0^2 + a_1^2 + a_2^2}
\begin{bmatrix}
i a_0 & i a_1 + a_2 \\
{-i a_1} + a_2 & {-i a_0}
\end{bmatrix} \!
\chi(X) \psi^\star(X) .
\end{equation}
The matrix on the right-hand side has eigenvalues ${\lambda_\pm \equiv \mp \sqrt{{-a_0}^2 + a_1^2 + a_2^2} \smallin \R}$. Let ${\psi_\pm \smallin \C^2}$ denote eigenvectors corresponding to the eigenvalues $\lambda_\pm$. It follows that any solution ${\psi^\star \smallin L^2(\R; \C^2)}$ of \eqref{eq:0energy-state} is of the form \eqref{eq:psi-star}. 

The analogous result for $\cancel{\mathfrak{D}}^-(0)$ holds by a similar argument: Here, $c$ is replaced by ${-c}$, and consequently the eigenpair ${(\lambda_-, \psi_-)}$ is chosen instead of ${(\lambda_+, \psi_+)}$.
\end{proof}

\subsubsection{Eigenvalue curves of ${\kappa \mapsto \cancel{\mathfrak{D}}^\pm(\kappa)}$}

The following results provide information about the eigenvalue curve ${\kappa \mapsto \Omega_\pm(\kappa)}$ of $\cancel{\mathfrak{D}}^\pm(\kappa)$ passing through ${\Omega_\pm(0) = 0}$; see Proposition \ref{prop:zero-energy} and Figure \ref{fig:effDirac-spec}.

\begin{proposition}
\label{prop:dfm-eig-local}
{\rm (On the eigenvalue curve through zero energy.)}
Consider $\cancel{\mathfrak{D}}^\pm(\kappa)$ \eqref{eq:dfm-edge-eff_2}. Let 
${\kappa \mapsto \Omega_\pm(\kappa)}$ denote the eigenvalue curve of $\cancel{\mathfrak{D}}^\pm(\kappa)$, which  passes through ${\Omega_\pm(0) = 0}$.
\begin{enumerate}
\item \label{itm:dfm-eig-local-1} 
Let ${(\Omega, \psi) = (0, \psi^\star_\pm)}$ be the normalized zero-energy bound state eigenpair of $\cancel{\mathfrak{D}}^\pm(0)$ (Proposition \ref{prop:zero-energy}).
For $|\kappa|$ small, we have the expansion
\begin{equation}
\label{eq:dfm-omega}
\Omega_\pm(\kappa) = \kappa \Omega^{(1)}_\pm + \kappa^2 \Omega^{(2)}_\pm + O(\kappa^3) \ \ \text{as} \ \ |\kappa| \to 0 ,
\end{equation}
where
\begin{align}
\label{eq:dfm-omega-1}
\Omega^{(1)}_\pm & \equiv \langle \psi^\star_\pm, \, (b \cdot \sigma) \psi^\star_\pm \rangle , \\
\label{eq:dfm-omega-2}
\Omega^{(2)}_\pm & \equiv - \langle (\Omega^{(1)}_\pm - (b \cdot \sigma)) \psi^\star_\pm, \, \cancel{\mathfrak{D}}^{\pm}(0)^{-1} (\Omega^{(1)}_\pm - (b \cdot \sigma)) \psi^\star_\pm \rangle .
\end{align}
\item \label{itm:dfm-eig-local-2} If the domain wall function $\chi(X)$ is odd, i.e., ${\chi(-X) = -\chi(X)}$, then ${\kappa \mapsto \Omega_\pm(\kappa)}$ satisfies
\begin{equation}
\label{eq:dfm-omega-odd}
\Omega_\pm(-\kappa) = -\Omega_\pm(\kappa) .
\end{equation}
Hence, the expansion \eqref{eq:dfm-omega} contains only odd powers of $\kappa$.
\end{enumerate}
\end{proposition}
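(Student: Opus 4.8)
The plan is to treat Part \ref{itm:dfm-eig-local-1} as an instance of analytic (Rayleigh--Schr\"odinger) perturbation theory for a simple, isolated eigenvalue, and Part \ref{itm:dfm-eig-local-2} as a consequence of a reflection symmetry relating $\cancel{\mathfrak{D}}^\pm(\kappa)$ to $\cancel{\mathfrak{D}}^\pm(-\kappa)$.

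For Part \ref{itm:dfm-eig-local-1}, I would first observe that $\cancel{\mathfrak{D}}^\pm(\kappa) = \cancel{\mathfrak{D}}^\pm(0) \pm (b \cdot \sigma)\kappa$, so the family is an affine, hence analytic, perturbation of $\cancel{\mathfrak{D}}^\pm(0)$ by the bounded self-adjoint matrix $\pm(b \cdot \sigma)$. By Proposition \ref{prop:zero-energy}, $\Omega = 0$ is a simple eigenvalue of $\cancel{\mathfrak{D}}^\pm(0)$ with normalized eigenstate $\psi^\star_\pm$, and by Proposition \ref{prop:spec-dfm-gap} it is isolated from the essential spectrum. Kato's theory then furnishes a real-analytic eigenvalue branch $\kappa \mapsto \Omega_\pm(\kappa)$ with $\Omega_\pm(0) = 0$ together with an analytic eigenvector branch. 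I would extract $\Omega^{(1)}_\pm$ by projecting the order-$\kappa$ eigenvalue relation onto $\psi^\star_\pm$, which gives the expectation value of $\partial_\kappa \cancel{\mathfrak{D}}^\pm = \pm (b \cdot \sigma)$ and reduces to \eqref{eq:dfm-omega-1}. The first eigenvector correction is obtained by solving the order-$\kappa$ equation on $\{\psi^\star_\pm\}^\perp$ via the reduced resolvent $\cancel{\mathfrak{D}}^\pm(0)^{-1}$, which is bounded there precisely because the gap of Proposition \ref{prop:spec-dfm-gap} separates $0$ from the rest of the spectrum. Projecting the order-$\kappa^2$ equation onto $\psi^\star_\pm$, and using that the reduced resolvent annihilates the kernel (so the term proportional to $\Omega^{(1)}_\pm$ drops out of the inner product), then yields the quadratic-form expression \eqref{eq:dfm-omega-2}.

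For Part \ref{itm:dfm-eig-local-2}, the crux is the reflection $\mathcal{I}$ on $L^2(\R;\C^2)$ defined by $(\mathcal{I}f)(X) = f(-X)$, under which $\mathcal{I} P_X \mathcal{I}^{-1} = -P_X$, while oddness of $\chi$ gives $\mathcal{I}\chi(X)\mathcal{I}^{-1} = \chi(-X) = -\chi(X)$ as a multiplication operator. A direct computation then produces the intertwining identity $\mathcal{I}\,\cancel{\mathfrak{D}}^\pm(\kappa)\,\mathcal{I}^{-1} = -\cancel{\mathfrak{D}}^\pm(-\kappa)$, so that $\cancel{\mathfrak{D}}^\pm(\kappa)$ and $-\cancel{\mathfrak{D}}^\pm(-\kappa)$ are unitarily equivalent and hence $\mathrm{spec}(\cancel{\mathfrak{D}}^\pm(\kappa)) = -\,\mathrm{spec}(\cancel{\mathfrak{D}}^\pm(-\kappa))$. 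I would then invoke the simplicity of the gap eigenvalue and the uniqueness of the analytic branch through $\Omega_\pm(0) = 0$ to identify the reflected eigenvalue $-\Omega_\pm(\kappa)$ with $\Omega_\pm(-\kappa)$, giving $\Omega_\pm(-\kappa) = -\Omega_\pm(\kappa)$; oddness of the Taylor expansion \eqref{eq:dfm-omega} is then immediate.

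The two perturbation-theory projections are routine; the points requiring care are (i) verifying that $\cancel{\mathfrak{D}}^\pm(0)^{-1}$ is bounded on $\{\psi^\star_\pm\}^\perp$, which rests on the spectral gap of Proposition \ref{prop:spec-dfm-gap} and the simplicity of Proposition \ref{prop:zero-energy}, and (ii) keeping the $\pm$ sign bookkeeping consistent with the conventions written in \eqref{eq:dfm-omega-1}--\eqref{eq:dfm-omega-2}. I do not anticipate a genuine obstacle, since both assertions reduce to standard self-adjoint perturbation theory together with an elementary reflection symmetry; the only mild subtlety is confirming that the branch selected by the perturbation theory is the same gap-traversing curve observed numerically in Figure \ref{fig:effDirac-spec}.
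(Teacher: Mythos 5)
Your proposal is correct and follows essentially the same route as the paper: Part \ref{itm:dfm-eig-local-1} via Rayleigh--Schr\"odinger expansion of the affine family $\cancel{\mathfrak{D}}^\pm(\kappa)=\cancel{\mathfrak{D}}^\pm(0)\pm(b\cdot\sigma)\kappa$ with solvability conditions at orders $\kappa$ and $\kappa^2$, and Part \ref{itm:dfm-eig-local-2} via the intertwining identity $\mathcal{P}\circ\cancel{\mathfrak{D}}^\pm(\kappa)=-\cancel{\mathfrak{D}}^\pm(-\kappa)\circ\mathcal{P}$ (your $\mathcal{I}$ is the paper's parity $\mathcal{P}$, and this is exactly Proposition \ref{prop:dfm-odd-sym}) combined with simplicity of the zero eigenvalue. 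The sign bookkeeping you flag is indeed the only point of care, since the paper carries out only the $+$ case explicitly.
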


\noindent Part \ref{itm:dfm-eig-local-1} of Proposition \ref{prop:dfm-eig-local}, is proven in Supplementary Material \ref{supp:dfm-eig-local-1}. Part \ref{itm:dfm-eig-local-2}, proven in Supplementary Material \ref{supp:dfm-eig-local-2}, is based on a global symmetry: if $\Omega$ is an eigenvalue of $\cancel{\mathfrak{D}}^+(\kappa)$, then $-\Omega$ is an eigenvalue of $\cancel{\mathfrak{D}}^+(-\kappa)$; see Proposition \ref{prop:dfm-odd-sym}. 

We conclude the section by identifying a high-symmetry subclass of Dirac operators.

\begin{proposition}
\label{prop:dfm-eig-lin}
{\rm (Dirac operators with linear eigenvalue curves.)}
If the parameters $a$, $b$ of $\cancel{\mathfrak{D}}^\pm(\kappa)$ in \eqref{eq:dfm-edge-eff_2} satisfy
\begin{equation}
\label{eq:dfm-eig-lin}
a_1 b_1 - a_2 b_2 = 0 \quad \text{and} \quad a_0 = 0 ,
\end{equation}
then the eigenpair ${(\Omega_\pm(\kappa), \psi_\pm(\kappa))}$, for which ${(\Omega_\pm(0), \psi^\star_\pm(0)) = (0, \psi^\star_\pm)}$, is given by
\begin{equation}
(\Omega_\pm(\kappa), \psi_\pm(\kappa)) = (\kappa \Omega^{(1)}_\pm, \psi_\pm^\star) ,
\end{equation}
with $\Omega^{(1)}_\pm$ as in \eqref{eq:dfm-omega-1}. Note, in this case, that $\psi_\pm(\kappa)$ is independent of $\kappa$ and ${\kappa \mapsto \Omega_\pm(\kappa)}$ is linear.
\end{proposition}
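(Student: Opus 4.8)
The plan is to verify directly that the zero-energy bound state $\psi^\star_\pm$ of $\cancel{\mathfrak{D}}^\pm(0)$ constructed in Proposition \ref{prop:zero-energy} is, under the hypotheses \eqref{eq:dfm-eig-lin}, an \emph{exact} eigenvector of $\cancel{\mathfrak{D}}^\pm(\kappa)$ for every $\kappa$, with $\kappa$-independent eigenvector and eigenvalue linear in $\kappa$. The structural observation driving this is that, by \eqref{eq:dfm-edge-eff_2}, $\cancel{\mathfrak{D}}^\pm(\kappa)$ depends on $\kappa$ only through the additive, $X$-independent term $\pm\kappa\,(b\cdot\sigma)$, so that $\cancel{\mathfrak{D}}^\pm(\kappa) = \cancel{\mathfrak{D}}^\pm(0) \pm \kappa\,(b\cdot\sigma)$ with $b\cdot\sigma = b_0\sigma_0 + b_1\sigma_1 + b_2\sigma_2$ a constant matrix; see \eqref{eqn:sigma-dot}. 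Applying this to $\psi^\star_\pm$ and using $\cancel{\mathfrak{D}}^\pm(0)\psi^\star_\pm = 0$ gives $\cancel{\mathfrak{D}}^\pm(\kappa)\psi^\star_\pm = \pm\kappa\,(b\cdot\sigma)\psi^\star_\pm$, so the entire claim reduces to showing that $(b\cdot\sigma)\psi^\star_\pm$ is a scalar multiple of $\psi^\star_\pm$.

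Next I reduce this to a $2\times 2$ linear-algebra statement. By \eqref{eq:psi-star}, $\psi^\star_\pm(X) = g_\pm(X)\,\psi_\pm$, where $g_\pm$ is a scalar exponential and $\psi_\pm \in \C^2$ is a constant eigenvector of the matrix $M$ in \eqref{eqn:RHSmatrix}; since $b\cdot\sigma$ is $X$-independent, $(b\cdot\sigma)\psi^\star_\pm = g_\pm(X)(b\cdot\sigma)\psi_\pm$, and the desired proportionality holds if and only if $\psi_\pm$ is also an eigenvector of $b\cdot\sigma$. Thus the crux is to prove that, under \eqref{eq:dfm-eig-lin}, $\psi_\pm$ is a \emph{common} eigenvector of $M$ and $b\cdot\sigma$. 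Imposing $a_0 = 0$ collapses $M$ to a nonzero traceless Hermitian matrix lying in $\mathrm{span}\{\sigma_1,\sigma_2\}$, while the traceless part of $b\cdot\sigma$ (the $b_0\sigma_0$ summand is harmless, as every vector is its eigenvector) also lies in $\mathrm{span}\{\sigma_1,\sigma_2\}$. Two nonzero matrices of this Pauli form share an eigenvector exactly when their associated Pauli-vectors are collinear, and carrying out this collinearity computation, with $a_0 = 0$, is what produces the algebraic relation \eqref{eq:dfm-eig-lin}. Remark \ref{rmk:gamma} guarantees the relevant coefficients are not all zero, so $M$ has simple spectrum and $\psi_\pm$ is well defined up to scale.

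Once $\psi_\pm$ is a common eigenvector, let $\mu_\pm$ be its $b\cdot\sigma$-eigenvalue; then $\cancel{\mathfrak{D}}^\pm(\kappa)\psi^\star_\pm = \pm\kappa\,\mu_\pm\,\psi^\star_\pm$, exhibiting an exact eigenpair for all $\kappa$ with $\kappa$-independent eigenvector and linear eigenvalue curve. Pairing against the normalized $\psi^\star_\pm$ identifies $\mu_\pm$ with $\langle\psi^\star_\pm,(b\cdot\sigma)\psi^\star_\pm\rangle = \Omega^{(1)}_\pm$ from \eqref{eq:dfm-omega-1}, yielding $\Omega_\pm(\kappa) = \kappa\,\Omega^{(1)}_\pm$; as a consistency check, the residual $(\Omega^{(1)}_\pm - (b\cdot\sigma))\psi^\star_\pm$ then vanishes identically, forcing $\Omega^{(2)}_\pm = 0$ in \eqref{eq:dfm-omega-2} in agreement with exact linearity. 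I expect the only genuinely delicate point to be the middle step: correctly reading off the Pauli-vector of $M$, which is built from $(a\cdot\sigma)^{-1}\sigma_3$ rather than from $a\cdot\sigma$ itself, and tracking the sign conventions of \eqref{eqn:RHSmatrix} and of the overall factor $\pm$ in $\cancel{\mathfrak{D}}^\pm$, so that the collinearity condition matches \eqref{eq:dfm-eig-lin} precisely. Everything else is direct substitution.
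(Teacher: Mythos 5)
Your plan follows the same route as the paper's proof: both arguments use $\cancel{\mathfrak{D}}^\pm(\kappa)=\cancel{\mathfrak{D}}^\pm(0)\pm\kappa\,(b\cdot\sigma)$ to reduce the claim to showing that the constant vector $\psi_\pm$ underlying the zero mode of Proposition \ref{prop:zero-energy} is a common eigenvector of the matrix \eqref{eqn:RHSmatrix} and of $b\cdot\sigma$; the paper phrases the criterion as commutativity of these two matrices, you phrase it as collinearity of their traceless Pauli components, and these are equivalent since \eqref{eqn:RHSmatrix} has simple spectrum. That reduction, and your identification of the slope with $\Omega^{(1)}_\pm$ from \eqref{eq:dfm-omega-1}, are all fine.

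The problem is the one computation you defer, which you assert will ``match \eqref{eq:dfm-eig-lin} precisely'': it does not. With $a_0=0$ the matrix \eqref{eqn:RHSmatrix} equals $a_2\sigma_1-a_1\sigma_2$, while the traceless part of $b\cdot\sigma$ is $b_1\sigma_1+b_2\sigma_2$; collinearity of $(a_2,-a_1,0)$ with $(b_1,b_2,0)$ reads $a_2b_2+a_1b_1=0$, \emph{not} $a_1b_1-a_2b_2=0$. Equivalently, the full commutator is $2i(a_1b_1+a_2b_2)\sigma_3+2a_0b_2\sigma_1-2a_0b_1\sigma_2$. A concrete check: $a_0=0$, $a_1=a_2=b_1=b_2=1$ satisfies \eqref{eq:dfm-eig-lin}, yet $\sigma_1-\sigma_2$ and $\sigma_1+\sigma_2$ do not commute and share no eigenvector, so the zero mode is not an exact eigenvector of $\cancel{\mathfrak{D}}^\pm(\kappa)$ for $\kappa\neq 0$. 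The correct hypothesis is therefore $a_1b_1+a_2b_2=0$ together with $a_0=0$. The same sign slip occurs in the paper's own proof (its commutator computation is reported as yielding $a_1b_1-a_2b_2=0$), so you should carry out the collinearity computation honestly and report the condition you actually obtain rather than forcing agreement with the printed statement; note that the honeycomb specialization of Remark \ref{rmk:honey-zigzag}, where $a_2=b_1=0$, satisfies both versions of the condition and so cannot distinguish them.
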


\noindent The proof of Proposition \ref{prop:dfm-eig-lin} is given in Supplementary Material \ref{supp:dfm-eig-lin}. The subclass of Dirac operators in Proposition \ref{prop:dfm-eig-lin} includes the well-known cases arising for edge states in 2D honeycomb structures; see, e.g., \cite[Proposition 4.4]{drouotweinstein2020}.

\smallskip

\section{Computational experiments}
\label{sec:numerics}

\setcounter{equation}{0}
\setcounter{figure}{0}

In this section, we report on numerical simulations of the spectral problems associated with our 2D Hamiltonians and compare with predictions from our asymptotic analysis in Sections \ref{sec:multi} and \ref{sec:spec}. Specifically, we compute the band structures of bulk Hamiltonians ${H_{\rm bulk} = H_V}$, defined in \eqref{eq:sql-bulk-op_2}, with $V$ a simple square lattice potential, and ${H_{\rm bulk} = H_{V \circ \, T^{-1}}}$, defined in \eqref{eq:dfm-bulk-op_2}, for the same $V$ under a shear-like deformation $T$. We additionally compute the band structures of corresponding perturbed bulk Hamiltonians $\smash{H^{\pm, \, \delta}_{\rm bulk}}$, as defined in \eqref{eq:def_sql-asym} (square lattice case) and \eqref{eq:def_dfm-asym_a} (deformed square lattice case). Finally, we compute edge state diagrams for the above bulk Hamiltonians, which consist only of continuous spectrum determined by their band structures, and for our edge Hamiltonians, defined in \eqref{eq:def_sql-edge} (square lattice case) and \eqref{eq:def_dfm-edge_a} (deformed square lattice case), which feature edge state eigenvalue curves.

\subsection{Numerical methods}

\subsubsection{Computing band structures}

The band structure of a periodic, elliptic operator is determined by the collection of solutions to its family of Floquet-Bloch eigenvalue problems, as defined in \eqref{eq:Lk-evp}; see Section \ref{sec:fb-thy-bulk} for a review. To compute our band structures, we use a first-order finite difference scheme based on a discretization of the unit square ${\R^2/\Z^2 \simeq}$ ${[-1/2, \, 1/2] \times [-1/2, \, 1/2]}$ by a grid of uniform spacing ${h = 1/20}$. Specifically, for ${H_{\rm bulk} = H_V}$ and $H_{V \circ \, T^{-1}}$:
\begin{itemize}
\item We discretize the Laplacian ${-\Delta}_{\bm x}$ using centered differences.

To impose pseudoperiodic boundary conditions at $x_1$, ${x_2 = \pm 1/2}$, we equivalently consider the operator ${- e^{-i{\bm k} \cdot {\bm x}} \Delta_{\bm x} \,  e^{i{\bm k} \cdot {\bm x}} = {-\Delta}_{\bm x} - 2 i {\bm k} \cdot \nabla + |{\bm k}|^2}$, where ${{\bm k} = [k_1, k_2]^\mathsf{T}}$ varies over the Brillouin zone ${\mathcal{B} = [-\pi, \, \pi] \times}$ ${[-\pi, \, \pi]}$, and impose periodic boundary conditions. 

\item To construct the potential:

For ${H_{\rm bulk} = H_V}$, defined in \eqref{eq:sql-bulk-op_2}, we use a square lattice potential ${{\bm x} \mapsto V({\bm x})}$ constructed as follows: First, we define a smooth, localized ``atomic'' potential, given by a standard compactly supported bump function centered at the origin with amplitude ${-150}$ and width $1/4$. Taking its periodic extension (i.e., summing over integer translates by square lattice basis vectors), we obtain a square lattice potential.

For $H_{\rm bulk} = H_{V \circ \, T^{-1}}$, defined in \eqref{eq:dfm-bulk-op_2}, we use the same square lattice potential as above, and consider {\it tilt} deformations:
\begin{equation}
\label{eqn:numTmatrix}
T(\phi) =
\begin{bmatrix}
\cos(\phi) & -\sin(\phi) \\
-\sin(\phi) & \cos(\phi)
\end{bmatrix} \! .
\end{equation}
By changing coordinates, we instead work with $T_* H_V$, defined in \eqref{eq:dfm-bulk-op_3}, where
\begin{equation}
(T(\phi)^\mathsf{T} T(\phi))^{-1} = \sec^2(2 \phi) I + \sec(2 \phi) \tan(2 \phi) \sigma_1 .
\end{equation}
We also record ${\det(T(\phi)^{-1}) = \sec(2 \phi)}$.
\end{itemize}

\noindent The Floquet-Bloch eigenvalues and eigenstates themselves are obtained using a solver for sparse matrices.

\begin{figure}[!t]
    \centering
    \begin{subfigure}{0.25\textwidth}
        \centering
        \subcaption{}
        \includegraphics[scale = 0.2]{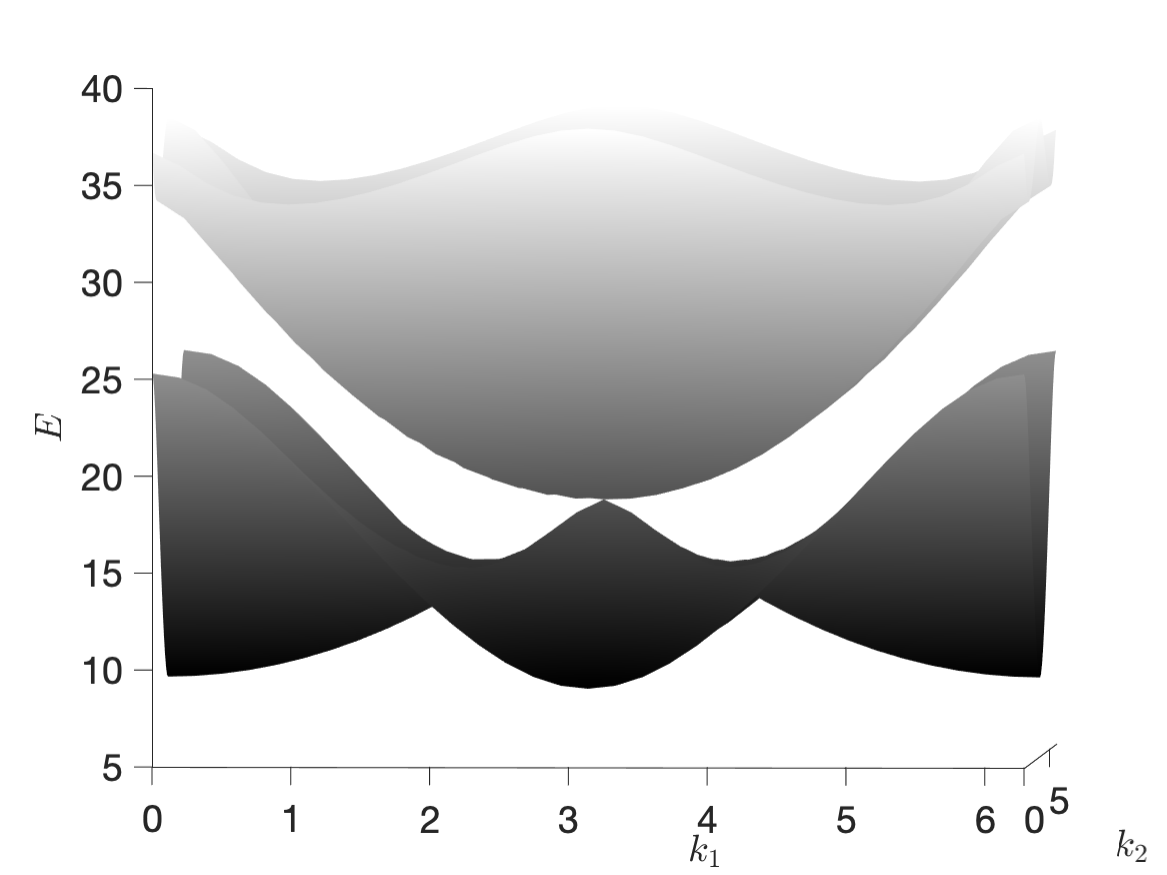} \\
        \includegraphics[scale = 0.2]{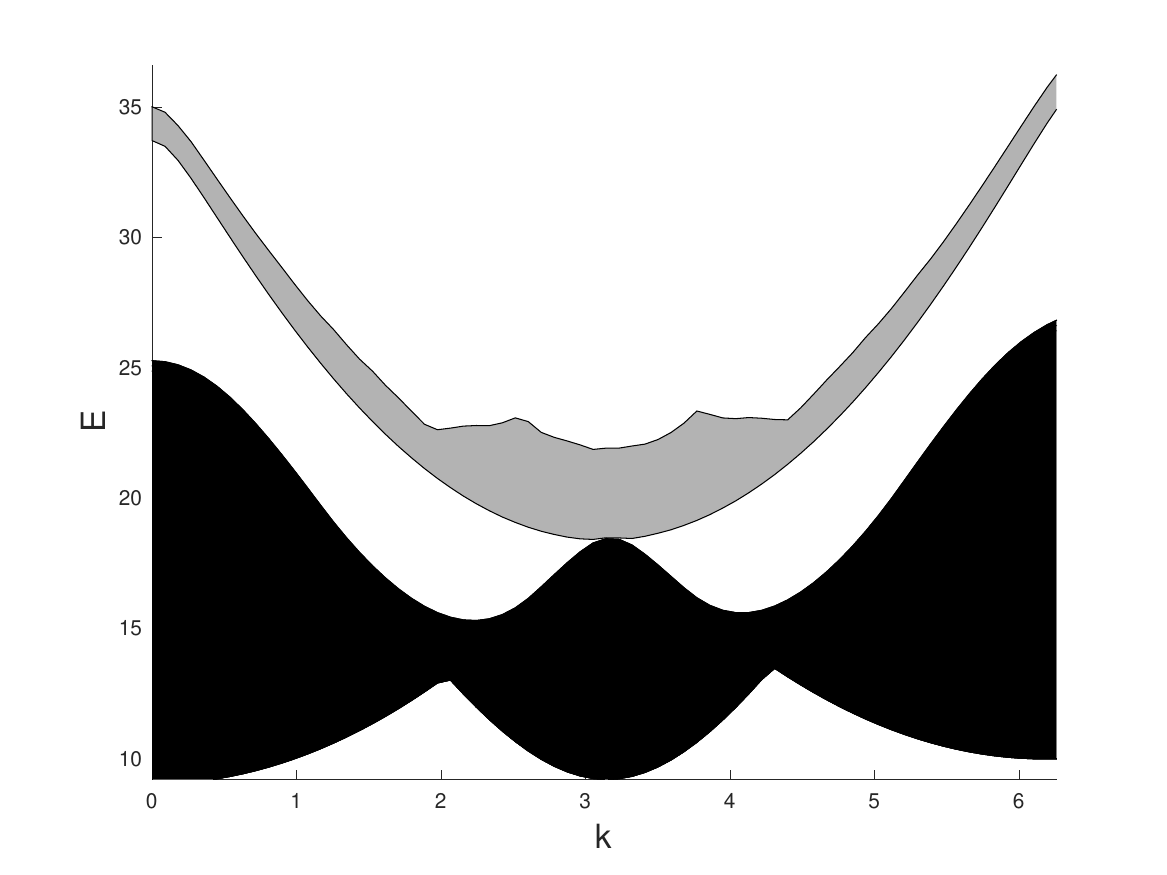}
    \end{subfigure}
    \begin{subfigure}{0.25\textwidth}
        \centering
        \subcaption{}
        \includegraphics[scale = 0.2]{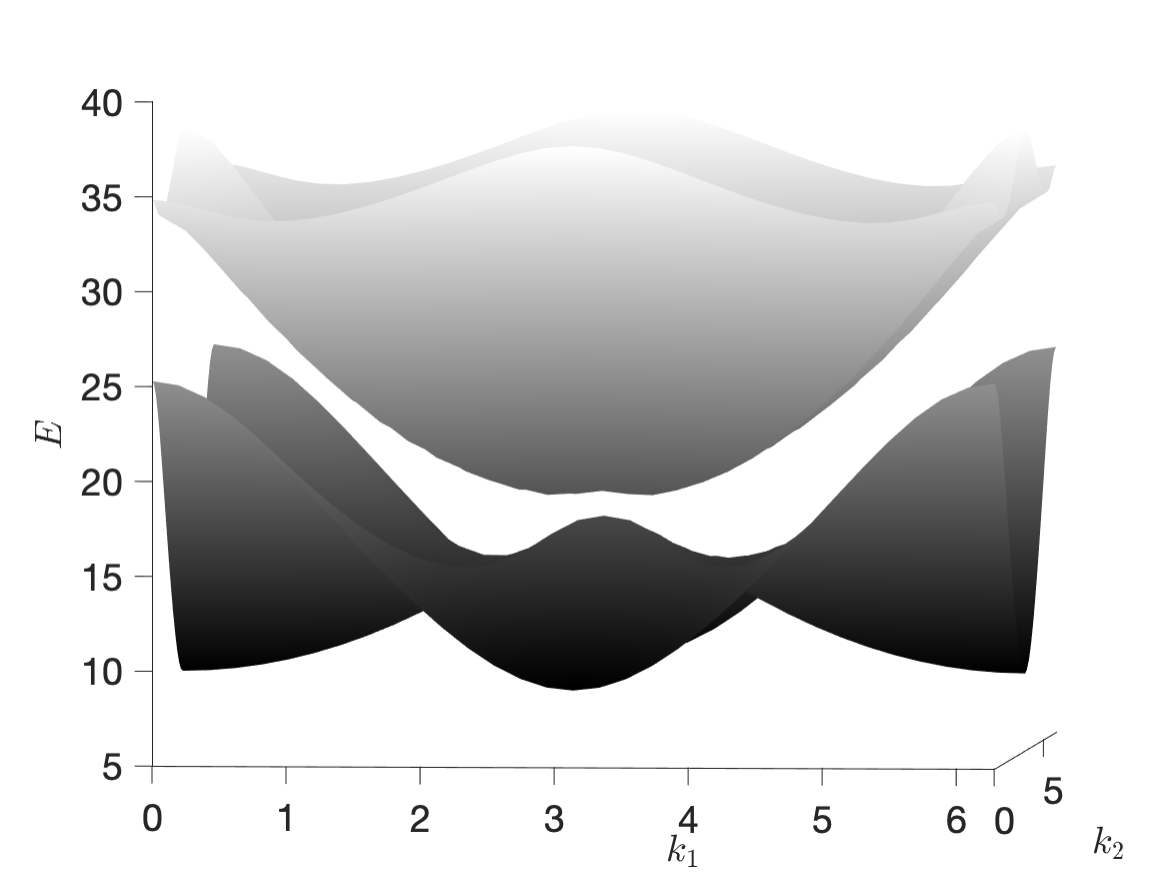} \\
        \includegraphics[scale = 0.2]{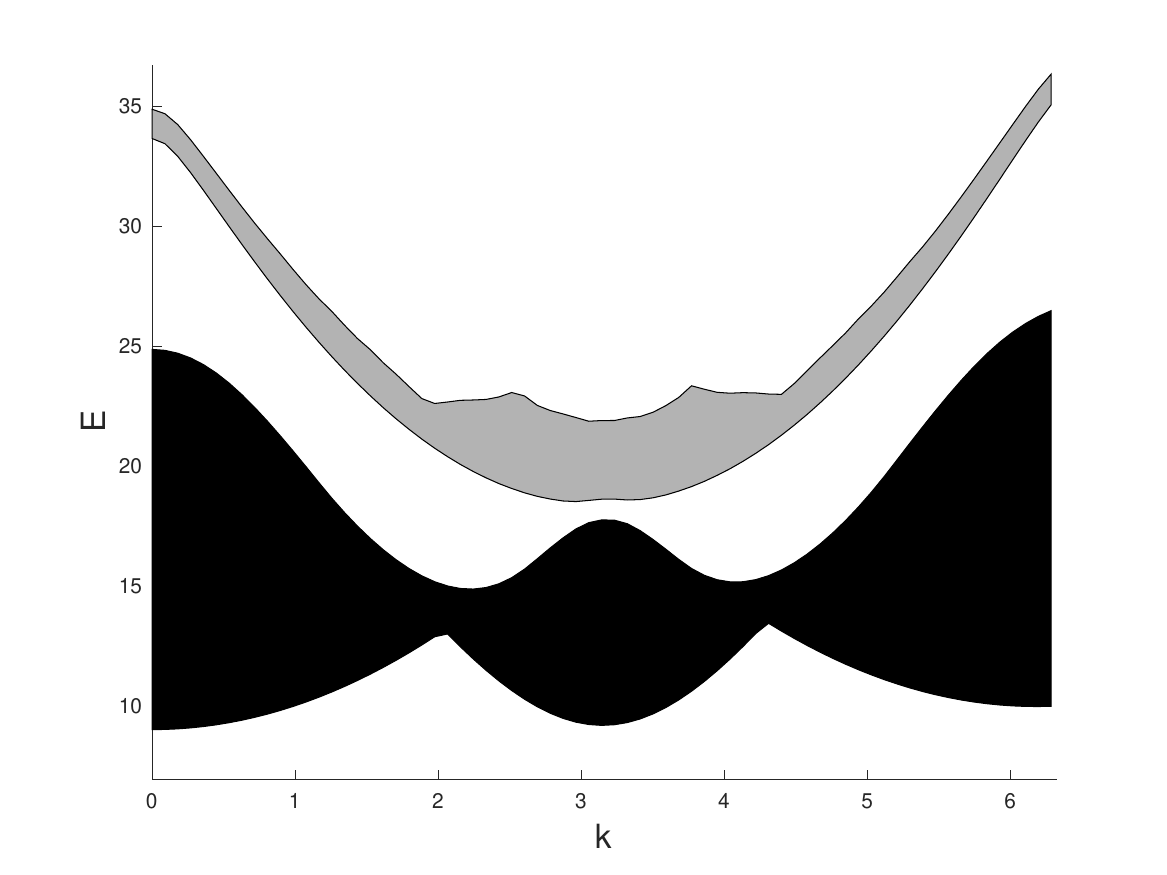}
    \end{subfigure}
    \begin{subfigure}{0.35\textwidth}
        \centering
        \subcaption{}
        \raisebox{1cm}{\includegraphics[scale = 0.28]{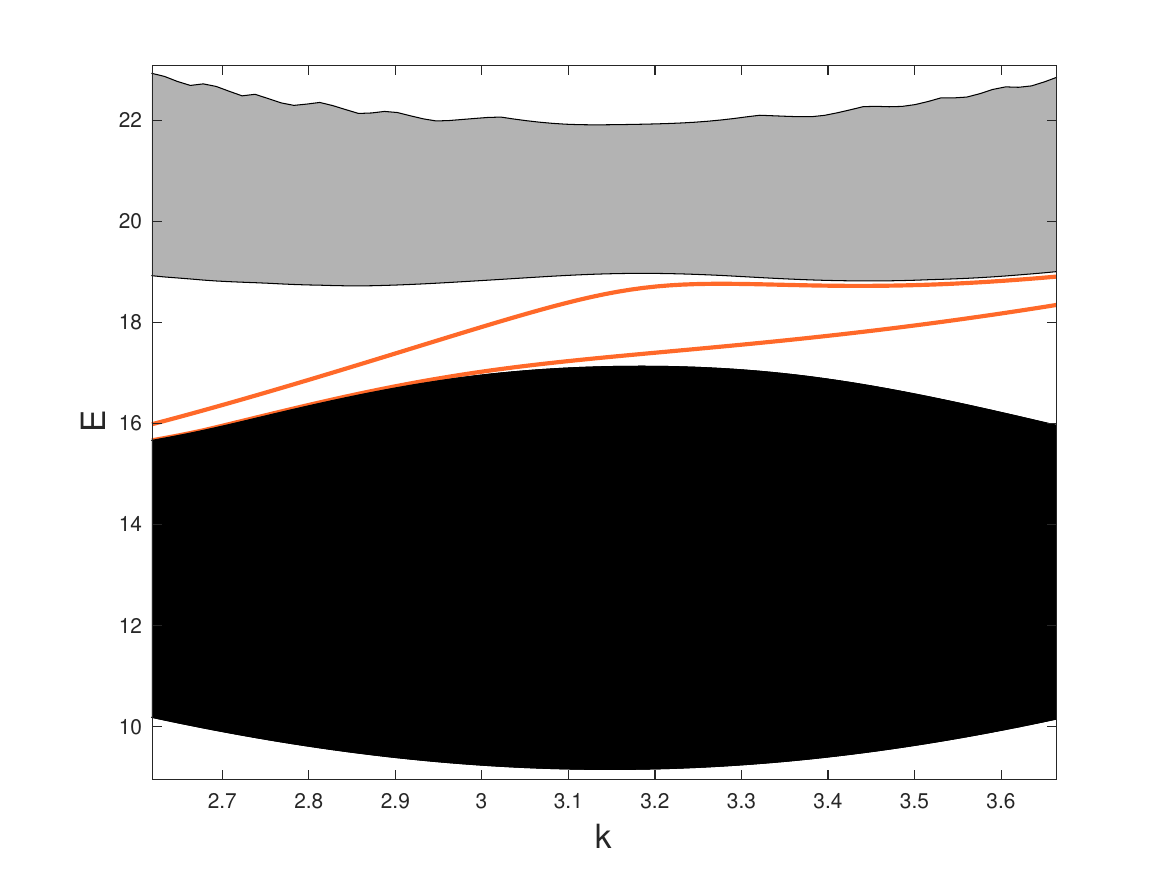}}
    \end{subfigure}
    \caption{(a) Bulk dispersion surfaces of the undeformed ($\phi=0$) square lattice Hamiltonian \eqref{eq:sql-bulk-op_2} (top) and ungapped bulk spectrum (with quadratic degeneracy) in a cylinder (bottom).  (b) Bulk band-gapped spectrum (no defect edge) of $H^{\delta,+}$ from \eqref{eq:def_sql-asym} with $\delta = .1$ and bulk gapped spectrum in a cylinder. (c) Blowup near $k=\pi$ showing two edge state curves traversing the bulk band gap for Hamiltonian $H^\delta$ from \eqref{eq:def_sql-edge} with $\delta = .1$, as anticipated by Theorem \ref{thm:multi-sql}.
    }
    \label{fig:newksweep_untilted}
\end{figure}

We also compute the band structures of bulk Hamiltonians perturbed by $\mathcal{C}$-breaking terms. In particular, we consider $\smash{H^{\delta, \, +}_{V, \, A}}$, defined in \eqref{eq:sql-bulk-op-breakC}, and $\smash{H^{\delta, \, +}_{V \circ \, T^{-1}, \, A \, \circ \, T^{-1}}}$, defined in \eqref{eq:dfm-bulk-op-breakC}. By again changing coordinates, we instead work with $T_* H^{\delta, \, \pm}_{V, \, A}$, defined in \eqref{eq:dfm-bulk-op-breakC_2}.
Here:
\begin{itemize}
\item We use the same square lattice potential $V$ and deformation $T$ as described above.

\item We discretize ${\nabla_{\bm x} \cdot A({\bm x}) \sigma_2 \nabla_{\bm x}}$ using centered differences, but work with ${e^{-i{\bm k} \cdot {\bm x}} \nabla_{\bm x} \cdot A({\bm x}) \sigma_2 \nabla_{\bm x} \,  e^{i{\bm k} \cdot {\bm x}} =}$ $({\nabla_{\bm x} + i{\bm k}) \cdot A({\bm x}) \sigma_2 (\nabla_{\bm x} + i{\bm k})}$ with periodic boundary conditions, as above.

Our choice of ${{\bm x} \mapsto A({\bm x})}$ is given by the finite Fourier series
\begin{equation}
A(x_1, x_2) = 5 \bigl( \cos(2 \pi x_1) + \cos(2 \pi x_2) + \cos (2 \pi (x_1 + x_2) + \cos (2 \pi (x_1 - x_2) \bigr) .
\end{equation}
\end{itemize}

\subsubsection{Computing edge state diagrams}
\label{sec:spectra}

An edge state diagram for a 2D operator with 1D translation invariance is the collection of its $L^2_k(\R^2/\Z\boldsymbol{\mathfrak{v}}_1)$ spectra, as defined in the spectral problem \eqref{eq:kpar-evp}, as a function of ${k \smallin B_1 = [-\pi, \, \pi]}$; see Section \ref{sec:fb-thy-edge} for details. We take the vertical edge ${\boldsymbol{\mathfrak{v}}_1 = [0, 1]^\mathsf{T}}$. To compute edge state diagrams, we again make use of a finite difference scheme: this time, using a discretization of the strip ${[-30, \, 30] \times [0, \, 1]}$ with a grid of uniform spacing ${h = 1/20}$.

\begin{itemize}
\item Again, we discretize our differential operators using centered differences.

To implement, we work with the operator ${-\Delta_{\bm x} - 2i k \partial_{x_2} + k^2}$, where ${k = k_2}$ varies over the one-dimensional Brillouin zone ${\mathcal{B}_1 = [-\pi, \, \pi]}$, and consider periodic boundary conditions in $x_2$.

To simulate square-integrability in $x_1$, we impose hard truncation (i.e., Dirichlet boundary conditions) at ${x_1 = \pm 30}$. Doing so is known to introduce spurious eigenstates, localized at the computational (artificial) boundaries.

\item We use the same square lattice potential $V$, deformation $T$, and magnetic term $A$, as above.

\item We use the domain wall function ${\chi(X) = \tanh(10 X)}$ throughout to ensure that the transition region is well-contained inside our computational domain for a range of small $\delta$ values.
\end{itemize}

\begin{figure}[!t]
    \centering
    \begin{subfigure}{0.25\textwidth}
        \centering
        \subcaption{}
        \includegraphics[scale = 0.2]{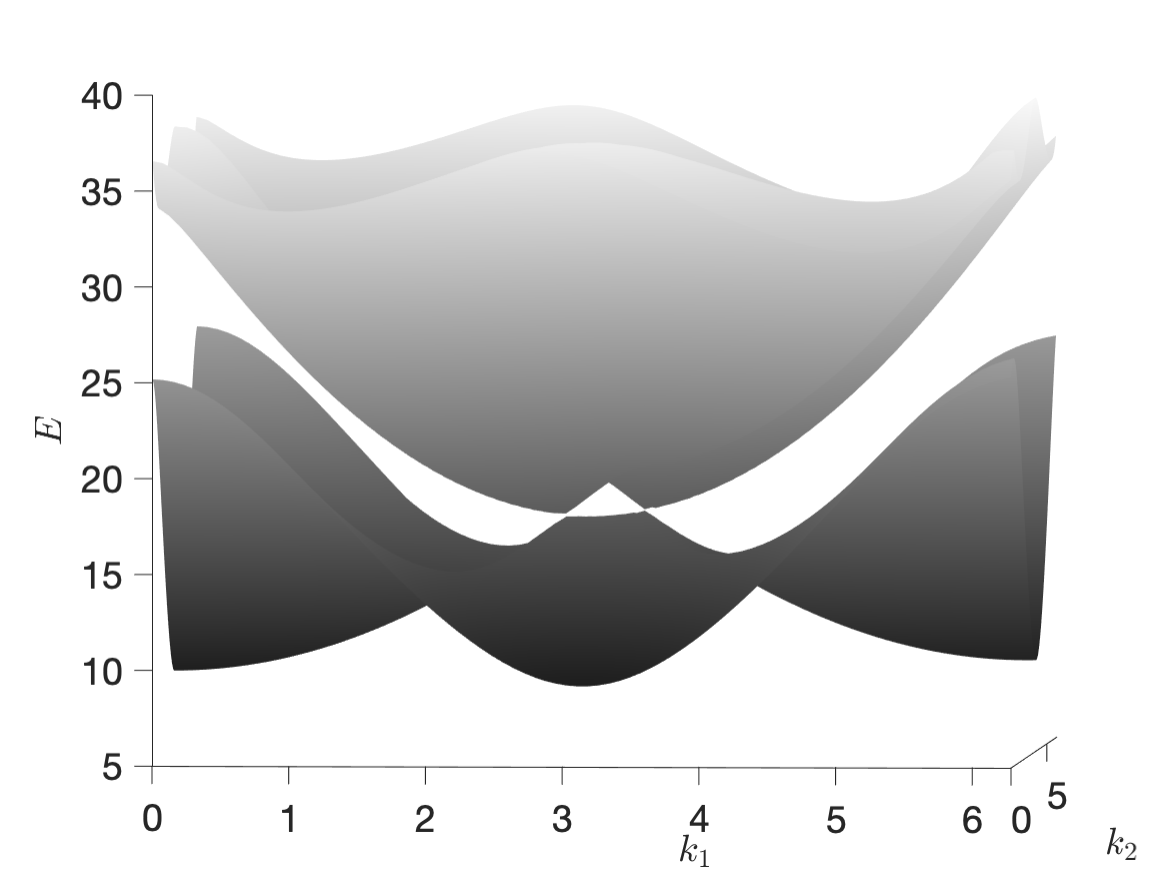} \\
        \includegraphics[scale = 0.2]{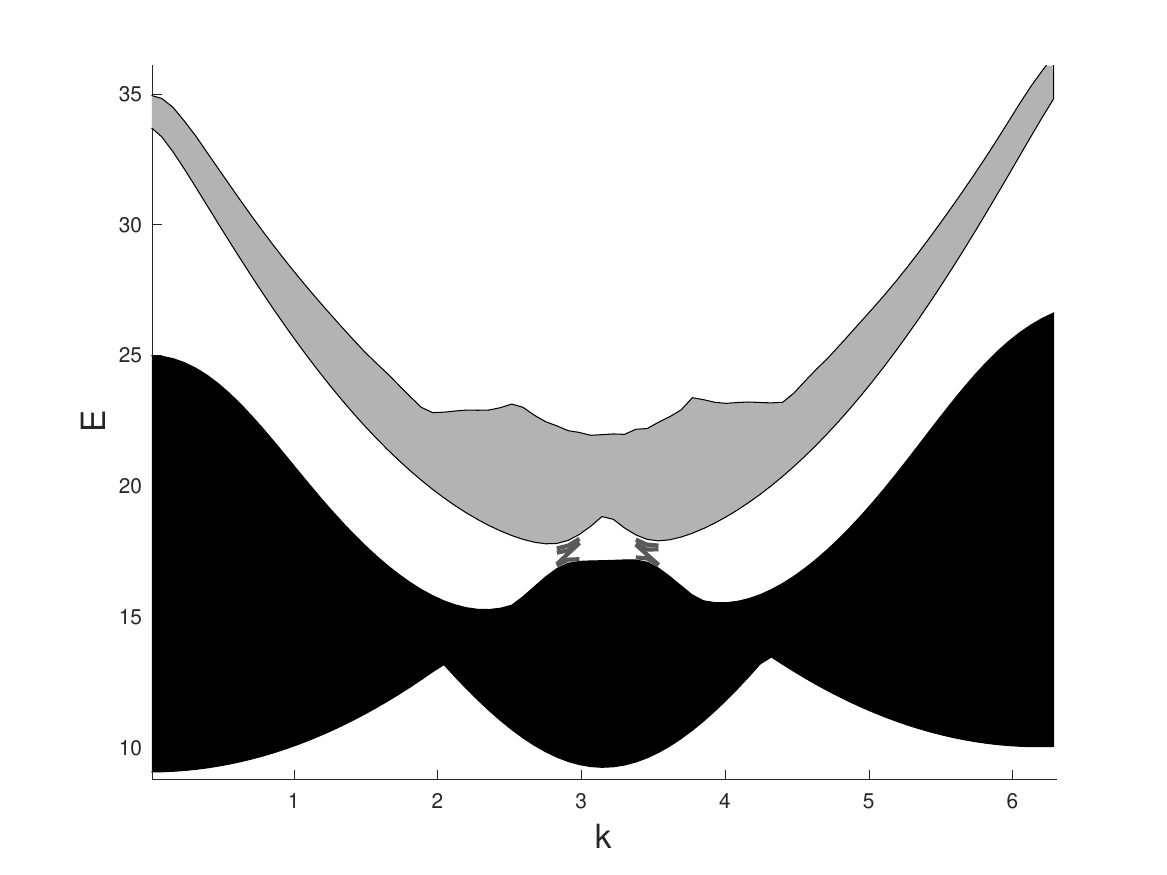}
    \end{subfigure}
    \begin{subfigure}{0.25\textwidth}
        \centering
        \subcaption{}
        \includegraphics[scale = 0.2]{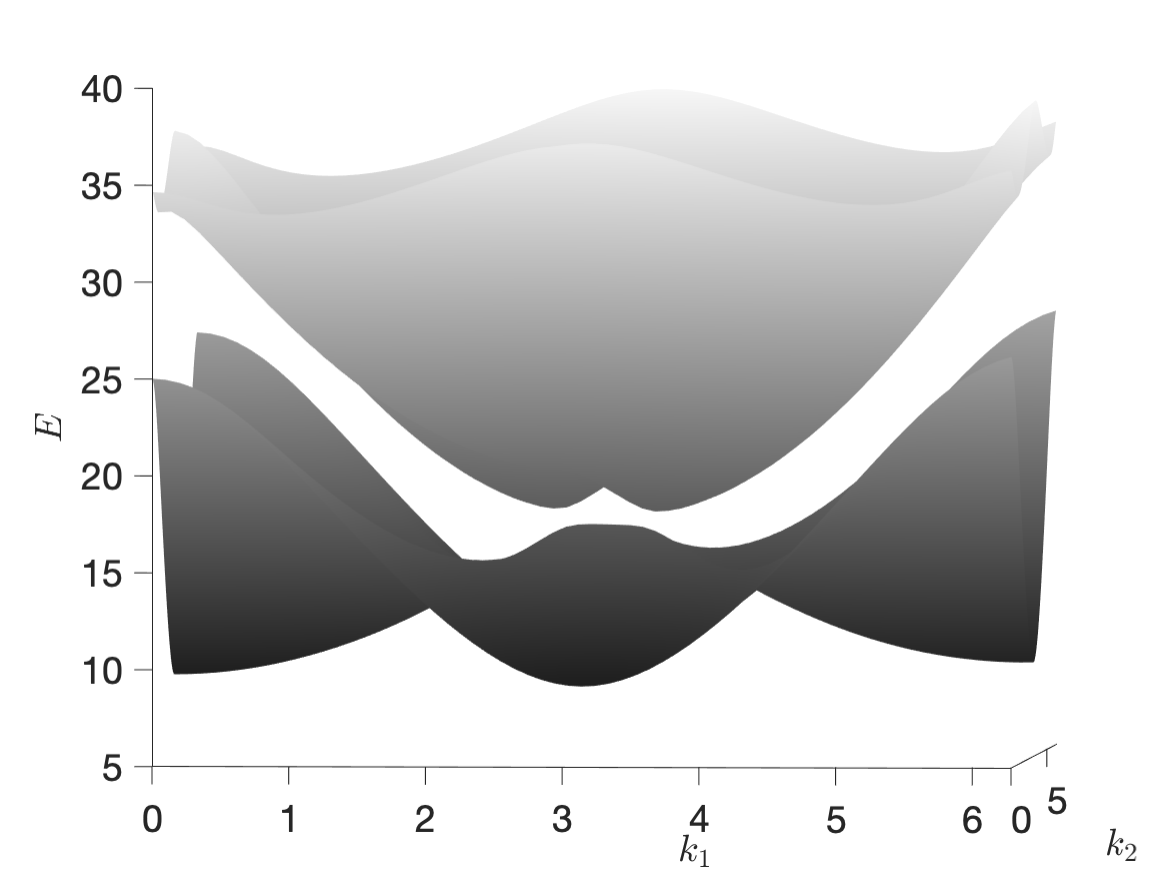} \\
        \includegraphics[scale = 0.2]{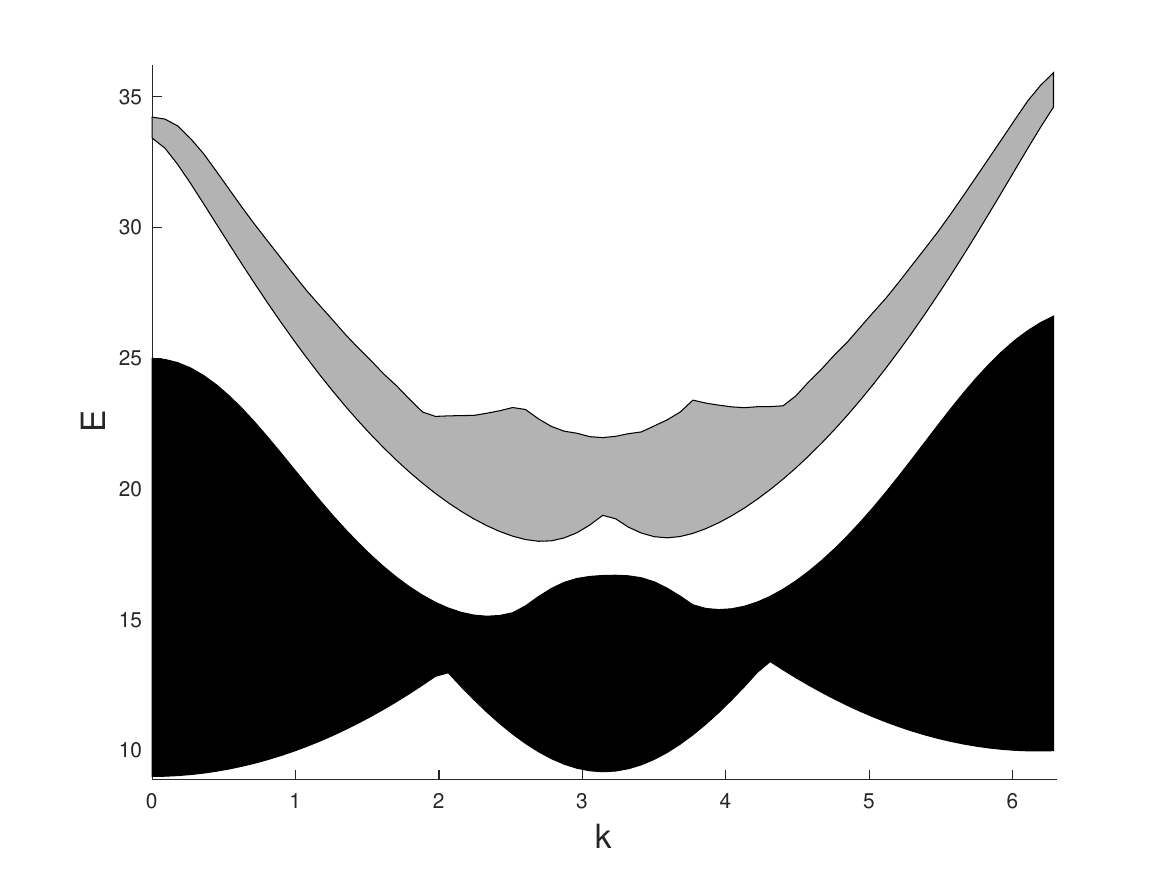}
    \end{subfigure}
    \begin{subfigure}{0.35\textwidth}
        \centering
        \subcaption{}
        \raisebox{1cm}{\includegraphics[scale = 0.28]{tilted_phi_pi_100_spec_flow_alt3-eps-converted-to.pdf}}
    \end{subfigure}
\caption{Weakly deformed square lattice version of Figure \ref{fig:newksweep_untilted};  (a) Band structure of ungapped Hamiltonian \eqref{eq:dfm-bulk-op_2} with $\phi = \pi/100$ showing two nearby Dirac points; bulk (top) and cylinder calculations (bottom). 
(b) Computations of bulk band-gapped $L^2(\R^2)-$ spectrum and bulk spectrum in a cylinder associated to Hamiltonian $T_* H^{\delta,+}$ from \eqref{eq:def_dfm-asym} for parameters $\delta = .01$, and tilt-angle $\phi = \pi/100$.  
(c) Blowup near $k=\pi$ of the spectrum of Hamiltonian $T_* H^\delta$ from \eqref{eq:def_dfm-edge} with $\delta = .01$, and tilt-angle $\phi = \pi/100$ showing two edge state curves , which traverse the bulk band gap, as anticipated by Theorem \ref{thm:multi-dfm}.}
\label{fig:newksweeptilted_1}
\end{figure}

\subsection{Qualitative confirmation of results: Schr\"{o}dinger operators on cylinders}

Figure \ref{fig:newksweep_untilted} displays computations related to (undeformed) square lattice bulk and edge media. We observe, in {\bf (a)}, that the square lattice bulk band structure has a quadratic band degeneracy; the dispersion surfaces pictured locally have the form of two touching paraboloids with curvatures of opposite sign. With the edge chosen, the locally quadratic band edges are also clearly visible in the corresponding edge state diagram. In {\bf (b)}, we see that perturbation by a $\mathcal{C}$-breaking term opens a band gap about the quadratic band degeneracy, reflected both in the band structure and edge state diagram. (Note that a full $L^2(\R^2)$ spectral gap does not open.) Finally, in {\bf (c)}, we observe the emergence of two gap-traversing eigenvalue curves, consistent with both the discussion of the bulk-edge correspondence in Section \ref{sec:bec}, as well as with the local behavior described by the family of effective edge Hamiltonians in Section \ref{sec:spec-sql}. We have checked that the corresponding eigenstates are localized near the transition region of the domain wall. Note that we have omitted eigenvalue branches corresponding to spurious eigenstates localized at computational boundaries, arising due to the hard truncation (i.e., Dirichlet boundary condition) at ${x_1 = \pm 30}$.

Figure \ref{fig:newksweeptilted_1} consists of analogous computations for the media in Figure \ref{fig:newksweep_untilted}, now deformed by the linear transformation $T(\phi)$, defined in \eqref{eqn:numTmatrix}, with ${\phi = \pi/100}$. In {\bf (a)}, the bulk band structure features two tilted conical degeneracies (Dirac points), which are then gapped due to the $\mathcal{C}$-breaking perturbation in {\bf (b)}. The edge state diagram in {\bf (c)} now depicts a pair of separated eigenvalue curves crossing the gap, again consistent with both the bulk-edge correspondence and the local behavior predicted by the effective Hamiltonians of Section \ref{sec:spec-dfm}. We have similarly omitted spurious eigenvalue branches, arising from Dirichlet boundary conditions imposed at the computational boundaries. We note that, while the tilt angle $\phi$ is quite small, the two Dirac points splitting from the quadratic band degeneracy at ${\phi = 0}$ are already well-separated. This is consistent with their predicted $O(\sqrt{\phi})$ separation; see \cite[Theorem 6.5]{chaban2024instability}.

Figure \ref{fig:newdeltasweep} demonstrates, for both square lattice and deformed square lattice edge media (as in Figures \ref{fig:newksweep_untilted} - \ref{fig:newksweeptilted_1}) the emergence of edge state eigenvalues at a fixed parallel quasimomentum, but for a range of small $\delta$. We take ${k = \pi}$ in the undeformed setting and ${k = 2.09}$ (the approximate location of one of the Dirac cones) in the deformed setting. Note that, consistent with our asymptotic analysis, we use a $\delta^2$ scaling of the symmetry breaking perturbation for the undeformed square lattice and a $\delta$ scaling for the deformed version. On the left, we observe two eigenvalues inside the opening gap; on the right, we observe just one. (Note that the second Dirac point is not pictured here.)

Finally, in Figure \ref{fig:newksweep_tilted}, we compute the edge state curves for the undeformed square lattice (${\phi = 0}$) edge medium, as well as for deformed media with tilt angles $\pi/200$, $\pi/100$, and $\pi/50$. The figure demonstrates that the pair of Dirac points, as well as the edge state curves they locally describe, become more separated as the structure is deformed. Individually, each case is consistent with both the bulk-edge correspondence and the results of our multiple-scale analysis. We note that small deformations of the original square lattice medium correspond to gap-preserving perturbations; we observe, accordingly, that all of the deformed media possess the same topological characteristics as the undeformed medium. The scenario of varying deformations motivates questions about the relationship between our two classes of effective edge Hamiltonians; we discuss this further in Section \ref{sec:an_open_prob} below.

\begin{figure}[!t]
    \centering
    \includegraphics[scale = 0.25]{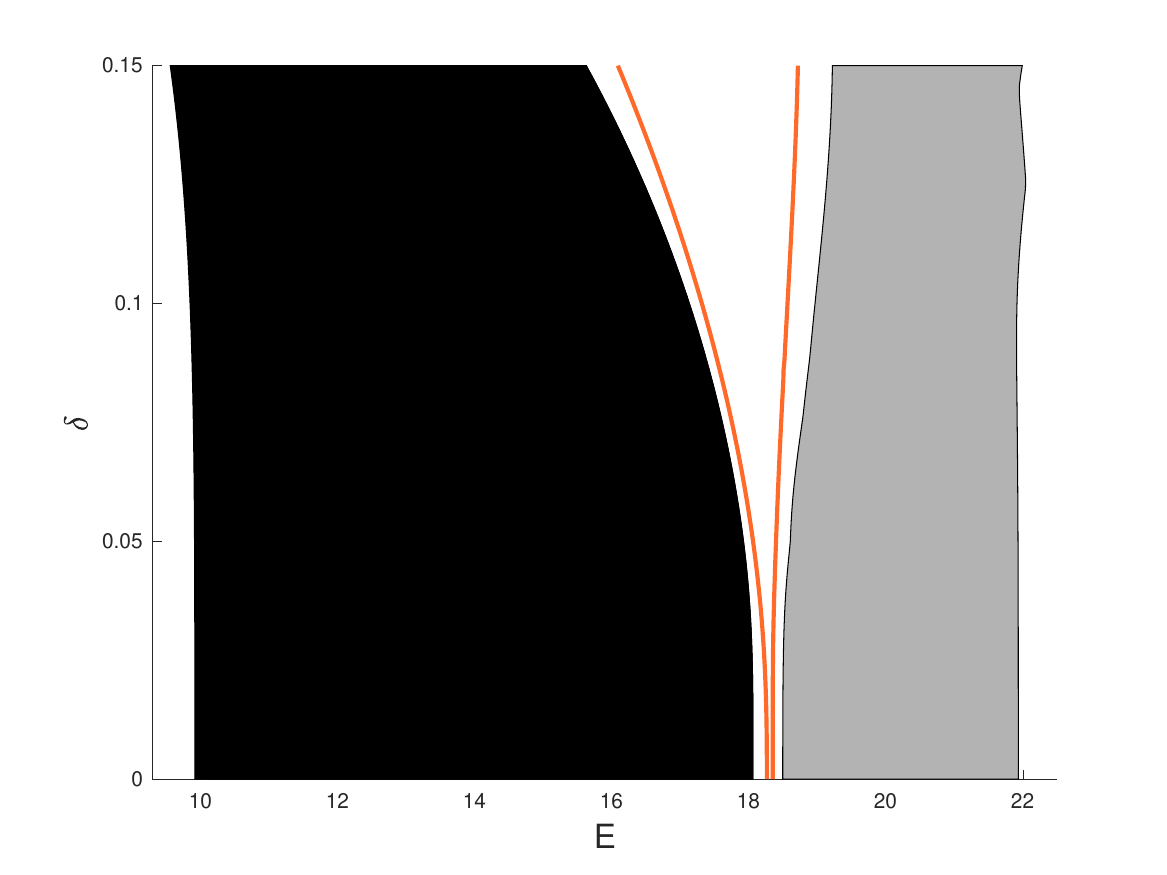}
    \hspace{0.2cm}
    \includegraphics[scale = 0.25]{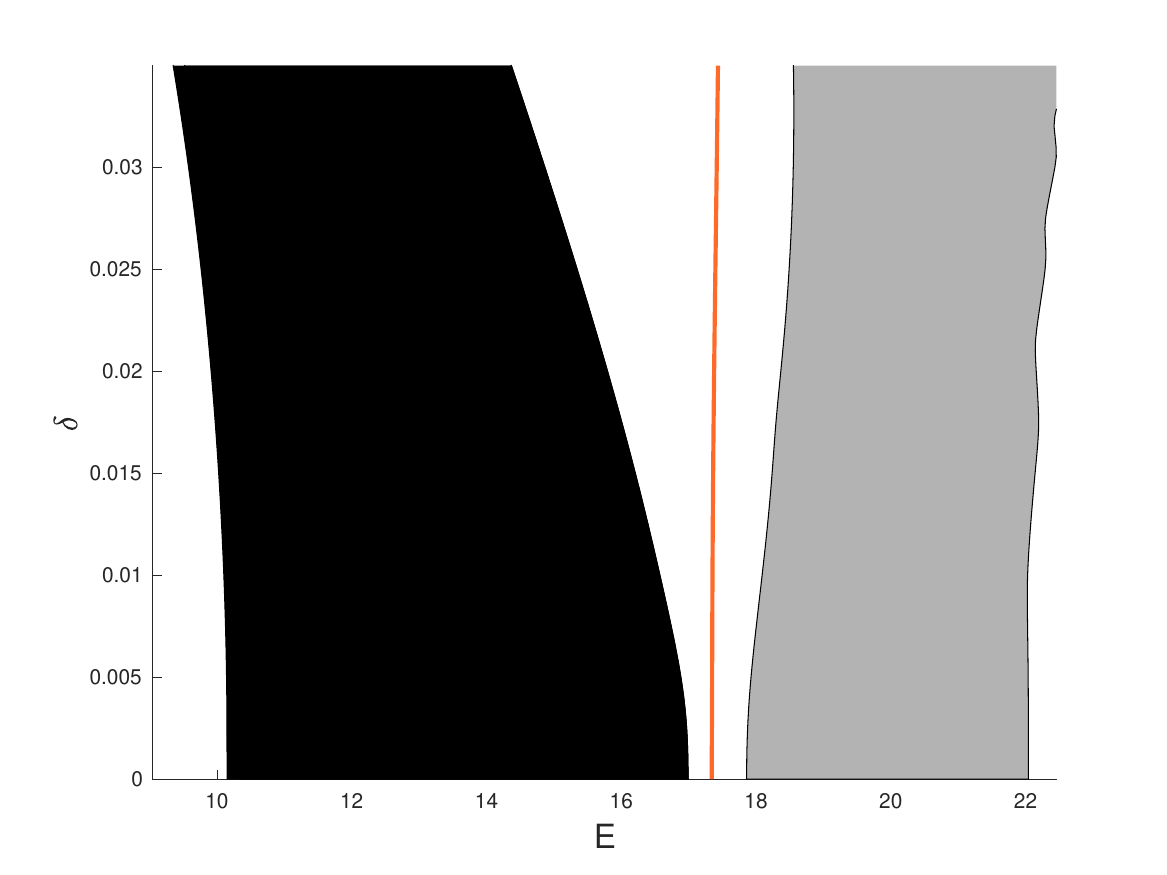}
    \caption{Plots varying $\delta$ for fixed $k$ values:
     (Left) Spectrum near $k=\pi$ for undeformed square lattice associated to Hamiltonian $H^\delta$ from \eqref{eq:def_sql-edge} (tilt angle $\phi=0$); $\delta = 0$ to $\delta = .15$). The gap states on the defect edge are colored in red, while bulk states are colored in blue. (Right) The same plot in the tilted square lattice associated to $T_* H^\delta$ from \eqref{eq:def_dfm-edge} with $\phi = \pi/100$ for $k = 2.9$ with $\delta = 0$ to $\delta = .035$.}
    \label{fig:newdeltasweep}
\end{figure}

\begin{figure}[!t]
    \centering
    \includegraphics[scale = 0.25]{quadratic_spec_flow_alt_2-eps-converted-to.pdf}
    \hspace{0.2cm}
    \includegraphics[scale = 0.25]{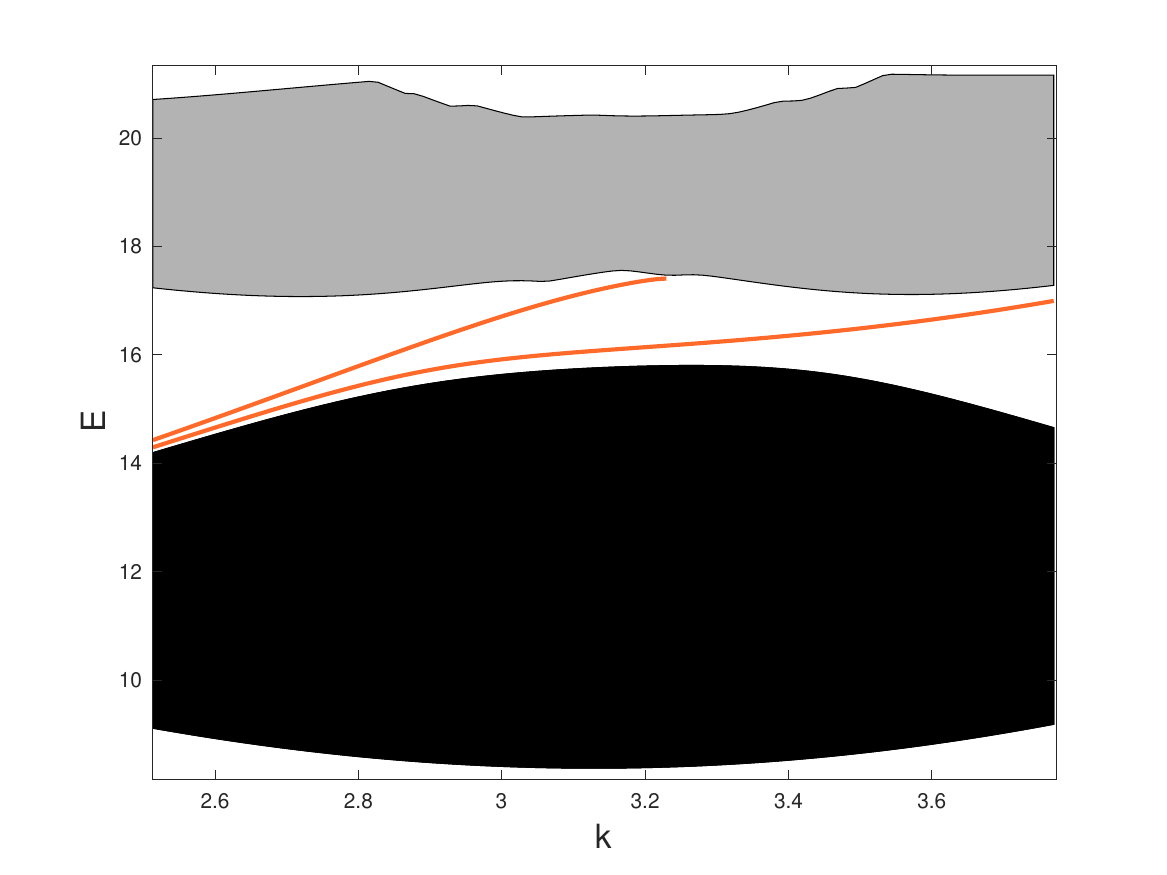} \\
    \includegraphics[scale = 0.25]{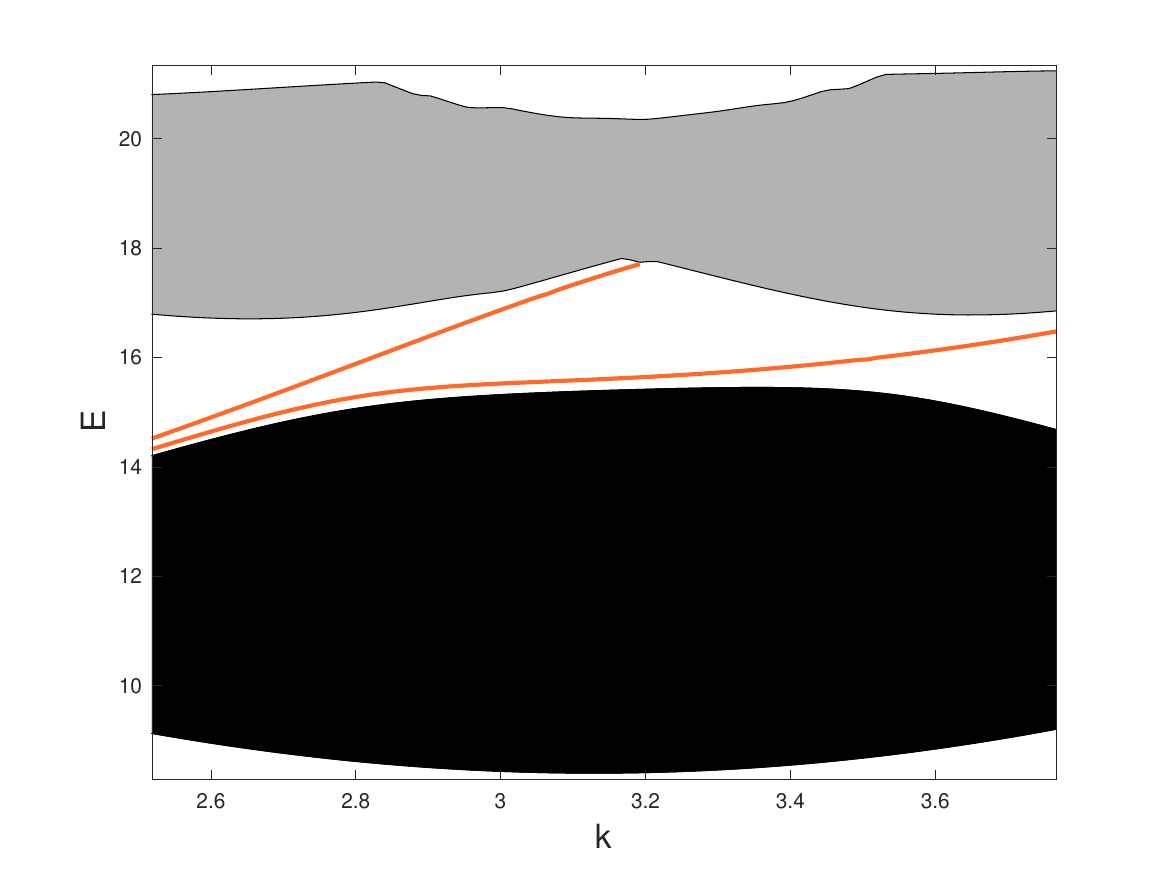}
    \hspace{0.2cm}
    \includegraphics[scale = 0.25]{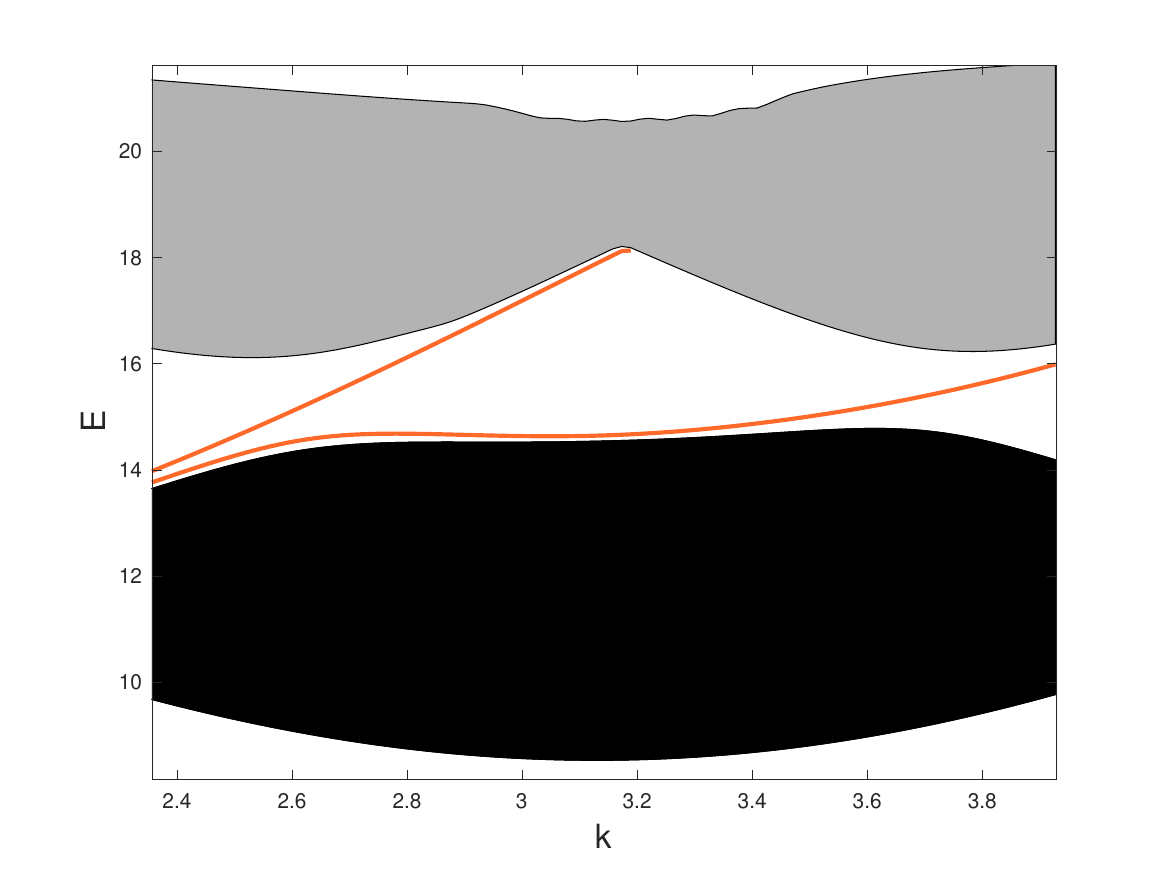}
    \caption{Clockwise from top left: Edge state curves which traverse the bulk band gap for Hamiltonian with undeformed square lattice potential $H^\delta$ \eqref{eq:def_sql-edge} ($\phi = 0$) and for the deformed Hamiltonian $T_* H^\delta$ from \eqref{eq:def_dfm-edge} with tilt angles ${\phi = \pi/200}$, $\pi/100$, and $\pi/50$ in a neighborhood of $k=\pi$. For $\phi = 0$, we take $\delta = .1$ and $A_\delta = \delta^2 \tanh( 10 \delta x_1) A(x_1,x_2)$.  For $\phi > 0$, we have take $\delta = .01$ and correspondingly $A_\delta = \delta \tanh(10 \delta x_1) A(x_1,x_2)$ in order to have the defect edge fully in our discretized region $x_1 \in [-30,30]$.}
    \label{fig:newksweep_tilted}
\end{figure}

\smallskip

\section{Perspective and an open problem on effective Hamiltonians}
\label{sec:an_open_prob}

We have presented an asymptotic construction of edge states in square lattice media and weakly deformed square lattice media for a class of self-adjoint Schr\"odinger Hamiltonians in which complex-conjugation (time-reversal) symmetry is broken. In both undeformed and deformed cases, we construct multiple-scale approximations of two edge state eigenvalue curves which traverse the bulk band gap and the associated eigenstates. These results are consistent with the bulk-edge correspondence principle, stating that the number of such gap-traversing curves is equal to  ${\pm 2}$, the difference of bulk Chern numbers.

Our multiple-scale construction shows that these edge state curves are seeded by discrete spectra of effective edge Hamiltonians, whose spectra are blow-ups of energy-parallel quasimomentum neighborhoods of band degeneracies. For the undeformed square lattice case, where there is an isolated quadratic band degeneracy, the effective edge Hamiltonian is a matrix Schr\"odinger operator $\mathfrak{S}(\kappa)$, whith ${\kappa \smallin \R}$, having two gap-traversing eigenvalue curves ${\kappa \mapsto \Omega_\pm(\kappa)}$. For the deformed case, where there are two nearby distinct conical degeneracies, there are two 
effective edge Hamiltonians, the Dirac operators ${\mathfrak{D}}^\pm(\kappa)$, each contributing one eigenvalue curve traversing the band gap (in the same direction as in the undeformed case).

While there is a good understanding of the gap-traversing eigenvalue curves of the family ${\kappa \mapsto \cancel{\mathfrak{D}}^\pm(\kappa)}$, much less is understood about those of ${\kappa \mapsto \mathfrak{S}(\kappa)}$. Topological arguments discussed earlier ensure the existence of two gap-traversing curves, provided a band gap exists. Our analytical results: multiple-scale approximations, and analysis of effective Hamiltonians, together with numerical simulations, provide detailed complementary information (e.g., the character of eigenstates, apparent monotonicity of edge curves, etc.) which is not accessible to arguments rooted in the topology of the space of Fredholm operators.
 
A deeper analytical understanding would more directly bridge the topological perspective of the bulk-edge correspondence principle with constructive analytical methods. This could be achieved, for example, by  ``carrying'' the topological information through an asymptotic analysis of resolvents (and spectral projections) of $\mathfrak{S}(\kappa)$ and $\cancel{\mathfrak{D}}^\pm(\kappa)$, and to see such information reflected in the gap-traversing eigenvalue curves of both undeformed and deformed effective edge Hamiltonians. However, deforming between the respective asymptotic regimes in which $\mathfrak{S}(\kappa)$ and $\cancel{\mathfrak{D}}^\pm(\kappa)$ govern is subtle due to the different scalings under which these effective operators arise.

\smallskip

\appendix

\section{Band structure degeneracies of bulk Hamiltonians}
\label{apx:par-bulk}

\setcounter{equation}{0}
\setcounter{figure}{0}

\subsection{Quadratic band degeneracies}
\label{apx:quad}

Denote $L^2_{\bm M} \equiv L^2_{\bm M}(\R^2/\Z^2)$. Since $\mathcal{R}$ maps $L^2_{\bm M}$ to itself, and $\mathcal{R}$ is a normal operator, $L^2_{\bm M}$ decomposes into a direct sum of eigenspaces of $\mathcal{R}$. We make this decomposition explicit: Since ${\mathcal{R}^4 = I}$, its eigenvalues $\varsigma$ satisfy ${\varsigma^4 = 1}$ and are given by $\{ 1, \, i, \, -1, \, -i\}$. Thus, we have the orthogonal decomposition
\begin{equation}
\label{eq:L2M-dsum}
L^2_{\bm M} = L^2_{{\bm M}, \, 1} \oplus L^2_{{\bm M}, \, i} \oplus L^2_{{\bm M}, \, -1} \oplus L^2_{{\bm M}, \, -i},
\end{equation}
where
\begin{equation}
\label{eq:def-L2M-rot}
L^2_{{\bm M}, \, \varsigma} \equiv \bigl \{ f \smallin L^2_{\bm M} : \mathcal{R}[f]({\bm x}) = \varsigma f({\bm x}) \ {\rm a.e.} \ {\bm x} \smallin \R^2 \bigr \} .
\end{equation}
Note that $\mathcal{PC}$ maps $L^2_{{\bm M}, i}$ to $L^2_{{\bm M}, -i}$.

A {\it quadratic band degeneracy point} is an energy-quasimomentum pair ${(E_S, {\bm M})}$ for which the following structure of the corresponding Floquet-Bloch eigenspace holds:

\begin{enumerate}
\renewcommand{\theenumi}{Q\arabic{enumi}}
\item \label{itm:quad-dgn-1} $E_S$ is a multiplicity two $L^2_{\bm M}$ eigenvalue of $H_V$,
\item \label{itm:quad-dgn-2} $E_S$ is a simple $L^2_{{\bm M},+i}$ eigenvalue of $H_V$ with normalized eigenstate $\Phi^{\bm M}_1$,
\item \label{itm:quad-dgn-3} $E_S$ is a simple $L^2_{{\bm M},-i}$ eigenvalue of $H_V$ with normalized eigenstate ${\Phi^{\bm M}_2 = \mathcal{PC}[\Phi^{\bm M}_1]}$,
\item \label{itm:quad-dgn-4} $E_S$ is neither an $L^2_{{\bm M},+1}$ eigenvalue nor an $L^2_{{\bm M},-1}$ eigenvalue of $H_V$.
\end{enumerate}

\noindent Introduce orthogonal projections ${\Pi^\parallel = \Phi^{\bm M}_1 \langle \Phi^{\bm M}_1, \, \cdot \rangle + \Phi^{\bm M}_2 \langle \Phi^{\bm M}_2, \, \cdot \rangle }$ and ${\Pi^\perp = 1 - \Pi^\parallel}$, and define the resolvent ${\mathscr{R}(E_S) = \Pi^\perp (H_V - E_S)^{-1} \Pi^\perp}$, which maps ${{\ker}_{L^2_{\bm M}}(H_V - E_S)^\perp \to H^2_{\bm M}}$. Further, define the parameters
\begin{equation}
\label{eq:def-a-par}
\begin{aligned}
\alpha^{\bm M}_0 & \equiv \langle -2i \partial_1 \Phi^{\bm M}_1, \, \mathscr{R}(E_S) (-2i \partial_1) \Phi^{\bm M}_1 \rangle , \\
\alpha^{\bm M}_1 & \equiv \langle -2i \partial_1 \Phi^{\bm M}_1, \, \mathscr{R}(E_S) (-2i \partial_2) \Phi^{\bm M}_2 \rangle , \\
\alpha^{\bm M}_2 & \equiv \langle -2i \partial_1 \Phi^{\bm M}_1, \, \mathscr{R}(E_S) (-2i \partial_1) \Phi^{\bm M}_2 \rangle .
\end{aligned}
\end{equation}
In addition to properties \ref{itm:quad-dgn-1} - \ref{itm:quad-dgn-4}, we assume the following non-degeneracy condition:

\begin{enumerate}
\renewcommand{\theenumi}{Q\arabic{enumi}}
\setcounter{enumi}{4}
\item \label{itm:quad-dgn-5} (Non-degeneracy condition.) ${\alpha^{\bm M}_1 \neq 0}$ and ${\alpha^{\bm M}_2 \neq 0}$.
\end{enumerate}

\noindent In \cite{keller2018spectral, keller2020erratum}, it was shown that properties \ref{itm:quad-dgn-1} - \ref{itm:quad-dgn-5} are sufficient to obtain the locally quadratic character of the degeneracy. We conclude this section by recording several identities convenient for local analysis.

\begin{proposition}
\label{prop:quad-1-zero}
For any ${{\bm u} \smallin \R^2}$ and $j$, ${k = 1}$, $2$, we have
\begin{equation}
{\bm u} \cdot \langle \Phi^{\bm M}_j, \, -2i \nabla \Phi^{\bm M}_k \rangle = 0 .
\end{equation}
\end{proposition}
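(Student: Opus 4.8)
The plan is to exploit the clockwise $\pi/2$ rotation symmetry $\mathcal{R}$, under which the degenerate eigenstates carry definite characters. By properties \ref{itm:quad-dgn-2}--\ref{itm:quad-dgn-3}, $\Phi^{\bm M}_1 \smallin L^2_{{\bm M}, +i}$ and $\Phi^{\bm M}_2 \smallin L^2_{{\bm M}, -i}$, so that $\mathcal{R}\Phi^{\bm M}_1 = i\,\Phi^{\bm M}_1$ and $\mathcal{R}\Phi^{\bm M}_2 = -i\,\Phi^{\bm M}_2$ by \eqref{eq:def-L2M-rot}. Writing ${\bm p}_{jk} \equiv \langle \Phi^{\bm M}_j, \, -2i\nabla \Phi^{\bm M}_k \rangle \smallin \C^2$, the assertion ${\bm u}\cdot\langle\Phi^{\bm M}_j,-2i\nabla\Phi^{\bm M}_k\rangle = 0$ for all ${\bm u}\smallin\R^2$ is equivalent to ${\bm p}_{jk} = 0$, and the whole proof reduces to showing that each ${\bm p}_{jk}$ is forced to be an eigenvector of the rotation matrix for an eigenvalue it does not possess.

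First I would record the covariance of the momentum operator under $\mathcal{R}$. Since $\mathcal{R}[f]({\bm x}) = f(R^\mathsf{T}{\bm x})$, a direct chain-rule computation gives $\mathcal{R}\nabla\mathcal{R}^{-1} = R^\mathsf{T}\nabla$, and hence $\mathcal{R}(-2i\nabla)\mathcal{R}^{-1} = R^\mathsf{T}(-2i\nabla)$. Using that $\mathcal{R}$ is unitary on $L^2_{\bm M}$, I then insert $\mathcal{R}^{-1}\mathcal{R}$ and transfer the symmetry onto both the states and the operator:
\begin{equation*}
{\bm p}_{jk} = \langle \mathcal{R}\Phi^{\bm M}_j, \, \bigl(R^\mathsf{T}(-2i\nabla)\bigr)\mathcal{R}\Phi^{\bm M}_k \rangle = \overline{\varsigma_j}\,\varsigma_k\, R^\mathsf{T}{\bm p}_{jk} ,
\end{equation*}
where $\varsigma_1 = i$, $\varsigma_2 = -i$. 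Here the real constant matrix $R^\mathsf{T}$ commutes with the (componentwise) inner product, while the antilinearity of the first slot produces $\overline{\varsigma_j}$ and the linearity of the second produces $\varsigma_k$.

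Rearranging yields $R^\mathsf{T}{\bm p}_{jk} = \varsigma_j\overline{\varsigma_k}\,{\bm p}_{jk}$, so ${\bm p}_{jk}$ is an eigenvector of $R^\mathsf{T}$ with eigenvalue $\varsigma_j\overline{\varsigma_k}$. For $j = k$ this eigenvalue equals $+1$, and for $j \neq k$ it equals $-1$. But $R^\mathsf{T}$ is a $-\pi/2$ rotation whose only eigenvalues are $\pm i$; since neither $+1$ nor $-1$ lies in its spectrum, the matrices $I \mp R^\mathsf{T}$ are invertible and force ${\bm p}_{jk} = 0$ in all four cases. I do not anticipate a genuine obstacle here: the argument is a pure selection-rule computation, and the result is precisely the statement that the linear term in the effective symbol \eqref{eq:quad-eff} vanishes, consistent with the degeneracy being quadratic rather than conical. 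The only points requiring care are bookkeeping — obtaining $R^\mathsf{T}$ (rather than $R$) from conjugating $-2i\nabla$, and tracking that the first inner-product slot contributes $\overline{\varsigma_j}$ rather than $\varsigma_j$ — both of which are routine once the rotation-covariance of the momentum is set up correctly.
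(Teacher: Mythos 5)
Your argument is correct. The rotation covariance $\mathcal{R}\nabla\mathcal{R}^{-1} = R^\mathsf{T}\nabla$ is right (if $g({\bm x}) = f(A{\bm x})$ then $\nabla g = A^\mathsf{T}(\nabla f)(A\,\cdot)$, which with $A = R^\mathsf{T}$ and $\mathcal{R}^{-1}$ corresponding to $A=R$ gives exactly $R^\mathsf{T}\nabla$), the unitarity of $\mathcal{R}$ on $L^2_{\bm M}$ is justified by the paper's Appendix \ref{apx:quad} (it rests on $R{\bm M}\equiv{\bm M}$ mod $(\Z^2)^*$), and the eigenvalue bookkeeping checks out: $\varsigma_j\overline{\varsigma_k}=+1$ for $j=k$ and $-1$ for $j\neq k$, neither of which lies in ${\rm spec}(R^\mathsf{T})=\{\pm i\}$, so all four vectors ${\bm p}_{jk}$ vanish. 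The one point of comparison: the paper does not prove this proposition at all — it is a one-line citation to \cite[Proposition 4.10]{keller2018spectral} — so your write-up supplies a self-contained selection-rule argument where the paper defers to an external reference (whose proof is, in spirit, the same rotational-symmetry computation). A minor robustness remark you implicitly rely on: whether conjugation produces $R$ or $R^\mathsf{T}$ is immaterial to the conclusion, since both are traceless orthogonal matrices with determinant one and hence have spectrum $\{\pm i\}$; this makes the "bookkeeping" you flag as the only delicate point essentially fail-safe.
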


\begin{proof}
This follows directly from \cite[Proposition 4.10]{keller2018spectral}.
\end{proof}

\begin{proposition}
\label{prop:quad-2}
For any ${{\bm u} \smallin \R^2}$ and $j$, ${k = 1}$, $2$, we have
\begin{equation}
{\bm u} \cdot \langle -2i \nabla \Phi^{\bm M}_j , \, \mathscr{R}(E_S) (-2i \nabla) \Phi^{\bm M}_k \rangle {\bm u} = \alpha^{\bm M}_0 |{\bm u}|^2 (\sigma_0)_{j, k} + \alpha^{\bm M}_1 ({\bm u} \cdot \sigma_1 {\bm u}) (\sigma_1)_{j, k} + \alpha^{\bm M}_2 ({\bm u} \cdot \sigma_3 {\bm u}) (\sigma_2)_{j, k} ,
\end{equation}
where $\alpha^{\bm M}_0$, $\alpha^{\bm M}_1$, and ${\alpha^{\bm M}_2 \smallin \R}$ are defined in \eqref{eq:def-a-par}.
\end{proposition}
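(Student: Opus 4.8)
The plan is to prove the identity by symmetry reduction. The key facts I would exploit are: the resolvent $\mathscr{R}(E_S)=\Pi^\perp(H_V-E_S)^{-1}\Pi^\perp$ commutes with every element of the point group (since $H_V$ does, and the degenerate eigenspace $\mathbb{C}\Phi^{\bm M}_1\oplus\mathbb{C}\Phi^{\bm M}_2$, hence $\Pi^\parallel$, is preserved by $\mathcal{R}$, $\mathcal{P}$, $\mathcal{C}$, $\Sigma_1$); the operators $-2i\partial_a$ are self-adjoint; $\mathcal{R}$ is unitary while $\mathcal{C}$, $\mathcal{PC}$ are antiunitary; and $\mathcal{R}\Phi^{\bm M}_1=i\Phi^{\bm M}_1$, $\mathcal{R}\Phi^{\bm M}_2=-i\Phi^{\bm M}_2$. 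Suppressing the ${\bm M}$ superscripts, I would first move one momentum operator across the resolvent by self-adjointness and write the left-hand side as the contracted form $\sum_{a,b}u_a u_b\,T^{ab}_{jk}$ with $T^{ab}_{jk}\equiv\langle(-2i\partial_a)\Phi_j,\ \mathscr{R}(E_S)(-2i\partial_b)\Phi_k\rangle$; by Proposition~\ref{prop:quad-1-zero} the vectors $(-2i\partial_b)\Phi_k$ are already orthogonal to the kernel, so $\Pi^\perp$ acts trivially on them. For fixed band indices $(j,k)$ the array $(T^{ab}_{jk})_{a,b}$ is a $2\times2$ matrix in the \emph{spatial} indices, and the whole problem reduces to determining its allowed tensor structure.

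Next I would compute the $\mathcal{R}$-covariance of $(-2i\nabla)\Phi_k$. From $\mathcal{R}[(-2i\partial_c)f]=\sum_b R_{bc}(-2i\partial_b)\mathcal{R}[f]$ together with the eigenvalue relations, unitarity of $\mathcal{R}$, and $[\mathscr{R}(E_S),\mathcal{R}]=0$, I obtain $T_{jk}=\overline{\varsigma_j}\,\varsigma_k\,R^{\mathsf T}T_{jk}R$ with $\varsigma_1=i$, $\varsigma_2=-i$. For the diagonal blocks the prefactor is $|\varsigma_j|^2=1$, so $T_{jj}$ commutes with $R$ and lies in $\mathrm{span}\{\sigma_0,\sigma_2\}$; for the off-diagonal block it is $\overline{\varsigma_1}\varsigma_2=-1$, so $T_{12}$ anticommutes with $R$ and lies in $\mathrm{span}\{\sigma_1,\sigma_3\}$. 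Contracting with the \emph{real} vector ${\bm u}$ then annihilates the antisymmetric spatial $\sigma_2$ component (as ${\bm u}^{\mathsf T}\sigma_2{\bm u}=0$), leaving $M_{jj}({\bm u})=T^{11}_{jj}\,|{\bm u}|^2$ on the diagonal and $M_{12}({\bm u})=T^{11}_{12}\,({\bm u}\cdot\sigma_3{\bm u})+T^{12}_{12}\,({\bm u}\cdot\sigma_1{\bm u})$ off the diagonal; here $T^{11}_{11}=\alpha^{\bm M}_0$ by \eqref{eq:def-a-par}, and Hermiticity of $\mathscr{R}(E_S)$ makes it real.

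I would then fix the band-index ($j,k$) structure. Because $\mathcal{P}=\mathcal{R}^2$ and $\Phi_2=\mathcal{PC}\Phi_1$, one gets $\mathcal{C}\Phi_1=-\Phi_2$ and $\mathcal{C}\Phi_2=-\Phi_1$; since $\mathscr{R}(E_S)$ has real coefficients this yields $T_{22}=\overline{T_{11}}$ and $T_{21}=\overline{T_{12}}$, which with $\alpha^{\bm M}_0\in\mathbb R$ forces $M_{22}=M_{11}=\alpha^{\bm M}_0|{\bm u}|^2$ — the diagonal $\sigma_0$ structure. For the off-diagonal terms I would bring in the reflection $\Sigma_1$, which interchanges the derivatives ($\Sigma_1(-2i\partial_1)=(-2i\partial_2)\Sigma_1$) and maps $L^2_{{\bm M},+i}$ to $L^2_{{\bm M},-i}$, so $\Sigma_1\Phi_1=c\Phi_2$ for a unimodular $c$. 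Substituting this into $T^{11}_{12}$ and $T^{12}_{12}$ and combining with $T_{21}=\overline{T_{12}}$ produces phase relations that, after normalizing the eigenpair so that $c^2=1$, show that $T^{12}_{12}$ is real and $T^{11}_{12}$ is purely imaginary. The real coefficient reproduces the band-space $\sigma_1$ structure and the purely imaginary one the $\sigma_2$ structure (matching $(\sigma_2)_{12}=-i$, $(\sigma_2)_{21}=+i$), so matching against \eqref{eq:def-a-par} identifies the coefficients with $\alpha^{\bm M}_1$ and $\alpha^{\bm M}_2$ and reassembling the diagonal and off-diagonal pieces gives the claimed identity.

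The main obstacle is precisely this last step. The rotation symmetry $\mathcal{R}$ alone fixes only the spatial tensor type and leaves the off-diagonal coefficients $T^{11}_{12}$, $T^{12}_{12}$ undetermined in phase; by itself it cannot distinguish the $\sigma_2$ band structure (which demands a purely imaginary coefficient on $({\bm u}\cdot\sigma_3{\bm u})$) from a spurious $\sigma_3$ one. Establishing that $M({\bm u})$ lands in the $\sigma_2$ rather than the $\sigma_3$ band-isotypic component — equivalently, that $T^{11}_{12}\in i\mathbb R$ and $T^{12}_{12}\in\mathbb R$ — is where the reflection $\Sigma_1$ and a careful normalization of $\{\Phi^{\bm M}_1,\Phi^{\bm M}_2\}$ (fixing $c^2=1$) are indispensable; this is the delicate bookkeeping that underlies the reality of $\alpha^{\bm M}_1,\alpha^{\bm M}_2$ and is consistent with the structure established in \cite{keller2018spectral, keller2020erratum}.
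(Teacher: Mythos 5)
The paper offers no proof of Proposition \ref{prop:quad-2} at all: its ``proof'' is a one-line pointer to \cite[Section 4.1.3]{keller2018spectral} and \cite{keller2020erratum}. Your symmetry-reduction argument is therefore not paralleling anything written in this paper, but it is a sound and essentially complete reconstruction of the argument that the citation stands in for. Every step I can check goes through: the covariance relation $T_{jk}=\overline{\varsigma_j}\varsigma_k R^{\mathsf T}T_{jk}R$ is correct and forces $T_{jj}\smallin\mathrm{span}\{\sigma_0,\sigma_2\}$ and $T_{12}\smallin\mathrm{span}\{\sigma_1,\sigma_3\}$ in the spatial indices; contraction with real ${\bm u}$ kills the antisymmetric spatial part; $\mathcal{P}=\mathcal{R}^2$ together with $\Phi^{\bm M}_2=\mathcal{PC}[\Phi^{\bm M}_1]$ indeed gives $\mathcal{C}\Phi^{\bm M}_1=-\Phi^{\bm M}_2$ and hence $T_{22}=\overline{T_{11}}$, $T_{21}=\overline{T_{12}}$; and the $\Sigma_1$ step with $\Sigma_1\Phi^{\bm M}_1=c\,\Phi^{\bm M}_2$ yields exactly $T^{11}_{12}=-\bar c^{\,2}\,\overline{T^{11}_{12}}$ and $T^{12}_{12}=\bar c^{\,2}\,\overline{T^{12}_{12}}$, so the normalization $c^2=1$ (achievable, since rephasing $\Phi^{\bm M}_1\mapsto e^{i\theta}\Phi^{\bm M}_1$ sends $c^2\mapsto e^{4i\theta}c^2$) makes $T^{12}_{12}$ real and $T^{11}_{12}$ purely imaginary. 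You are also right to single out this phase-fixing as the delicate point: it is an extra convention beyond $\Phi^{\bm M}_2=\mathcal{PC}[\Phi^{\bm M}_1]$ and $\lVert\Phi^{\bm M}_1\rVert=1$, and without it $\alpha^{\bm M}_1$ need not be real.

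One caveat at your final ``matching'' step, which is worth making explicit rather than glossing. Your computation shows that, under $c^2=1$, $T^{11}_{12}$ is purely imaginary, so the real coefficient multiplying $({\bm u}\cdot\sigma_3{\bm u})\,\sigma_2$ in the assembled identity is $i\,T^{11}_{12}$, not $T^{11}_{12}$. But \eqref{eq:def-a-par} literally defines $\alpha^{\bm M}_2\equiv T^{11}_{12}$; evaluating the proposition at ${\bm u}=[1,0]^{\mathsf T}$ and $(j,k)=(1,2)$ gives $T^{11}_{12}$ on the left and $(\sigma_2)_{12}\,\alpha^{\bm M}_2=-i\,T^{11}_{12}$ on the right, which forces $\alpha^{\bm M}_2=0$, contradicting \ref{itm:quad-dgn-5}. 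So what your argument actually proves is the proposition with the $\sigma_2$-coefficient equal to $i\,T^{11}_{12}$; equivalently, the definition of $\alpha^{\bm M}_2$ in \eqref{eq:def-a-par} must carry a factor of $i$ for the proposition and the asserted reality of $\alpha^{\bm M}_2$ to hold simultaneously. This is a bookkeeping inconsistency internal to the paper (of the kind \cite{keller2020erratum} exists to repair) rather than a flaw in your reasoning, but ``matching against \eqref{eq:def-a-par}'' silently absorbs it; state the convention you adopt and the proof is complete.
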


\begin{proof}
This follows as a summary of the discussion in \cite[Section 4.1.3]{keller2018spectral} and \cite{keller2020erratum}.
\end{proof}

\begin{proposition}
\label{prop:quad-breakC}
Suppose $W$ is self-adjoint and anticommutes with $\mathcal{PC}$. Then
\begin{equation}
\langle \Phi^{\bm M}_j, \, W \Phi^{\bm M}_k \rangle = \vartheta^{\bm M} (\sigma_3)_{j, k}
\end{equation}
where ${\vartheta^{\bm M} \equiv \langle \Phi^{\bm M}_1, \, W \Phi^{\bm M}_1 \rangle}$.
\end{proposition}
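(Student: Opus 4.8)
The plan is to compute the $2\times 2$ matrix $M_{jk} \equiv \langle \Phi^{\bm M}_j, \, W\Phi^{\bm M}_k\rangle$ entrywise and to show it equals $\vartheta^{\bm M}\sigma_3$. Since $W$ is self-adjoint, $M$ is Hermitian, so $M_{11}$ and $M_{22}$ are automatically real and $M_{21} = \overline{M_{12}}$; moreover $M_{11} = \vartheta^{\bm M}$ by the definition of $\vartheta^{\bm M}$. The three ingredients I would exploit are: (i) $\mathcal{PC}$ is antiunitary, so $\langle \mathcal{PC}f, \, \mathcal{PC}g\rangle = \langle g, \, f\rangle$ for all $f, g$; (ii) $(\mathcal{PC})^2 = \mathrm{Id}$, so that $\mathcal{PC}[\Phi^{\bm M}_2] = \Phi^{\bm M}_1$ given $\Phi^{\bm M}_2 = \mathcal{PC}[\Phi^{\bm M}_1]$; and (iii) the hypothesis $W\mathcal{PC} = -\mathcal{PC}W$.

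For the $(2,2)$ entry I would substitute $\Phi^{\bm M}_2 = \mathcal{PC}[\Phi^{\bm M}_1]$, push $W$ through $\mathcal{PC}$ at the cost of a sign via (iii), and then apply the antiunitarity identity (i):
\[
M_{22} = \langle \mathcal{PC}\Phi^{\bm M}_1, \, W\mathcal{PC}\Phi^{\bm M}_1\rangle = -\langle \mathcal{PC}\Phi^{\bm M}_1, \, \mathcal{PC}W\Phi^{\bm M}_1\rangle = -\langle W\Phi^{\bm M}_1, \, \Phi^{\bm M}_1\rangle = -\overline{\vartheta^{\bm M}} = -\vartheta^{\bm M},
\]
the last step using that $\vartheta^{\bm M}$ is real. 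This pins the diagonal of $M$ to $\mathrm{diag}(\vartheta^{\bm M}, -\vartheta^{\bm M})$.

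For the off-diagonal entry I would run the same style of manipulation on $M_{21} = \langle \mathcal{PC}\Phi^{\bm M}_1, \, W\Phi^{\bm M}_1\rangle$: inserting $(\mathcal{PC})^2 = \mathrm{Id}$ in front of $W\Phi^{\bm M}_1$, applying antiunitarity, and invoking (iii) once more gives $M_{21} = -\langle W\Phi^{\bm M}_2, \, \Phi^{\bm M}_1\rangle = -\overline{M_{12}}$. Combined with the Hermiticity relation $M_{21} = \overline{M_{12}}$, this forces $\overline{M_{12}} = 0$, hence $M_{12} = M_{21} = 0$. Assembling the four entries then yields $M = \vartheta^{\bm M}\sigma_3$, as claimed.

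I do not expect a genuine obstacle: the statement is a short symmetry computation. The one place that demands care — and essentially the only place an error could creep in — is the consistent handling of the antilinearity of $\mathcal{PC}$. One must apply the convention $\langle \mathcal{PC}f, \, \mathcal{PC}g\rangle = \langle g, \, f\rangle = \overline{\langle f, \, g\rangle}$ in the correct argument order and track precisely where the complex conjugation lands, so that the reality of $\vartheta^{\bm M}$ and the sign produced by the $\mathcal{PC}$-anticommutation combine correctly. It is worth noting that this argument uses only the self-adjointness and the $\mathcal{PC}$-anticommutation of $W$; the rotational $\mathcal{R}$-grading of $\Phi^{\bm M}_1, \Phi^{\bm M}_2$ recorded in \eqref{eq:def-L2M-rot} is not needed.
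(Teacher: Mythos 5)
Your proposal is correct and follows essentially the same route as the paper's proof: both exploit the antiunitarity of $\mathcal{PC}$, the anticommutation $W\mathcal{PC}=-\mathcal{PC}W$, and the Hermiticity of the matrix $M_{jk}$ to force the diagonal to be $\mathrm{diag}(\vartheta^{\bm M},-\vartheta^{\bm M})$ and the off-diagonal entries to vanish. The only cosmetic difference is that the paper derives $M_{12}=-M_{12}$ in a single chain, whereas you split it into $M_{21}=-\overline{M_{12}}$ plus Hermiticity; the content is identical.
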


\begin{proof}
Since $W$ is self-adjoint, the matrix with entries ${\langle \Phi^{\bm M}_j, \, W \Phi^{\bm M}_k \rangle}$, where $j$, ${k = 1}$, $2$, is self-adjoint. Hence
\begin{equation}
\langle \Phi^{\bm M}_1 \, W \Phi^{\bm M}_2 \rangle = \langle \mathcal{PC}[\Phi^{\bm M}_1] \, \mathcal{PC}[W \Phi^{\bm M}_2] \rangle^* = \langle \Phi^{\bm M}_2 \, -W \Phi^{\bm M}_1 \rangle^* = - \langle \Phi^{\bm M}_1 , \, W \Phi^{\bm M}_2 \rangle ,
\end{equation}
implying ${\langle \Phi^{\bm M}_1 , \, W \Phi^{\bm M}_2 \rangle = 0}$, and
\begin{equation}
\langle \Phi^{\bm M}_1 \, W \Phi^{\bm M}_1 \rangle = \langle \mathcal{PC}[\Phi^{\bm M}_1] \, \mathcal{PC}[W \Phi^{\bm M}_1] \rangle^* = \langle \Phi^{\bm M}_2 \, -W \Phi^{\bm M}_2 \rangle^* = - \langle \Phi^{\bm M}_2 , \, W \Phi^{\bm M}_2 \rangle .
\end{equation}
The result then follows from the definition \eqref{eq:def-pauli}.
\end{proof}

\subsection{Conical degeneracies (Dirac points)}
\label{apx:dir}

Further, $L$ commutes with $\mathcal{PC}$ (Definition \ref{eq:sym-op}) and, moreoever, for any ${{\bm k} \smallin \mathcal{B}}$, $\mathcal{PC}$ maps $L^2_{\bm k}$ to itself. Since $\mathcal{PC}$ satisfies ${(\mathcal{PC})^2 = 1}$ its eigenvalues are $\pm 1$. This induces the orthogonal decomposition:
\begin{align}
\label{eq:L2k-PC}
L^2_{\bm k} & = Y_{{\bm k},+1} \oplus Y_{{\bm k},-1}, \quad \text{where} \\
Y_{{\bm k},\varsigma} & \equiv \{ f \smallin L^2_{\bm k} : \mathcal{PC}[f] = \varsigma f \}, \ \ \varsigma = \pm 1.
\end{align}

A {\it Dirac point} is a locally conical touching of consecutive dispersion surfaces. In analogy with the case of quadratic band degeneracy points, Dirac points are a consequence of the following structure of the Floquet-Bloch eigenspace at an energy-quasimomentum pair ${(E_D, \, {\bm D})}$, which we shall assume:
\begin{enumerate}
\renewcommand{\theenumi}{D\arabic{enumi}}
\item \label{itm:dir-pt-1} $E_D$ is a multiplicity two $L^2_{\bm D}$ eigenvalue of $L$, 
\item \label{itm:dir-pt-2} $E_D$ is a simple $Y_{{\bm D}, +1}$ eigenvalue of $L$ with normalized eigenstate $\tilde{\Phi}^{\bm D}_1$,
\item \label{itm:dir-pt-3} $E_D$ is a simple $Y_{{\bm D}, -1}$ eigenvalue of $L$ with normalized eigenstate $\tilde{\Phi}^{\bm D}_2$.
\end{enumerate}
To obtain an orthonormal basis ${\{ \Phi^{\bm D}_1, \, \Phi^{\bm D}_2 = \mathcal{PC}[\Phi^{\bm D}_1] \}}$, we set
\begin{equation}
\Phi^{\bm D}_1 \equiv \frac{1}{\sqrt{2}} (\tilde{\Phi}^{\bm D}_1 + \tilde{\Phi}^{\bm D}_2) \quad \text{and} \quad \Phi^{\bm D}_2 \equiv \frac{1}{\sqrt{2}} (\tilde{\Phi}^{\bm D}_1 - \tilde{\Phi}^{\bm D}_2).
\end{equation}
Further, we define
\begin{equation}
\label{eq:def-gam-par}
\begin{aligned}
{\bm \gamma}^{\bm D}_0 & \equiv \langle \Phi^{\bm D}_1, \, (T^\mathsf{T} T)^{-1} (-2i \nabla) \Phi^{\bm D}_1 \rangle , \\
{\bm \gamma}^{\bm D}_1 & \equiv {\rm Re} \, \langle \Phi^{\bm D}_1, \, (T^\mathsf{T} T)^{-1} (-2i \nabla) \Phi^{\bm D}_2 \rangle , \\
{\bm \gamma}^{\bm D}_2 & \equiv -{\rm Im} \, \langle \Phi^{\bm D}_1, \, (T^\mathsf{T} T)^{-1} (-2i \nabla) \Phi^{\bm D}_2 \rangle .
\end{aligned}
\end{equation}

\begin{enumerate}
\renewcommand{\theenumi}{D\arabic{enumi}}
\setcounter{enumi}{3}
\item \label{itm:dir-pt-4} (Nondegeneracy condition.) ${{\bm \gamma}^{\bm D}_1 \neq {\bm 0}}$ and ${{\bm \gamma}^{\bm D}_2 \neq {\bm 0}}$.
\end{enumerate}

\noindent In \cite[Theorem 4.2]{chaban2024instability}, it is shown that properties \ref{itm:dir-pt-1} - \ref{itm:dir-pt-4} imply the locally (tilted, elliptical) conical behavior associated with Dirac points.

\begin{proposition}
\label{prop:dir-breakC}
Suppose $W$ is self-adjoint and anticommutes with $\mathcal{PC}$. Then
\begin{equation}
\langle \Phi^{\bm D}_j, \, W \Phi^{\bm D}_k \rangle = \vartheta^{\bm D} (\sigma_3)_{j, k}
\end{equation}
where ${\vartheta^{\bm D} \equiv \langle \Phi^{\bm D}_1, \, W \Phi^{\bm D}_1 \rangle}$.
\end{proposition}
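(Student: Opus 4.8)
The plan is to observe that this is the Dirac-point analogue of Proposition~\ref{prop:quad-breakC}, and that its proof transfers essentially verbatim. The only structural inputs required are: (i) the orthonormal basis at $(E_D, {\bm D})$ has the form $\{\Phi^{\bm D}_1, \, \Phi^{\bm D}_2 = \mathcal{PC}[\Phi^{\bm D}_1]\}$, which is exactly how the basis is arranged in Appendix~\ref{apx:dir}; (ii) $(\mathcal{PC})^2 = 1$, so that also $\mathcal{PC}[\Phi^{\bm D}_2] = \Phi^{\bm D}_1$; (iii) $\mathcal{PC}$ is antiunitary, i.e. $\langle \mathcal{PC} f, \, \mathcal{PC} g\rangle = \overline{\langle f, \, g\rangle}$; and (iv) $W$ is self-adjoint and anticommutes with $\mathcal{PC}$. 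None of these uses the particular (quadratic versus conical) local structure of the degeneracy, which is why the same computation applies.

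First I would form the $2\times 2$ matrix $M_{jk} \equiv \langle \Phi^{\bm D}_j, \, W \Phi^{\bm D}_k\rangle$, $j,k = 1,2$. Self-adjointness of $W$ makes $M$ Hermitian. For the off-diagonal entry I would combine antiunitarity with the anticommutation relation, using $\mathcal{PC}[W\Phi^{\bm D}_2] = -W\mathcal{PC}[\Phi^{\bm D}_2] = -W\Phi^{\bm D}_1$, to obtain
\[ \langle \Phi^{\bm D}_1, \, W\Phi^{\bm D}_2\rangle = \overline{\langle \mathcal{PC}[\Phi^{\bm D}_1], \, \mathcal{PC}[W\Phi^{\bm D}_2]\rangle} = \overline{\langle \Phi^{\bm D}_2, \, -W\Phi^{\bm D}_1\rangle} = -\overline{\langle \Phi^{\bm D}_2, \, W\Phi^{\bm D}_1\rangle}. \]
Hermiticity gives $\overline{\langle \Phi^{\bm D}_2, \, W\Phi^{\bm D}_1\rangle} = \langle \Phi^{\bm D}_1, \, W\Phi^{\bm D}_2\rangle$, whence $M_{12} = -M_{12}$ and so $M_{12} = 0$. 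An identical computation on a diagonal entry yields $\langle \Phi^{\bm D}_1, \, W\Phi^{\bm D}_1\rangle = -\langle \Phi^{\bm D}_2, \, W\Phi^{\bm D}_2\rangle$, so that, setting $\vartheta^{\bm D} \equiv \langle \Phi^{\bm D}_1, \, W\Phi^{\bm D}_1\rangle$, the matrix is $M = \vartheta^{\bm D}\,\mathrm{diag}(1,-1) = \vartheta^{\bm D}\sigma_3$, which is the assertion.

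There is essentially no obstacle here; the whole content is the antiunitary bookkeeping. The single point demanding care is that the complex conjugation in $\mathcal{C}$ (and hence in $\mathcal{PC}$) reverses the inner product, which is precisely what forces the off-diagonal entries to vanish and the diagonal entries to be real, equal, and opposite. Concretely, I would simply replicate the proof of Proposition~\ref{prop:quad-breakC} with every superscript ${\bm M}$ replaced by ${\bm D}$, invoking the basis construction of Appendix~\ref{apx:dir} to justify $\Phi^{\bm D}_2 = \mathcal{PC}[\Phi^{\bm D}_1]$ and $(\mathcal{PC})^2 = 1$.
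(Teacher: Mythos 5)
Your proof is correct and coincides with the paper's: the paper states that the proof of Proposition \ref{prop:dir-breakC} is identical to that of Proposition \ref{prop:quad-breakC}, and that proof is exactly your argument — Hermiticity of the matrix $\langle \Phi^{\bm D}_j, W\Phi^{\bm D}_k\rangle$ combined with the antiunitary $\mathcal{PC}$ bookkeeping forces the off-diagonal entries to vanish and the diagonal entries to be negatives of each other, using only $\Phi^{\bm D}_2 = \mathcal{PC}[\Phi^{\bm D}_1]$ and $(\mathcal{PC})^2 = 1$ from Appendix \ref{apx:dir}.
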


\noindent The proof of Proposition \ref{prop:dir-breakC} is identical to that of Proposition \ref{prop:quad-breakC}.

Now consider a pair of Dirac points ${(E_D, {\bm D}^\pm)}$ with quasimomenta inversion symmetric about ${\bm M}$, i.e., ${\mathcal{P} L^2_{{\bm D}^+}(\R^2/\Lambda) = L^2_{{\bm D}^-}(\R^2/\Lambda)}$. The following symmetry argument yields relationships between the parameters defined above.

\begin{proposition}
\label{prop:gam-sym}
Assume ${\Phi^{{\bm D}^-}_1\! = \mathcal{P}[\Phi^{{\bm D}^+}_1]}$, ${\Phi^{{\bm D}^-}_2\! = \mathcal{P}[\Phi^{{\bm D}^+}_2]}$. Then
\begin{equation}
{\bm \gamma}^{{\bm D}^-}_\ell = -{\bm \gamma}^{{\bm D}^+}_\ell , \quad \ell = 0, \, 1, \, 2 .
\end{equation}
\end{proposition}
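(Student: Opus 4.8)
The plan is to reduce all three identities to a single symmetry computation, exploiting that the parity operator $\mathcal{P}$ is unitary and that the gradient is odd under spatial inversion.

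First I would record three elementary properties of $\mathcal{P}$ on the fiber spaces $L^2_{{\bm D}^\pm}(\R^2/\Lambda)$: (i) $\mathcal{P}^2 = I$ and $\mathcal{P}$ is unitary, so that $\langle \mathcal{P} f, \, \mathcal{P} g\rangle = \langle f, \, g\rangle$ (the change of variables ${\bm x}\mapsto -{\bm x}$ has unit Jacobian and preserves a fundamental domain, and the integrand $\overline{f}g$ is $\Lambda$-periodic for $f,g$ in the same fiber); (ii) differentiating $\mathcal{P}[f]({\bm x}) = f(-{\bm x})$ gives $\nabla\mathcal{P} = -\mathcal{P}\nabla$, i.e. $\nabla$ anticommutes with $\mathcal{P}$; and (iii) the constant, real, symmetric matrix $(T^\mathsf{T} T)^{-1}$ acts only on the vector (component) index of $-2i\nabla$ and hence commutes with $\mathcal{P}$. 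I would also note that $\mathcal{P}$ maps $L^2_{{\bm D}^+}$ onto $L^2_{{\bm D}^-}$, so the hypotheses $\Phi^{{\bm D}^-}_j = \mathcal{P}[\Phi^{{\bm D}^+}_j]$ are consistent with the fiber structure and the differential operator preserves the fiber.

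Next I would introduce the common vector-valued quantity ${\bm m}^{\bm D}_{jk} \equiv \langle \Phi^{\bm D}_j, \, (T^\mathsf{T} T)^{-1}(-2i\nabla)\Phi^{\bm D}_k\rangle \smallin \C^2$, in terms of which $\eqref{eq:def-gam-par}$ reads ${\bm \gamma}^{\bm D}_0 = {\bm m}^{\bm D}_{11}$, ${\bm \gamma}^{\bm D}_1 = {\rm Re}\,{\bm m}^{\bm D}_{12}$, and ${\bm \gamma}^{\bm D}_2 = -{\rm Im}\,{\bm m}^{\bm D}_{12}$. Substituting $\Phi^{{\bm D}^-}_j = \mathcal{P}[\Phi^{{\bm D}^+}_j]$ and $\Phi^{{\bm D}^-}_k = \mathcal{P}[\Phi^{{\bm D}^+}_k]$ into ${\bm m}^{{\bm D}^-}_{jk}$ and commuting $\mathcal{P}$ to the left using (ii) and (iii) — the single factor of $\nabla$ contributes one sign $-1$ while the constant matrix passes through untouched — and then cancelling the surviving pair of $\mathcal{P}$'s via (i), I expect to obtain the single identity ${\bm m}^{{\bm D}^-}_{jk} = -{\bm m}^{{\bm D}^+}_{jk}$ for all $j,k$.

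Finally, taking real and imaginary parts of this identity immediately yields ${\bm \gamma}^{{\bm D}^-}_\ell = -{\bm \gamma}^{{\bm D}^+}_\ell$ for $\ell = 0,1,2$; here ${\bm m}^{\bm D}_{11}$ is automatically real since $-i\nabla$ is self-adjoint and $(T^\mathsf{T} T)^{-1}$ is real symmetric, consistent with ${\bm \gamma}^{\bm D}_0$ being defined without a real part. There is no serious obstacle in this argument: the only care required is bookkeeping the single sign flip produced by the gradient and confirming that the constant matrix genuinely commutes with $\mathcal{P}$ (it acts on a different index). This parallels the symmetry manipulations already used in Proposition \ref{prop:quad-breakC} and Proposition \ref{prop:dir-breakC}.
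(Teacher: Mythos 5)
Your argument is correct and is essentially the paper's own proof: both use unitarity of $\mathcal{P}$, the anticommutation $\nabla \circ \mathcal{P} = -\mathcal{P} \circ \nabla$, and the fact that the constant matrix $(T^\mathsf{T} T)^{-1}$ commutes with $\mathcal{P}$ to show the matrix elements flip sign, then read off the result from the definitions \eqref{eq:def-gam-par}. Your added remarks on the fiber mapping and the reality of the diagonal entry are fine but not needed beyond what the paper records.
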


\begin{proof}
Since $\mathcal{P}$ is unitary, it follows that
\begin{align}
\langle \Phi^{{\bm D}^-}_j, \, (T^\mathsf{T} T)^{-1} (-2i \nabla) \Phi^{{\bm D}^-}_k \rangle & = \langle \mathcal{P}[\Phi^{{\bm D}^+}_j], \, \mathcal{P}[(T^\mathsf{T} T)^{-1} (-2i \nabla) \Phi^{{\bm D}^-}_k] \rangle \\
& = - \langle \mathcal{P}[\Phi^{{\bm D}^-}_j], \, (T^\mathsf{T} T)^{-1} (-2i \nabla) \mathcal{P}[\Phi^{{\bm D}^-}_k] \rangle \nonumber \\
& = - \langle \Phi^{{\bm D}^+}_j, \, (T^\mathsf{T} T)^{-1} (-2i \nabla) \Phi^{{\bm D}^+}_k \rangle , \nonumber
\end{align}
for any $j$, ${k = 1}$, $2$. The result then follows immediately from the definitions \eqref{eq:def-gam-par}.
\end{proof}

\smallskip

\section{Proofs of Propositions in Sections \ref{sec:multi-sql} and \ref{sec:multi-dfm}}
\label{apx:pf_multi}

\setcounter{equation}{0}
\setcounter{figure}{0}

In Sections \ref{sec:multi-sql} and \ref{sec:multi-dfm}, we construct approximate eigenpair solutions to the edge state eigenvalue problems for ${H_{\rm edge} = H^\delta}$ \eqref{eq:sql-edge-evp} and ${H_{\rm edge} = T_* H^\delta}$ \eqref{eq:dfm-edge-evp} via multiple-scale analysis. In both cases, our eigenpair ansatz consists of expansions in the small parameter ${\delta > 0}$, which generates a hierarchy of equations, each associated with a power of $\delta$, to be solved in order. In this appendix, we prove equivalent solvability conditions for each of these equations.

\subsection{Proof of Propositions in Section \ref{sec:multi-sql}.}
\label{apx:pf_multi-sql}

The order $\delta^1$ equation \eqref{eq:multi-sql-1} is solvable if and only if the inhomogeneous term is orthogonal to the subspace ${\C \Phi^{\bm M}_1 \oplus \C \Phi^{\bm M}_2}$. Proposition \ref{prop:multi-sql-1-sol} establishes an equivalent solvability condition. \\

\noindent {\it Proof of Proposition \ref{prop:multi-sql-1-sol}.} For ${j = 1}$, $2$,
\begin{align}
0 & = \langle \Phi^{\bm M}_j, \, (E^{(1)}(\kappa) + (-i \partial_X \boldsymbol{\mathfrak{K}}_2 + \kappa \boldsymbol{\mathfrak{K}}_1) \cdot 2i \nabla) \psi^{(0)}(\cdot, X; \kappa) \rangle \\
& = \langle \Phi^{\bm M}_j, \, (E^{(1)}(\kappa) + (-i \partial_X \boldsymbol{\mathfrak{K}}_2 + \kappa \boldsymbol{\mathfrak{K}}_1) \cdot 2i \nabla) a^{(0)}_k(X; \kappa) \Phi^{\bm M}_k \rangle \nonumber \\
& = E^{(1)}(\kappa) \langle \Phi^{\bm M}_j, \, \Phi^{\bm M}_k \rangle a^{(0)}_k(X; \kappa) - {\bm P}_X(\kappa) \cdot \langle \Phi^{\bm M}_j, \, -2i \nabla \Phi^{\bm M}_k \rangle a^{(0)}_k(X; \kappa) , \nonumber
\end{align}
where summation over repeated indices is implied. By Proposition \ref{prop:quad-1-zero}, for $j$, ${k = 1}$, $2$,
\begin{equation}
{\bm P}_X(\kappa) \cdot \langle \Phi^{\bm M}_j, \, -2i \nabla \Phi^{\bm M}_k \rangle = 0 .
\end{equation}
Therefore, for ${a^{(0)}_j(X; \kappa) \neq 0}$, we must have ${E^{(1)}(\kappa) = 0}$. \qed \\

The order $\delta^2$ equation \eqref{eq:multi-sql-1} is again solvable if and only if the inhomogeneous term is orthogonal to ${\C \Phi^{\bm M}_1 \oplus \C \Phi^{\bm M}_2}$. Proposition \ref{prop:multi-sql-2-sol} establishes an equivalent solvability condition. \\

\noindent {\it Proof of Proposition \ref{prop:multi-sql-2-sol}.} For ${j = 1}$, $2$, 
\begin{align}
0 & = \langle \Phi^{\bm M}_j, \, (E^{(1)}(\kappa) + (-i \partial_X \boldsymbol{\mathfrak{K}}_2 + \kappa \boldsymbol{\mathfrak{K}}_1) \cdot 2i \nabla) \psi^{(1)}(\cdot, X; \kappa) \\ 
& \qquad + (E^{(2)}(\kappa) - (-i \partial_X \boldsymbol{\mathfrak{K}}_2 + \kappa \boldsymbol{\mathfrak{K}}_1)^2 - \nabla \cdot \chi(X) A \sigma_2 \nabla) \psi^{(0)}(\cdot, X; \kappa) \rangle \nonumber \\
& = \langle \Phi^{\bm M}_j, \, (E^{(1)}(\kappa) + (-i \partial_X \boldsymbol{\mathfrak{K}}_2 + \kappa \boldsymbol{\mathfrak{K}}_1) \cdot 2i \nabla) (\mathscr{R}(E_\star) (-i \partial_X \boldsymbol{\mathfrak{K}}_2 + \kappa \boldsymbol{\mathfrak{K}}_1) \cdot 2i \nabla a^{(0)}_k(X; \kappa) \Phi^{\bm M}_k + a^{(1)}_k(X; \kappa) \Phi^{\bm M}_k) \nonumber \\
& \qquad + (E^{(2)}(\kappa) - (-i \partial_X \boldsymbol{\mathfrak{K}}_2 + \kappa \boldsymbol{\mathfrak{K}}_1)^2 - \nabla \cdot \chi(X) A \sigma_2 \nabla) a^{(0)}_k(X; \kappa) \Phi^{\bm M}_k \rangle \nonumber \\
& = {\bm P}_X(\kappa) \cdot \langle -2i \nabla \Phi^{\bm M}_j, \, \mathscr{R}(E_\star) (-2i \nabla) \Phi^{\bm M}_k \rangle {\bm P}_X(\kappa) a^{(0)}_k(X; \kappa) - {\bm P}_X(\kappa) \cdot \langle \Phi^{\bm M}_j, \, -2i \nabla \Phi^{\bm M}_k \rangle a^{(1)}_k(X; \kappa) \nonumber \\
& \qquad + E^{(2)}(\kappa) \langle \Phi^{\bm M}_j, \, \Phi^{\bm M}_k \rangle a^{(0)}_k(X; \kappa) - {\bm P}_X(\kappa)^2 \langle \Phi^{\bm M}_j, \, \Phi^{\bm M}_k \rangle a^{(0)}_k(X; \kappa) - \langle \Phi^{\bm M}_j, \, \nabla \cdot A \sigma_2 \nabla \Phi^{\bm M}_k \rangle \chi(X) a^{(0)}_k(X; \kappa) . \nonumber
\end{align}
By Proposition \ref{prop:quad-2}, for $j$, ${k = 1}$, $2$,
\begin{align}
& {\bm P}_X(\kappa) \cdot \langle -2i \nabla \Phi^{\bm M}_j, \, \mathscr{R}(E_\star) (-2i \nabla) \Phi^{\bm M}_k \rangle {\bm P}_X(\kappa) \\
& \qquad = \alpha^{\bm M}_0 {\bm P}_X(\kappa)^2 I_{j, k} + \alpha^{\bm M}_1 ({\bm P}_X(\kappa) \cdot \sigma_1 {\bm P}_X(\kappa)) (\sigma_1)_{j, k} + \alpha^{\bm M}_2({\bm P}_X(\kappa) \cdot \sigma_3 {\bm P}_X(\kappa)) (\sigma_2)_{j, k} . \nonumber
\end{align}
Further, by Proposition \ref{prop:quad-1-zero},
\begin{equation}
{\bm P}_X(\kappa) \cdot \langle \Phi^{\bm M}_j, \, 2i \nabla \Phi^{\bm M}_k \rangle = 0 .
\end{equation}
Finally, applying Proposition \ref{prop:quad-breakC}, we have
\begin{equation}
\langle \Phi^{\bm M}_j, \, \nabla \cdot A \sigma_2 \nabla \Phi^{\bm M}_k \rangle = \vartheta^{\bm M} (\sigma_3)_{j, k} .
\end{equation}
Hence,
\begin{align}
& E^{(2)} a^{(0)}_j(X; \kappa) = ((1 - \alpha^{\bm M}_0) {\bm P}_X(\kappa)^2 I_{j, k} - \alpha^{\bm M}_1 ({\bm P}_X(\kappa) \cdot \sigma_1 {\bm P}_X(\kappa)) (\sigma_1)_{j, k} \\
& \qquad - \alpha^{\bm M}_2({\bm P}_X(\kappa) \cdot \sigma_3 {\bm P}_X(\kappa)) (\sigma_2)_{j, k} + \vartheta^{\bm M} \chi(X) (\sigma_3)_{j, k}) a^{(0)}_k(X; \kappa) , \nonumber
\end{align}
which is the eigenvalue problem \eqref{eq:sql-edge-eff-evp}. \qed

\subsection{Proofs of Propositions in Section \ref{sec:multi-dfm}.}
\label{apx:pf_multi-dfm}

The order $\delta^1$ equation \eqref{eq:multi-dfm-1} is solvable if and only if the inhomogeneous term is orthogonal to the kernel of $T_* H - E_*$, which is ${\C \Phi^{{\bm D}^+}_1 \oplus \C \Phi^{{\bm D}^+}_2}$. Proposition \ref{prop:multi-dfm-1-sol} establishes an equivalent solvability condition. \\

\noindent {\it Proof of Proposition \ref{prop:multi-dfm-1-sol}.} For ${j = 1}$, $2$, 
\begin{align}
0 & = \langle \Phi^{{\bm D}^+}_j, \, (E^{(1)}(\kappa) + (-i \partial_X \boldsymbol{\mathfrak{K}}_2 + \kappa \boldsymbol{\mathfrak{K}}_1) \cdot (T^\mathsf{T} T)^{-1} (2i \nabla) - \nabla \cdot \chi(X) \det(T^{-1}) A \sigma_2 \nabla) \psi^{(0)}(\cdot, X; \kappa) \rangle \\
& = \langle \Phi^{{\bm D}^+}_j, \, (E^{(1)}(\kappa) + (-i \partial_X \boldsymbol{\mathfrak{K}}_2 + \kappa \boldsymbol{\mathfrak{K}}_1) \cdot (T^\mathsf{T} T)^{-1} (-2i \nabla) - \nabla \cdot \chi(X) \det(T^{-1}) A \sigma_2 \nabla) b^{(0)}_k(X; \kappa) \Phi^{{\bm D}^+}_k \rangle \nonumber \\
& = E^{(1)}(\kappa) \langle \Phi^{{\bm D}^+}_j, \, \Phi^{{\bm D}^+}_k \rangle b^{(0)}_k(X; \kappa) - {\bm P}_X(\kappa) \cdot \langle \Phi^{{\bm D}^+}_j, \, (T^\mathsf{T} T)^{-1} (-2i \nabla) \Phi^{{\bm D}^+}_k \rangle b^{(0)}_k(X; \kappa) \nonumber \\
& \qquad - \det(T^{-1}) \langle \Phi^{{\bm D}^+}_j, \, \nabla \cdot A \sigma_2 \nabla \Phi^{{\bm D}^+}_k \rangle \chi(X) b^{(0)}_k(X; \kappa).\nonumber 
\end{align}
For $j$, ${k = 1}$, $2$, by Proposition \ref{prop:gam-sym},
\begin{align}
& {\bm P}_X(\kappa) \cdot \langle \Phi^{{\bm D}^+}_j, \, (T^\mathsf{T} T)^{-1} (-2i \nabla) \Phi^{{\bm D}^+}_k \rangle \\
& \qquad = ({\bm P}_X(\kappa) \cdot {\bm \gamma}^{{\bm D}^+}_0) I_{j, k} + ({\bm P}_X(\kappa) \cdot {\bm \gamma}^{{\bm D}^+}_1) (\sigma_1)_{j, k} + ({\bm P}_X(\kappa) \cdot {\bm \gamma}^{{\bm D}^+}_2) (\sigma_2)_{j, k} . \nonumber
\end{align}
Further, by Proposition \ref{prop:dir-breakC}, we have
\begin{equation}
\det(T^{-1}) \langle \Phi^{{\bm D}^+}_j, \, \nabla \cdot A \sigma_2 \nabla \Phi^{{\bm D}^+}_k \rangle = \vartheta^{{\bm D}^+} (\sigma_3)_{j, k} .
\end{equation}
Hence, 
\begin{align}
E^{(1)}(\kappa) b^{(0)}_j(X; \kappa) & = (({\bm P}_X(\kappa) \cdot {\bm \gamma}^{{\bm D}^+}_0) I_{j, k} + ({\bm P}_X(\kappa) \cdot {\bm \gamma}^{{\bm D}^+}_1) (\sigma_1)_{j, k} + ({\bm P}_X(\kappa) \cdot {\bm \gamma}^{{\bm D}^+}_2) (\sigma_2)_{j, k} \\
& \qquad + \vartheta^{{\bm D}^+} \chi(X) (\sigma_3)_{j, k}) b^{(0)}_k(X; \kappa) , \nonumber
\end{align}
which is the eigenvalue problem \eqref{eq:dfm-edge-eff-evp}. \qed

\smallskip

\section{Essential spectrum of $\mathfrak{S}(\kappa)$; Proof of Proposition \ref{prop:spec-sql-ess}}
\label{apx:spec-sql-ess}

\setcounter{equation}{0}
\setcounter{figure}{0}

In this appendix, we prove Proposition \ref{prop:spec-sql-ess}, which characterizes the essential spectrum of $\mathfrak{S}(\kappa)$ for ${\kappa \smallin \R}$. \\

\noindent {\it Proof of Proposition \ref{prop:spec-sql-ess}.} Recall the matrix Schr\"{o}dinger operator $\mathfrak{S}(\kappa)$ \eqref{eq:sql-edge-eff_2}:
\begin{equation}
\label{eq:sql-edge-eff_2_supp}
\mathfrak{S}(\kappa) = (1 - \alpha_0) (P_X^2 + \kappa^2) I + \alpha_1 (2 P_X \kappa) \sigma_1 + \alpha_2 (-P_X^2 + \kappa^2) \sigma_2 + \vartheta \chi(X) \sigma_3 .
\end{equation}
For $\chi(X)$ a domain wall function (i.e., ${\chi(X) \to \pm 1}$ sufficiently rapidly as ${X \to \pm \infty}$), its essential spectrum is determined by the spectrum of the constant-coefficient operators
\begin{equation}
\label{eq:sql-edge-eff_2_mod_supp}
\mathfrak{S}_\pm(\kappa) = (1 - \alpha_0) (P_X^2 + \kappa^2) I + \alpha_1 (2 P_X \kappa) \sigma_1 + \alpha_2 (-P_X^2 + \kappa^2) \sigma_2 \pm \vartheta \sigma_3 .
\end{equation}
(Note, for ${\chi(X) \to +1}$ as ${X \to \pm \infty}$, only the spectrum of $\mathfrak{S}_+(\kappa)$ is relevant.) Their Fourier symbols are
\begin{equation}
\label{eq:sql-edge-eff_2_FT_supp}
\widehat{\mathfrak{S}}_\pm(\xi; \kappa) = (1 - \alpha_0) (\xi^2 + \kappa^2) I + \alpha_1 (2 \xi \kappa) \sigma_1 + \alpha_2 (-\xi^2 + \kappa^2) \sigma_2 \pm \vartheta \sigma_3 .
\end{equation}
For each fixed ${\kappa \smallin \R}$, both matrices $\widehat{\mathfrak{S}}_\pm(\xi; \kappa)$ have the same eigenvalues ${\xi \mapsto \Omega_\pm(\xi; \kappa)}$, given by
\begin{equation}
\Omega_\pm(\xi; \kappa) \equiv (1 - \alpha_0) (\xi^2 + \kappa^2) \pm \sqrt{\alpha_1^2 (2 \xi \kappa)^2 + \alpha_2^2 (-\xi^2 + \kappa^2)^2 + \vartheta^2} .
\end{equation}
Hence, the essential spectrum of $\mathfrak{S}(\kappa)$ is the union of the images of ${\xi \mapsto \Omega_\pm(\xi; \kappa)}$:
\begin{equation}
{\rm spec}_{\rm ess}(\mathfrak{S}(\kappa)) = \Omega_-(\R; \kappa) \cup \Omega_+(\R; \kappa) .
\end{equation}
By continuity, $\Omega_-(\R; \kappa)$ and $\Omega_+(\R; \kappa)$ are intervals. 

Now take ${\alpha_0 = 1}$, ${\vartheta \neq 0}$. Fix ${\kappa \smallin \R}$, and observe that, as ${|\xi| \to +\infty}$,
\begin{equation}
\label{eq:sql-omega-asym}
\Omega_\pm(\xi; \kappa) \to \pm |\alpha_2| \xi^2 .
\end{equation}
Since ${\alpha_2 \neq 0}$, the images of the functions in \eqref{eq:sql-omega-asym} yield the unbounded components of ${\rm spec}_{\rm ess}(\mathfrak{S}(\kappa))$ \eqref{eqn:specwindows}. Further, by \eqref{eq:sql-omega-asym}, $\Omega_-(\xi; \kappa)$ is bounded above and $\Omega_+(\xi; \kappa)$ is bounded below. We therefore define
\begin{equation}
\eta_-(\kappa) \equiv \sup_{\xi \smallin \R} \, \Omega_-(\xi; \kappa) \quad \text{and} \quad \eta_+(\kappa) \equiv \inf_{\xi \smallin \R} \, \Omega_+(\xi; \kappa) .
\end{equation}
Since ${\Omega_-(\xi; \kappa) \leq -|\vartheta|}$, ${\Omega_+(\xi; \kappa) \geq |\vartheta|}$, and ${\vartheta \neq 0}$, we have finally
\begin{equation}
\min_{\kappa \smallin \R} \, \eta_+(\kappa) - \max_{\kappa \smallin \R} \, \eta_-(\kappa) \geq 2 |\vartheta| > 0 .
\end{equation}
The proof is complete. \qed

\smallskip

\section{Proofs of Propositions and Theorems in Section \ref{sec:spec-sql-disc}}
\label{apx:sql-edge-eff-disc}

\setcounter{equation}{0}
\setcounter{figure}{0} 

\subsection{No zero energy bound states; Proof of Proposition \ref{prop:sql-eff-no-zero}}
\label{apx:sql-eff-no-zero}

{\it Proof of Proposition \ref{prop:sql-eff-no-zero}.} We argue by contradiction: Suppose that, for some nontrivial ${\psi \smallin H^2(\R; \C^2)}$,
\begin{equation}
\mathfrak{S}(0) \psi(X) = (-\alpha_2 P^2_X \sigma_2 + \vartheta \chi(X) \sigma_3) \psi(X) = 0 .
\end{equation}
Left-multiplying this equation by $\sigma_2$ yields
\begin{equation}
(-\alpha_2 P^2_X I + i \vartheta \chi(X) \sigma_1) \psi(X) = 0 .
\end{equation}
Taking the $L^2(\R;\C^2)$ inner product with $\psi$, and integrating by parts, we have
\begin{equation}
\label{eq:en-id-0}
-\alpha_2 \langle P_X \psi, \, P_X \psi \rangle + i \vartheta \langle \psi, \, \chi \sigma_1 \psi \rangle = 0 .
\end{equation}
Since the multiplication operator $\chi \sigma_1$ is self-adjoint, the second term of \eqref{eq:en-id-0} is purely imaginary. Hence, taking the real part of \eqref{eq:en-id-0} yields ${\lVert P_X \psi \rVert^2 = 0}$, implying $\psi = 0$. \qed

\subsection{Condition for the existence of bound states; Proof of Proposition \ref{prop:sql-eff-exist}}
\label{apx:sql-eff-exist}

{\it Proof of Proposition \ref{prop:sql-eff-exist}.} Suppose there exists ${{\bm \psi} \smallin H^2(\R; \C^2)}$ such that ${\langle {\bm \psi}, \, \mathcal{L} {\bm \psi} \rangle < 0}$. By the spectral theorem for self-adjoint operators, it follows that ${{\rm spec}(\mathcal{L}) \cap (-\infty, \, 0)}$ is not empty. Since ${{\rm spec}_{\rm ess}(\mathcal{L}) = [0, \, +\infty)}$, it follows that $\mathcal{L}$ has at least one negative eigenvalue of finite multiplicity.

Let $-\mu^2_\star$ denote a negative eigenvalue of ${\mathcal{L} = \mathfrak{S}(0)^2 - \vartheta^2}$ (by the argument above, there is at least one). The corresponding eigenstate ${\bm \psi}_\star$ satisfies
\begin{equation}
\mathcal{L} {\bm \psi}_\star = (\mathfrak{S}(0)^2 - \vartheta^2) {\bm \psi}_\star = -\mu^2_\star {\bm \psi}_\star .
\end{equation}
Denote ${\Omega_\star \equiv \sqrt{\vartheta^2 - \mu^2_\star}}$, and observe that ${0 < \Omega_\star < |\vartheta|}$. Then
\begin{equation}
\mathfrak{S}(0)^2 {\bm \psi}_\star = (\vartheta^2 - \mu^2_\star) {\bm \psi}_\star = \Omega^2_\star {\bm \psi}_\star, \quad \text{implying} \quad (\mathfrak{S}(0) + \Omega_\star)(\mathfrak{S}(0) - \Omega_\star) {\bm \psi}_\star = 0 .
\end{equation}
Hence, either ${(\Omega_\star, {\bm \psi}_\star)}$ is an eigenpair of $\mathfrak{S}(0)$, or ${(-\Omega_\star, (\mathfrak{S}(0) - \Omega_\star) {\bm \psi}_\star)}$ is an eigenpair of $\mathfrak{S}(0)$.

Without loss of generality, suppose ${(\Omega_+, {\bm \psi}_{+1}) = (\Omega_\star, {\bm \psi}_\star)}$ is an eigenpair of $\mathfrak{S}(0)$. Then, since ${\mathfrak{S}(0) \sigma_1 =}$ ${-\sigma_1 \mathfrak{S}(0)}$, it follows that ${(\Omega_-, {\bm \psi}_{-1}) = (-\Omega_+, \sigma_1 {\bm \psi}_{+1})}$ is also an eigenpair of $\mathfrak{S}(0)$. We have thus produced two eigenvalues $\Omega_{\pm 1}$ satisfying
\begin{equation}
-|\vartheta| < \Omega_- = -\Omega_+ < 0 < \Omega_+ < |\vartheta| .
\end{equation}
The proof is complete. \qed

\subsection{At least two eigenvalues in the spectral gap of $\mathfrak{S}(0)$; Proofs of Theorems \ref{thm:sql-eff-eig_1} and \ref{thm:sql-eff-eig_2}}
\label{apx:sql-eff-eig}

By Proposition \ref{prop:sql-eff-exist}, $\mathfrak{S}(0)$ has at least two eigenvalues in its spectral gap provided there exists ${\psi \smallin H^2(\R; \C^2)}$ such that ${\langle \psi, \, \mathcal{L} \psi \rangle < 0}$, where ${\mathcal{L} = \mathfrak{S}(0)^2 -\vartheta^2}$.

The proofs of Theorems \ref{thm:sql-eff-eig_1} and \ref{thm:sql-eff-eig_2} make use of the following: For arbitrary ${{\bm \psi} \smallin H^2(\R; \C^2)}$,
\begin{equation}
\label{eq:psiLpsi}
\langle {\bm \psi}, \, \mathcal{L} {\bm \psi} \rangle_{L^2(\R; \, \C^2)} = \alpha_2^2 \int_\R |P^2_X \psi(X)|^2 \, {\rm d}X + \vartheta^2 \int_\R v(X) |\psi(X)|^2 \, {\rm d}X - \alpha_2 \vartheta \int_\R \overline{\psi(X)} \cdot \mathcal{A}(X) {\bm \psi}(X) \, {\rm d}X ,
\end{equation}
where 
$ v(X) = \chi(X)^2 - 1$ and $
    \mathcal{A}(X) = (P_X \partial_X\chi(X) + \partial_X\chi(X) P_X) \sigma_1$.
    Below, we'll evaluate $\langle {\bm \psi}, \, \mathcal{L} {\bm \psi} \rangle_{_{L^2(\R; \, \C^2)}}$ on the family of functions: $ {\bm \psi}_h(X) = \sqrt{h} {\bm \phi}(h X)$, where $h>0$, ${\bm \phi}\in \mathcal{S}(\mathbb R)$ and $\|{\bm \phi}\|_{L^2}=1$. 

\subsubsection{Proof of Theorem \ref{thm:sql-eff-eig_1}}
\label{apx:sql-eff-eig_1}

Here we assume $\int V = \int (\chi^2-1)<0$.  Then,
\begin{align}
\langle {\bm \psi}_h, \, \mathcal{L} {\bm \psi}_h \rangle_{L^2(\R; \C^2)} & = \alpha_2^2 h^4 \int_\R |P^2_Y {\bm \phi}(Y)|^2 \, {\rm d}Y + \vartheta^2 h \int_\R \big(\chi(X)^2 - 1\big) |{\bm \phi}(h X)|^2 \, {\rm d}X \nonumber \\
& \qquad - 2 \alpha_2 \vartheta h^2 \, {\rm Re} \biggl( \int_\R \partial_X\chi(X) \overline{{\bm \phi}(h X)} \cdot \sigma_1 \big(P_Y {\bm \phi}\big)(h X) \, {\rm d}X \biggr) . \nonumber \end{align}
The second term is dominant for $h$ small and we have 
\begin{align}
\langle {\bm \psi}_h, \, \mathcal{L} {\bm \psi}_h \rangle_{L^2(\R; \C^2)} = h \biggl( \vartheta^2 |\phi(0)|^2 \int_\R \big(\chi(X)^2 - 1\big) \, {\rm d}X + \mathcal{O}(h) \biggr) \ \ \text{as} \ \ h \to 0, 
\end{align}
which is negative for ${h > 0}$ and sufficiently small. The result now follows from  Proposition \ref{prop:sql-eff-exist}. \qed

\subsubsection{Proof of Theorem \ref{thm:sql-eff-eig_2}}
\label{apx:sql-eff-eig_2}

Here we study  $\mathfrak{S}^\varepsilon(0) = -\alpha_2 P_X^2 \sigma_2 + \vartheta \chi_\varepsilon(X) \sigma_3$, where
 $\chi_\varepsilon(X) \equiv \chi \big(X/\varepsilon\big)$. 
 The operator $\mathcal{L}^\varepsilon=\mathfrak{S}^\varepsilon(0)^2-\vartheta^2$ is given by \eqref{eq:def-calL} with the replacements: $v(X)\to v(X/\varepsilon) =\chi^2(X/\varepsilon)-1$ and $\partial_X\chi(X)\to \varepsilon^{-1}\chi^\prime(X/\varepsilon)$. 
Then,  
\begin{align}
\langle {\bm \psi}_h, \, \mathcal{L}^\varepsilon {\bm \psi}_h \rangle_{L^2(\R; \C^2)} & = \alpha_2^2 h^4 \int_\R |P^2_X {\bm \phi}(X)|^2 \, {\rm d}X + \vartheta^2 h \int_\R \big(\chi^2(X/\varepsilon)-1\big) |{\bm \phi}(h X)|^2 \, {\rm d}X \nonumber \\
& \qquad - 2 \alpha_2 \vartheta h^2 \, {\rm Re} \biggl( \int_\R \varepsilon^{-1}\chi^\prime(X/\varepsilon) \overline{{\bm \phi}(h X)} \cdot \sigma_1 \left(P_Y {\bm \phi}\right)(h X) \, {\rm d}X \biggr) . \nonumber
\end{align}

Note that for $g\in\mathcal{S}(\mathbb  R)$,  fixed $h\in(0,h_0)$ and $\varepsilon\in(0,\varepsilon_0)$ we have
\begin{align*}
   \int_\R \big(\chi^2(X/\varepsilon)-1\big) g(hX)\ dX\ & = \ \varepsilon\Big( g(0)\int_\R \big(\chi^2(Y)-1\big)dY  + \mathcal{O}(\varepsilon h)\Big), \\ 
   \int_\R \frac{1}{\varepsilon}\chi^\prime(X/\varepsilon)\ g(hX)\ dX\ &=\ 2\ g(0) + \mathcal{O}(\varepsilon h).
\end{align*}
Therefore, 
\begin{align}
\langle {\bm \psi}_h, \, \mathcal{L}^\varepsilon {\bm \psi}_h \rangle_{L^2(\R; \C^2)} & = \alpha_2^2 h^4 \int_\R |P^2_X {\bm \phi}(X)|^2 \, {\rm d}X + \vartheta^2 h \varepsilon\Big( |{\bm \phi}(0)|^2\int_\R \big(\chi^2(Y)-1\big)dY  + \mathcal{O}(\varepsilon h)\Big) \nonumber \\
& \qquad - 2 \alpha_2 \vartheta h^2 \, {\rm Re} \biggl( 2\ \overline{{\bm \phi}(0)} \cdot \sigma_1 \left(P_Y {\bm \phi}\right)(0)\ + \mathcal{O}(\varepsilon h) \biggr) . \nonumber
\end{align}

Now fix $\varepsilon\in(0,\varepsilon_0)$. This fixes a domain wall $\chi(X/\varepsilon)$. Choose $h=\varepsilon^{1/2}$, Then, 
\begin{equation}
\langle {\bm \psi}_h, \, \mathcal{L}^\varepsilon {\bm \psi}_h \rangle_{L^2(\R; \C^2)} = 
- 4 \alpha_2 \vartheta\ \varepsilon \, {\rm Re} \Big[  \overline{{\bm \phi}(0)} \cdot \sigma_1 \left(P_Y {\bm \phi}\right)(0)\Big]\ + \mathcal{O}(\varepsilon^{3/2}) .
\label{eq:en-phi}\end{equation}
  Suppose we can choose   ${\bm \phi}$ so that $\alpha_2 \vartheta\ {\rm Re} \Big[  \overline{{\bm \phi}(0)} \cdot \sigma_1 \left(P_Y {\bm \phi}\right)(0)\Big]>0$.
Then, for $\varepsilon_0$ sufficiently small, the expression \eqref{eq:en-phi} is strictly negative for $0<\varepsilon<\varepsilon_0$
and the result would follow from Proposition \ref{prop:sql-eff-exist}. 
 Let  ${\phi(X) = -i[i u(X), \, u(-X)]}^\mathsf{T}$, where $u(X)$ is real-valued, and $C^1$ and decaying sufficiently rapidly. Then, 
 $\alpha_2 \vartheta\ {\rm Re} \Big[  \overline{{\bm \phi}(0)} \cdot \sigma_1 \left(P_Y {\bm \phi}\right)(0)\Big] = 2\alpha_2 \vartheta\ u(0)u^\prime(0)$. By \eqref{itm:quad-dgn-5}, $\alpha_2\vartheta\ne0$. Hence, $\langle {\bm \psi}_h, \, \mathcal{L}^\varepsilon {\bm \psi}_h \rangle_{L^2(\R; \C^2)}$ can be made negative for $\varepsilon$ sufficiently small by choosing $u$ with $\sgn\big(u(0)u'(0)\big)=-\sgn\big(\alpha_2\vartheta\big)$.  \qed

\begin{remark}
We note that the structure ${\phi(X) = [i u(X), \, u(-X)]}^\mathsf{T}$, with ${u \smallin L^2(\R)}$ real-valued, is satisfied by the bound states in the exactly-solvable case; see Supplementary Material \ref{supp:sql-eff-exact}.
\end{remark}

\smallskip

\section{Essential spectra of $\cancel{\mathfrak{D}}^\pm(\kappa)$; Proof of Proposition \ref{prop:spec-dfm-gap}}
\label{apx:dfm-edge-eff-gap}

\setcounter{equation}{0}
\setcounter{figure}{0}

In this appendix, we prove Proposition \ref{prop:spec-dfm-gap}, which characterizes the essential spectra of $\cancel{\mathfrak{D}}^\pm(\kappa)$ for ${\kappa \smallin \R}$. \\

\noindent {\it Proof of Proposition \ref{prop:spec-dfm-gap}.} Recall the pair of Dirac operators $\cancel{\mathfrak{D}}^\pm(\kappa)$ \eqref{eq:dfm-edge-eff_2}:
\begin{equation}
\label{eq:dfm-edge-eff_2_supp}
\cancel{\mathfrak{D}}^\pm(\kappa) = \pm ((a \cdot \sigma) P_X + (b \cdot \sigma) \kappa) + c \chi(X) \sigma_3 .
\end{equation}
For simplicity, we focus on $\cancel{\mathfrak{D}}^+(\kappa)$; the analysis for $\cancel{\mathfrak{D}}^-(\kappa)$ is analogous. Following the methods in Appendix \ref{apx:spec-sql-ess}, the essential spectrum of $\cancel{\mathfrak{D}}^+(\kappa)$ is determined by the spectra of
\begin{equation}
\label{eq:dfm-edge-eff_2_mod_supp}
\cancel{\mathfrak{D}}^+_\pm(\kappa) = ((a \cdot \sigma) P_X + (b \cdot \sigma) \kappa) \pm c \chi(X) \sigma_3 ,
\end{equation}
whose Fourier symbols are
\begin{equation}
\label{eq:dfm-edge-eff_ft_2_supp}
\cancel{\widehat{\mathfrak{D}}}^+_\pm(\xi; \kappa) = ((a \cdot \sigma) \xi + (b \cdot \sigma) \kappa) \pm c \chi(X) \sigma_3 .
\end{equation}
For each fixed ${\kappa \smallin \R}$, both matrices $\cancel{\widehat{\mathfrak{D}}}^+_\pm(\xi; \kappa)$ have the same eigenvalues ${\xi \mapsto \Omega_\pm(\xi; \kappa)}$, given by
\begin{equation}
\label{eq:Dirac_eigs_supp}
\Omega_{\pm}(\xi; \kappa) = a_0 \xi + b_0 \kappa \pm \sqrt{(a_1 \xi + b_1 \kappa)^2 + (a_2 \xi + b_2 \kappa)^2 + c^2} .
\end{equation}
As in Appendix \ref{apx:spec-sql-ess}, we have
\begin{equation}
{\rm spec}_{\rm ess}(\cancel{\mathfrak{D}}^\pm(\kappa)) = \Omega_-(\R; \kappa) \cup \Omega_+(\R; \kappa) .
\end{equation}
We derive a necessary condition for disjoint intervals.

\begin{remark}
\label{rmk:spec-dfm-gap-nec}
{\rm (Necessary condition for a band gap.)}
Fix ${\kappa \smallin \R}$, and observe that, as ${|\xi| \to +\infty}$,
\begin{equation}
\Omega_\pm(\xi; \kappa) \to (a_0 \pm \sqrt{a_1^2 + a_2^2}) |\xi| .
\end{equation}
For $\Omega_\pm(\R; \kappa)$ to be disjoint, the coefficients of the absolute value functions on the right-hand side must be of opposite sign, and hence
\begin{equation}
{-a}_0^2 + a_1^2 + a_2^2 > 0 .
\end{equation}
The images of these absolute value functions account for the unbounded components of ${\rm spec}_{\rm ess}(\cancel{\mathfrak{D}}^+(\kappa))$ \eqref{eqn:specwindowsDirac}.
\end{remark}

\noindent Under the necessary condition of Remark \ref{rmk:spec-dfm-gap-nec}, $\Omega_-(\xi; \kappa)$ is bounded above and $\Omega_+(\xi; \kappa)$ is bounded below. We define
\begin{equation}
\eta_-(\kappa) \equiv \sup_{\xi \smallin \R} \, \Omega_-(\xi; \kappa) \quad \text{and} \quad \eta_+(\kappa) \equiv \inf_{\xi \smallin \R} \, \Omega_+(\xi; \kappa) .
\end{equation}
To show that the necessary condition is sufficient, we study the critical points of $\Omega_\pm(\xi; \kappa)$.

All critical points, of both $\Omega_+(\xi; \kappa)$ and $\Omega_-(\xi; \kappa)$, are among the solutions of ${\partial_\xi \Omega_+(\xi; \kappa) \cdot \partial_\xi \Omega_-(\xi; \kappa) = 0}$. This gives the quadratic equation
\begin{equation}
\xi^2 + \frac{2 (a_1 b_1 + a_2 b_2) \kappa}{a_1^2 + a_2^2} \xi + \frac{(a_1 b_1 + a_2 b_2)^2 \kappa^2 - a_0^2 ((b_1^2 + b_2^2) \kappa^2 + c^2)}{(a_1^2 + a_2^2)({-a}_0^2 + a_1^2 + a_2^2)} = 0 .
\end{equation}
The discriminant is non-negative, yielding real solutions
\begin{equation}
\xi = \xi_\pm(\kappa) \equiv - \frac{(a_1 b_1 + a_2 b_2) \kappa}{a_1^2 + a_2^2} \mp \frac{a_0}{a_1^2 + a_2^2} \sqrt{\frac{(a_1 b_2 - a_2 b_1)^2 \kappa^2 + (a_1^2 + a_2^2) c^2}{{-a}_0^2 + a_1^2 + a_2^2}} .
\end{equation}
We observe that ${\xi = \xi_+(\kappa)}$ is a local minimum of $\Omega_+(\xi; \kappa)$ and ${\xi = \xi_-(\kappa)}$ is a local maximum of $\Omega_-(\xi; \kappa)$. It follows that
\begin{align}
\eta_-(\kappa) & = \Omega_-(\xi_-(\kappa); \kappa) \\
& = \Bigl( - \frac{a_0 (a_1 b_1 + a_2 b_2)}{a_1^2 + a_2^2} + b_0 \Bigr) \kappa - \frac{\sqrt{({-a}_0^2 + a_1^2 + a_2^2)((a_1 b_2 - a_2 b_1)^2 \kappa^2 + (a_1^2 + a_2^2) c^2)}}{a_1^2 + a_2^2} \nonumber , \\
\eta_+(\kappa) & = \Omega_+(\xi_+(\kappa); \kappa) \\
& = \Bigl( - \frac{a_0 (a_1 b_1 + a_2 b_2)}{a_1^2 + a_2^2} + b_0 \Bigr) \kappa + \frac{\sqrt{({-a}_0^2 + a_1^2 + a_2^2)((a_1 b_2 - a_2 b_1)^2 \kappa^2 + (a_1^2 + a_2^2) c^2)}}{a_1^2 + a_2^2} \nonumber ,
\end{align}
and therefore
\begin{equation}
\eta_+(\kappa) - \eta_-(\kappa) = \frac{2 \sqrt{({-a}_0^2 + a_1^2 + a_2^2)((a_1 b_2 - a_2 b_1)^2 \kappa^2 + (a_1^2 + a_2^2) c^2)}}{a_1^2 + a_2^2} .
\end{equation}
The minimum of this function occurs at ${\kappa = 0}$, implying the existence of a band gap:
\begin{equation}
\inf_{\kappa \smallin \R} \bigl( \eta_+(\kappa) - \eta_-(\kappa) \bigr) = \eta_+(0) - \eta_-(0) = 2 |c| \sqrt{\frac{-a_0^2 + a_1^2 + a_2^2}{a_1^2 + a_2^2}} > 0 .
\end{equation}
The proof is complete. \qed

\newpage
\renewcommand\thesection{S\arabic{section}}

\begin{center}
{\bf Supplementary Material}
\end{center}

\setcounter{section}{0}
\setcounter{subsection}{0}

\section{Bound states of $\mathfrak{S}(0)$ in an exactly solvable case}
\label{supp:sql-eff-exact}

\setcounter{equation}{0}
\setcounter{figure}{0}

\allowdisplaybreaks

We study the bound states of $\mathfrak{S}(0)$ in an exactly-solvable case.

\begin{proposition}
\label{prop:bound}
Consider $\mathfrak{S}(0)$ for a vertical edge, with $\alpha_0$, $\alpha_2$, ${\vartheta = 1}$, and  domain wall function ${\chi(X) = {\rm sgn}(X)}$:
\begin{equation*}
\mathfrak{S}(0) = -P_X^2 \sigma_2 + {\rm sgn}(X) \sigma_3 .
\end{equation*}
Then:
\begin{enumerate}
\item The essential spectrum of ${\mathfrak{S}(0)}$ is ${\R \setminus (-1, \, 1)}$.
\item ${\mathfrak{S}(0)}$ has two simple eigenpairs ${(\Omega_{+1},{\bm u}_{+1}(X))}$ and ${(\Omega_{-1} = -\Omega_{+1}, {\bm u}_{-1}(X) = \sigma_1 {\bm u}_{+1}(X))}$, where
\begin{align*}
\Omega_{+1} & = 1/\sqrt{2} , \\
{\bm u}_{+1}(X) & =
\! \begin{bmatrix}
u_{+1, 1}(X) \\
u_{+1, 2}(X)
\end{bmatrix} \!
=
\begin{cases}
A
\! \begin{bmatrix}
i (1 + \sqrt{2})^2 e^{-\omega X} \cos(\omega X - 3\pi/8) \\
-(1 + \sqrt{2}) e^{-\omega X} \sin(\omega X - 3\pi/8)
\end{bmatrix} \!
, & X > 0 , \\[3ex]
A
\! \begin{bmatrix}
i (1 + \sqrt{2}) e^{\omega X} \cos(\omega X - \pi/8) \\
-(1 + \sqrt{2})^2 e^{\omega X} \sin(\omega X - \pi/8)
\end{bmatrix} \!
, & X < 0 .
\end{cases}
\end{align*}
Here, ${\omega = \sqrt[4]{2}/2}$ and $A$ is a normalization constant.
\end{enumerate}
\end{proposition}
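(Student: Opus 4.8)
The plan is to treat this as a piecewise-constant-coefficient boundary-value problem on the two half-lines $X>0$ and $X<0$, glued across the jump of $\chi(X)=\mathrm{sgn}(X)$ at the origin.

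First I would dispose of the essential spectrum. Since $\chi(X)^2=1$ for all $X\neq 0$ and $\chi(X)\to\pm1$, Proposition \ref{prop:spec-sql-ess} applies at $\kappa=0$ with $\alpha_0=\alpha_2=\vartheta=1$: the Fourier symbols of the two asymptotic operators $\mathfrak{S}_\pm = -P_X^2\sigma_2 \pm \sigma_3$ have eigenvalues $\pm\sqrt{\xi^4+1}$, whose ranges are $(-\infty,-1]\cup[1,+\infty)$. Hence $\mathrm{spec}_{\mathrm{ess}}(\mathfrak{S}(0)) = \R\setminus(-1,1)$, which establishes part (1) and identifies the gap $(-1,1)$ in which I seek bound states.

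Next, for $\Omega\in(-1,1)$ the eigenvalue equation $\mathfrak{S}(0)u=\Omega u$ becomes, on each half-line, the constant-coefficient system $\sigma_2\partial_X^2 u \pm \sigma_3 u = \Omega u$. Substituting $u(X)=v\,e^{\lambda X}$ with $v\in\C^2$ nontrivial, I compute $\det(\lambda^2\sigma_2 \pm \sigma_3 - \Omega I)=0$, which reduces to $\lambda^4=\Omega^2-1$. Since the right-hand side is negative on $(-1,1)$, the four roots are $\lambda=\mu\,e^{\pm i\pi/4},\,\mu\,e^{\pm i3\pi/4}$ with $\mu=(1-\Omega^2)^{1/4}$; the $L^2$-admissible (decaying) modes are the two with negative real part on $X>0$ and the two with positive real part on $X<0$. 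For each such $\lambda$ the matrix pencil fixes the null vector $v$, so the decaying solution is a two-parameter family on each side. I would then impose matching: because the principal part $\sigma_2\partial_X^2$ has smooth coefficients and $\mathrm{sgn}$ enters only as a bounded multiplication operator, membership in the domain $H^2(\R;\C^2)$ forces $u$ and $u'$ to be continuous across $X=0$, yielding four scalar conditions on the four coefficients, i.e.\ a homogeneous system $M(\Omega)c=0$. A nontrivial bound state exists exactly when $\det M(\Omega)=0$, and solving this quantization condition on $(0,1)$ is expected to single out $\Omega_{+1}=1/\sqrt{2}$ — note $\mu/\sqrt{2}=2^{-3/4}=\omega$ at this value, matching the stated decay rate. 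Back-substitution reconstructs the explicit eigenvector, which I would verify agrees with the displayed $u_{+1}$ after fixing the normalization $A$. Finally, since $\sigma_1$ anticommutes with both $\sigma_2$ and $\sigma_3$, one has $\sigma_1\mathfrak{S}(0)=-\mathfrak{S}(0)\sigma_1$, so $(\Omega_{-1},u_{-1})=(-\Omega_{+1},\sigma_1 u_{+1})$ is automatically the partner eigenpair, as in Proposition \ref{prop:sql-eff-exist}.

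The main obstacle will be the explicit evaluation of the $4\times4$ matching determinant and the verification that $\det M(\Omega)$ has no zeros in $(-1,1)$ other than $\pm1/\sqrt{2}$, so that the two eigenvalues are simple and no further gap eigenvalues appear. This requires careful bookkeeping of the complex decay rates and their null vectors, although the chiral symmetry $\Omega\mapsto-\Omega$ halves the work by letting me analyze only $\Omega\in(0,1)$.
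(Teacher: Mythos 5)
Your proposal is correct and follows essentially the same route as the paper's own proof: reduce to a constant-coefficient first-order system on each half-line, keep the two decaying modes on each side, impose continuity of $u$ and $u'$ at $X=0$, and locate the zeros of the resulting $4\times 4$ matching determinant (which the paper computes explicitly as $D(\Omega)=8(1-2\Omega^2)/(1-\Omega^2)^{3/2}$, so that $\Omega=\pm 1/\sqrt{2}$ are indeed the only gap eigenvalues), with the chiral symmetry $\sigma_1\mathfrak{S}(0)=-\mathfrak{S}(0)\sigma_1$ supplying the partner eigenpair. The only remaining work is the bookkeeping you already flag, namely evaluating $\det M(\Omega)$ and back-substituting to recover the displayed eigenvector.
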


\noindent Figure \ref{fig:sql-eff-exact} displays the bound states. \\

\noindent {\it Proof of Proposition \ref{prop:bound}.} The eigenvalue equation ${\mathfrak{S}(0) {\bm u} = \Omega {\bm u}}$, where ${\bm u} = [u_1, u_2]^\mathsf{T}$, is
\begin{equation*}
\begin{bmatrix}
{\rm sgn}(X) & i P_X^2 \\
-i P_X^2 & - {\rm sgn}(X)
\end{bmatrix} \!
\! \begin{bmatrix}
u_1(X) \\
u_2(X)
\end{bmatrix} \!
= 
\! \begin{bmatrix}
{\rm sgn}(X) & - i \partial^2_X \\
i \partial^2_X & - {\rm sgn}(X)
\end{bmatrix} \!
\! \begin{bmatrix}
u_1(X) \\
u_2(X)
\end{bmatrix} \!
= \Omega
\! \begin{bmatrix}
u_1(X) \\
u_2(X)
\end{bmatrix} \! .
\end{equation*}
This is equivalent to the system of first-order equations:
\begin{equation*}
\partial_X {\bm U}(X) =
\! \begin{bmatrix}
0 & 0 & 1 & 0 \\
0 & 0 & 0 & 1 \\
0 & -i ({\rm sgn}(X) + \Omega) & 0 & 0 \\
-i ({\rm sgn}(X) - \Omega) & 0 & 0 & 0
\end{bmatrix} \!
\! {\bm U}(X) , \quad {\bm U}(X) = 
\! \begin{bmatrix}
u_1(X) \\
u_2(X) \\
\partial_X u_1(X) \\
\partial_X u_2(X)
\end{bmatrix} \! .
\end{equation*}

We solve this system separately for ${X > 0}$ and ${X < 0}$: \\

\noindent 1. For ${X > 0}$:
\begin{equation*}
\partial_X {\bm U}(X) =
\! \begin{bmatrix}
0 & 0 & 1 & 0 \\
0 & 0 & 0 & 1 \\
0 & -i(1 + \Omega) & 0 & 0 \\
-i(1 - \Omega) & 0 & 0 & 0
\end{bmatrix} \!
{\bm U}(X) .
\end{equation*}
The right-hand side matrix has eigenvalues
\begin{equation*}
\lambda_{+, +} = e^{i \pi/4} \sqrt[4]{1 - \Omega^2} , \ \ \lambda_{+, -} = -i e^{i \pi/4} \sqrt[4]{1 - \Omega^2} , \ \ \lambda_{-, +} = i e^{i \pi/4} \sqrt[4]{1 - \Omega^2} , \ \ \lambda_{-, -} = -e^{i \pi/4} \sqrt[4]{1 - \Omega^2} .
\end{equation*}
(Note that ${1 - \Omega^2 > 0}$ for $\Omega$ in the spectral gap.) Let ${\bm U}^{(+)}_\pm$ denote the eigenvectors with eigenvalues $\lambda_{-, \pm}$. Then the general solution is
\begin{equation*}
{\bm U}(X) = C^{(+)}_+ {\bm U}^{(+)}_+ e^{\lambda_{-, +} X} + C^{(+)}_- {\bm U}^{(+)}_- e^{\lambda_{-, -} X}, \quad X > 0 ,
\end{equation*}
where $C^{(+)}_\pm$ are constants, to be determined. \\

\noindent 2. For ${X < 0}$:
\begin{equation*}
\partial_X {\bm U}(X) =
\! \begin{bmatrix}
0 & 0 & 1 & 0 \\
0 & 0 & 0 & 1 \\
0 & -i (-1 + \Omega) & 0 & 0 \\
-i (-1 - \Omega) & 0 & 0 & 0
\end{bmatrix} \!
{\bm U}(X) .
\end{equation*}
The right-hand side matrix has the same eigenvalues as when ${X > 0}$. Let ${\bm U}^{(-)}_\pm$ denote the eigenvectors with eigenvalues $\lambda_{+, \pm}$. The general solution is
\begin{equation*}
{\bm U}(X) = C^{(-)}_+ {\bm U}^{(-)}_+ e^{\lambda_{+, +} X} + C^{(-)}_- {\bm U}^{(-)}_- e^{\lambda_{+, -} X}, \quad X < 0 ,
\end{equation*}
where $C^{(-)}_\pm$ are constants, to be determined.

For ${\bm U}(X)$ to be continuous at ${X = 0}$:
\begin{equation*}
C^{(-)}_+ {\bm U}^{(-)}_+ + C^{(-)}_- {\bm U}^{(-)}_- = C^{(+)}_+ {\bm U}^{(+)}_+ + C^{(+)}_- {\bm U}^{(+)}_- .
\end{equation*}
This is a system of four linear equations for the four unknown constants $\{ C^{(-)}_\pm, C^{(+)}_\pm \}$:
\begin{equation*}
[{\bm U}^{(-)}_+, \, {\bm U}^{(-)}_-, \, -{\bm U}^{(+)}_+, \, -{\bm U}^{(+)}_-] \, {\bm C} = {\bm 0} ,
\end{equation*}
where ${{\bm C} = [C^{(-)}_+, C^{(-)}_-, C^{(+)}_+, C^{(+)}_-]^\mathsf{T}}$. The determinant of the left-hand side matrix is
\begin{equation*}
D(\Omega) = \frac{8 (1 - 2 \Omega^2)}{(1 - \Omega^2)^{3/2}} ,
\end{equation*}
which has zeros ${\Omega = \Omega_{\pm 1} \equiv \pm 1/\sqrt{2}}$; these are the eigenvalues of $\mathfrak{S}(0)$ in the spectral gap.

\begin{proposition}
Since ${\sigma_1 \mathfrak{S}(0) = - \mathfrak{S}(0) \sigma_1}$, it is sufficient to consider ${\Omega = \Omega_{+1}}$. 
\end{proposition}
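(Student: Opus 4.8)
The plan is to establish the stated chiral (anticommutation) symmetry of $\mathfrak{S}(0)$ and then read off the pairing of eigenpairs that it forces, so that the eigenvalue analysis need only be carried out for the positive gap eigenvalue. First I would verify the identity $\sigma_1 \mathfrak{S}(0) = -\mathfrak{S}(0) \sigma_1$ directly from the Pauli algebra. Since $\mathfrak{S}(0) = -P_X^2 \sigma_2 + {\rm sgn}(X) \sigma_3$ and $\sigma_1$ anticommutes with both $\sigma_2$ and $\sigma_3$ (that is, $\sigma_1 \sigma_2 = -\sigma_2 \sigma_1$ and $\sigma_1 \sigma_3 = -\sigma_3 \sigma_1$), while the scalar operators $P_X^2$ and the multiplication operator ${\rm sgn}(X)$ commute with the constant matrix $\sigma_1$, each of the two terms of $\mathfrak{S}(0)$ individually anticommutes with $\sigma_1$. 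Summing the two terms yields the claimed relation.

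Next I would exploit that $\sigma_1$ is a constant unitary on $L^2(\R; \C^2)$ which preserves the operator domain. Suppose $(\Omega, {\bm u})$ is an eigenpair, $\mathfrak{S}(0) {\bm u} = \Omega {\bm u}$ with ${\bm u} \neq 0$. Applying the anticommutation relation gives
\[
\mathfrak{S}(0) \, \sigma_1 {\bm u} = -\sigma_1 \mathfrak{S}(0) {\bm u} = -\Omega \, \sigma_1 {\bm u} ,
\]
and since $\sigma_1 {\bm u} \neq 0$, this exhibits $(-\Omega, \sigma_1 {\bm u})$ as a second eigenpair. Hence the point spectrum of $\mathfrak{S}(0)$ is symmetric under $\Omega \mapsto -\Omega$, with eigenfunctions of the reflected eigenvalues obtained by left-multiplication by $\sigma_1$.

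Applying this to the gap eigenvalue $\Omega_{+1} = 1/\sqrt{2}$ produced by the matching-determinant computation, the pair $(\Omega_{-1}, {\bm u}_{-1}) = (-\Omega_{+1}, \sigma_1 {\bm u}_{+1})$ is automatically an eigenpair, which is precisely the second eigenpair asserted in Proposition~\ref{prop:bound}. It therefore suffices to solve the eigenvalue problem for $\Omega = \Omega_{+1}$ and to recover the companion eigenpair by the symmetry, completing the reduction. I do not expect any genuine obstacle here: the argument is a standard chiral-symmetry pairing. The only point requiring a line of care is confirming that $\sigma_1$ maps the domain of $\mathfrak{S}(0)$ to itself and preserves the $L^2$ norm, so that $\sigma_1 {\bm u}_{+1}$ is a bona fide normalized bound state; this is immediate since $\sigma_1$ is a constant unitary acting only on the spinor components.
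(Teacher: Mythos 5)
Your proof is correct and follows essentially the same route the paper takes (the paper states this proposition without proof, but the identical chiral-symmetry pairing $(\Omega,\bm{u}) \mapsto (-\Omega,\sigma_1\bm{u})$ is exactly the argument it invokes, e.g.\ in the proof of Proposition~\ref{prop:sql-eff-exist}). Your verification of the anticommutation relation from the Pauli algebra and your observation that $\sigma_1$ is a constant unitary preserving the domain are precisely the points needed, so nothing is missing.
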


\noindent The kernel of the above matrix, with ${\Omega = \Omega_{+1}}$, is given by $\C {\bm C}_{+1}$, where
\begin{equation*}
{\bm C}_{+1} \equiv [-i (1 + \sqrt{2}), \, e^{i \pi/4} (1 + \sqrt{2}), \, -i e^{i \pi/4}, \, 1]^\mathsf{T} . \\
\end{equation*}
Hence, for ${\Omega = \Omega_{+1}}$,
\begin{align*}
{\bm U}(X) = 
\begin{cases}
C ( -i e^{i \pi/4} {\bm U}^{(+)}_+ e^{\lambda_{-, +} X} + {\bm U}^{(+)}_- e^{\lambda_{-, -} X}) , & X > 0 , \\[2ex]
C (1 + \sqrt{2}) (-i {\bm U}^{(-)}_+ e^{\lambda_{+, +} X} + e^{i \pi/4} {\bm U}^{(-)}_- e^{\lambda_{+, -} X}) , & X < 0 ,
\end{cases}
\end{align*}
where ${C \smallin \C \setminus \{ 0 \}}$ is a normalization constant, and ${\bm U}^{(+)}_\pm$, ${\bm U}^{(-)}_\pm$ are of the form
\begin{align*}
{\bm U}^{(+)}_+ & \equiv [e^{i \pi/4} a, \,   e^{i \pi/4} b, \, c, \, 1]^\mathsf{T} , \\
{\bm U}^{(+)}_- & \equiv [i e^{i \pi/4} a, \, -i e^{i \pi/4} b, \, -c, \, 1]^\mathsf{T} , \\
{\bm U}^{(-)}_+ & \equiv [-i e^{i \pi/4} \tilde{a}, \, i e^{i \pi/4} b, \, -\tilde{c}, \, 1]^\mathsf{T} , \\
{\bm U}^{(-)}_- & \equiv [-e^{i \pi/4} \tilde{a}, \, -e^{i \pi/4} b, \, \tilde{c}, \, 1]^\mathsf{T} ,
\end{align*}
with $a$, $b$, $c$, $\tilde{a}$, $\tilde{c}$, $ \smallin \R$ 
real, satisfying ${a = - (1 - \sqrt{2})^2 \tilde{a}}$,  $b = -\sqrt[4]{2} = -(1+\sqrt{2}) \tilde a = (\sqrt{2}-1) a$,  $c = 1+\sqrt{2}$, $\tilde c = 1 - \sqrt{2}$. 

The first two components of ${\bm U}(X)$ are
\begin{align*}
\begin{bmatrix}
u_1(X) \\
u_2(X)
\end{bmatrix} \!
& =
\begin{cases}
C \biggl( -i e^{i \pi/4}
\! \begin{bmatrix}
e^{i \pi/4} a \\
e^{i \pi/4} b
\end{bmatrix} \!
e^{\lambda_{-, +} X} +
\! \begin{bmatrix}
i e^{i \pi/4} a \\
-i e^{i \pi/4} b
\end{bmatrix} \!
e^{\lambda_{-, -} X} \biggr) , & X > 0 , \\[3ex]
C (1 + \sqrt{2}) \biggl( -i
\! \begin{bmatrix}
-i e^{i \pi/4} \tilde{a} \\
i e^{i \pi/4} b,
\end{bmatrix} \!
e^{\lambda_{+, +} X} + e^{i \pi/4} 
\! \begin{bmatrix}
-e^{i \pi/4} \tilde{a} \\
-e^{i \pi/4} b
\end{bmatrix} \!
e^{\lambda_{+, -} X} \biggr) , & X < 0 ,
\end{cases} \\
& = \begin{cases}
C \biggl(
\! \begin{bmatrix}
a \\
b
\end{bmatrix} \!
e^{\lambda_{-, +} X} + i e^{i \pi/4}
\! \begin{bmatrix}
a \\
-b
\end{bmatrix} \!
e^{\lambda_{-, -} X} \biggr) , & X > 0 , \\[3ex]
-C (1 + \sqrt{2}) \biggl( e^{i \pi/4}
\! \begin{bmatrix}
\tilde{a} \\
-b
\end{bmatrix} \!
e^{\lambda_{+, +} X} - i
\! \begin{bmatrix}
\tilde{a} \\
b
\end{bmatrix} \!
e^{\lambda_{+, -} X} \biggr) , & X < 0 .
\end{cases}
\end{align*}
Define ${\omega \equiv \sqrt[4]{2}/2}$ such that 
\begin{equation*}
\lambda_{+, +} = (1 + i) \omega, \ \ \lambda_{+, -} = (1 - i) \omega, \ \ \lambda_{-, +} = (-1 + i) \omega, \ \ \lambda_{-, -} = (-1 - i) \omega .
\end{equation*}
Then,
\begin{align*}
\begin{bmatrix}
u_1(X) \\
u_2(X)
\end{bmatrix} \!
& = \begin{cases}
C \biggl(
\! \begin{bmatrix}
a \\
b
\end{bmatrix} \!
e^{-\omega X} e^{i \omega X} + i e^{i \pi/4}
\! \begin{bmatrix}
a \\
-b
\end{bmatrix} \!
e^{-\omega X} e^{-i \omega X} \biggr) , & X > 0 , \\[3ex]
-C (1 + \sqrt{2}) \biggl( e^{i \pi/4}
\! \begin{bmatrix}
\tilde{a} \\
-b
\end{bmatrix} \!
e^{\omega X} e^{i \omega X} - i
\! \begin{bmatrix}
\tilde{a} \\
b
\end{bmatrix} \!
e^{\omega X} e^{-i \omega X} \biggr) , & X < 0 ,
\end{cases} \\
& = \begin{cases}
C e^{-\omega X}
\! \begin{bmatrix}
a (e^{i \omega X} + i e^{i \pi/4} e^{-i \omega X}) \\
b (e^{i \omega X} - i e^{i \pi/4} e^{-i \omega X})
\end{bmatrix} \!
, & X > 0 , \\[3ex]
-C (1 + \sqrt{2}) e^{\omega X} 
\! \begin{bmatrix}
\tilde{a} (e^{i \pi/4} e^{i \omega X} - i e^{-i \omega X}) \\
-b (e^{i \pi/4} e^{i \omega X} + i e^{-i \omega X})
\end{bmatrix} \!
, & X < 0 ,
\end{cases} \\
& = \begin{cases}
C e^{-\omega X}
\! \begin{bmatrix}
a (e^{i \omega X} + e^{i 3\pi/4} e^{-i \omega X}) \\
b (e^{i \omega X} - e^{i 3\pi/4} e^{-i \omega X})
\end{bmatrix} \!
, & X > 0 , \\[3ex]
-C (1 + \sqrt{2}) e^{\omega X} 
\! \begin{bmatrix}
\tilde{a} (e^{i \pi/4} e^{i \omega X} - i e^{-i \omega X}) \\
-b (e^{i \pi/4} e^{i \omega X} + i e^{-i \omega X})
\end{bmatrix} \!
, & X < 0 ,
\end{cases} \\
& = \begin{cases}
C e^{-\omega X}
\! \begin{bmatrix}
a e^{i 3\pi/8} (e^{-i 3\pi/8} e^{i \omega X} + e^{i 3\pi/8} e^{-i \omega X}) \\
b e^{i 3\pi/8} (e^{-i 3\pi/8} e^{i \omega X} - e^{i 3\pi/8} e^{-i \omega X})
\end{bmatrix} \!
, & X > 0 , \\[3ex]
-C (1 + \sqrt{2}) e^{\omega X} 
\! \begin{bmatrix}
\tilde{a} e^{i 3\pi/8} (e^{-i \pi/8} e^{i \omega X} - e^{i \pi/8} e^{-i \omega X}) \\
-b e^{i 3\pi/8} (e^{-i \pi/8} e^{i \omega X} - e^{i \pi/8} e^{-i \omega X})
\end{bmatrix} \!
, & X < 0 .
\end{cases} 
\end{align*}
Hence,
\begin{align*}
\begin{bmatrix}
u_1(X) \\
u_2(X)
\end{bmatrix} \!
& = \begin{cases}
- B e^{-\omega X}
\! \begin{bmatrix}
i a \cos(\omega X - 3\pi/8) \\
-b \sin(\omega X - 3\pi/8)
\end{bmatrix} \!
, & X > 0 , \\[3ex]
B (1 + \sqrt{2}) e^{\omega X}
\! \begin{bmatrix}
i \tilde{a} \cos(\omega X - \pi/8) \\
b \sin(\omega X - \pi/8)
\end{bmatrix} \!
, & X < 0 ,
\end{cases}
\end{align*}
where ${B \equiv 2i C e^{i 3\pi/8}}$ is a new normalization constant. Using ${a = - (1 - \sqrt{2})^2 \tilde{a}}$, ${b =  (1 + \sqrt{2}) \tilde{a}}$, and defining ${A \equiv B \tilde{a}}$ completes the proof. \qed

\begin{proposition}
\label{prop:bound-sym}
The bound state ${\bm u}_{+1}(X)$ is of the form ${[i v(X), \, v(-X)]^\mathsf{T}}$, where $v(X)$ is real-valued.
\end{proposition}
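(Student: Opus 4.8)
The plan is to deduce the claimed structure of $u_{+1}$ from two symmetries of the operator $\mathfrak{S}(0) = -\alpha_2 P_X^2 \sigma_2 + \vartheta\, {\rm sgn}(X)\, \sigma_3$ (here $\alpha_2 = \vartheta = 1$), rather than re-deriving it by hand from the closed-form expressions in Proposition \ref{prop:bound}. First I would introduce the reflection $\Pi\colon \psi(X) \mapsto \psi(-X)$ and the two operators $\mathcal{Q} \equiv \sigma_2 \Pi$ (linear) and $\mathcal{A} \equiv \sigma_3\, \mathcal{C}$ (antilinear, with $\mathcal{C}$ complex conjugation). The key algebraic step is to verify that both commute with $\mathfrak{S}(0)$: using that $P_X^2$ is even under $\Pi$ while ${\rm sgn}(X)$ is odd, together with $\sigma_2 \sigma_3 = -\sigma_3 \sigma_2$ and $\overline{\sigma_2} = -\sigma_2$, one finds $\Pi\, \mathfrak{S}(0)\, \Pi = -P_X^2 \sigma_2 - {\rm sgn}(X)\, \sigma_3$, whence $\mathcal{Q}\, \mathfrak{S}(0)\, \mathcal{Q} = \mathfrak{S}(0)$, and likewise $\mathcal{A}\, \mathfrak{S}(0) = \mathfrak{S}(0)\, \mathcal{A}$. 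Both are involutions ($\mathcal{Q}^2 = \mathcal{A}^2 = I$), and they commute with one another.

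Next I would exploit simplicity. Since $\Omega_{+1} = 1/\sqrt{2}$ is a \emph{simple} eigenvalue of $\mathfrak{S}(0)$ (Proposition \ref{prop:bound}), its one-dimensional eigenspace is invariant under the linear symmetry $\mathcal{Q}$, so $u_{+1}$ is an eigenvector of $\mathcal{Q}$ with eigenvalue $\pm 1$. Writing $\mathcal{Q} u_{+1} = \sigma_2 u_{+1}(-X) = \lambda u_{+1}(X)$ componentwise gives $u_{+1,1}(X) = \lambda\, i\, u_{+1,2}(-X)$; taking $\lambda = -1$ and setting $v(X) \equiv u_{+1,2}(-X)$ yields exactly $u_{+1}(X) = [\, i v(X),\, v(-X)\,]^{\mathsf T}$. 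Reality of $v$ comes from the antilinear symmetry: because the eigenspace is also $\mathcal{A}$-invariant and $\Omega_{+1}$ is real, $\mathcal{A} u_{+1} = \pm u_{+1}$, and the relation $\sigma_3 \overline{u_{+1}} = -u_{+1}$ forces $u_{+1,2}$, and hence $v$, to be real-valued.

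The one genuinely non-automatic point — and the main obstacle — is fixing the two signs, i.e.\ confirming that with the specific normalization of Proposition \ref{prop:bound} one has $\mathcal{Q} u_{+1} = -u_{+1}$ and $\mathcal{A} u_{+1} = -u_{+1}$ rather than the $+1$ alternatives (the $+1$ choices would instead give the wrong-sign structure $[\, i v(X),\, -v(-X)\,]^{\mathsf T}$, or make $v$ purely imaginary). These signs are phase-sensitive and are not determined by the symmetry argument alone, so I would pin them down by inspecting the explicit $u_{+1}$ of Proposition \ref{prop:bound}: reading off that $u_{+1,1}$ is purely imaginary while $u_{+1,2}$ is real settles $\mathcal{A} u_{+1} = -u_{+1}$ at once, and evaluating the relation $u_{+1,1}(X) = i\, u_{+1,2}(-X)$ at a single $X$ settles $\lambda = -1$.

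Finally, I would remark that the statement also admits a direct proof straight from Proposition \ref{prop:bound}: set $v(X) \equiv -i\, u_{+1,1}(X)$, observe that it is real on each half-line, and verify $u_{+1,2}(X) = v(-X)$ using the cofunction identities $-\sin(\omega X - 3\pi/8) = \cos(\omega X + \pi/8)$ for $X > 0$ and $-\sin(\omega X - \pi/8) = \cos(\omega X + 3\pi/8)$ for $X < 0$. This route is unconditional but less transparent; the symmetry argument explains \emph{why} the structure holds and reduces the remaining work to the two sign checks above.
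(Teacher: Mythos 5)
Your proposal is correct, and it takes a genuinely different route from the paper. The paper offers no argument for Proposition \ref{prop:bound-sym} beyond juxtaposing it with the explicit formulas of Proposition \ref{prop:bound}: the intended justification is precisely your ``direct proof'' at the end, namely setting ${v(X) = -i\,u_{+1,1}(X)}$ and checking ${u_{+1,2}(X) = v(-X)}$ on each half-line via the cofunction identities (which you state correctly). Your main route --- identifying the commuting involutions ${\mathcal{Q} = \sigma_2\Pi}$ and ${\mathcal{A} = \sigma_3\mathcal{C}}$ and invoking simplicity of $\Omega_{+1}$ --- is sound (both symmetry verifications check out, using ${\sigma_2\sigma_3\sigma_2 = -\sigma_3}$ and ${\overline{\sigma_2} = -\sigma_2}$ together with the oddness of ${\rm sgn}$), and it has the advantage of explaining why the structure holds and of extending verbatim to any odd domain wall $\chi$ with ${\alpha_0 = 1}$, which is exactly the structural fact exploited in the proof of Theorem \ref{thm:sql-eff-eig_2}. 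Two small caveats: first, your componentwise relation should read ${u_{+1,1}(X) = -i\lambda\, u_{+1,2}(-X)}$ (from ${-i\,u_{+1,2}(-X) = \lambda\, u_{+1,1}(X)}$ with ${\lambda = \pm 1}$), so your displayed formula carries a sign error; this is harmless since you defer the sign determination to the explicit solution, where one finds ${u_{+1,1}(X) = +i\,u_{+1,2}(-X)}$ and ${\mathcal{Q}u_{+1} = -u_{+1}}$. Second, you are right that the antilinear symmetry only fixes reality of $v$ up to a global phase; the proposition implicitly assumes the normalization constant $A$ of Proposition \ref{prop:bound} is chosen real, and your proof should say so explicitly.
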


\begin{remark}
{\rm (The ${\alpha_0 \neq 1}$ case.)}
Consider the operator $\mathfrak{S}(0)$, defined in \eqref{eq:sql-edge-eff_2}, with the domain wall function ${\chi(X) = {\rm sgn}(X)}$. Assume
\begin{equation}
-(1 - \alpha_0)^2 + \alpha_2^2 > 0 .
\end{equation}
Then, the essential spectrum of $\mathfrak{S}(0)$ is 
\begin{equation}
{\rm spec}_{\rm ess}(\mathfrak{S}(0)) = 
\begin{cases}
\begin{aligned}
& \R \setminus \Bigl( -|\vartheta|, \, |\vartheta| \dfrac{(1 - \alpha_0)|1 - \alpha_0| + \alpha_2^2}{|\alpha_2| \sqrt{-(1 - \alpha_0)^2 + \alpha_2^2}} \Bigr) , & -|\alpha_2| \leq 1 - \alpha_0 \leq 0 , \\
& \R \setminus \Bigl( |\vartheta| \dfrac{(1 - \alpha_0)|1 - \alpha_0| - \alpha_2^2}{|\alpha_2| \sqrt{-(1 - \alpha_0)^2 + \alpha_2^2}}, \, |\vartheta| \Bigr) , & 0 \leq 1 - \alpha_0 \leq |\alpha_2| .
\end{aligned}
\end{cases}
\end{equation}
Further, $\mathfrak{S}(0)$ contains two simple eigenvalues within the spectral gap. These are $|\vartheta|$ multiplied by two of the distinct roots of:
\begin{align*}
0 & = \frac{(1 - \alpha_0) ( -(1 -\alpha_0)^2 + \alpha_2^2) \Omega}{(-(1 - \alpha_0) + \alpha_2)^2 (1 - \Omega^2)} \\
& \hspace{.25cm} - \frac{( 2 \alpha_2^2 \Omega^2 - (-(1 - \alpha_0)^2 + \alpha_2^2)) \sqrt{(-(1 - \alpha_0)^2 + \alpha_2^2) (1 - \Omega^2)}}{(-(1 - \alpha_0) + \alpha_2)^2 (-(1 - \alpha_0)^2 + \alpha_2^2)) (1 - \Omega^2)} 
\end{align*}
that reside in the spectral gap.
\end{remark}

\smallskip

\section{Slopes of eigenvalue curves; Proof of Proposition \ref{prop:sql-eig-local}}
\label{apx:sql-eig-local}

\setcounter{equation}{0}
\setcounter{figure}{0}


{\it Proof of Proposition \ref{prop:sql-eig-local}.} We first, using standard perturbation theory, derive expressions for the slopes, then separately prove the two assertions. 

First, consider the bound state eigenpair ${(\Omega_+, \, {\bm \psi}_{+1})}$, ${\Omega_+ > 0}$ of $\mathfrak{S}(0)$. For $|\kappa|$ small, we seek nearby eigenpairs ${(\Omega_+(\kappa), \psi_+(\kappa))}$ of $\mathfrak{S}(\kappa)$ via the formal ansatz:
\begin{align}
\Omega_+(\kappa) & = \Omega_+ + \kappa \Omega^{(1)}_{+1} + O(\kappa^2) , \\
\psi_+(X; \kappa) & = \psi_+^{(0)}(X) + \kappa \psi_+^{(1)}(X) + O(\kappa^2) , \ \ \text{as} \ \ \kappa \to 0 .
\end{align}
Substituting this into the eigenvalue equation for $\mathfrak{S}(\kappa)$ yields a hierarchy of equations. The first two are:
\begin{align}
\kappa^0: \quad & (\mathfrak{S}(0) - \Omega_+) \psi^{(0)}_{+1} = 0 , \\
\kappa^1: \quad & (\mathfrak{S}(0) - \Omega_+) \psi^{(1)}_{+1} = (\Omega^{(1)}_{+1} - 2 \alpha_1 P_X \sigma_1) \psi_+^{(0)} .
\end{align}
The solution to the first equation is ${\psi^{(0)}_{+1} = \psi_+}$. The second equation is solvable if and only if
\begin{equation}
\langle \psi^{(0)}_{+1}, \, (\Omega^{(1)}_{+1} - 2 \alpha_1 P_X \sigma_1) \psi^{(0)}_{+1} \rangle = 0 ,
\end{equation}
which is possible if and only if
\begin{equation}
\Omega'_+(0) = \Omega^{(1)}_{+1} = 2 \alpha_1 \langle \psi_+, \, P_X \sigma_1 \psi_+ \rangle .
\end{equation}
An analogous calculation holds for the bound state eigenpair ${(\Omega_-, \, {\bm \psi}_{-1})}$, ${\Omega_- = -\Omega_+}$, and we deduce
\begin{equation}
\Omega'_-(0) = 2 \alpha_1 \langle \psi_-, \, P_X \sigma_1 \psi_- \rangle .
\end{equation}
We will now prove the asserted relations.

First, consider the scenario of part (\ref{itm:sql-eig-local_1}). In particular, since ${\chi(-X) = -\chi(X)}$, it follows that
\begin{equation}
\mathfrak{S}(0) \circ \sigma_3 \mathcal{P} = - \sigma_3 \mathcal{P} \circ \mathfrak{S}(0) ,
\end{equation}
which implies ${\psi_-(X) = \sigma_3 \mathcal{P}[\psi_+](X)}$. Hence,
\begin{align}
\Omega'_-(0) & = 2 \alpha_1 \langle \psi_-, \, P_X \sigma_1 \psi_- \rangle \\
& = 2 \alpha_1 \langle \sigma_3 \mathcal{P}[\psi_+], \, P_X \sigma_1 (\sigma_3 \mathcal{P}[\psi_+]) \rangle \nonumber \\
& = 2 \alpha_1 \langle \sigma_3 \mathcal{P}[\psi_+], \, (-1)^2 \sigma_3 \mathcal{P}[P_X \sigma_1 \psi_+] \rangle \nonumber \\
& = 2 \alpha_1 \langle \psi_+, \, P_X \sigma_1 \psi_+ \rangle = \Omega'_+(0) ,
\end{align}
where we used ${\sigma_1 \sigma_3 = - \sigma_3 \sigma_1}$ and ${P_X \circ \mathcal{P} = - \mathcal{P} \circ P_X}$, and that $\sigma_3 \mathcal{P}$ is unitary. This proves part (\ref{itm:sql-eig-local_1}).

Now, consider the scenario of part (\ref{itm:sql-eig-local_2}). Since ${\chi(-X) = \chi(X)}$, it follows that 
\begin{equation}
\mathfrak{S}(0) \circ \mathcal{P} = \mathcal{P} \circ \mathfrak{S}(0) ,
\end{equation}
which implies ${\psi_{\pm 1}(X) = \mathcal{P}[\psi_{\pm 1}](X)}$. Thus
\begin{align}
\Omega'_{\pm 1}(0) & = 2 \alpha_1 \langle \psi_{\pm 1}, \, P_X \sigma_1 \psi_{\pm 1} \rangle \\
& = 2 \alpha_1 \langle \mathcal{P}[\psi_{\pm 1}], \, P_X \sigma_1 (\mathcal{P}[\psi_{\pm 1}]) \rangle \nonumber \\
& = 2 \alpha_1 \langle \sigma_3 \mathcal{P}[\psi_+], \, (-1) \mathcal{P}[P_X \sigma_1 \psi_{\pm 1}] \rangle \nonumber \\
& = - 2 \alpha_1 \langle \psi_{\pm 1}, \, P_X \sigma_1 \psi_{\pm 1} \rangle = - \Omega'_{\pm 1}(0) ,
\end{align}
where we used ${P_X \circ \mathcal{P} = - \mathcal{P} \circ P_X}$, and that $\mathcal{P}$ is unitary. Therefore ${\Omega'_{\pm 1}(0) = 0}$. This proves part (\ref{itm:sql-eig-local_2}). \qed

\smallskip

\section{Other numerics}

\setcounter{equation}{0}
\setcounter{figure}{0}

In this section, we show
some illustrative numerical experiments to demonstrate properties of
$\mathfrak{S}(\kappa)$ from \eqref{eq:sql-edge-eff_2}.   To assist the reader, we recall the operator as
\begin{equation*}
\mathfrak{S}(\kappa) = (1 - \alpha_0) (P_X^2 + \kappa^2) I + 2 \alpha_1 P_X \kappa \sigma_1 + \alpha_2 (-P_X^2 + \kappa^2) \sigma_2 + \vartheta \chi(X) \sigma_3 .
\end{equation*}

\begin{figure}[!t]
\begin{center}
\includegraphics[width=.45\textwidth]{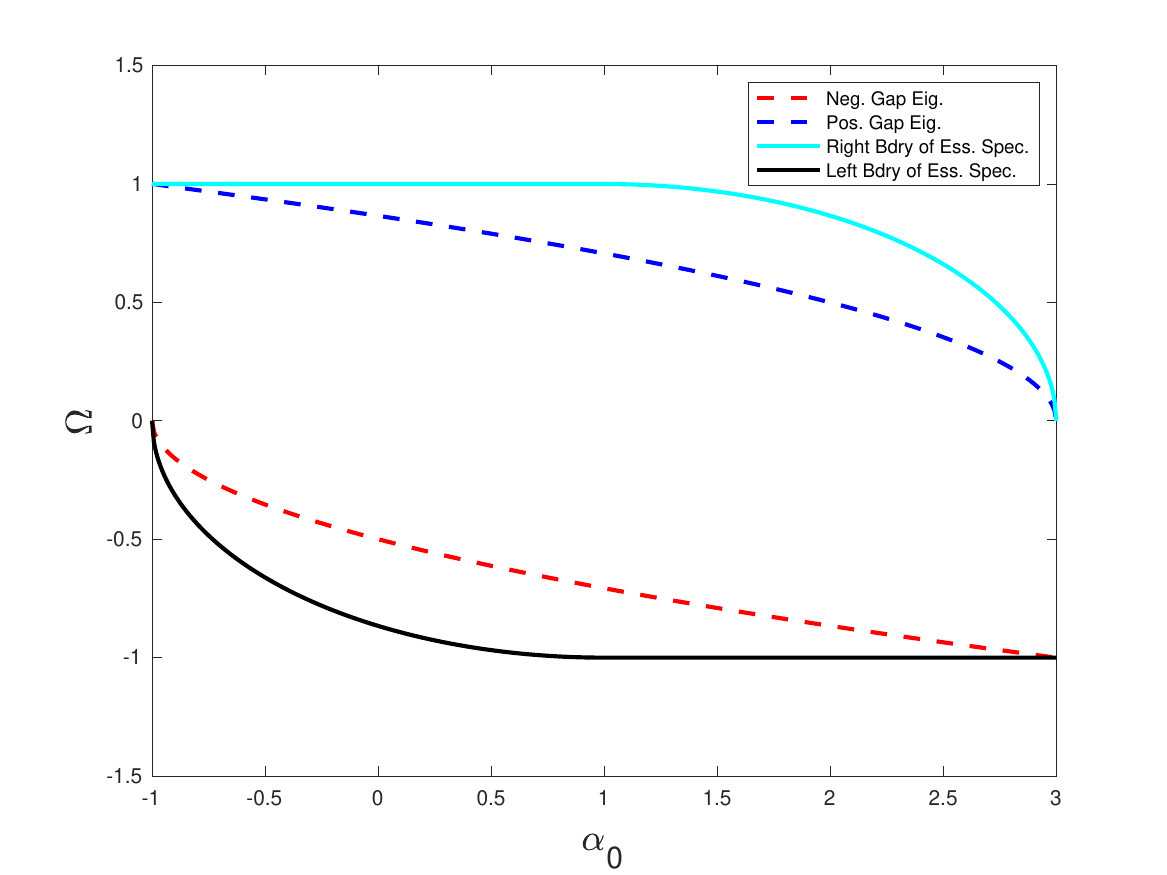}
\caption{Cartoon of the spectrum of $\mathfrak{S}(0)$ from \eqref{eq:sql-edge-eff_2} with $\chi(x) = \sgn(x)$ as it varies with $\alpha_0$ for a fixed $\alpha_2$.  We plot the boundary edges of the band gap and the energies of the corresponding eigenstates  with $\vartheta = 1$, $\alpha_2 =2$ and varying $-1<\alpha_0<3$. The gap eigenvalues branch off of the edges of the spectrum and are monotonically decreasing as a function of $\alpha_0$.}
\label{f:eigenvalues}
\end{center}
\end{figure}

First, in Figure \ref{f:eigenvalues}, we take $\kappa = 0$ and demonstrate how the edges of the essential spectrum and the gap eigenstates od the exactly solvable model when $\chi (x) = \sgn (x)$. We have otherwise take $\vartheta=1$ and $\alpha_2 = 2$, then we vary the values $-1<\alpha_0<3$.  The figure shows the gap eigenstates colliding with the essential spectrum as $\alpha_0$ approaches the values $-1$ and $3$, as well as the consistent existence of $2$ gap eigenvalues for all $\alpha_0$ in this range.

Next, again in the exactly solvable case with $\chi (x) = \sgn (x)$ and $\vartheta=1$, we consider the $\kappa$ variation of \eqref{eq:sql-edge-eff_2} for a number of illustrative choices of parameters $\alpha_0$, $\alpha_1$, and $\alpha_2$.  We have included four cases in Figure \ref{f:71kappa} to demonstrate the possible behaviors of the spectrum.  First in the top left, we take $\alpha_0 = 1.5, \alpha_2=2$, $\alpha_1 = 0$, which gives a particularly interesting scenario showing $2$ traversing gap eigenvalues but no global spectral gap in the bulk.
In the top right we take  $\alpha_0 = .5, \alpha_2=1$, $\alpha_1 = .5$ to demonstrate the symmetric property for $1-\alpha_0$ alternating in sign.  Lastly, the bottom figures demonstrate that a spectral gap and traversing states are stable under fluctuation in $\alpha_0 \sim 1$.  In the bottom left, we take $\alpha_0 = 0.9, \alpha_2=1$, $\alpha_1 = 1$; and $\alpha_0 = 1.1, \alpha_2=1$, $\alpha_1 = 1$ in the bottom right. 

\begin{figure}[!ht]
\begin{center}
\includegraphics[width=.3\textwidth]{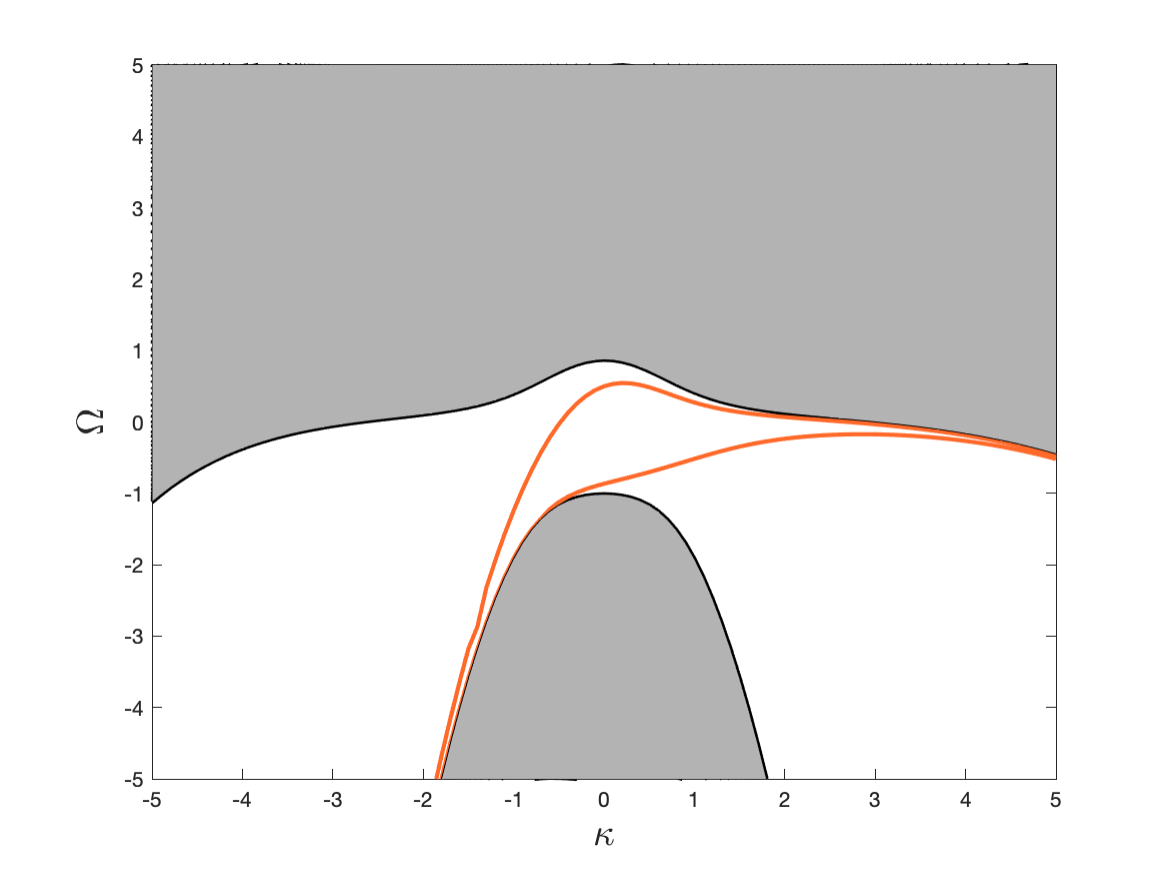} \ \ 
\includegraphics[width=.3\textwidth]{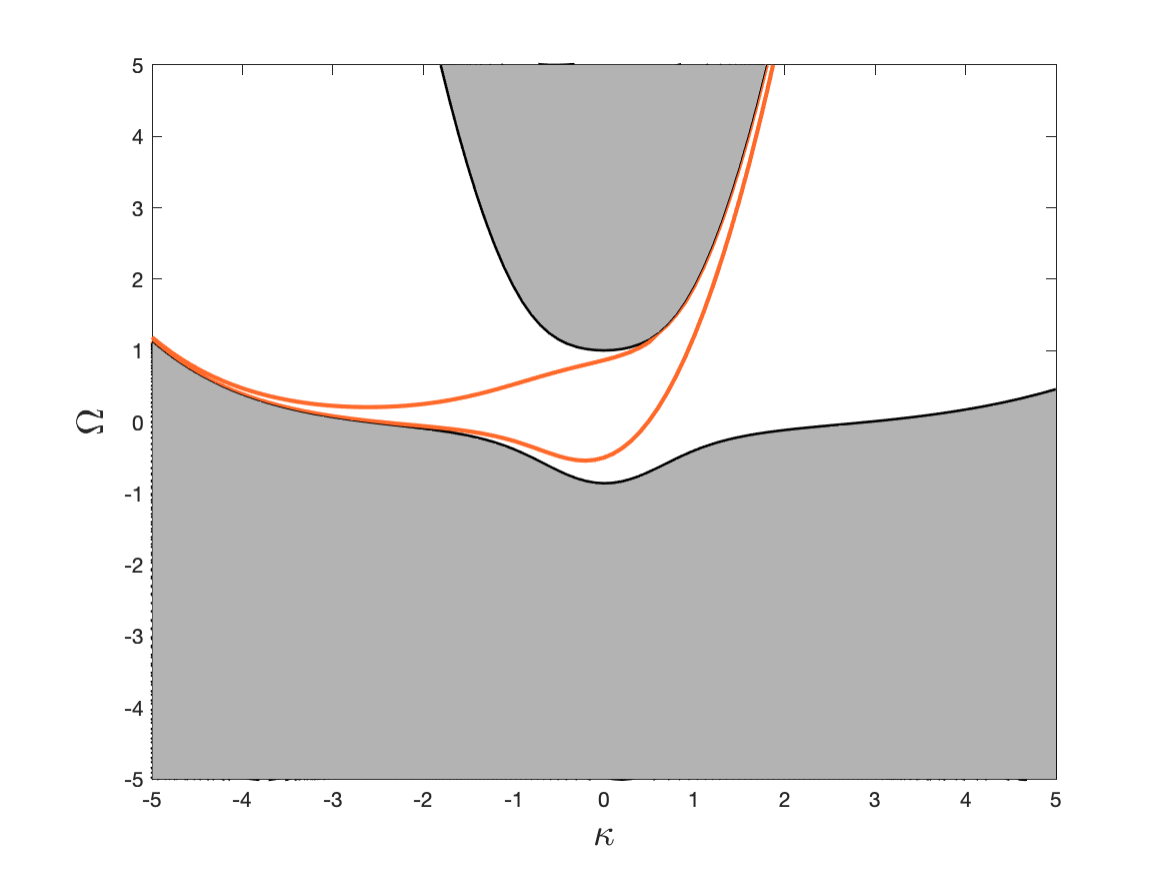} \\
\includegraphics[width=.3\textwidth]    {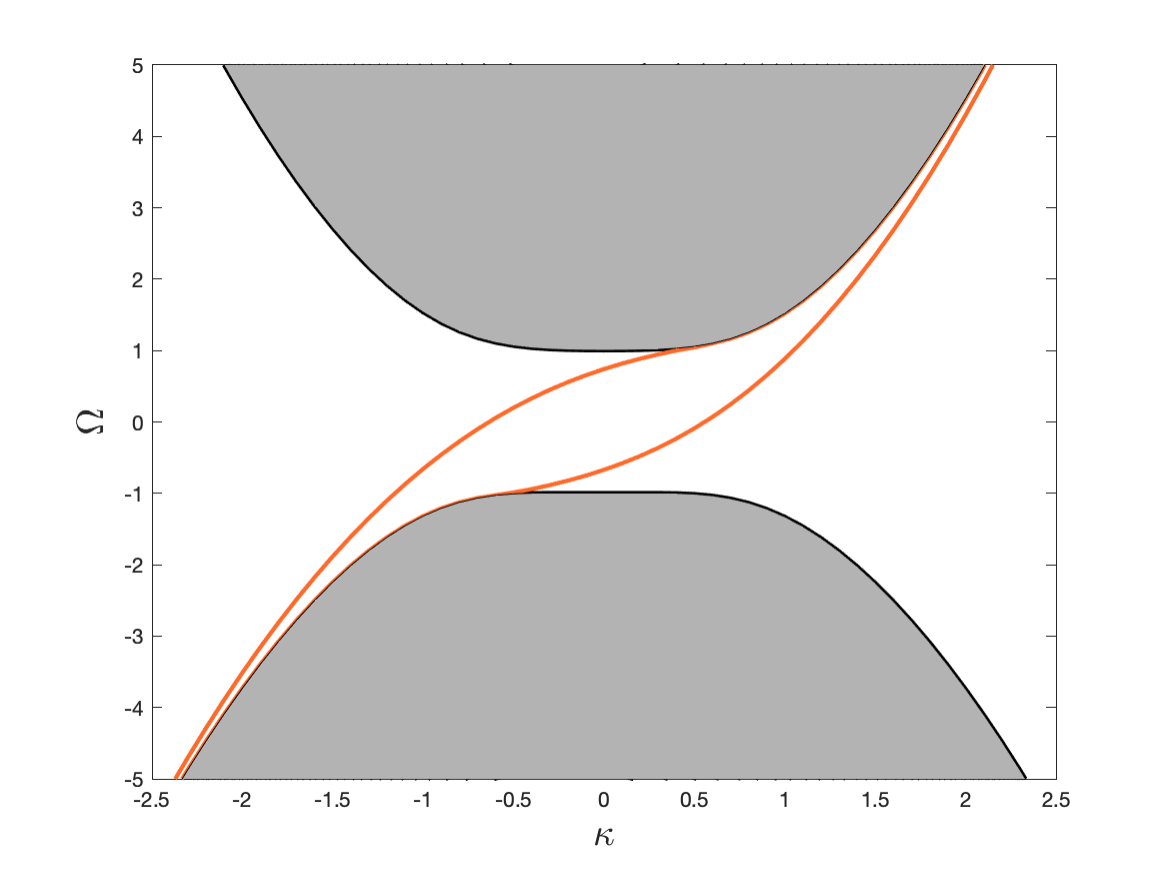} \ \
\includegraphics[width=.3\textwidth]{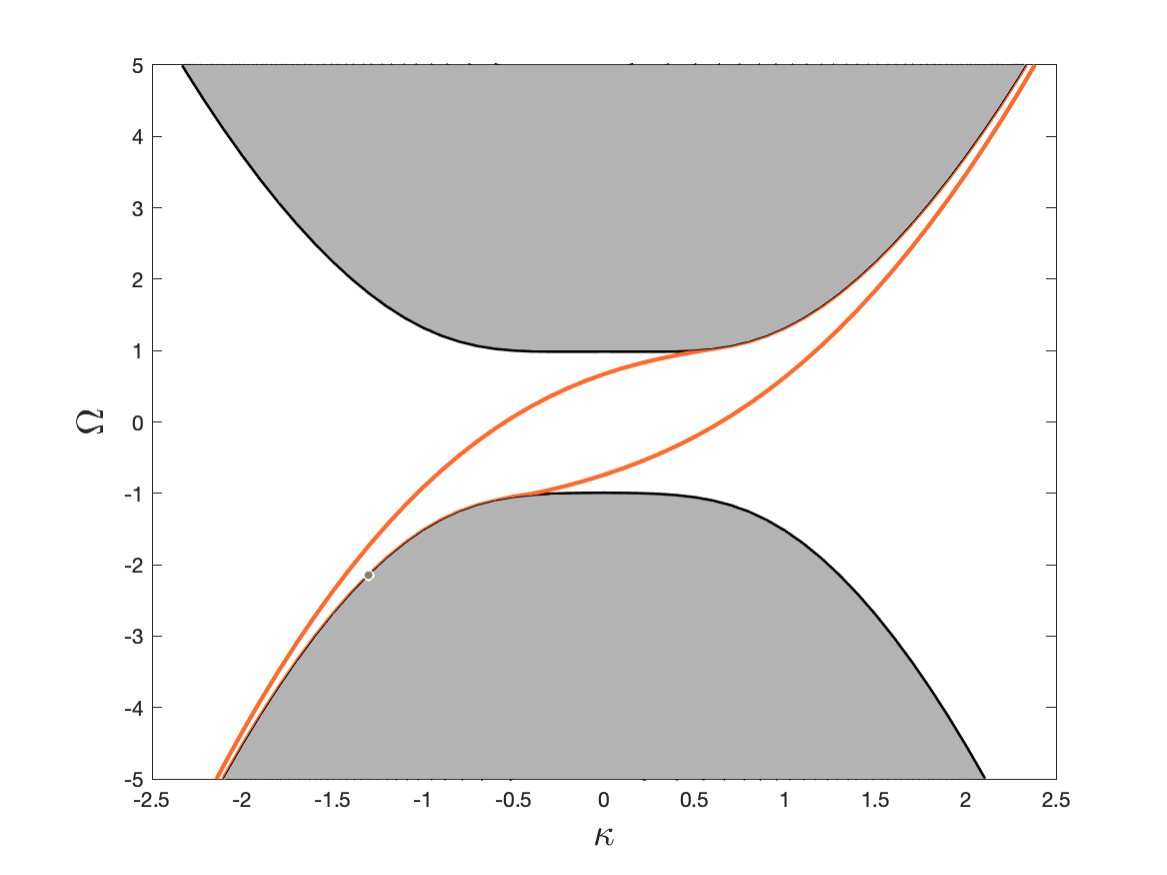}
\caption{Spectrum around the gap of $\mathfrak{S}(\kappa)$ from \eqref{eq:sql-edge-eff_2} $-5 \leq \kappa \leq 5$  with $\chi (x) = \sgn (x)$ and parameter values:  $\alpha_0 = 1.5, \alpha_2=1$, $\alpha_1 = .5$, $\vartheta = 1$ (Top left);  $\alpha_0 = .5, \alpha_2=1$, $\alpha_1 = .5$, $\vartheta = 1$ (Top Right); $\alpha_0 = 0.9, \alpha_2=1$, $\alpha_1 = 1$, $\vartheta = 1$ (Bottom Left) and $\alpha_0 = 1.1, \alpha_2=1$, $\alpha_1 = 1$, $\vartheta = 1$ (Bottom Right).}
\label{f:71kappa}
\end{center}
\end{figure}

\begin{figure}[!h]
\begin{center}
\includegraphics[width=.3\textwidth]{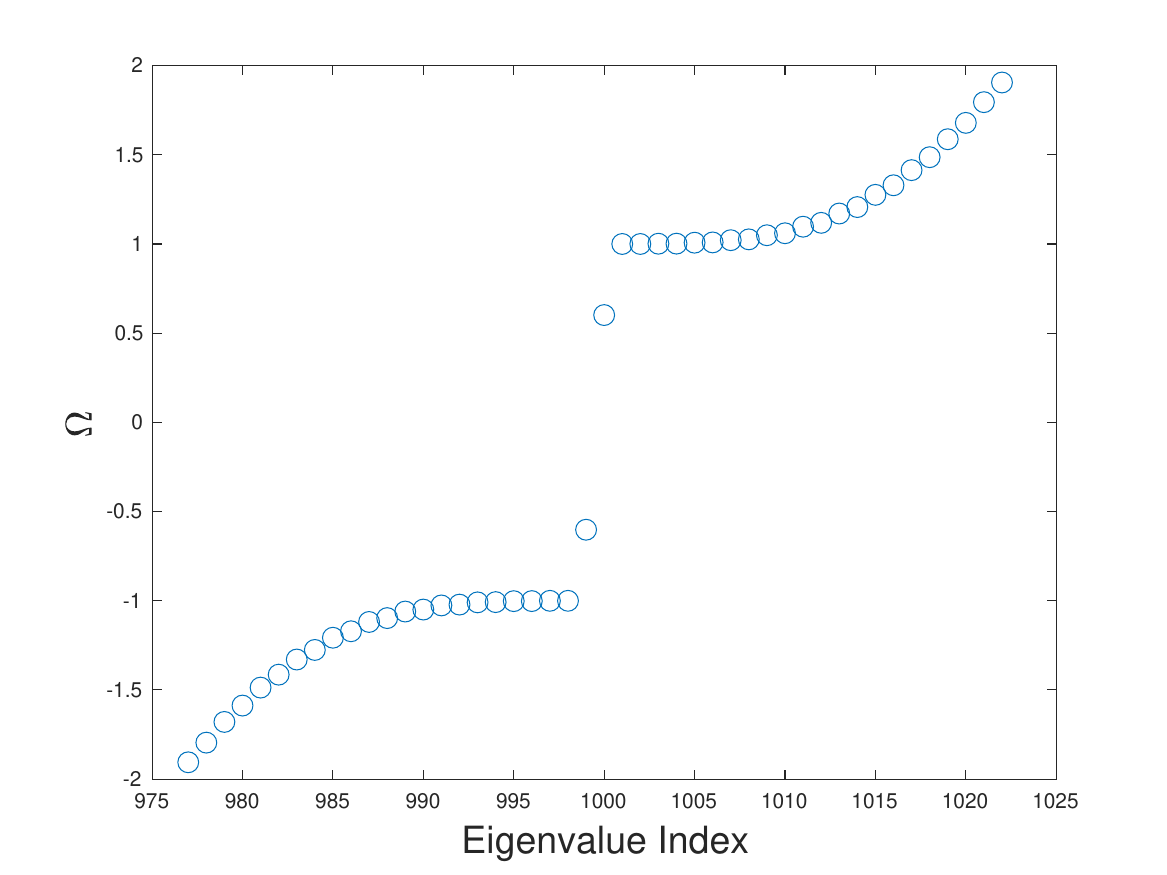} \ \ 
\includegraphics[width=.3\textwidth]{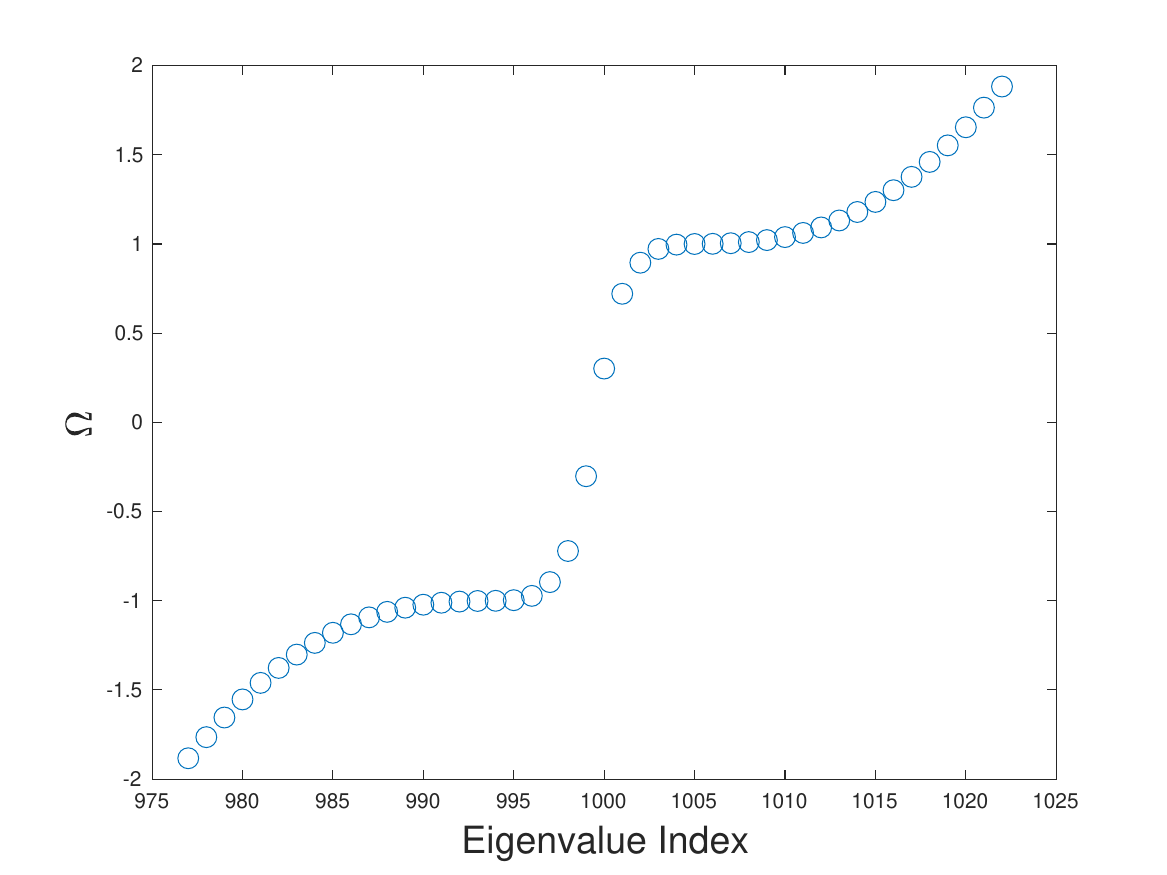} \\
\includegraphics[width=.3\textwidth]{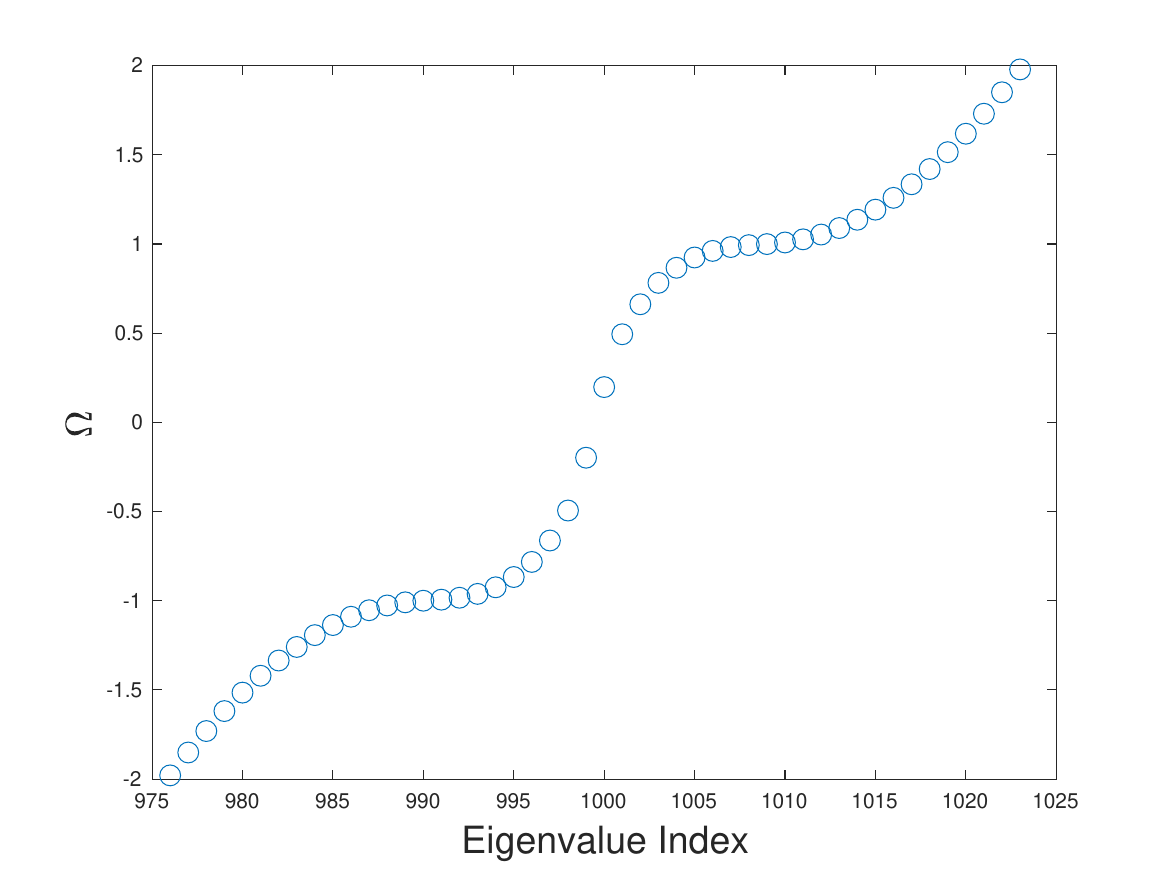} \ \
\includegraphics[width=.3\textwidth]{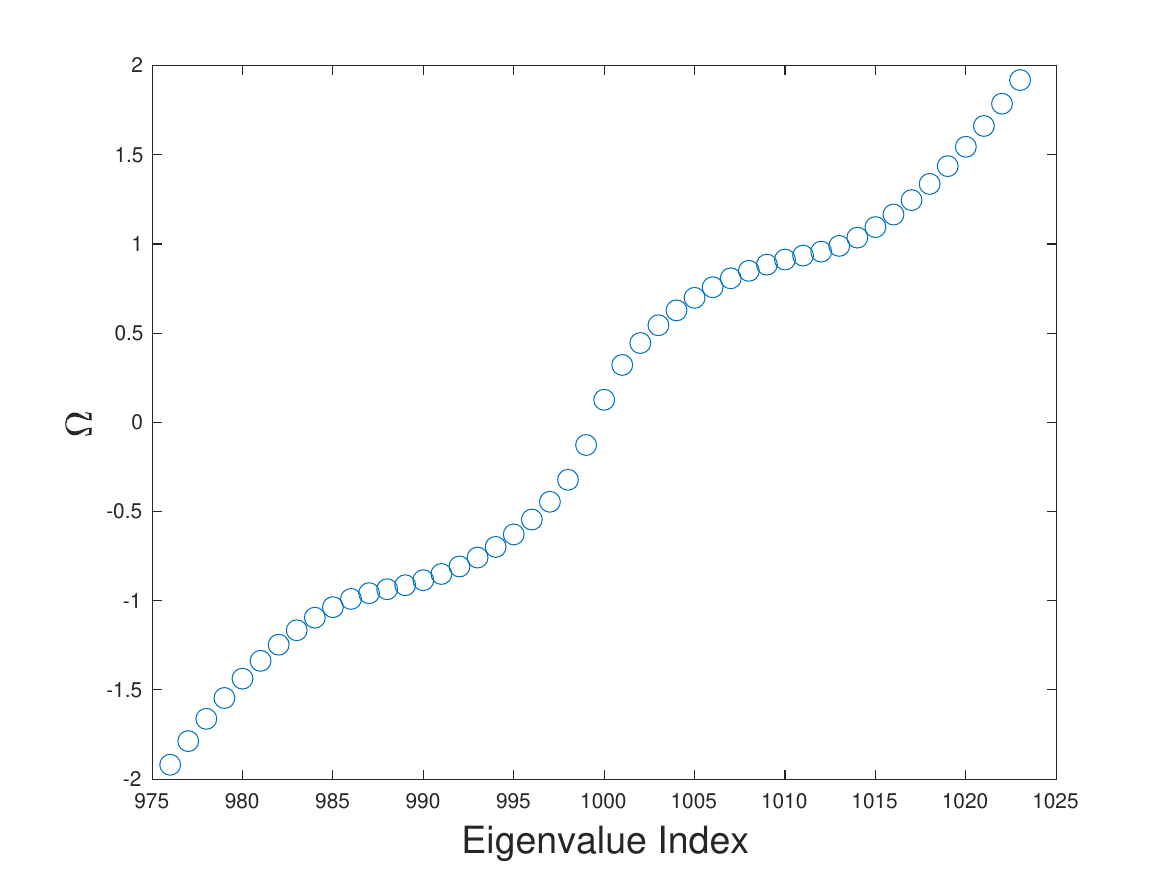}
\caption{Spectrum around the gap of $\mathfrak{S}(0)$ from \eqref{eq:sql-edge-eff_2}  with $\chi (x) = \tanh (x/L)$ and parameter values:  $\alpha_0 = 1, \alpha_2=2$, $\vartheta = 1$ and $L = 1$ (Top left), $L = 5$ (Top Right); $L=10$ (Bottom Left) and $L=20$ (Bottom Right).}
\label{f:tanh2}
\end{center}
\end{figure}

Moving away from the exactly solvable model, we consider $\chi(x) = \tanh (x/L)$, taking increasing values of $L$. The widening of the defect edge allows for multiple curves as they are able to carry more eigenvalues.  See Figure \ref{f:tanh2}, where we fix $\kappa = 0$ and $\alpha_0 = 1, \alpha_2=2$, $\vartheta = 1$, then take $L = 1$ (Top left), $L = 5$ (Top Right); $L=10$ (Bottom Left) and $L=20$ (Bottom Right).   The wider wells are able to carry more bound states, which explains the increasing number of gap states that we observe as $L$ increases.
    
Lastly, we consider effects of multiple defect edges again in the setting $\kappa =0$, which gives multiple curves depending upon the number of defect edges and their spacing, fix other parameters.  See Figures \ref{f:evenodd} and \ref{f:closedws}.  Specifically, we consider an even number of defect edges in $\mathfrak{S}(0)$ from \eqref{eq:sql-edge-eff_2} with 
    \begin{equation}
    \label{eqn:evenchi} 
    \chi(x) = \chi_{L,{\rm even}} (x) := \tanh (x-L) + (1-\tanh (x+L))
    \end{equation}
and an odd number of defect edges in $\mathfrak{S}(0)$ from \eqref{eq:sql-edge-eff_2} with
    \begin{equation}
    \label{eqn:oddchi}
      \chi(x) = \chi_{L,{\rm odd}} (x) := \tanh (x-L) - (1+\tanh(x)) + (1 +\tanh(x+L)).
    \end{equation}
    In these figures, we take
$\alpha_0 = 1, \alpha_2=1$, $\vartheta = 1$; and $L = 2$ for even and $L=4$ for odd.  Each defect edge introduces $2$ gap states, but their respective energies depends somewhat on the edge spacing.  

\begin{figure}[!t]
\begin{center}
\includegraphics[scale = 0.25]{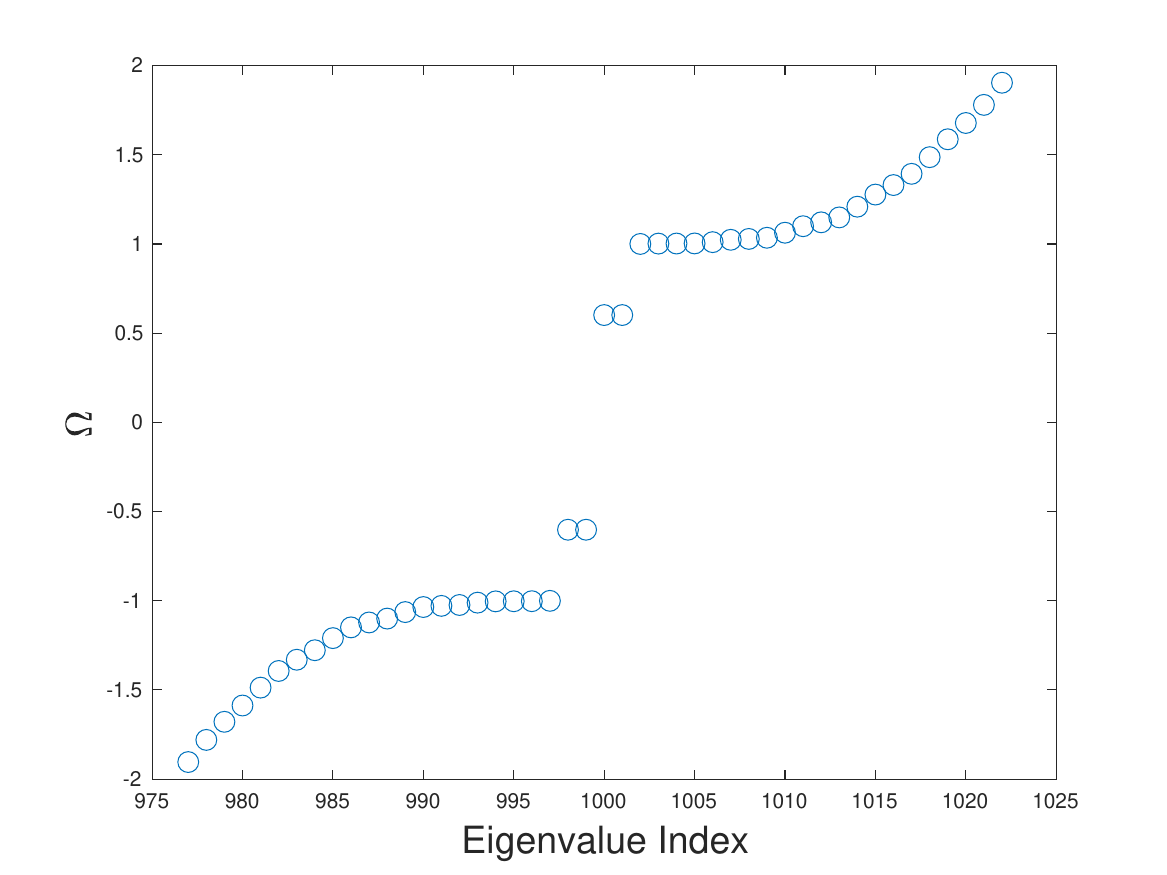} \ \ 
\includegraphics[scale = 0.25]{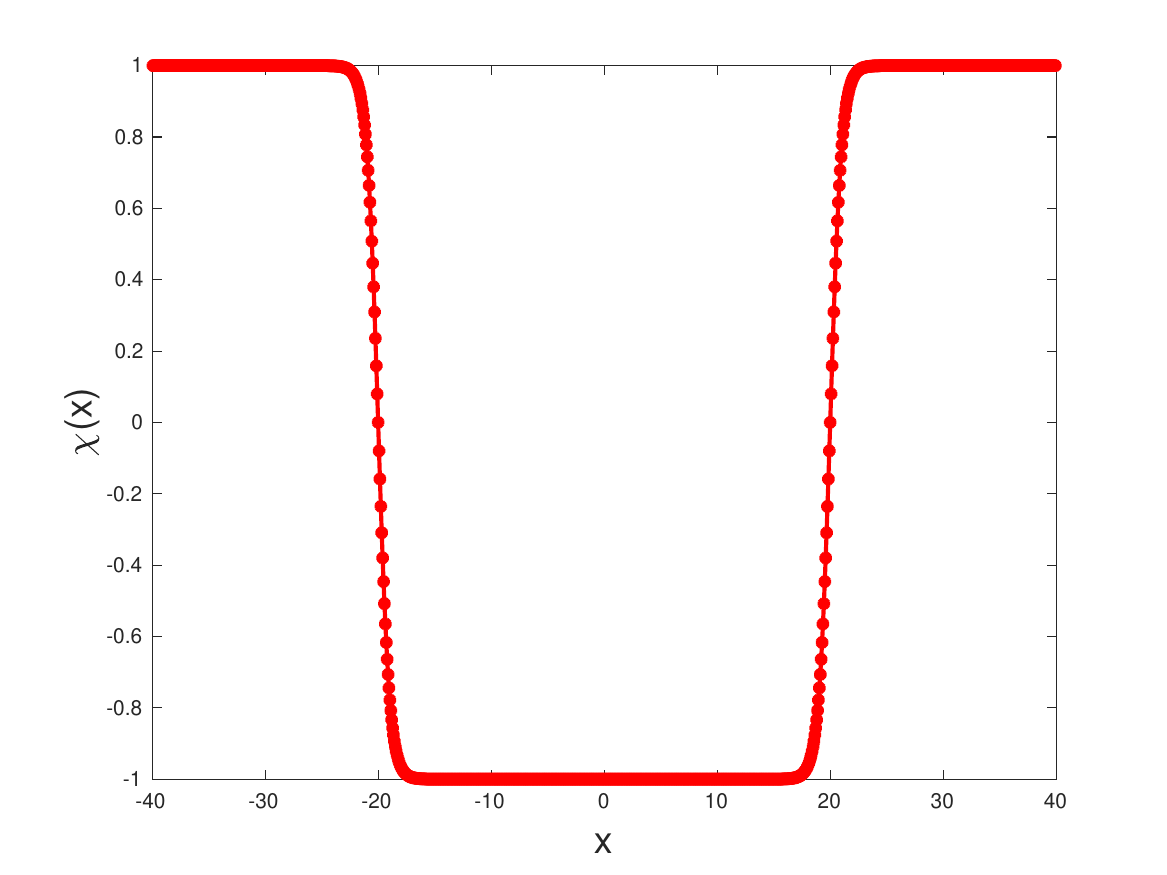} \\
\includegraphics[scale = 0.25]{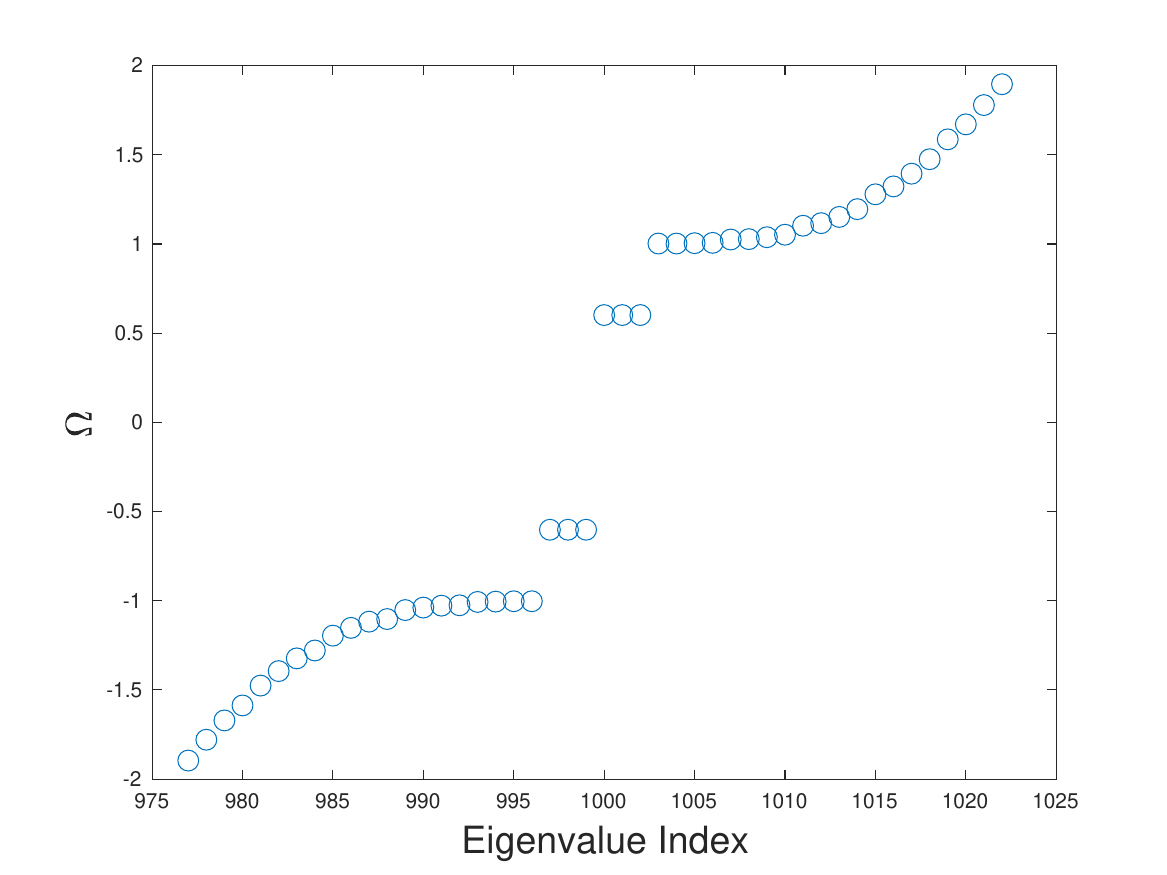} \ \ 
\includegraphics[scale = 0.25]{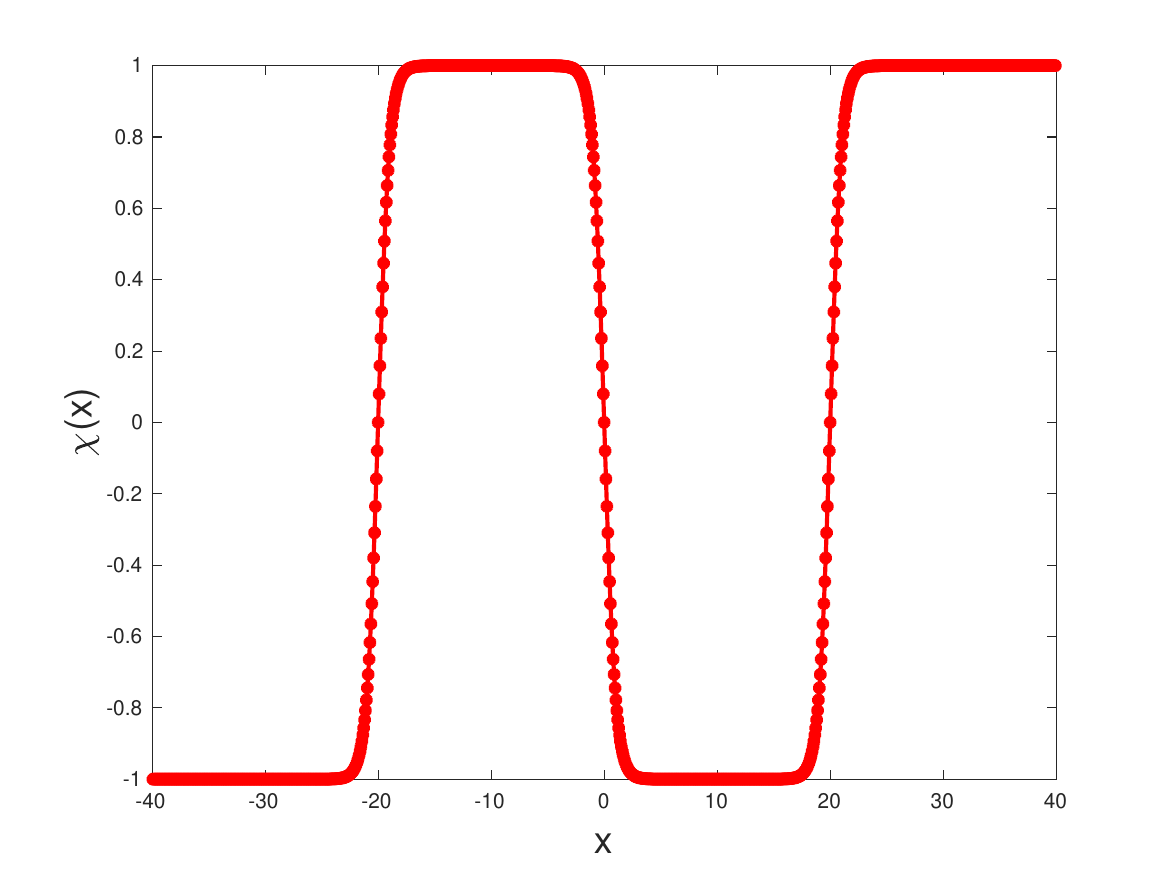}
\caption{Spectrum around the gap of $\mathfrak{S}(0)$ from \eqref{eq:sql-edge-eff_2}.  Here, we explore the impact of multiple defect edges that are well-separated. The grouping of gap eigenvalues demonstrated on the (left) for both the even (top) and the odd (bottom) number of domain walls given by $\chi = \chi_{L,{\rm even}/{\rm odd}}$ on the (right).  Here, we have parameter values:  $\alpha_0 = 1, \alpha_2=2$, $\vartheta = 1$, $L = 20$. }
\label{f:evenodd}
\end{center}
\end{figure}

\begin{figure}[!t]
\begin{center}
\includegraphics[scale = 0.25]{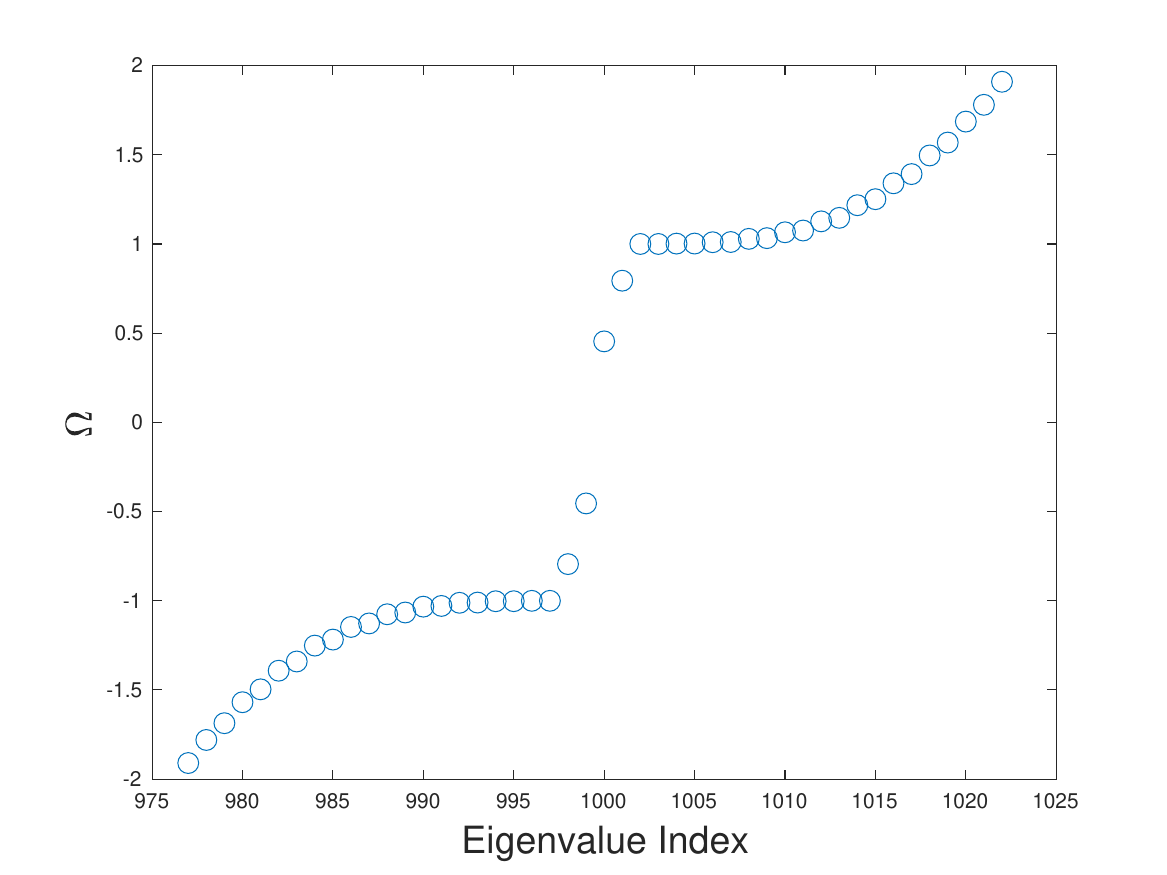} \ \ 
\includegraphics[scale = 0.25]{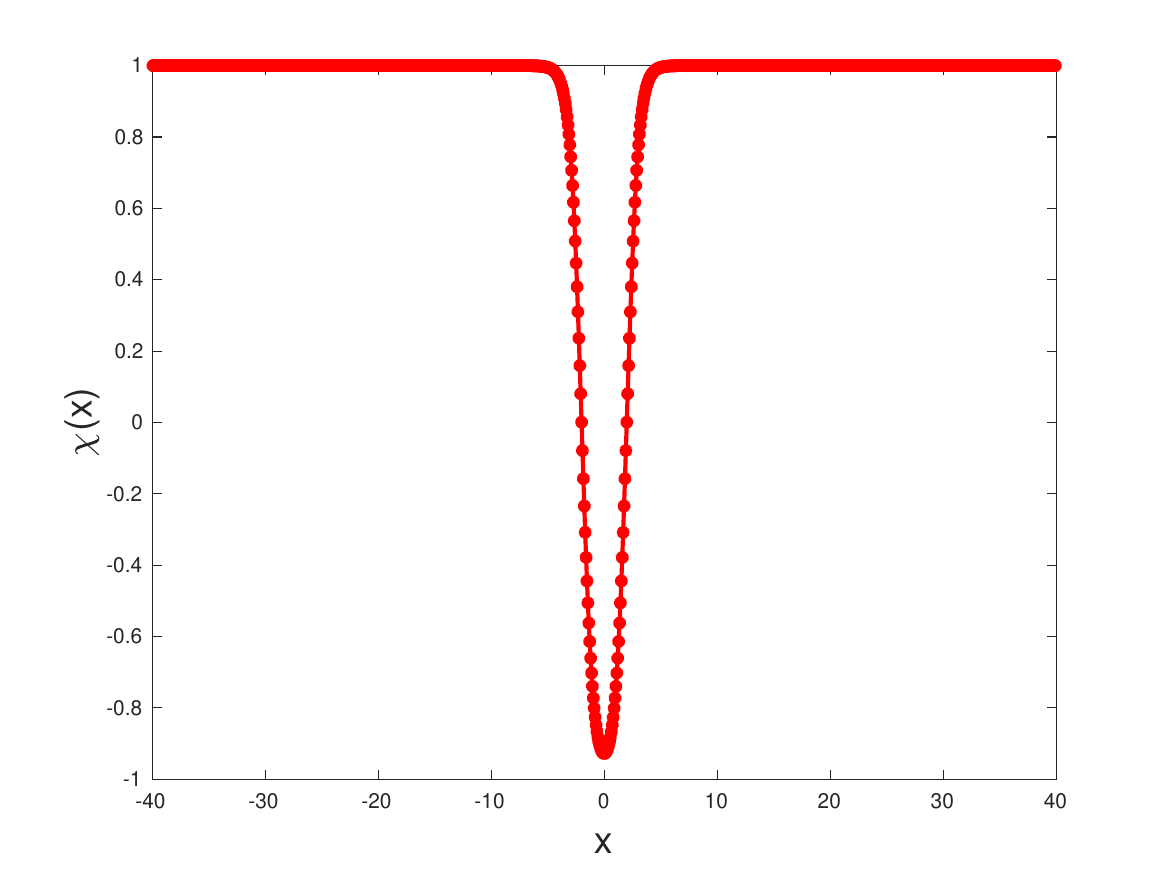} \\  
\includegraphics[scale = 0.25]{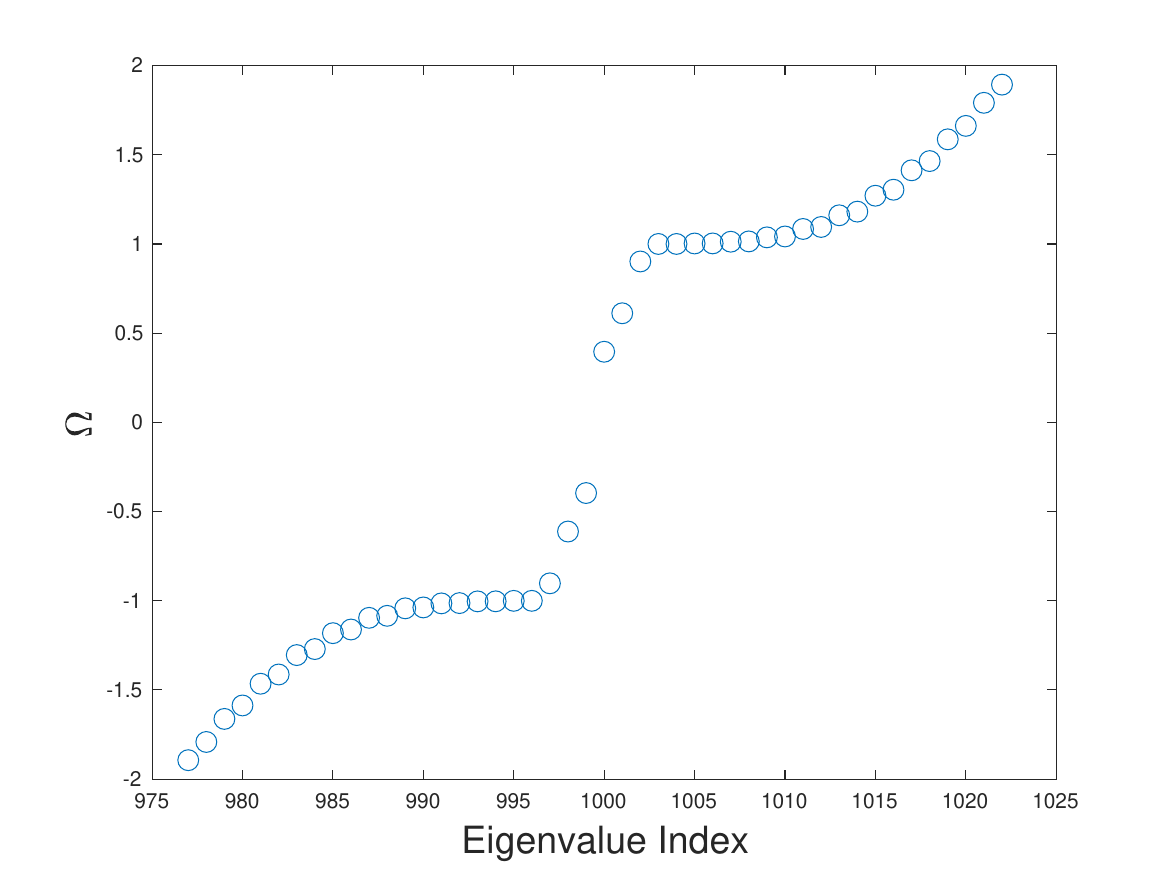} \ \ 
\includegraphics[scale = 0.25]{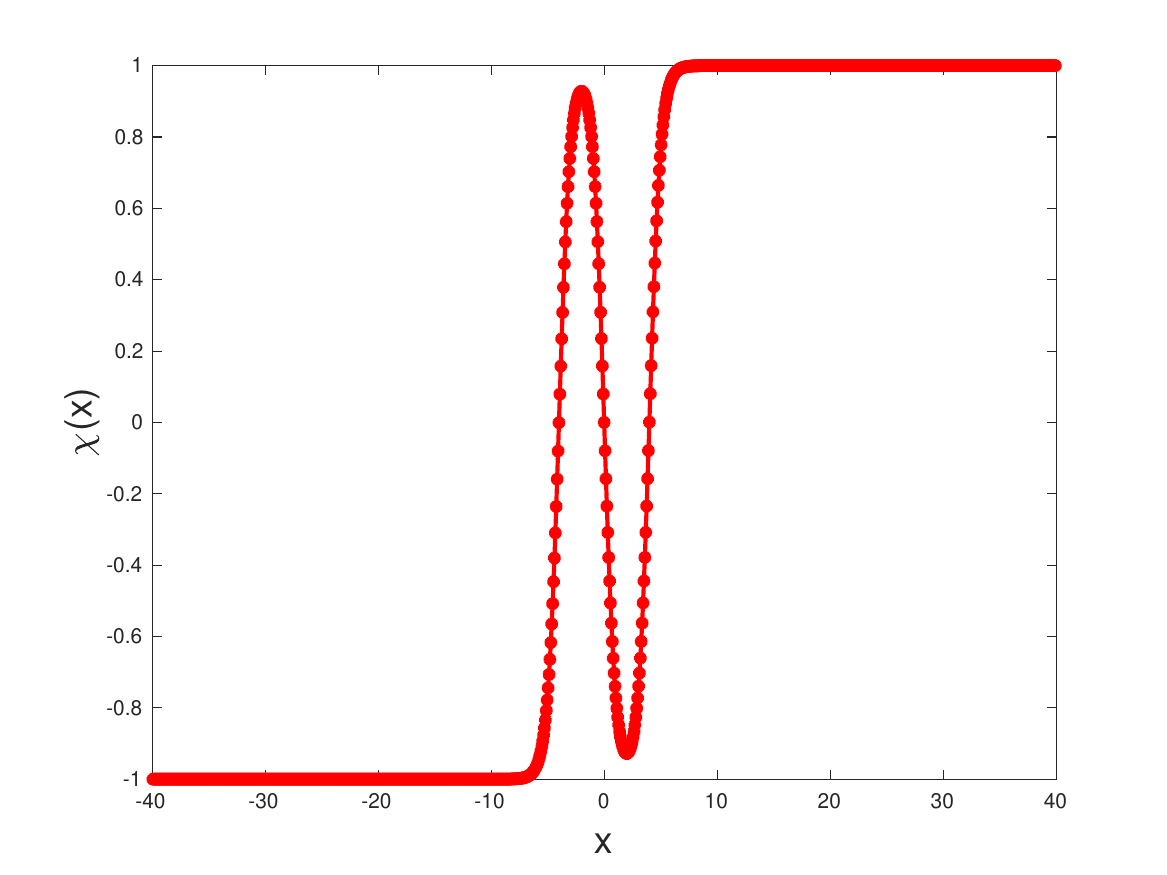}
\caption{ Spectrum around the gap of $\mathfrak{S}(0)$ from \eqref{eq:sql-edge-eff_2}.  Here, we explore the impact of multiple defect edges as the edges approach each other.  The grouping of gap eigenvalues demonstrated on the (left) for both the even (top) and the odd (bottom) number of domain walls given by $\chi = \chi_{L,{\rm even}/{\rm odd}}$ on the (right).  Here, we have parameter values:  $\alpha_0 = 1, \alpha_2=1$, $\vartheta = 1$; $L = 2$ for even and $L=4$ for odd.}
\label{f:closedws}
\end{center}
\end{figure}

\smallskip

\clearpage
\section{Proofs of Propositions in Section \ref{sec:spec-dfm-disc}}
\label{supp:dfm-edge-eff-disc}

\setcounter{equation}{0}
\setcounter{figure}{0}


\subsection{On the eigenvalue curve through zero energy; Proof of Proposition \ref{prop:dfm-eig-local}}
\label{supp:dfm-eig-local}

We prove Proposition \ref{prop:dfm-eig-local}, which gives the local behavior of the eigenvalue curves ${\kappa \mapsto \Omega_\pm(\kappa)}$ of the Dirac operators $\cancel{\mathfrak{D}}^\pm(\kappa)$ \eqref{eq:dfm-edge-eff_2} passing through ${\Omega_\pm(0) = 0}$.

\subsubsection{Proof of Proposition \ref{prop:dfm-eig-local}, part \ref{itm:dfm-eig-local-1}}
\label{supp:dfm-eig-local-1}

\noindent {\it Proof of Proposition \ref{prop:dfm-eig-local}, part \ref{itm:dfm-eig-local-1}.} For simplicity, consider the $\cancel{\mathfrak{D}}^+(\kappa)$ case; the $\cancel{\mathfrak{D}}^-(\kappa)$ case proceeds similarly. First, from \eqref{eq:dfm-edge-eff_2}, observe that
\begin{equation}
\label{eq:dfm-edge-eff_3}
\cancel{\mathfrak{D}}^+(\kappa) = \cancel{\mathfrak{D}}^+(0) + (b \cdot \sigma) \kappa ,
\end{equation}
where $b$ and $\sigma$ are defined in \eqref{eqn:abdefs} and \eqref{eq:def-pauli} respectively. Now consider the bound state eigenpair ${(0, \psi^\star_+)}$ of $\cancel{\mathfrak{D}}^+(0)$. For $|\kappa|$ small, we seek nearby eigenpairs of $\cancel{\mathfrak{D}}^+(\kappa)$ via the formal ansatz:
\begin{align}
\Omega(\kappa) & = \kappa \Omega^{(1)} + \kappa^2 \Omega^{(2)}  + O(\kappa^3) , \\
\psi(X; \kappa) & = \psi^{(0)}(X) + \kappa \psi^{(1)}(X) + \kappa^2 \psi^{(2)}(X) + O(\kappa^3) \ \ \text{as} \ \ |\kappa| \to 0 .
\end{align}
Substituting this into the eigenvalue equation for $\cancel{\mathfrak{D}}^+(\kappa)$ and grouping terms by orders of $\kappa$ yields a hierarchy of equations. The first three are:
\begin{align}
\label{eq:zero-pert-0}
\kappa^0 : \quad & \cancel{\mathfrak{D}}^+(0) \psi^{(0)} = 0 , \\
\label{eq:zero-pert-1}
\kappa^1 : \quad & \cancel{\mathfrak{D}}^+(0) \psi^{(1)} = (\Omega^{(1)} - b \cdot \sigma) \psi^{(0)} , \\
\label{eq:zero-pert-2}
\kappa^2 : \quad & \cancel{\mathfrak{D}}^+(0) \psi^{(2)} = (\Omega^{(1)} - b \cdot \sigma) \psi^{(1)} + \Omega^{(2)} \psi^{(0)} .
\end{align}
Below, we solve the equations order by order.

The solution to the first equation is ${\psi^{(0)} = c^{(0)} \psi^\star_+}$, where ${c^{(0)} \smallin \C}$. 

The second equation is solvable if and only if
\begin{equation}
\langle \psi^\star_+, \, (\Omega^{(1)} - b \cdot \sigma) \psi^{(0)} \rangle = 0 ,
\end{equation}
which is possible if and only if
\begin{equation}
\Omega^{(1)} = \Omega^{(1)}_+ \equiv \langle \psi^\star_+, \, (b \cdot \sigma) \psi^\star_+ \rangle .
\end{equation}
The general solution is then
\begin{equation}
\psi^{(1)} = c^{(0)} \cancel{\mathfrak{D}}^+(0)^{-1} (\Omega^{(1)} - b \cdot \sigma) \psi^\star_+ + c^{(1)} \psi^\star_+ ,
\end{equation}
where ${c^{(1)} \smallin \C}$.

The third equation is solvable if and only if
\begin{equation}
\langle \psi^\star_+, \, (\Omega^{(1)} - b \cdot \sigma) \psi^{(1)} + \Omega^{(2)} \psi^{(0)} \rangle = 0 ,
\end{equation}
which is possible if and only if
\begin{equation}
\Omega^{(2)} = \Omega^{(2)}_+ \equiv - \langle (\Omega^{(1)}_+ - b \cdot \sigma) \psi^\star_+, \, \cancel{\mathfrak{D}}^+(0)^{-1} (\Omega^{(1)}_+ - b \cdot \sigma) \psi^\star_+ \rangle .
\end{equation}
The solution $\psi^{(2)}$ can then be determined by inverting $\cancel{\mathfrak{D}}^+(0)$. \qed

\subsubsection{Odd domain walls; Proof of Proposition \ref{prop:dfm-eig-local}, part \ref{itm:dfm-eig-local-2}}
\label{supp:dfm-eig-local-2}

We first state and prove a global symmetry:

\begin{proposition}
\label{prop:dfm-odd-sym}
Suppose ${(\Omega, \psi(X))}$ is an eigenpair of $\cancel{\mathfrak{D}}^\pm(\kappa)$ \eqref{eq:dfm-edge-eff_2}, and assume ${\chi(-X) = -\chi(X)}$. Then ${(-\Omega, \mathcal{P}[\psi](X) = \psi(-X))}$ is an eigenpair of $\cancel{\mathfrak{D}}^\pm(-\kappa)$.
\end{proposition}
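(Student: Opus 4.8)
The plan is to realize the claimed correspondence as a conjugation symmetry: I would show that the parity operator $\mathcal{P}$, acting by $\mathcal{P}[\psi](X) = \psi(-X)$, intertwines $\cancel{\mathfrak{D}}^\pm(\kappa)$ with $-\cancel{\mathfrak{D}}^\pm(-\kappa)$. Concretely, the goal is the operator identity
\begin{equation}
\mathcal{P} \, \cancel{\mathfrak{D}}^\pm(\kappa) \, \mathcal{P} = -\cancel{\mathfrak{D}}^\pm(-\kappa) .
\end{equation}
Once this is in hand, the proposition is immediate: applying $\mathcal{P}$ to the eigenvalue equation $\cancel{\mathfrak{D}}^\pm(\kappa) \psi = \Omega \psi$, inserting $\mathcal{P}^2 = I$, and using the identity gives ${-\cancel{\mathfrak{D}}^\pm(-\kappa) \mathcal{P}[\psi] = \Omega \mathcal{P}[\psi]}$, i.e. ${\cancel{\mathfrak{D}}^\pm(-\kappa) \mathcal{P}[\psi] = -\Omega \mathcal{P}[\psi]}$, which is exactly the assertion that ${(-\Omega, \mathcal{P}[\psi])}$ is an eigenpair of $\cancel{\mathfrak{D}}^\pm(-\kappa)$.

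The substance of the argument is to verify the conjugation identity by tracking the three terms of ${\cancel{\mathfrak{D}}^\pm(\kappa) = \pm((a \cdot \sigma) P_X + (b \cdot \sigma) \kappa) + c \chi(X) \sigma_3}$. Since $\mathcal{P}$ acts only on the spatial variable $X$ and therefore commutes with the constant matrices $a \cdot \sigma$, $b \cdot \sigma$, and $\sigma_3$, everything reduces to two elementary relations. First, ${P_X = -i \partial_X}$ anticommutes with spatial inversion, giving ${\mathcal{P} P_X \mathcal{P} = -P_X}$. Second, under the oddness hypothesis ${\chi(-X) = -\chi(X)}$, conjugation of the multiplication operator yields ${\mathcal{P} (\chi \, \cdot) \mathcal{P} = -(\chi \, \cdot)}$. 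Consequently the $P_X$ term and the $\chi \sigma_3$ term each acquire a minus sign, while the $\kappa$ term is untouched; reorganizing the result as ${\pm(-(a \cdot \sigma) P_X + (b \cdot \sigma) \kappa) - c \chi(X) \sigma_3}$ and comparing with ${-\cancel{\mathfrak{D}}^\pm(-\kappa)}$ confirms the identity.

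I do not expect a genuine obstacle here; the proof is a short symmetry computation. The only points requiring care are bookkeeping of the two sign flips and confirming that $\mathcal{P}$, being unitary and involutive on $L^2(\R; \C^2)$, preserves the operator domain so that the manipulations are legitimate for honest eigenstates. I would also remark that this global symmetry is precisely what is invoked in the proof of Proposition \ref{prop:dfm-eig-local}, part \ref{itm:dfm-eig-local-2}, to conclude ${\Omega_\pm(-\kappa) = -\Omega_\pm(\kappa)}$ for the gap-traversing eigenvalue curve through zero energy.
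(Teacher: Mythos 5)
Your proof is correct and follows essentially the same route as the paper: both establish the intertwining identity $\mathcal{P}\circ\cancel{\mathfrak{D}}^\pm(\kappa) = -\cancel{\mathfrak{D}}^\pm(-\kappa)\circ\mathcal{P}$ from the anticommutation of $P_X$ with $\mathcal{P}$ and the oddness of $\chi$, then apply it to the eigenvalue equation. The sign bookkeeping in your conjugation computation checks out, so no changes are needed.
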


\begin{proof}
We prove the result for $\cancel{\mathfrak{D}}^+(\kappa)$; the argument for $\cancel{\mathfrak{D}}^-(\kappa)$ is identical. Since ${P_X \circ \mathcal{P} =}$ ${- \mathcal{P} \circ P_X}$,
\begin{align}
\mathcal{P} \circ \cancel{\mathfrak{D}}^+(\kappa) & = \mathcal{P} \circ ((a \cdot \sigma) P_X + (b \cdot \sigma) \kappa + c \chi(X) \sigma_3) \\
& = (-(a \cdot \sigma) P_X + (b \cdot \sigma) \kappa + c \chi(-X) \sigma_3) \circ \mathcal{P} \nonumber \\
& = -((a \cdot \sigma) P_X - (b \cdot \sigma) \kappa + c \chi(X) \sigma_3)) \circ \mathcal{P} = - \cancel{\mathfrak{D}}^+(-\kappa) \circ \mathcal{P} . \nonumber
\end{align}
Hence, if ${(\Omega, \psi)}$ is an eigenpair of $\cancel{\mathfrak{D}}^+(\kappa)$, we have
\begin{equation}
\cancel{\mathfrak{D}}^+(-\kappa) \mathcal{P}[\psi] = - \mathcal{P}[\cancel{\mathfrak{D}}^+(\kappa) \psi] = - \mathcal{P}[\Omega \psi] = - \Omega \mathcal{P}[\psi] .
\end{equation}
That is, ${(-\Omega, \mathcal{P}[\psi])}$ is an eigenpair of $\cancel{\mathfrak{D}}^+(\kappa)$. The proof is complete.
\end{proof}

We now proceed to prove Proposition \ref{prop:dfm-eig-local}, part \ref{itm:dfm-eig-local-2}. \\

\noindent {\it Proof of Proposition \ref{prop:dfm-eig-local}, part \ref{itm:dfm-eig-local-2}.} First, recall that ${\Omega_\pm(0) = 0}$ is a simple eigenvalue of $\cancel{\mathfrak{D}}^\pm(\kappa)$; see Proposition \ref{prop:zero-energy}. By regular perturbation theory, for $|\kappa|$ small, $\Omega_\pm(\kappa)$ is a simple eigenvalue of $\cancel{\mathfrak{D}}^\pm(\kappa)$ and ${\kappa \mapsto \Omega_\pm(\kappa)}$ is analytic; see Proposition \ref{prop:dfm-eig-local}, part \ref{itm:dfm-eig-local-1}. Next, the symmetry of Proposition \ref{prop:dfm-odd-sym} implies, for each $\kappa$, that if $\Omega_\pm(\kappa)$ is an eigenvalue of $\cancel{\mathfrak{D}}(\kappa)$, then $-\Omega_\pm(\kappa)$ is an eigenvalue of $\cancel{\mathfrak{D}}(-\kappa)$. Finally, since $\Omega_\pm(\kappa)$ is simple for $|\kappa|$ small, it follows that ${\Omega_\pm(-\kappa) = -\Omega_\pm(\kappa)}$. \qed

\subsection{Dirac operators with linear eigenvalue curves; Proof of Proposition \ref{prop:dfm-eig-lin}}
\label{supp:dfm-eig-lin}

\noindent {\it } We seek eigenpairs of the form ${(\Omega(\kappa), \psi(\kappa)) = (\kappa \Omega^{(1)}, \psi^\star_\pm)}$. Substitution into  \eqref{eq:dfm-edge-eff_3} yields:
\begin{equation}
(\cancel{\mathfrak{D}}^\pm(0) + (b \cdot \sigma) \kappa) \psi^\star_\pm = (b \cdot \sigma) \kappa \psi^\star_\pm = \kappa \Omega^{(1)} \psi^\star_\pm .
\end{equation}
That is, ${(b \cdot \sigma - \Omega^{(1)}) \psi_\pm^\star = 0}$. Hence, in addition to being an eigenvector of 
 \begin{equation}
A = 
\! \begin{bmatrix}
i a_0 & i a_1 + a_2 \\
{-i a_1} + a_2 & {-i a_0}
\end{bmatrix} \!
,
\end{equation}
(see Proposition \ref{prop:zero-energy}), $\psi^\star_\pm$ must also be an eigenvector of ${b \cdot \sigma}$. Hence, if $\psi^\star_+$ and $\psi^\star_-$ are also eigenvectors of ${b \cdot \sigma}$, $A$ and ${b \cdot \sigma}$ must be simultaneously diagonalizable; this occurs exactly when they commute:
\begin{equation}
\! \begin{bmatrix}
i a_0 & i a_1 + a_2 \\
{-i a_1} + a_2 & {-i a_0}
\end{bmatrix} \!
\! \begin{bmatrix}
b_0 & b_1 - i b_2 \\
b_1 + i b_2 & b_0
\end{bmatrix} \!
=
\! \begin{bmatrix}
b_0 & b_1 - i b_2 \\
b_1 + i b_2 & b_0
\end{bmatrix} \!
\! \begin{bmatrix}
i a_0 & i a_1 + a_2 \\
{-i a_1} + a_2 & {-i a_0}
\end{bmatrix} \!
.
\end{equation}
This yields the system of three equations:
\begin{equation}
a_1 b_1 - a_2 b_2 = 0 \quad \text{and} \quad a_0 (b_1 \pm i b_2) = 0 .
\end{equation}
By Remark \ref{rmk:gamma}, $b_1$ and $b_2$ cannot be simultaneously zero. Hence, separating real and imaginary parts, the two latter equations are satisfied if and only if ${a_0 = 0}$. The first equation, together with ${a_0 = 0}$, are exactly the constraints \eqref{eq:dfm-eig-lin}. \qed

\raggedright
\bibliographystyle{plain}
\bibliography{bibliography.bib}

\begin{thebibliography}{10}

\bibitem{ablowitz2020discrete}
M.~J. Ablowitz and J.~T. Cole.
\newblock Discrete approximation of topologically protected modes in
  magneto-optical media.
\newblock {\em Physical Review A}, 101(2):023811, 2020.

\bibitem{bal-cpde-2022}
G.~Bal.
\newblock Topological invariants for interface modes.
\newblock {\em Commun. in PDE, 47(8), 1636–1679.}, 47:1636, 2022.

\bibitem{bal-jmp-2023}
G.~Bal.
\newblock Topological charge conservation for continuous insulators.
\newblock {\em J. Math. Phys.}, 64:031508, 2023.

\bibitem{bernevig-hughes-2013}
B.A. Bernevig and T.L. Hughes.
\newblock {\em Topological Insulators and Topological Superconductors}.
\newblock Princeton University Press, 2013.

\bibitem{CW21}
M.~Cassier and M.~I. Weinstein.
\newblock High contrast elliptic operators in honeycomb structures.
\newblock {\em Multiscale Model. Simul.}, 19(4):1784--1856, 2021.

\bibitem{chaban2025thesis}
J.~Chaban.
\newblock {\em Band structure degeneracies and edge states in square lattice
  media and their deformations}.
\newblock PhD thesis, Columbia University, 2025.
\newblock In preparation.

\bibitem{chaban2024instability}
J.~Chaban and M.~I. Weinstein.
\newblock Instability of quadratic band degeneracies and the emergence of
  {Dirac} points.
\newblock {\em SIAM J. Math. Anal.}, 57(2):1892--1938, 2025.

\bibitem{chong2008effective}
Y.~D. Chong, X.-G. Wen, and M.~Solja\v{c}i\'{c}.
\newblock Effective theory of quadratic degeneracies.
\newblock {\em Phys. Rev. B}, 77(23):235125, 2008.

\bibitem{doll2023spectral}
N.~Doll, H.~Schulz-Baldes, and N.~Waterstraat.
\newblock {\em Spectral flow: A functional analytic and index-theoretic
  approach}.
\newblock De Gruyter, 2023.

\bibitem{drouot2019a}
A.~Drouot.
\newblock The bulk edge correspondence for continuous honeycomb lattices.
\newblock {\em Commun. PDE}, 44(12):1406--1430, 2019.

\bibitem{drouot21-discloc}
A.~Drouot.
\newblock The bulk-edge correspondence for continuous dislocated systems.
\newblock {\em Annales de l'Institut Fourier}, 71:1185–1239, 2021.

\bibitem{drouot2021}
A.~Drouot.
\newblock Microlocal analysis of the bulk-edge correspondence.
\newblock {\em Commun. Math. Phys.}, 383:2069--2112, 2021.

\bibitem{drouot2021ubiq}
A.~Drouot.
\newblock Ubiquity of conical points in topological insulators.
\newblock {\em Journal de l'\'{E}cole polytechnique -- Math\'{e}matiques},
  8(23):507--532, 2021.

\bibitem{drouot2020defect}
A.~Drouot, C.~L. Fefferman, and M.~I. Weinstein.
\newblock Defect modes for dislocated periodic media.
\newblock {\em Communications in Mathematical Physics}, 377(3):1637--1680,
  2020.

\bibitem{DL:24}
A.~Drouot and C.~Lyman.
\newblock Band spectrum singularities for {Schr\"{o}dinger} operators.
\newblock {\em https://arxiv.org/abs/2410.02092}, 2024.

\bibitem{drouotweinstein2020}
A.~Drouot and M.~I. Weinstein.
\newblock Edge states and the valley {Hall} effect.
\newblock {\em Advances in Mathematics}, 368:107142, 2020.

\bibitem{Eastham:74}
M.~Eastham.
\newblock {\em Spectral Theory of Periodic Differential Equations}.
\newblock Hafner Press, 1974.

\bibitem{EG:02}
P.~Elbau and G.~M. Graf.
\newblock Equality of bulk and edge {H}all conductances revisted.
\newblock {\em Comm. Math. Phys.}, 229:415--432, 2002.

\bibitem{EGS:05}
A.~Elgart, G.~M. Graf, and J.~H. Schenker.
\newblock Equality of the bulk and the edge {H}all conductances in a mobility
  gap.
\newblock {\em Comm. Math. Phys.}, 259:185--221, 2005.

\bibitem{FLW-PNAS:14}
C.~L. Fefferman, J.~P. Lee-Thorp, and M.~I. Weinstein.
\newblock Topologically protected states in one-dimensional continuous systems
  and {D}irac points.
\newblock {\em Proceedings of the National Academy of Sciences}, page 07391,
  2014.

\bibitem{FLW-2d_materials:15}
C.~L. Fefferman, J.~P. Lee-Thorp, and M.~I. Weinstein.
\newblock Bifurcations of edge states -- topologically protected and
  non-protected -- in continuous 2d honeycomb structures.
\newblock {\em 2D Mater.}, 3:014008, 2016.

\bibitem{FLW-2d_edge:16}
C.~L. Fefferman, J.~P. Lee-Thorp, and M.~I. Weinstein.
\newblock Edge states in honeycomb structures.
\newblock {\em Annals of PDE}, 2(12), 2016.

\bibitem{FLW-MAMS:17}
C.~L. Fefferman, J.~P. Lee-Thorp, and M.~I. Weinstein.
\newblock Topologically protected states in one-dimensional systems.
\newblock {\em Memoirs of the American Mathematical Society}, 247(1173), 2017.

\bibitem{fefferman2012honeycomb}
C.~L. Fefferman and M.~I. Weinstein.
\newblock Honeycomb lattice potentials and {Dirac} points.
\newblock {\em Journal of the American Mathematical Society}, 25(4):1169--1220,
  2012.

\bibitem{FW:14}
C.~L. Fefferman and M.~I. Weinstein.
\newblock Wave packets in honeycomb lattice structures and two-dimensional
  {D}irac equations.
\newblock {\em Commun. Math. Phys.}, 326:251--286, 2014.

\bibitem{fruchart2013introduction}
M.~Fruchart and D.~Carpentier.
\newblock An introduction to topological insulators.
\newblock {\em Comptes Rendus. Physique}, 14(9-10):779--815, 2013.

\bibitem{geim2007rise}
A.~K. Geim and K.~S. Novoselov.
\newblock The rise of graphene.
\newblock {\em Nature materials}, 6(3):183--191, 2007.

\bibitem{Graf-Porta:13}
G.~M. Graf and M.~Porta.
\newblock Bulk-edge correspondence for two-dimensional topological insulators.
\newblock {\em Comm. Math. Phys.}, 324(3):851--895, 2012.

\bibitem{HR08}
F.~D.~M. Haldane and S.~Raghu.
\newblock Possible realization of directional optical waveguides in photonic
  crystals with broken time-reversal symmetry.
\newblock {\em Phys. Rev. Lett.}, 100:013904, 2008.

\bibitem{HK:10}
M.~Z. Hasan and C.~L. Kane.
\newblock Colloquium: {T}opological {I}nsulators.
\newblock {\em Reviews of Modern Physics}, 82:3045, 2010.

\bibitem{Hatsugai:93}
Y.~Hatsugai.
\newblock The {C}hern number and edge states in the integer quantum hall
  effect.
\newblock {\em Phys. Rev. Lett.}, 71:3697--3700, 1993.

\bibitem{JR76}
R.~Jackiw and C.~Rebbi.
\newblock Solitons with fermion number 1/2.
\newblock {\em Phys. Rev. D}, 13:3398–3409, 1976.

\bibitem{Kane-Mele:05}
C.~L. Kane and E.~J. Mele.
\newblock $\mathbb{Z}_2$ topological order and the quantum spin {H}all efect.
\newblock {\em Phys. Rev. Lett.}, 95:146802, 2005.

\bibitem{kane2005quantum}
C.~L. Kane and E.~J. Mele.
\newblock Quantum spin hall effect in graphene.
\newblock {\em Physical Review Letters}, 95(22):226801, 2005.

\bibitem{kato1995perturbation}
T.~Kato.
\newblock {\em Perturbation theory for linear operators}, volume 132.
\newblock springer, 1995.

\bibitem{keller2018spectral}
R.~T. Keller, J.~L. Marzuola, B.~Osting, and M.~I. Weinstein.
\newblock Spectral band degeneracies of $\frac{\pi}{2}$-rotationally invariant
  periodic {Schr\"{o}dinger} operators.
\newblock {\em Multiscale Modeling \& Simulation}, 16(4):1684--1731, 2018.

\bibitem{keller2020erratum}
R.~T. Keller, J.~L. Marzuola, B.~Osting, and M.~I. Weinstein.
\newblock Erratum: Spectral band degeneracies of $\frac{\pi}{2}$-rotationally
  invariant periodic schr\"{o}dinger operators.
\newblock {\em Multiscale Modeling \& Simulation}, 18:1371--1373, 2020.

\bibitem{Kuchment:12}
P.~A. Kuchment.
\newblock {\em Floquet Theory for Partial Differential Equations}, volume~60.
\newblock Birkhauser, Basel, 2012.

\bibitem{Kuchment:16}
P.~A. Kuchment.
\newblock An overview of periodic elliptic operators.
\newblock {\em Bulletin of the American Mathematical Society}, 53(3):343--414,
  2016.

\bibitem{LWZ19}
J.~P. Lee-Thorp, M.~I. Weinstein, and Y.~Zhu.
\newblock Elliptic operators with honeycomb symmetry: {Dirac} points, edge
  states and applications to photonic graphene.
\newblock {\em Archive Rat. Mech. Anal.}, 232:1--63, 2019.

\bibitem{top-photonics19}
T.~Ozawa, H.~M. Price, A.~Amo, N.~Goldman, M.~Hafezi, L.~Lu, M.~C. Rechtsman,
  D.~Schuster, J.~Simon, O.~Zilberberg, and I.~Carusotto.
\newblock Topological photonics.
\newblock {\em Rev. Mod. Phys.}, 91:015006, 2019.

\bibitem{PCV:15}
J.~Paulose, B.~G. Chen, and V.~Vitelli.
\newblock Topological modes bound to dislocations in mechanical metamaterials.
\newblock {\em Nat. Phys.}, 11:153--156, 2015.

\bibitem{prodan-sb-2016}
E.~Prodan and H.~Schulz-Baldes.
\newblock {\em Boundary Invariants for Complex Topological Insulators: From
  K-Theory to Physics}.
\newblock Springer-Verlag, Berlin, 2016.

\bibitem{qiu2024mathematical}
J.~Qiu and H.~Zhang.
\newblock A mathematical theory of integer quantum {Hall} effect in photonics.
\newblock {\em arXiv preprint arXiv:2405.17200}, 2024.

\bibitem{quinn-ball-sima-2024}
S.~Quinn and G.~Bal.
\newblock Approximations of interface topological invariants.
\newblock {\em SIAM J. Math. Anal.}, 56, 2024.

\bibitem{RH:08}
S.~Raghu and F.~D.~M. Haldane.
\newblock Analogs of quantum-{Hall}-effect edge states in photonic crystals.
\newblock {\em Phys. Rev. A}, 78(3):033834, 2008.

\bibitem{RS4-1978}
M.~Reed and B.~Simon.
\newblock {\em Methods of Modern Mathematical Physics}, volume~4.
\newblock Elsevier, 1978.

\bibitem{TKNN:82}
D.~J. Thouless, M.~Kohmoto, M.~P. Nightgale, and M.~Den Nijs.
\newblock Quantized hall conductance in a two-dimensional periodic potential.
\newblock {\em Phys. Rev. Lett.}, 49:405, 1982.

\bibitem{volovik-2009}
G.~Volovik.
\newblock The universe in a helium droplet.
\newblock {\em International Series of Monographs on Physics}, 2009.

\bibitem{Soljacic-etal:08}
Z.~Wang, Y.~D. Chong, J.~D. Joannopoulos, and M.~Solja{\v{c}}i{\'c}.
\newblock Reflection-free one-way edge modes in a gyromagnetic photonic
  crystal.
\newblock {\em Phys. Rev. Lett.}, 100:013905, 2008.

\end{thebibliography}

\end{document}